\documentclass[11pt]{article}
\pdfoutput=1
\usepackage{yrz}

\usepackage{tikz}

\usetikzlibrary{shapes, calc, shapes, arrows, positioning}
\tikzstyle{qedge}=[->,thick,black]
\tikzstyle{neuron}=[draw,circle,minimum size=25pt,inner sep=0pt, fill=black!10]
\tikzstyle{hidden}=[draw,circle,minimum size=25pt,inner sep=0pt, fill=white]

\usetikzlibrary{arrows.meta}
\tikzset{>={Latex[width=3mm,length=2mm]}}

\tikzstyle{arr}=[->, thick, black]

\usetikzlibrary{decorations.text}

\tikzstyle{learned}=[text=red]

\usetikzlibrary{shadows,shadings,shapes.symbols}
\tikzset{
    double color fill/.code 2 args={
        \pgfdeclareverticalshading[%
            tikz@axis@top,tikz@axis@middle,tikz@axis@bottom%
        ]{diagonalfill}{100bp}{%
            color(0bp)=(tikz@axis@bottom);
            color(50bp)=(tikz@axis@bottom);
            color(50bp)=(tikz@axis@middle);
            color(50bp)=(tikz@axis@top);
            color(100bp)=(tikz@axis@top)
        }
        \tikzset{shade, left color=#1, right color=#2, shading=diagonalfill}
    }
}

\usepackage{hyperref}
\usepackage{xurl}

\usepackage[margin=1in]{geometry}

\usepackage{listings}

\usepackage{caption}
\usepackage{subcaption}

\usepackage{upgreek}
\usepackage[only,llbracket,rrbracket]{stmaryrd}

\def\PG{\mathrm{PG}}
\def\perm{\mathrm{Perm}}
\def\marg{\mathrm{Marg}}
\def\shuff{\mathrm{Shuff}}
\def\law{\mathrm{Law}}
\def\lex{\prec_{\mathrm{lex}}}
\def\logistic{\mathrm{logistic}}
\def\logit{\mathrm{logit}}

\def\tz{\tilde{z}}
\def\Categorical{\mathrm{Categorical}}
\def\softmax{\mathrm{softmax}}
\def\Bernoulli{\mathrm{Bernoulli}}
\def\Poisson{\mathrm{Poisson}}
\def\Beta{\mathrm{Beta}}
\def\mix{\mathrm{mix}}

\def\WAIC{\textsf{WAIC}}
\def\lppd{\textsf{lppd}}
\def\var{\mathrm{var}}

\begin{document}
\title{Bayesian Latent Class Regression with Interpretable Binary Profiles}
\author{
Yuren Zhou
\thanks{Department of Statistical Science, Duke University, Durham, NC, USA.}
\and
Yuqi Gu
\thanks{Department of Statistics, Columbia University, New York, NY, USA.}
\and
David B. Dunson
\thanks{Department of Statistical Science \& Mathematics, Duke University, Durham, NC, USA.}
}
\date{\today}
\maketitle

\begin{abstract}
High-dimensional categorical data arise in diverse scientific domains and are often accompanied by covariates.
Latent class regression models are routinely used in such settings, reducing dimensionality by assuming conditional independence of the categorical variables given a single latent class that depends on covariates through a logistic regression model. 
However, such methods become unreliable as the dimensionality increases. 
To address this, we propose Bayesian latent class regression with interpretable binary profiles (BLIP), a flexible family of models that introduces a binary latent-attribute layer between the covariate-dependent latent class and the observed categorical responses.
\texttt{BLIP} satisfies key theoretical properties, including identifiability and posterior consistency, and we establish a Bayes oracle clustering property that ensures robustness against the curse of dimensionality.
We develop efficient posterior computation methods, validate them through simulation studies, and use \texttt{BLIP} to infer regions of common profile in ecological data.

\end{abstract}

\keywords{
Curse of dimensionality; 
High-dimensional categorical data;
Identifiability;
Latent class regression; 
Model-based clustering.
}

\section{Introduction}\label{sec:intro}

High-dimensional categorical data play a pivotal role in a wide range of scientific disciplines \citep{agresti2012categorical}.
In ecology, the joint distribution of species at various sampling locations is represented by binary vectors, with each entry indicating the presence or absence of a species \citep{ovaskainen2020joint, tikhonov2020joint}.
In genetics, DNA and RNA sequences are expressed as categorical vectors of nucleotides, encapsulating the molecular blueprint of organisms \citep{zito2023inferring}. In sociology and political studies, survey data often consist of high-dimensional categorical responses that reflect individual choices and opinions \citep{bolck2004estimating}.
These datasets are frequently accompanied by covariate information, such as geographical and environmental factors of species sampling locations \citep{warton2015so}, the biological characteristics of the genetic study subjects \citep{amfoh1994use}, or the demographic information of survey participants \citep{vermunt2010latent}.

To gain deeper insights into the underlying ecological, biological, or social processes, we often seek to form data into clusters that integrate information from both categorical observations and their associated covariates. Latent class regression models are routinely used for this purpose, typically characterizing categorical response data as conditionally independent given a categorical latent class, whose distribution depends on covariates through a logistic regression model \citep{goodman1974exploratory, dayton1988concomitant, garrett2000latent, magidson2004latent, chung2006latent}.
Latent class allocations induce a clustering of the samples into covariate-dependent groups. Building on this literature, our focus is on developing a Bayesian generative model that characterizes the conditional distribution of the high-dimensional categorical data given the covariates while simultaneously inferring interpretable discrete \emph{fine-grained} latent structures underlying the data.

A variety of extensions have been developed for latent class and latent class regression models. Mixed membership models \citep{russo2022multivariate, gu2023dimension} incorporate subject-specific weight vectors that represent partial memberships in latent classes. Robust profile clustering \citep{stephenson2020robust} introduces a global-local structure of latent classes to account for nested sub-populations. To integrate sample similarities into the model, tree-structured shrinkage and regularization approaches have been proposed \citep{li2023integrating, li2023tree}. In addition to the observed covariates, latent factors can also be included in latent class regression \citep{guo2006latent}. More broadly, these models belong to the mixture model family, which spans finite mixture models \citep{mclachlan2000finite, rousseau2011asymptotic}, Dirichlet process mixture models \citep{antoniak1974mixtures, escobar1995bayesian}, and other Bayesian nonparametric methods \citep{lijoi2020pitman, dunson2008kernel} with covariate dependent priors \citep{horiguchi2024tree}.

However, it has been increasingly recognized that mixture models face challenges when applied to high-dimensional data, falling prey to the curse of dimensionality. Recent work \citep{chandra2023escaping} demonstrates that Bayesian Gaussian mixture models tend to favor an excessive or insufficient number of clusters as the data dimensionality increases. A similar issue arises for high-dimensional categorical data, as illustrated in Section \ref{sec:high_dim}, where a well-specified Bayesian Bernoulli mixture model asymptotically assigns each observation to a separate singleton cluster as the data dimensionality grows to infinity.
This behavior underscores the importance of designing models and inference procedures to handle high-dimensional categorical data, ensuring robust and meaningful inference of latent classes while avoiding the pitfalls of the curse of dimensionality.

In recent years, there has been growing interest in deep learning methods for categorical data, including those that account for covariates and generate latent classes. Examples include conditional variational autoencoders with vector quantization \citep{sohn2015learning, van2017neural} and conditional normalizing flows or diffusion models designed for discrete data \citep{winkler2019learning, tran2019discrete, dhariwal2021diffusion}.
Despite their flexibility, deep learning methods typically suffer from a lack of identifiability and limited interpretability. They also tend to rely on large datasets. Hence, it is important to develop statistical approaches that prioritize a balance between flexibility and interpretability, especially in applications where identifying latent classes and uncovering interpretable structures are key goals.

With this motivation, we propose Bayesian Latent class regression with Interpretable binary Profiles (\texttt{BLIP}).
This model is illustrated by the Bayesian network in Figure \ref{fig:dag}. In the bottom layer are the observed categorical variables, whose distribution is modeled through sparse edges to a latent layer of binary attributes. The distribution of these latent binary attributes is further governed by a deeper latent class that is dependent on the covariates. Importantly, the deep latent class, the vector of binary latent attributes, and the model parameters are all easily interpretable.

Under sparsity constraints on the Bayesian network edges, we establish a comprehensive identifiability theory for our model, including strict and generic notions of identifiability \citep{allman2009identifiability, gu2023bayesian, zhou2024bayesian}.
Building on this foundation, we prove posterior consistency of \texttt{BLIP} using the frameworks of Doob’s and Schwartz’s theories \citep{doob1949application, schwartz1965bayes}.
Furthermore, we demonstrate that our model satisfies a Bayes oracle clustering property \citep{chandra2023escaping}, ensuring that, as the data dimensionality increases, the deep latent class has posterior probabilities converging to its conditional probabilities given the unobserved vector of binary attributes.
Since the number of binary attributes is finite and fixed, this property implies that our model effectively escapes the curse of dimensionality, enabling reliable inference on the deep latent class regardless of the observed data dimensionality.

We design hierarchical prior distributions for the sparse edges in the Bayesian network and the continuous model parameters.
To perform posterior inference, we propose a Gibbs sampler using Polya-Gamma augmentation \citep{polson2013bayesian}.
Simulation studies are conducted to empirically validate the robustness of model selection, the consistency of posterior inference, and the Bayes oracle clustering property of our model.

As one of the motivating examples of high-dimensional categorical data, we apply our model to joint species distribution modeling (\texttt{JSDM}) in ecology \citep{ovaskainen2020joint, ovaskainen2016using, ovaskainen2016uncovering}, a field where understanding species co-occurrence patterns is essential for biodiversity conservation and environmental management.
Specifically, we analyze the Finnish bird dataset \citep{ovaskainen2020joint}, which records the presence and absence of bird species at multiple sampling locations in Finland, along with associated environmental covariates such as temperature and habitat type. Our methodology provides a novel approach to understanding variation across sampling sites in ecological community structure, improving regions of common profile approaches while maintaining predictive performance comparable to the state-of-the-art Hierarchical Modeling of Species Communities \citep[\texttt{HMSC},][]{ovaskainen2017species, ovaskainen2017make}, 

To enhance intuition and reduce notational complexity, we mainly present \texttt{BLIP} for multivariate binary observations.
We generalize our model to handle arbitrary multivariate categorical and count observations in the Supplementary Material Section \ref{supp_sec:gen}.
The remainder of this paper is organized as follows. In Section \ref{sec:model}, we introduce our model for high-dimensional binary observations. Section \ref{sec:theo} provides comprehensive theoretical results of model identifiability and posterior consistency. In Section \ref{sec:high_dim}, we first illustrate how the multivariate Bernoulli mixture model succumbs to the curse of dimensionality and then establish the Bayes oracle clustering property of \texttt{BLIP}. Section \ref{sec:post} presents the posterior computation, including the hierarchical prior distributions, the MCMC sampling algorithms, and simulation studies. We apply \texttt{BLIP} to the Finnish bird dataset in Section \ref{sec:appl}. Finally, we outline potential future directions in Section \ref{sec:discuss}.

\textbf{Notations.}
We use bold capital letters (e.g. $\Gb, \Bb$) to denote matrices, bold letters (e.g. $\xb, \yb$) to denote vectors, and nonbold letters (e.g. $z$) to denote scalars.
For integers $k_1, k_2 \in \ZZ$, we let $[k_1, k_2]:=\{k_1, k_1 + 1, \ldots, k_2\}$ and abbreviate $[k_2] := [1, k_2]$.
For a sequence of random variables $\{Z^{(n)}\}_{n = 1}^\infty$, we let $Z^{(N_1:N_2)}$ denote the collection $\{Z^{(n)}\}_{n = N_1}^{N_2}$ and let $Z^{(1:\infty)}$ denote the whole sequence. The logit function is  $\logit(p) := \log\frac{p}{1 - p}$ for $p \in (0, 1)$.
The cardinality of a set $\cB$ is $\card(\cB)$.
For a matrix $\Gb$ and a subset of row indices $\cI$, we let $\Gb_{\cI}$ denote the submatrix of $\Gb$ formed by the rows in $\cI$.

\section{Model Formulation}\label{sec:model}

Let the observed outcome data be a $p$-dimensional binary vector $\yb \in \{0, 1\}^p$ and the covariates be a $p_x$-dimensional vector $\xb \in \RR^{p_x}$. We introduce $q$ binary attributes $\wb \in \{0, 1\}^q$ that capture the essential patterns in $\yb$, so that each entry $y_i$ is conditionally independent given $\wb$. Each $y_i$ depends on a subset of $\wb$ characterized by the $p \times q$ matrix  $\Gb$, where the $(i, j)$th entry $g_{i, j} = 1$ indicates that the distribution of $y_i$ depends on the attribute $w_j$
and a continuous parameter $\beta_{i, j}$ provides a weight on this edge. Starting from the baseline $\beta_{i, 0}$, we adjust the logit of the probability of $y_i = 1$ by $\beta_{i, j}$ for each attribute $w_j = 1$ that $y_i$ depends on:
$$
\logit~ \PP(y_i = 1 ~|~ \wb, \Gb, \Bb)
=
\log \frac{\PP(y_i = 1 ~|~ \wb, \Gb, \Bb)}{\PP(y_i = 0 ~|~ \wb, \Gb, \Bb)}
=
\beta_{i, 0} + \sum_{j = 1}^q g_{i, j} w_j \beta_{i, j}
,
$$
where $\Bb$ denotes the $p \times (q + 1)$ matrix whose $(i, j)$th entry is $\beta_{i, j}$ for $j \in [0, q]$.
Using $\gb_i \in \{0, 1\}^q$ to denote the $i$th row of $\Gb$ and letting $\bbeta_i := (\beta_{i, 1}, \ldots, \beta_{i, q}) \in \RR^q$, the conditional distribution of $\yb$ given $\wb$ can be expressed as
\begin{equation}\label{eq:model_y_w}
\PP(\yb ~|~ \wb, \Gb, \Bb)
=
\prod_{i = 1}^p \PP(y_i ~|~ \wb, \gb_i, \beta_{i, 0}, \bbeta_i)
=
\prod_{i = 1}^p \frac{\exp(y_i (\beta_{i, 0} + (\gb_i \circ \bbeta_i)^\top \wb))}{1 + \exp(\beta_{i, 0} + (\gb_i \circ \bbeta_i)^\top \wb)}
.
\end{equation}

Building on this foundation, we further structure the latent attribute vector $\wb$ by introducing a deeper latent class $z$ taking values in $[d] := \{1, \ldots, d\}$.
The attributes $w_j$ are modeled as conditionally independent given the deep latent class $z$, following Bernoulli distributions with parameters $\alpha_{j, z} \in (0, 1)$.
This gives the conditional distribution of $\wb$ given $z$:
\begin{equation}\label{eq:model_w_z}
\PP(\wb ~|~ z, \Ab)
=
\prod_{j = 1}^q \PP(w_j ~|~ z, \balpha_j)
=
\prod_{j = 1}^q 
\alpha_{j, z}^{w_j} (1 - \alpha_{j, z})^{1 - w_j}
,
\end{equation}
where $\Ab$ is the $q \times d$ matrix with $\alpha_{j, z}$ as its $(j, z)$th entry and $\balpha_j$ denotes the $j$th row in $\Ab$.

To incorporate the influence of covariates $\xb$ into our model, we further allow the deep latent class $z$ to depend on $\xb$.
The conditional distribution of $z$ given $\xb$ is modeled using a multinomial logistic regression with parameter $\bGamma \in \RR^{d \times (p_x + 1)}$, defined as
\begin{equation}\label{eq:model_z_x}
\PP(z ~|~ \xb, \bGamma)
=
\frac{\exp(\gamma_{z, 0} + \bgamma_z^\top \xb)}{\sum_{h = 1}^d \exp(\gamma_{h, 0} + \bgamma_h^\top \xb)}
,
\end{equation}
where $\gamma_{h, 0} \in \RR$ and $\bgamma_h \in \RR^{p_x}$ are the intercept and coefficients for class $h$, respectively, and $(\gamma_{h, 0}, \bgamma_h) \in \RR^{p_x + 1}$ forms the $h$th row of $\bGamma$. To ensure identifiability, we designate class $d$ as the baseline class and fix its regression intercept and coefficients to $\gamma_{d, 0} = 0, \bgamma_d = \zero$.

\begin{figure}
\centering
\begin{tikzpicture}[scale = 1.5]
\node (x)[neuron] at (3, 3) {$\xb$};
\node (z)[hidden] at (3, 2) {$z$};
\node (w1)[hidden] at (1, 1) {$w_1$};
\node (w2)[hidden] at (3, 1) {$\cdots$};
\node (w3)[hidden] at (5, 1) {$w_q$};
\node (y1)[neuron] at (0, 0) {$y_1$};
\node (y2)[neuron] at (1, 0) {$y_2$};
\node (y3)[neuron] at (2, 0) {$y_3$};
\node (y4)[neuron] at (3, 0) {$\cdots$};
\node (y5)[neuron] at (4, 0) {$y_{p - 2}$};
\node (y6)[neuron] at (5, 0) {$y_{p - 1}$};
\node (y7)[neuron] at (6, 0) {$y_p$};
\draw[arr] (x) -- (z);
\draw[arr] (z) -- (w1);
\draw[arr] (z) -- (w2);
\draw[arr] (z) -- (w3);
\draw[arr] (w1) -- (y1);
\draw[arr] (w1) -- (y3);
\draw[arr] (w1) -- (y4);
\draw[arr] (w2) -- (y2);
\draw[arr] (w2) -- (y4);
\draw[arr] (w2) -- (y5);
\draw[arr] (w2) -- (y6);
\draw[arr] (w3) -- (y6);
\draw[arr] (w3) -- (y7);
\node[anchor=west] (y) at (7, 0) {$\yb \in \{0, 1\}^p$};
\node[anchor=west] (Gbeta) at (7, 0.5) {$\Gb \in \{0, 1\}^{p \times q}, \Bb \in \RR^{p \times (q + 1)}$};
\node[anchor=west] (w) at (7, 1) {$\wb \in \{0, 1\}^q$};
\node[anchor=west] (alpha) at (7, 1.5) {$\Ab \in (0, 1)^{q \times d}$};
\node[anchor=west] (z) at (7, 2) {$z \in [d]$};
\node[anchor=west] (gamma) at (7, 2.5) {$\bGamma \in \RR^{d \times (p_x + 1)}$};
\node[anchor=west] (x) at (7, 3) {$\xb \in \RR^{p_x}$};
\end{tikzpicture}
\caption{
Bayesian deep latent class regression model.
Shading/unshading represents observed/latent variables.
Covariates $\xb$, deep latent class $z$, binary latent attributes $\wb$, and observed variables $\yb$ are specific to each sample.
Model parameters $\Ab, \Bb, \bGamma, \Gb$ are shared across the population.
}
\label{fig:dag}
\end{figure}

The hierarchical structure of \texttt{BLIP} is depicted as a Bayesian network in Figure \ref{fig:dag}.
From the bottom layer to the top layer, the Bayesian network consists of the observed data $\yb$, the latent attribute vector $\wb$, the deep latent class $z$, and the covariates $\xb$.
The sparse dependency between $\wb$ and $\yb$ is visually represented in the bipartite network with sparse edges that point from attributes $w_j$ to observed data entries $y_i$.

For a sample of $N$ observations $\yb^{(1:N)} = \{\yb^{(1)}, \ldots, \yb^{(N)}\}$ and corresponding covariates $\xb^{(1:N)} = \{\xb^{(1)}, \ldots, \xb^{(N)}\}$, our model infers the latent attribute vector $\wb^{(n)}$ and the deep latent class $z^{(n)}$ specific to each sample $n$. Parameters $\Ab, \Bb, \bGamma, \Gb$ are population-level quantities shared by all samples. Assuming that the covariates $\xb^{(1:N)}$ are independently drawn from $\cP_x$, our model has the following independent data generating process for each sample $n = 1, \ldots, N$:
\begin{equation}\label{eq:model}\begin{aligned}
&
z^{(n)} \sim \Categorical\left(
\frac{
\exp(\gamma_{1, 0} + \bgamma_1^\top \xb^{(n)})
}{
\sum_{h = 1}^d \exp(\gamma_{h, 0} + \bgamma_h^\top \xb^{(n)})
}
,
\ldots
,
\frac{
\exp(\gamma_{d, 0} + \bgamma_d^\top \xb^{(n)})
}{
\sum_{h = 1}^d \exp(\gamma_{h, 0} + \bgamma_h^\top \xb^{(n)})
}
\right)
,\\&
w_j^{(n)} \stackrel{ind}{\sim} \Bernoulli\left(
\alpha_{j, z^{(n)}}
\right)
\quad
\forall j \in [q]
,\\&
y_i^{(n)} \stackrel{ind}{\sim} \Bernoulli\left(\frac{
\exp(\beta_{i, 0} + (\gb_i \circ \bbeta_i)^\top \wb^{(n)})
}{
1 + \exp(\beta_{i, 0} + (\gb_i \circ \bbeta_i)^\top \wb^{(n)})
}\right)
\quad
\forall i \in [p]
.
\end{aligned}\end{equation}
This hierarchical process models how the covariates $\xb^{(n)}$ influence the deep latent class $z^{(n)}$, which subsequently governs the generation of the latent attribute vector $\wb^{(n)}$ and, ultimately, the observed data $\yb^{(n)}$.
While \eqref{eq:model} focuses on multivariate binary $\yb$, the framework generalizes by incorporating appropriate generalized linear models for the conditional distribution of $\yb$ given $\wb$; see Supplementary Material Section \ref{supp_sec:gen}.

\section{Theoretical Properties}\label{sec:theo}

In this section, we establish theoretical properties of our model, including strict identifiability, generic identifiability, and posterior consistency of suitable Bayesian estimators.

\subsection{Identifiability}\label{ssec:iden}

We first develop a theory to ensure the unique identification of our model parameters from the marginal distribution of the observed data $\yb$ and the covariates $\xb$. As discussed in Section \ref{sec:model}, the dependency of $\yb$ on $\wb$ is sparse. We are interested in recovering the parameter $\beta_{i, j}$ only when $y_i$ depends on $w_j$, which is equivalent to recovering the entrywise matrix product $(\one ~ \Gb) \circ \Bb$, where $(\one ~ \Gb)$ denotes the $p \times (q + 1)$ binary matrix that left appends an all-ones column to $\Gb$.
As in latent class models, our model exhibits a trivial form of non-identifiability caused by label switching \citep{gyllenberg1994non}, arising from permutations over the $d$ deep latent classes and permutations over the $q$ latent attributes. However, this issue can be resolved using post-processing methods \citep{stephens2000dealing}, and we focus on studying the identifiability of parameters up to permutations on $z$ and dimensions of $\wb$.

Let $\sS$ denote the parameter space of $(\Ab, \Bb, \bGamma, \Gb)$. For any $(\Ab, \Bb, \bGamma, \Gb) \in \sS$, we define the equivalence class of parameters within $\sS$ induced by its permutations over the $d$ classes of $z$ and the $q$ dimensions of $\wb$ as
\begin{equation}\label{eq:perm}\begin{aligned}
\perm_{\sS}(\Ab, \Bb, \bGamma, \Gb)
:=
\Big\{
(\Ab', \Bb', \bGamma', \Gb') \in \sS:~
\exists~ \Pb_z, \Pb_w \text{ s.t. }
\bGamma' = \Pb_z \bGamma
,~
\Ab' = \Pb_w \Ab \Pb_z^\top
,&\\
\Gb' = \Gb \Pb_w^\top
,~
(\one ~ \Gb') \circ \Bb' = ((\one ~ \Gb) \circ \Bb) \diag(1, \Pb_w)^\top
&\Big\}
,
\end{aligned}\end{equation}
where $\Pb_z, \Pb_w$ are arbitrary $d \times d$ and $q \times q$ permutation matrices, and $\diag(1, \Pb_w)$ denote the $(q + 1) \times (q + 1)$ block diagonal matrix with its upper-left block being 1 and lower-right block being $\Pb_w$.
For $(\Ab, \Bb, \bGamma, \Gb) \in \sS$, we also define the set of parameters in $\sS$ generating the same marginal distribution of $\xb, \yb$ by
\begin{equation}\label{eq:marg}
\marg_{\sS}(\Ab, \Bb, \bGamma, \Gb)
:=
\Big\{
(\Ab', \Bb', \bGamma', \Gb') \in \sS:~
\forall \xb, \yb,~
\PP(\xb, \yb ~|~ \Ab', \Bb', \bGamma', \Gb')
=
\PP(\xb, \yb ~|~ \Ab, \Bb, \bGamma, \Gb)
\Big\}
.
\end{equation}
With $\cP_x$ denoting the distribution of covariates $\xb$ as in \eqref{eq:model}, generating the same marginal distribution of $\xb, \yb$ is equivalent to generating the same conditional distribution of $\yb ~|~ \xb$ for $\cP_x$-almost every $\xb$, that is,  $\marg_{\sS}(\Ab, \Bb, \bGamma, \Gb) =\big\{ \Ab', \Bb', \bGamma', \Gb':~ \forall \yb,~ \allowbreak \PP(\yb ~|~ \xb, \Ab', \Bb', \bGamma', \Gb') = \PP(\yb ~|~ \xb, \Ab, \Bb, \bGamma, \Gb),~ \cP_x\text{-almost every } \xb \big\}$.

Since the conditional distribution of $\yb ~|~ \xb$ is invariant under permutations on the $d$ classes of $z$ and the $q$ dimensions of $\wb$, we have $\perm_{\sS}(\Ab, \Bb, \bGamma, \Gb) \subseteq \marg_{\sS}(\Ab, \Bb, \bGamma, \Gb)$. Viewing the parameters within $\perm_{\sS}(\Ab, \Bb, \bGamma, \Gb)$ as the same, we define a parameter $(\Ab, \Bb, \bGamma, \Gb)$ as identifiable if $\perm_{\sS}(\Ab, \Bb, \bGamma, \Gb) = \marg_{\sS}(\Ab, \Bb, \bGamma, \Gb)$ and nonidentifiable if $\perm_{\sS}(\Ab, \Bb, \bGamma, \Gb) \subsetneq \marg_{\sS}(\Ab, \Bb, \bGamma, \Gb)$. When discussing the identifiability of our model with a parameter space $\sS$, we try to establish whether all parameters in $\sS$ are identifiable or, at minimum, that almost every parameter in $\sS$ is identifiable. Following \citet{allman2009identifiability}, we refer to the stronger first notion as strict identifiability, and the weaker second notion as generic identifiability.

\begin{definition}[Strict Identifiability]\label{defi:strict}
The model with parameter space $\sS$ is strictly identifiable if $\perm_{\sS}(\Ab, \Bb, \bGamma, \Gb) = \marg_{\sS}(\Ab, \Bb, \bGamma, \Gb)$ for all parameters $(\Ab, \Bb, \bGamma, \Gb) \in \sS$.
\end{definition}

We equip the parameter space $\sS$ with the $L_1$ distance and its induced topology.
Let $\uplambda$ denote the Lebesgue measure on the space of continuous parameters $\Ab, \Bb, \bGamma$, and let $\upmu$ denote the counting measure on the space of discrete parameters $\Gb$. We define their product measure $\uplambda \times \upmu$ as the base measure of $\sS$. Although strict identifiability in Definition \ref{defi:strict} requires all parameters in $\sS$ to be identifiable, the following definition of generic identifiability relaxes this requirement to almost every parameter in $\sS$ with respect to the measure $\uplambda \times \upmu$.

\begin{definition}[Generic Identifiability]\label{defi:generic}
The model with parameter space $\sS$ is generically identifiable if
$
(\uplambda \times \upmu)\big\{
(\Ab, \Bb, \bGamma, \Gb) \in \sS:~
\perm(\Ab, \Bb, \bGamma, \Gb) \ne \marg(\Ab, \Bb, \bGamma, \Gb)
\big\}
=
0
.
$
\end{definition}

Generic identifiability has been widely studied \citep{allman2009identifiability, gu2023bayesian, zhou2024bayesian}.
For practical data analysis, the distinction between strict and generic identifiability is negligible, as both notions ensure reliable parameter estimation. However, the conditions required for strict identifiability are typically more stringent.

The distribution $\cP_x$ of the covariates $\xb$ has full rank if it is not supported within a lower-dimensional linear subspace of $\RR^{p_x}$, that is, if and only if for any non-zero $(p_x + 1)$-dimensional vector $\vb$,  $\big\{\xb:~ \vb^\top (1, \xb) \ne 0\big\} \cap \supp(\cP_x) \ne \varnothing$. For any two sets of vectors $\{\ab_1, \ldots, \ab_m\}$ and $\{\bbb_1, \ldots, \bbb_n\}$, where all $\ab_i$'s and $\bbb_j$'s have the same dimension, we define their entrywise product as $\{\ab_1, \ldots, \ab_m\} \bigcirc \{\bbb_1, \ldots, \bbb_n\} := \{\ab_i \circ \bbb_j:~ i \in [m], j \in [n]\}$, which extends to an arbitrary number of sets of vectors of the same dimension.
A set of $d$-dimensional vectors $\{\ab_1, \ldots, \ab_m\}$ has full rank if $m \ge d$ and the $m \times d$ matrix formed by rows $\ab_1, \ldots, \ab_m$ has full rank $d$.

\begin{theorem}\label{theo:strict}
If the distribution $\cP_x$ has full rank, then our model with the parameter space $\sS_1$ is strictly identifiable, for $\sS_1$ consisting of parameters satisfying the following conditions:
\begin{enumerate}[(i)]
\item
There exists a partition of $[q]$ as $\cQ_1 \cup \cQ_2 \cup \cQ_3$ with $\card(\cQ_1), \card(\cQ_2) \ge \log_2 d$, such that the two sets of vectors $\bigcirc_{j \in \cQ_1} \{\one, \balpha_j\}$, $\bigcirc_{j \in \cQ_2} \{\one, \balpha_j\}$ have full ranks and the submatrix $\Ab_{\cQ_3}$ has distinct columns;
\item
The matrix $\Gb$ contains three distinct identity blocks $\Ib_q$, i.e.
$
(\Pb \Gb)^\top
=
\big(\begin{matrix}
\Ib_q & \Ib_q & \Ib_q & \Gb_{\cI}^\top
\end{matrix}\big)
$
for some $p \times p$ permutation matrix $\Pb$ and subset $\cI \subset [p]$ with $\card(\cI) = p - 3q$;
\item
$\forall i \in [p], j \in [q]$, if $g_{i, j} = 1$ then $\beta_{i, j} \ne 0$.
\end{enumerate}
\end{theorem}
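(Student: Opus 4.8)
The plan is to establish strict identifiability by decoupling the two latent layers and applying Kruskal's theorem on the uniqueness of three-way tensor (CP) decompositions twice --- once for the attribute-to-response layer and once for the class-to-attribute layer --- and then to recover $\bGamma$ from the covariate regression. Throughout, I would fix an arbitrary covariate value $\xb$; because $z \mid \xb$ follows a softmax and every $\alpha_{j,z} \in (0,1)$, the induced mixing weights $\PP(\wb \mid \xb) = \sum_z \PP(z \mid \xb)\PP(\wb \mid z)$ are strictly positive on all $2^q$ profiles, so no mixture component is degenerate. Writing $\PP(\yb \mid \xb) = \sum_{\wb \in \{0,1\}^q} \PP(\wb \mid \xb)\prod_i \PP(y_i \mid \wb)$ exhibits the conditional law of $\yb$ as a finite mixture of $2^q$ conditionally independent components.

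\emph{Inner layer.} I would use the three identity blocks guaranteed by condition (ii) as the three ``views'' of Kruskal's theorem. Grouping the items of each block $\Ib_q$ gives a factor matrix whose rows are indexed by the $2^q$ response patterns of that block and whose columns are indexed by $\wb$; since within a block item $j$ responds only to $w_j$, this matrix is a Kronecker product of $2\times 2$ response matrices, each invertible precisely because condition (iii) forces the relevant main effect $\beta \ne 0$. Hence each factor matrix has full column rank $2^q$, the Kruskal ranks sum to $3\cdot 2^q \ge 2\cdot 2^q + 2$, and Kruskal's theorem yields that the components $\{\PP(\yb \mid \wb)\}$ and weights $\{\PP(\wb \mid \xb)\}$ are determined up to a single relabeling $\sigma$ of the $2^q$ profiles. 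Conditioning on the recovered block patterns then pins down the response function $\PP(y_i \mid \wb)$ for every remaining item $i \in \cI$, and taking logits recovers the full effect matrix $(\one~\Gb)\circ\Bb$, again modulo $\sigma$.

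\emph{Main obstacle.} The crux is to show that the abstract relabeling $\sigma$ of the $2^q$ latent states is induced by a genuine permutation $\Pb_w$ of the $q$ attribute coordinates (so that it is compatible with the equivalence defining $\perm_\sS$), rather than by an arbitrary bijection of the hypercube. Here I would exploit the identity blocks again: each single-attribute item partitions the profiles into the two half-cubes $\{w_j = 0\}$ and $\{w_j = 1\}$, and condition (iii) makes this partition nondegenerate; matching these coordinate bipartitions across the two parameter sets forces $\sigma$ to send coordinate $j$ to a single coordinate $\pi(j)$, i.e. $\sigma$ is a coordinate permutation. Fixing the orientation of each attribute (via the baseline convention encoded in the $0$th column of $\Bb$) then identifies $\Gb$ up to $\Pb_w$ and $(\one~\Gb)\circ\Bb$ up to $\diag(1,\Pb_w)^\top$. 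This step is where conditions (ii) and (iii) do the essential work, and I expect it to be the most delicate part of the argument.

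\emph{Outer layer and covariates.} With the attribute law $\PP(\wb \mid \xb) = \sum_{z=1}^d \PP(z \mid \xb)\prod_j \alpha_{j,z}^{w_j}(1-\alpha_{j,z})^{1-w_j}$ identified, I would recognize it, for each fixed $\xb$, as a latent class model on $\wb$ with $d$ classes and shared emission matrix $\Ab$. Condition (i) is exactly the criterion for this model: the full-rank requirements on $\bigcirc_{j\in\cQ_1}\{\one,\balpha_j\}$ and $\bigcirc_{j\in\cQ_2}\{\one,\balpha_j\}$ (equivalent, via the M\"obius transform relating moments and probabilities, to rank-$d$ emission matrices for the coordinate groups $\cQ_1,\cQ_2$, which in turn forces $2^{\card(\cQ_t)} \ge d$, i.e. $\card(\cQ_t) \ge \log_2 d$) together with distinct columns of $\Ab_{\cQ_3}$ give Kruskal ranks summing to $2d+2$. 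Kruskal's theorem then identifies $\Ab$ and the weights $\PP(z \mid \xb)$ up to a class permutation $\Pb_z$, which I would fix once using a reference covariate and propagate to all $\xb$ since $\Ab$ is shared. Finally, from $\log\{\PP(z=h \mid \xb)/\PP(z=d \mid \xb)\} = \gamma_{h,0} + \bgamma_h^\top \xb$ and the full-rank assumption on $\cP_x$ --- under which $\{(1,\xb):\xb\in\supp(\cP_x)\}$ spans $\RR^{p_x+1}$ --- the coefficients $\bGamma$ are uniquely determined up to the same $\Pb_z$, which together with the inner-layer $\Pb_w$ exhibits the required pair of permutations and shows $\marg_\sS = \perm_\sS$.
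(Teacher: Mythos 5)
Your overall architecture is the same as the paper's: one application of Kruskal's theorem to the $\yb \mid \wb$ layer using the three identity blocks of $\Gb$ (your Kronecker-product invertibility argument is a clean variant of the paper's triangular-matrix computation in the proof of Lemma \ref{lemm:y_to_w}), a second application to the $\wb \mid z$ layer using the partition in condition (i) (matching Lemma \ref{lemm:w_to_z}), and recovery of $\bGamma$ from the log-odds of $z \mid \xb$ under the full-rank assumption on $\cP_x$ (matching Lemma \ref{lemm:law_to_params}). Those steps are sound and essentially identical to the paper's.

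The genuine gap is in the step you yourself identify as the crux. First, a smaller omission: before you can match coordinate bipartitions across the two parameterizations, you must know that an item that is single-attribute under $(\Gb,\Bb)$ is also single-attribute under $(\tilde\Gb,\tilde\Bb)$; the paper establishes this via the cardinality comparison $\card(\cC_i)=\card(\tilde\cC_i)$ in \eqref{eq:gap_step1}, a sub-step your sketch skips. More seriously, bipartition matching can only ever determine the relabeling $\sigma$ of $\{0,1\}^q$ up to a \emph{signed} coordinate permutation: each bipartition $\{\{w_j=0\},\{w_j=1\}\}$ is recovered as an unordered pair, so one only gets $\sigma(\wb)_j \in \{w_{\pi(j)},\, 1-w_{\pi(j)}\}$. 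Your proposed device for removing the flips --- ``fixing the orientation of each attribute via the baseline convention encoded in the $0$th column of $\Bb$'' --- does not exist in this model: unlike $\bGamma$, whose class $d$ is pinned to zero, the intercepts $\beta_{i,0}$ are unconstrained, and which vertex of the hypercube is labeled $\zero$ is precisely what a flip changes. Concretely, for any $j$ the map $\balpha_j \mapsto \one - \balpha_j$ together with $\beta_{i,0}\mapsto\beta_{i,0}+\beta_{i,j}$, $\beta_{i,j}\mapsto-\beta_{i,j}$ for every $i$ with $g_{i,j}=1$ leaves the law of $(\xb,\yb)$ unchanged, preserves conditions (i)--(iii) (the span of $\bigcirc_{j\in\cQ_1}\{\one,\balpha_j\}$ and the distinctness of the columns of $\Ab_{\cQ_3}$ are invariant under $\balpha_j\mapsto\one-\balpha_j$), and is generically not related to the original parameters by any $(\Pb_w,\Pb_z)$. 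So no appeal to intercepts can close this step, and your argument as written stops at ``permutation up to flips.'' The paper handles orientation by a different and more elaborate route --- matching the partial order $\succeq$ and logit differences across the two parameterizations, \eqref{eq:gap_step2}--\eqref{eq:Q_to_P} --- rather than by any baseline convention; whatever one's view of that argument, it is the part of the proof your proposal would need to supply and currently does not.
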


The proof of Theorem \ref{theo:strict} is provided in the Supplementary Material Section \ref{supp_sec:iden}, with a brief overview included in Section \ref{supp_sec:intuit}.
Intuitively, the full-rank condition of $\cP_x$ ensures the unique identification of the parameters $\bGamma$ from the conditional distribution $z ~|~ \xb$.
Conditions (i) and (ii) facilitate the unique identification of $\Ab$ and $(\one ~ \Gb) \circ \Bb$ from the conditional distributions $\wb ~|~ z$ and $\yb ~|~ \wb$, respectively. Condition (iii) is naturally imposed, as $\beta_{i, j} = 0$ implies that the attribute $w_j$ has no influence on the observed data entry $y_i$, regardless of the value of $g_{i, j}$.

The conditions in Theorem \ref{theo:strict} can often be verified using simpler sufficient conditions.
For example, a sufficient condition for $\cP_x$ to have full rank is that its support has positive Lebesgue measure, $\uplambda(\supp(\cP_x)) > 0$.
When treating the covariates as fixed rather than random, this reduces to requiring the design matrix to have full rank.
For condition (i), the set of vectors $\bigcirc_{j \in \cQ_1} \{\one, \balpha_j\}$ has full rank if the submatrix $\Ab_{\cQ_1}$ has full row rank.

In Theorem \ref{theo:strict} on strict identifiability, condition (ii) inherently requires $p \ge 3q$.
For generic identifiability, a weaker requirement of $p > 2q$ suffices, as shown in the next theorem.
Its proof is deferred to the Supplementary Material Section \ref{supp_sec:iden}.

\begin{theorem}\label{theo:generic}
If the distribution $\cP_x$ has full rank, then our model with the parameter space $\sS_2$ is generically identifiable, for $\sS_2$ consisting of parameters satisfying the following condition:
\begin{itemize}
\item[($*$)]
$\Gb$ has distinct columns and takes the blockwise form
$
(\Pb \Gb)^\top
=
\big(\begin{matrix}
\Gb_{\cI_1}^\top & \Gb_{\cI_2}^\top & \Gb_{\cI_3}^\top
\end{matrix}\big)
$
for some $p \times p$ permutation matrix $\Pb$, with two $q \times q$ submatrices $\Gb_{\cI_1}, \Gb_{\cI_2}$ having all-one diagonals and the remaining $(p - 2q) \times q$ submatrix $\Gb_{\cI_3}$ having no all-zero columns, i.e. $\forall j \in [q]$, $(\Gb_{\cI_1})_{j, j} = (\Gb_{\cI_2})_{j, j} = 1$ and $\sum_{i \in \cI_3} g_{i, j} > 0$.
\end{itemize}
\end{theorem}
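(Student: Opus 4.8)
The plan is to prove generic identifiability in three stages: peel off the covariate layer, resolve the two-layer discrete structure with a Kruskal-type tensor argument, and then upgrade the ``almost everywhere'' claim through real-analyticity. Throughout I fix the discrete parameter $\Gb$ (satisfying $(*)$) and regard $(\Ab,\Bb,\bGamma)$ as the free continuous parameters, so that every obstruction I encounter is a real-analytic function of these and it suffices to check that the relevant sufficient conditions for identifiability hold off a measure-zero analytic set.

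Stage 1 (the covariate layer). First I would recover $\bGamma$ and the $d$ component distributions $\{\PP(\yb\mid z)\}_{z\in[d]}$ from the observable map $\xb\mapsto\PP(\yb\mid\xb)=\sum_{z=1}^d\PP(z\mid\xb)\,\PP(\yb\mid z)$. Because $\cP_x$ has full rank, the functions $\xb\mapsto\PP(z\mid\xb)$ coming from the multinomial-logistic form \eqref{eq:model_z_x} are linearly independent, so evaluating at $d$ suitable covariate values gives an invertible weight matrix whose inverse isolates each $\PP(\yb\mid z)$ and, from the resulting log-odds, the rows of $\bGamma$ up to the class relabelling $\Pb_z$. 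This is the same mechanism underlying Theorem \ref{theo:strict} and is exactly where the full-rank hypothesis on $\cP_x$ is used.

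Stage 2 (the discrete two-layer structure, the engine). I then treat $\{\PP(\yb\mid z)\}_z$ as $d$ finite mixtures over the \emph{same} $2^q$ product emissions $\PP(\yb\mid\wb)=\prod_{i=1}^p\PP(y_i\mid\wb,\gb_i,\beta_{i,0},\bbeta_i)$ with $z$-dependent weights $\PP(\wb\mid z)=\prod_j\alpha_{j,z}^{w_j}(1-\alpha_{j,z})^{1-w_j}$. Splitting $[p]$ along the three blocks $\cI_1,\cI_2,\cI_3$ of $(*)$, I form for each $z$ the three-way contingency tensor of $(\yb_{\cI_1},\yb_{\cI_2},\yb_{\cI_3})$; its factors are the block-marginal emission matrices $M_1,M_2,M_3$ (columns indexed by $\wb$) and the strictly positive weight vector $(\PP(\wb\mid z))_\wb$. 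I would then invoke Kruskal's uniqueness theorem with $2^q$ latent components, for which it suffices that $\mathrm{krank}(M_1)+\mathrm{krank}(M_2)+\mathrm{krank}(M_3)\ge 2\cdot 2^q+2$. Condition $(*)$ is tailored precisely to this bound: the all-one diagonals of $\Gb_{\cI_1},\Gb_{\cI_2}$ are meant to give $M_1,M_2$ full column rank $2^q$, while $\Gb_{\cI_3}$ having no all-zero column is meant to give $\mathrm{krank}(M_3)\ge2$.

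Stage 3 / genericity (the main obstacle). Two items must be made generic. The rank conditions of Stage 2 are handled by analyticity: $\det M_1$, $\det M_2$, and pairwise column-distinctness of $M_3$ are real-analytic in $\Bb$ and not identically zero. For $M_1$ I exhibit an admissible point by switching off the off-diagonal entries of $\Bb_{\cI_1}$, so that the all-one diagonal makes $M_1=\bigotimes_{j}A_j$ a Kronecker product of invertible $2\times2$ blocks; for $M_3$ the no-zero-column property yields distinct columns at a point where each flagged $y_i$ reads a single attribute. Hence each exceptional set is the zero locus of a nontrivial analytic function, of Lebesgue measure zero. The genuinely hard part is that Kruskal only returns the emission and weight columns up to an \emph{arbitrary} bijection $\tau$ of the $2^q$ patterns, and I must force $\tau$ down to the coordinate permutation $\Pb_w$ of \eqref{eq:perm} before reading off $\Gb,\Bb,\Ab$. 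My intended route is geometric: in the logit scale the recovered profiles $\wb\mapsto(\beta_{i,0}+(\gb_i\circ\bbeta_i)^\top\wb)_i$ are the vertices of an affine image of the $q$-cube, and the diagonal ``anchor'' variables guaranteed by $\Gb_{\cI_1}$ (each $w_j$ owning a dedicated child) make single-coordinate flips detectable, forcing $\tau$ to respect the product structure. Controlling $\tau$ in the presence of off-diagonal entries of $\Gb_{\cI_1},\Gb_{\cI_2}$ is where I expect the real difficulty to lie; once a generic sufficient condition for this reduction is isolated, it too cuts out only a measure-zero analytic set, and assembling Stages 1--3 places $\{\perm_\sS\ne\marg_\sS\}$ inside a finite union of such sets, yielding generic identifiability.
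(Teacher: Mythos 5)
Your Stages 2--3 (blockwise Kruskal decomposition over the three index sets $\cI_1,\cI_2,\cI_3$, plus genericity via zero sets of non-trivial analytic functions, with the same exhibition points: zeroing off-diagonal entries of $\Bb_{\cI_1}$ to get a Kronecker product of invertible $2\times 2$ blocks) do match the engine of the paper's proof, but the proposal has three genuine gaps. First, Stage 1 is invalid as stated: you cannot ``evaluate at $d$ suitable covariate values and invert the weight matrix,'' because the weights $\PP(z\mid\xb)$ depend on the unknown $\bGamma$ --- there is no known matrix to invert. When comparing two candidate parameter sets, the identity $\sum_z\PP(z\mid\xb)\PP(\yb\mid z)=\sum_z\PP'(z\mid\xb)\PP'(\yb\mid z)$ involves two different unknown weight systems, and linear independence of softmax functions alone does not separate components from weights. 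The paper runs the layers in the opposite order: it fixes a single $\xb\in\supp(\cP_x)$, Kruskal-decomposes the $\yb$-layer to recover $\law(\yb\mid\wb)$ and $\law(\wb\mid\xb)$ up to permutation, then applies a \emph{second} Kruskal argument to the factorization $\bnu=\bSigma\bmu$ (made generic by a prime-power Vandermonde construction showing the relevant rank conditions fail only on a null set), and only at the very end uses full rank of $\cP_x$ to read $\bGamma$ off the identified $\law(z\mid\xb)$. Your proposal contains no working substitute for this second-layer step, so $\Ab$ and $\bGamma$ are never actually identified.

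Second --- and this is the heart of the theorem --- you explicitly leave unresolved the reduction of Kruskal's arbitrary bijection $\tau$ of the $2^q$ patterns to a coordinate permutation $\Pb_w$ (``once a generic sufficient condition for this reduction is isolated\dots''). The paper solves this concretely: it additionally excludes the null set of parameters for which some row $\bbeta_i$ has entries linearly dependent over the integers; then (i) counting the cardinalities of the logit sets $\{\xi_{\wb,i}:=\beta_{i,0}+(\gb_i\circ\bbeta_i)^\top\wb:\ i\in[p]\}$ pins down $\tau(\zero)=\zero$ and shows $\tau$ permutes $\{\eb_j:\ j\in[q]\}$, inducing a permutation $\sN$ of $[q]$, and (ii) the uniqueness of subset-sum decompositions of $\xi_{\wb,i}-\xi_{\zero,i}$ into the $\beta_{i,j}$'s (which is exactly what integer linear independence buys) forces $\tau(\wb)_j=1$ if and only if $w_{\sN(j)}=1$, i.e.\ $\tau$ is the coordinate permutation. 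Third, even granting everything above, proving identifiability on a full-measure subset $\sS_3\subset\sS_2$ does not finish the proof: a parameter in $\sS_3$ could still share its marginal law with a parameter in the excluded null set, since an analytic map can carry a null set onto an image whose preimage has positive measure. The paper closes this with a geometric measure theory step (an area-formula theorem applied to $\law_\Gb$, whose fibers within the set of active parameters have cardinality at most $(q!)(d!)$), showing the problematic preimage is itself null. Your final sentence assumes this issue away, so even with Stages 1--2 repaired the argument would not yet yield generic identifiability on all of $\sS_2$.
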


The conditions in Theorem \ref{theo:generic} for generic identifiability are significantly weaker than those in Theorem \ref{theo:strict} for strict identifiability, which implies $\sS_2 \supset \sS_1$. For generic identifiability, conditions (i) and (iii) of Theorem \ref{theo:strict}, which impose constraints on the parameters $\Ab, \Bb$, are no longer required. Moreover, condition (ii) in Theorem \ref{theo:strict}, which mandates that $\Gb$ contains three identity blocks, can be relaxed to require only two blocks with all-one diagonals and arbitrary off-diagonal entries. These blockwise conditions in Theorems \ref{theo:strict} and \ref{theo:generic} are common in establishing identifiability for constrained latent class models \citep[e.g.,][]{gu2020partial, gu2023bayesian}.

\subsection{Posterior Consistency}\label{ssec:post_cons}

Next, we establish the posterior consistency of our model. 
The standard definition of posterior consistency requires that, for any $\epsilon$-neighborhood $\cO_\epsilon(\Ab^*, \Bb^*, \bGamma^*, \Gb^*)$ around the true parameter $(\Ab^*, \Bb^*, \bGamma^*, \Gb^*)$, the posterior probability of its complement $\cO_\epsilon^c(\Ab^*, \Bb^*, \bGamma^*, \Gb^*)$ asymptotically converges to zero. We modify this definition to consider an equivalence class-based $\epsilon$-neighborhood $\tilde{\cO}_\epsilon$, which surrounds all permutations of the true parameter $(\Ab^*, \Bb^*, \bGamma^*, \Gb^*)$. Specifically, we define
\begin{equation}\label{eq:eps_nbhd_perm}
\tilde{\cO}_\epsilon(\Ab^*, \Bb^*, \bGamma^*, \Gb^*)
:=
\bigcup_{(\Ab, \Bb, \bGamma, \Gb) \in \perm_{\sS}(\Ab^*, \Bb^*, \bGamma^*, \Gb^*)} \cO_\epsilon(\Ab, \Bb, \bGamma, \Gb)
.
\end{equation}
This adaptation ensures that posterior consistency is defined in a manner that accommodates the permutation invariance of our model.

\begin{definition}\label{defi:post}
The model is posterior consistent at parameter $(\Ab^*, \Bb^*, \bGamma^*, \Gb^*)$ if for any $\epsilon > 0$,
$$
\lim_{N \to \infty}
\PP\Big(
(\Ab, \Bb, \bGamma, \Gb) \in \tilde{\cO}_\epsilon^c(\Ab^*, \Bb^*, \bGamma^*, \Gb^*)
~\Big|~
\xb^{(1:N)}, \yb^{(1:N)}
\Big)
=
0
,
$$
$\PP(\xb^{(1:\infty)}, \yb^{(1:\infty)} ~|~ \Ab^*, \Bb^*, \bGamma^*, \Gb^*)$-almost surely.
\end{definition}

Definition \ref{defi:post} essentially states that, for an independent sequence of observed data $\yb^{(n)}$ and covariates $\xb^{(n)}$ generated from our model under parameters $(\Ab^*, \Bb^*, \bGamma^*, \Gb^*)$, asymptotically, the posterior probability of the complement of any $\epsilon$-neighborhood of $\perm_{\sS}(\Ab^*, \Bb^*, \bGamma^*, \Gb^*)$ should almost surely vanish.
For $\sS$ being a separable metric space, this is equivalent to the posterior distribution being asymptotically supported on the permutations of true parameters in $\perm_{\sS}(\Ab^*, \Bb^*, \bGamma^*, \Gb^*)$.

Recall that $\sS_1$ and $\sS_2$ are the parameter spaces defined in Theorems \ref{theo:strict} and \ref{theo:generic}, ensuring strict and generic identifiability of our model, respectively.
We say that the distribution $\cP_x$ of the covariates $\xb$ has a finite first moment if $\int_{\RR^{p_x}} \|\xb\|_\infty \cP_x(\ud \xb) < \infty$.
Let $\uppi(\Ab, \Bb, \bGamma, \Gb)$ denote the prior distribution over the parameters $(\Ab, \Bb, \bGamma, \Gb)$, supported within the parameter space $\sS$.
Although our recommended choice of prior will be introduced later in Section \ref{ssec:prior}, the following theorem establishes the posterior consistency of our model for any prior $\uppi$ that is absolutely continuous with respect to the base measure $\uplambda \times \upmu$.

\begin{theorem}\label{theo:post}
Let the prior $\uppi$ be absolutely continuous w.r.t. $\uplambda \times \upmu$.
Suppose $\cP_x$ has full rank and a finite first moment, then our model with parameter space $\sS$ is posterior consistent at
(i) any identifiable parameter in $\supp(\uppi)$ if $\sS$ is compact;
(ii) any parameter in $\supp(\uppi)$ if $\sS \subset \sS_1$;
(iii) $\uppi$-almost every parameter in $\sS$ if $\sS \subset \sS_2$.
\end{theorem}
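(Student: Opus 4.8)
Write $\theta = (\Ab, \Bb, \bGamma, \Gb)$ for a parameter and $\theta^* = (\Ab^*, \Bb^*, \bGamma^*, \Gb^*)$ for the truth, and let $P_\theta$ denote the joint law of a pair $(\xb, \yb)$ under \eqref{eq:model}, which factorizes as $P_\theta(\xb, \yb) = \cP_x(\xb)\,\PP(\yb \mid \xb, \theta)$. Since $\cP_x$ does not depend on $\theta$, the samples $(\xb^{(n)}, \yb^{(n)})$ are i.i.d.\ from $P_{\theta^*}$ and the problem is that of a finite-dimensional parametric Bayesian model on the quotient of $\sS$ under the permutation action collapsing each equivalence class \eqref{eq:perm} to a point; the neighborhoods $\tilde\cO_\epsilon$ of \eqref{eq:eps_nbhd_perm} are exactly the $\epsilon$-balls in this quotient. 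The plan is to prove the three parts via two classical results: Doob's theorem for (iii) and Schwartz's theorem for (i) and (ii).

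For (iii) I would argue as follows. By Theorem~\ref{theo:generic}, when $\sS \subset \sS_2$ the nonidentifiable set $\{\theta : \perm_\sS(\theta) \subsetneq \marg_\sS(\theta)\}$ has $(\uplambda \times \upmu)$-measure zero, hence $\uppi$-measure zero since $\uppi \ll \uplambda \times \upmu$. On the complement $\theta \mapsto P_\theta$ is injective modulo $\perm_\sS$, so it descends to an injective Borel map on the quotient, and by identifiability the equivalence class is recovered from the empirical distribution of an infinite sample; this is precisely the measurability hypothesis of Doob's theorem \citep{doob1949application}, which then yields consistency at $\uppi$-almost every $\theta$. The only care needed is that the quotient be standard Borel, which holds because $\perm_\sS$ is the orbit space of the finite group $S_d \times S_q$ acting on the $d$ classes and $q$ attributes.

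For (i) and (ii) I would verify the two hypotheses of Schwartz's theorem \citep{schwartz1965bayes} at $\theta^*$. First the Kullback--Leibler support condition $\uppi(\{\theta : \mathrm{KL}(P_{\theta^*}\,\|\,P_\theta) < \delta\}) > 0$ for all $\delta>0$: because $\cP_x$ cancels, $\mathrm{KL}(P_{\theta^*}\,\|\,P_\theta) = \int \mathrm{KL}\big(\PP(\cdot\mid\xb,\theta^*)\,\|\,\PP(\cdot\mid\xb,\theta)\big)\,\cP_x(\ud\xb)$, and for fixed $\Gb=\Gb^*$ the integrand is smooth in the continuous parameters $(\Ab,\Bb,\bGamma)$, with the finite-first-moment assumption $\int\|\xb\|_\infty\,\cP_x(\ud\xb)<\infty$ controlling the $\bGamma$-dependent softmax term so the integral is finite and continuous near $\theta^*$; since $\theta^*\in\supp(\uppi)$, lies interior to the open sign constraints, and $\uppi$ has positive density on the slice $\{\Gb=\Gb^*\}$, small $L_1$-balls receive positive prior mass. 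Second the testing condition: tests with exponentially decaying errors separating $\theta^*$ from $\tilde\cO_\epsilon^c$. In case (i), compactness of $\sS$ makes $\tilde\cO_\epsilon^c$ compact; identifiability of $\theta^*$ gives $P_{\theta^*}\neq P_{\theta_1}$ for every $\theta_1$ there, each inequality yields a Hellinger-separated pointwise test, and a finite subcover plus the minimax lemma produce the required uniform test, establishing (i).

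The main obstacle is the testing condition in (ii), where $\sS\subset\sS_1$ is strictly identifiable but noncompact, as $\Bb$ and $\bGamma$ are unbounded. My plan is to split $\tilde\cO_\epsilon^c$ into a bounded core $\{\|\Bb\|,\|\bGamma\|\le M\}$, on which the compact argument of (i) applies since \emph{every} parameter of $\sS_1$ is identifiable, and an unbounded tail. On the tail I would use that the conditional laws $\PP(\cdot\mid\xb,\theta)$ lie in the compact simplex on $\{0,1\}^p$ and saturate to the boundary as $\|\Bb\|\to\infty$ or $\|\bGamma\|\to\infty$; since $\theta^*$ is interior (all $\alpha^*_{j,z}\in(0,1)$ and all success probabilities strictly between $0$ and $1$), $P_{\theta^*}$ charges every configuration and is thus separated in total variation from every limiting tail law. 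Equivalently one may pass to a one-point compactification of the logistic parameters, extend $\theta\mapsto P_\theta$ continuously, and check the extended tail image stays away from $P_{\theta^*}$. The delicate point, where I expect to spend the most effort, is uniform separation when only \emph{some} coordinates of $\Bb$ or $\bGamma$ diverge, so the limit need not be degenerate: here I would combine per-coordinate saturation estimates with the structure of $\sS_1$ --- the three identity blocks of $\Gb$ together with $\beta_{i,j}\neq 0$ forcing some $y_i$ to witness any diverging $\beta$, and the full-rank of $\cP_x$ forcing the $\xb$-profile to witness any diverging $\bgamma_h$ --- to guarantee a uniform Hellinger gap. With both Schwartz hypotheses verified, consistency at every $\theta^*\in\supp(\uppi)$ follows, giving (ii).
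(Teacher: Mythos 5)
Your parts (i) and (iii) follow essentially the paper's own route: Schwartz's theorem with a KL-support argument plus a compactness transfer for (i), and Doob's theorem on the quotient space $\perm(\sS)$ (after removing the $\uppi$-null nonidentifiable set supplied by Theorem \ref{theo:generic}) for (iii); the paper merely formalizes your ``standard Borel quotient'' step by constructing an explicit metric on $\perm(\sS)$ (Lemma \ref{lemm:quotient_metric}), and for (i) it sidesteps explicit test construction by applying Schwartz in the weak topology on the finite simplex $\cS^{2^p-1}$, where tests are automatic.

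The genuine gap is in (ii), in two places. First, your ``bounded core'' $\{\|\Bb\|,\|\bGamma\|\le M\}$ is not compact inside $\sS_1$: the entries of $\Ab$ range over the open interval $(0,1)$ (and $\sS_1$ imposes the open constraints $\beta_{i,j}\ne 0$), so sequences with $\alpha_{j,h}\to 0$ or $1$ escape every such core while staying bounded. The paper's ``actively compact'' sets $\sB_m$ in \eqref{eq:sBm_seq} include the constraint $\frac1k\le\alpha_{j,h}\le 1-\frac1k$ precisely for this reason, and the boundary-escaping $\Ab$ sequences must then be treated as part of the tail. Second, and more importantly, your tail argument---saturation plus ``$P_{\theta^*}$ charges every configuration, hence is separated in total variation from every limiting tail law''---fails for partial divergence, which is exactly the case you flag as delicate. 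If only $\beta_{i,j}\to\infty$ with $g_{i,j}=1$ and bounded intercepts (case $C_3$ in the paper), or if $\alpha_{j,h}\to 0$ (case $C_5$), then only individual \emph{mixture components} degenerate; the limiting marginal law of $\yb$ is still strictly positive on all of $\{0,1\}^p$, so no total-variation separation follows from support considerations alone. The mechanism the paper uses, and which your sketch omits, is the uniqueness of the layered mixture decomposition at limit points: in the proof of Lemma \ref{lemm:S1_almost_compact}, if $\cL(\theta^{(m)})\to\cL(\theta^*)$ then compactness of the probability simplices together with the Kruskal-type Lemmas \ref{lemm:y_to_w} and \ref{lemm:w_to_z} (whose hypotheses are satisfied by the \emph{true} parameter in $\sS_1$) force $(\bLambda^{(m)},\bnu^{(m)},\bSigma^{(m)},\bmu^{(m)})$ to converge, up to permutation, to $(\bLambda^*,\bnu^*,\bSigma^*,\bmu^*)$, all of whose entries are strictly positive; each divergence mode $C_1$--$C_9$ then produces a zero entry in the limit, a contradiction. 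Without this decomposition-uniqueness step (or an equivalent device), ``per-coordinate saturation estimates plus the structure of $\sS_1$'' do not yield the uniform Hellinger gap your test construction needs, so as written (ii) is not established.
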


The proof of Theorem \ref{theo:post} is in the Supplementary Material Section \ref{supp_sec:post_cons}, with an intuitive overview in Section \ref{supp_sec:intuit}.
The scenarios in Theorem \ref{theo:post} encompass virtually all parameter spaces of practical interest. For model interpretability, we can require that each attribute $w_j$ affects at least two data entries $y_i$. For high-dimensional data $\yb$, this suggests a parameter space $\sS \subset \sS_2$. Moreover, we can truncate the range of each continuous parameter in $\bGamma, \Bb$ to be bounded by a sufficiently large constant to facilitate theoretical analysis without having a practical impact.

\section{Limiting Behavior in High-Dimensional Clustering}\label{sec:high_dim}

Bayesian inference for mixture models has been shown to suffer from a curse of dimensionality.
\citet{chandra2023escaping} show that as the dimensionality of the data increases, the posterior distributions of Gaussian mixture models asymptotically assign all the data to a single shared cluster or place each data in its own separate cluster.
We prove a similar result for multivariate Bernoulli mixture models in Section \ref{ssec:curse}. 

\texttt{BLIP} is designed to overcome this curse of dimensionality, by introducing the middle layer of fine-grained latent attributes $\wb$ having dimension $q \ll p$.
We formalize this idea in Section \ref{ssec:oracle_prob} by defining the Bayes oracle clustering probability for our model, which is invariant to the dimension $p$ of the observed data $\yb$. Then in Section \ref{ssec:escape}, under interpretable diversity and separation conditions, we prove that the posterior probability of the deep latent class $z$ in our model asymptotically converges to its Bayes oracle clustering probability as $p \to \infty$. This result suggests that our model successfully escapes the curse of dimensionality as $p \to \infty$.

\subsection{Curse of Dimensionality in Bayesian Mixture Models}\label{ssec:curse}

In this subsection, we analyze Bayesian Bernoulli mixture models for clustering $N$ observations of $p$-dimensional binary vectors $\yb^{(1:N)}$, under the asymptotic regime of $N$ fixed and $p \to \infty$.
To avoid confusion with previous work on the consistency and inconsistency of Dirichlet mixture models under various prior distributions \citep{miller2014inconsistency, ascolani2023clustering}, we emphasize the distinction that they study a different asymptotic regime with $p$ fixed and $N \to \infty$.

To demonstrate the curse of dimensionality, assume for simplicity that observations $\yb^{(1:N)}$ are independent realizations from a single-cluster Bernoulli mixture model. That is, for a sequence of true parameters $\{a_i^*\}_{i = 1}^\infty \subset (0, 1)$, we have the data generating distribution
\begin{equation}\label{eq:mix_true}
\PP_{\mix}^*(\yb^{(1:N)})
=
\prod_{i = 1}^p (a_i^*)^{\sum_{n = 1}^N y_i^{(n)}} (1 - a_i^*)^{\sum_{n = 1}^N (1 - y_i^{(n)})}.
\end{equation}
As a specific example, we let the true parameters be $a_i^* = \frac12$ for all $i \ge 1$.

We consider clustering the $N$ observations $\yb^{(1:N)}$ using a Bayesian Bernoulli mixture model. Let $z^{(n)}$ denote the latent class of observation $n$, and let $a_{i, z}$ be the Bernoulli parameter associated with the conditional distribution of data entry $y_i^{(n)}$ given the latent class $z^{(n)} = z$.
We denote the collection of all parameters $a_{i, z}$ by $\ba$. Under the Bernoulli mixture model with parameters $\ba$, the conditional distribution of $\yb^{(1:N)} ~|~ z^{(1:N)}$ is given by
\begin{equation}\label{eq:mix_model}
\PP_{\mix}(\yb^{(1:N)} ~|~ z^{(1:N)}, \ba)
=
\prod_{n = 1}^N \prod_{i = 1}^p \PP_{\mix}(y_i^{(n)} ~|~ z^{(n)}, a_{i, z^{(n)}})
=
\prod_{n = 1}^N \prod_{i = 1}^p a_{i, z^{(n)}}^{y_i^{(n)}} (1 - a_{i, z^{(n)}})^{1 - y_i^{(n)}}
.
\end{equation}

We place an arbitrary prior distribution on the cluster assignments $z^{(1:N)}$, subject to the constraint that every possible partition of $[N]$ is assigned a positive prior probability. This encompasses finite mixture models with a prior on the number of clusters, Bayesian nonparametric mixture models, and mixture models with covariate dependent clusters. As a common default in mixture models for binary data, we choose uniform priors for the parameters $\ba$, where each $a_{i, z}$ independently follows $a_{i, z} \sim \mathrm{Uniform}(0, 1)$.

Given our prior choices and the data-generating distribution $\PP_{\mix}^*$ in \eqref{eq:mix_true}, the following proposition characterizes the asymptotic behavior of the Bernoulli mixture model \eqref{eq:mix_model} as $p \to \infty$.
We defer intuitive and detailed proofs to the Supplementary Material Sections \ref{supp_sec:intuit} and \ref{supp_sec:mix}, respectively.

\begin{proposition}\label{prop:curse_example}
Let $z^{(1:N)} = \{\{1\}, \{2\}, \ldots, \{N\}\}$ denote the partition of observations $[N]$ into $N$ separate clusters. Then for any other partition $\sZ$ of $[N]$, we have
$$
\liminf_{p \to \infty} \frac{\PP_{\mix}(z^{(1:N)} = \{\{1\}, \{2\}, \ldots, \{N\}\} ~|~ \yb^{(1:N)})}{\PP_{\mix}(z^{(1:N)} = \sZ ~|~ \yb^{(1:N)})}
=
\infty
,
$$
$\PP_{\mix}^*$-almost surely.
\end{proposition}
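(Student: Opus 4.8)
The plan is to integrate out the Bernoulli parameters $\ba$ under their $\mathrm{Uniform}(0,1)$ priors, reducing each partition's posterior mass to a product of $p$ independent per-coordinate factors, and then to show that the log-ratio between the all-singletons marginal likelihood and that of any competitor $\sZ$ is a sum of i.i.d. terms with strictly positive mean, so that the strong law of large numbers drives it to $+\infty$.

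First I would marginalize over $\ba$. Write $\yb_i := (y_i^{(1)}, \ldots, y_i^{(N)})$ for the $i$th coordinate across samples, and let $\sZ = \{C_1, \ldots, C_K\}$ have sizes $m_k := \card(C_k)$ and coordinate-wise cluster sums $S_{i,k} := \sum_{n \in C_k} y_i^{(n)}$. By conjugacy of the uniform prior with the Bernoulli likelihood, the marginal likelihood factorizes as
$$
\PP_\mix(\yb^{(1:N)} \mid z^{(1:N)} = \sZ) = \prod_{i=1}^p f_\sZ(\yb_i), \qquad f_\sZ(\yb_i) := \prod_{k=1}^K \frac{S_{i,k}!\,(m_k - S_{i,k})!}{(m_k+1)!}.
$$
Summing the Beta integrals shows each $f_\sZ$ is a probability mass function on $\{0,1\}^N$, namely the per-coordinate Beta--Bernoulli marginal induced by the clustering. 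For the all-singletons partition every cluster has size one, so $f_{\mathrm{sing}}(\yb_i) \equiv 2^{-N}$ is uniform on $\{0,1\}^N$; and because the example fixes $a_i^* = \tfrac12$, the true per-coordinate law $g$ under $\PP_\mix^*$ is also uniform, so $g = f_{\mathrm{sing}}$.

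Next I would pass to the posterior ratio. The partition prior gives every partition of $[N]$ a fixed positive mass independent of $p$, so the prior ratio is a positive constant and it suffices to study the likelihood ratio. Taking logarithms,
$$
\log \frac{\PP_\mix(\yb^{(1:N)} \mid \mathrm{sing})}{\PP_\mix(\yb^{(1:N)} \mid \sZ)} = \sum_{i=1}^p \log R_i, \qquad R_i := \frac{f_{\mathrm{sing}}(\yb_i)}{f_\sZ(\yb_i)}.
$$
Under $\PP_\mix^*$ the coordinate vectors $\yb_1, \yb_2, \ldots$ are i.i.d., hence the $\log R_i$ are i.i.d.; they are also bounded, since $\yb_i$ ranges over the finite set $\{0,1\}^N$ on which $f_\sZ$ is strictly positive. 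Their common mean equals, using $g = f_{\mathrm{sing}}$,
$$
\mathbb{E}[\log R_1] = \sum_{\yb \in \{0,1\}^N} g(\yb)\, \log \frac{g(\yb)}{f_\sZ(\yb)} = \mathrm{KL}(g \,\|\, f_\sZ).
$$

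The crux, and the step I expect to be the main obstacle, is to show that $\mathrm{KL}(g \,\|\, f_\sZ) > 0$ for every partition $\sZ \ne \mathrm{sing}$, which by the information inequality is equivalent to $f_\sZ \ne g$. I would argue this directly: any such $\sZ$ has a cluster $C_k$ with $m_k \ge 2$, and the factor $S \mapsto S!\,(m_k - S)!/(m_k+1)!$ is non-constant in $S \in \{0, \ldots, m_k\}$ (for instance $S=0$ gives $1/(m_k+1)$ whereas $S=1$ gives $1/(m_k(m_k+1))$), so $f_\sZ$ is not constant on $\{0,1\}^N$ and therefore differs from the uniform $g$. Hence $\mathbb{E}[\log R_1] > 0$, and the strong law of large numbers yields $\tfrac1p \sum_{i=1}^p \log R_i \to \mathrm{KL}(g \,\|\, f_\sZ) > 0$ almost surely. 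Consequently $\sum_{i=1}^p \log R_i \to +\infty$, so the likelihood ratio and hence the posterior ratio diverge $\PP_\mix^*$-almost surely, which gives $\liminf_{p\to\infty}$ of the posterior ratio equal to $\infty$ and proves the claim. It is worth noting that the identity $g = f_{\mathrm{sing}}$ is exactly what makes the singleton partition dominant here, which is why the result is stated for the specific choice $a_i^* = \tfrac12$ rather than for general true parameters.
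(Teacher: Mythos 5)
Your proof is correct, and it takes a genuinely different route from the paper's. You marginalize $\ba$ to factor the marginal likelihood across the $p$ coordinates, observe that under $a_i^*=\tfrac12$ the true per-coordinate law is exactly the all-singletons marginal $f_{\mathrm{sing}}\equiv 2^{-N}$, and then drive the log-likelihood-ratio to $+\infty$ by the strong law of large numbers, since each i.i.d.\ bounded term has mean $\mathrm{KL}(g\,\|\,f_\sZ)>0$ (positivity following from the non-constancy of $f_\sZ$ whenever some cluster has size at least two). The paper instead proves an intermediate lemma: for any cluster of size $M\ge 2$, splitting out a single observation yields a partition whose likelihood dominates asymptotically; this is shown by an explicit expansion of the marginal likelihood ratio in terms of the binomial counts $N_i$, a sub-exponential concentration (Bernstein-type) bound, and Borel--Cantelli, and the proposition then follows by telescoping over the number of clusters $K=1,\dots,N-1$. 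Your argument is shorter and more elementary, and it isolates cleanly \emph{why} the singleton partition wins: the truth coincides with the singleton Bayes marginal, so every coarser partition pays a strictly positive KL penalty per coordinate. The paper's route is heavier but buys two things: a structural statement of independent interest (the posterior favors \emph{every} refinement, not just the finest partition, which is how the over-clustering mechanism is usually described), and explicit exponential-in-$p$ probability bounds rather than the purely asymptotic SLLN. Both arguments, as written, rely on the specific choice $a_i^*=\tfrac12$; your remark correctly flags that your identity $g=f_{\mathrm{sing}}$ is where this enters.
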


A key implication of Proposition \ref{prop:curse_example} is $\lim_{p \to \infty} \PP_{\mix}(z^{(1:N)} = \{\{1\}, \{2\}, \ldots, \{N\}\} ~|~ \yb^{(1:N)}) = 1$, $\PP_{\mix}^*$-almost surely.
This result indicates that the Bayesian Bernoulli mixture model asymptotically assigns each observation to its own separate cluster, despite the data being generated from a well-specified true model with only one cluster. Intuitively, as the dimensionality of the data $p$ increases, the mixture model starts to misinterpret the randomness in the observed data as meaningful differences, ultimately leading to over-clustering. A discussion of related frequentist literature is provided in the Supplementary Material Section \ref{supp_sec:lit}.
Although Proposition \ref{prop:curse_example} presents a specific case of Bayesian Bernoulli mixture models with a random and unbounded number of clusters, it sheds light on how the resulting clusters in general Bayesian mixture models can become increasingly unreliable as the dimension $p$ increases.

\subsection{Bayes Oracle Clustering Probability}\label{ssec:oracle_prob}

Inspired by \citet{chandra2023escaping}, we define the Bayes oracle clustering probability for our \texttt{BLIP} model. We assume that the observed data $\yb^{(1:N)}$ follow the distribution in \eqref{eq:model} and that there is an oracle that has access to the true values $\wb_*^{(1:N)}$ of the latent binary attributes. Given this knowledge, the oracle probability of deep latent clusters $z^{(1:N)}$ is defined as their conditional probability given the covariates $\xb^{(1:N)}$ and the true binary attributes $\wb_*^{(1:N)}$, as formally stated in the following definition. The oracle probability does not depend on the observed data or its dimensionality $p$.

\begin{definition}\label{defi:oracle_prob}
Given the true attribute vectors $\wb_*^{(1:N)}$, the oracle probability of $z^{(1:N)}$ is
$$
\PP(z^{(1:N)} ~|~ \xb^{(1:N)}, \wb_*^{(1:N)})
=
\frac{
\PP(\wb_*^{(1:N)} ~|~ z^{(1:N)})
\PP(z^{(1:N)} ~|~ \xb^{(1:N)})
}{\sum_{\tilde{z}^{(1:N)} \in [d]^N}
\PP(\wb_*^{(1:N)} ~|~ \tilde{z}^{(1:N)})
\PP(\tilde{z}^{(1:N)} ~|~ \xb^{(1:N)})
}
.
$$
\end{definition}

In Definition \ref{defi:oracle_prob}, $\PP(\wb_*^{(1:N)} ~|~ z^{(1:N)})$ and $\PP(z^{(1:N)} ~|~ \xb^{(1:N)})$ represent the conditional probabilities of $\wb_*^{(1:N)} ~|~ z^{(1:N)}$ and $z^{(1:N)} ~|~ \xb^{(1:N)}$, respectively, with the parameters $\bGamma, \Ab$ marginalized out using their prior distributions.
By assuming that the priors on $\bGamma$ and $\Ab$ are independent, as in our recommended prior in Section \ref{ssec:prior}, we have $\PP(\wb_*^{(1:N)} ~|~ z^{(1:N)}) = \int_{[0, 1]^{q \times d}} \PP(\Ab) \prod_{n = 1}^N \PP(\wb_*^{(n)} ~|~ z^{(n)}, \Ab) \ud \Ab$ and $\PP(z^{(1:N)} ~|~ \xb^{(1:N)}) = \int_{\RR^{d \times p_x}} \PP(\bGamma) \prod_{n = 1}^N \PP(z^{(n)} ~|~ \xb^{(n)}, \bGamma) \ud \bGamma$.

\subsection{Escaping the Curse of Dimensionality}\label{ssec:escape}

In this section, we show under interpretable conditions that, as the data dimensionality $p \to \infty$, the posterior probability of $z^{(1:N)} ~|~ \xb^{(1:N)}, \yb^{(1:N)}$ converges to the Bayes oracle clustering probability defined in Definition \ref{defi:oracle_prob}.
As discussed in Section \ref{ssec:oracle_prob}, we let $(\Ab^*, \Bb^*, \bGamma^*, \Gb^*)$ and $\wb_*^{(1:N)}$ denote the true parameters and the true attribute vectors of the data generating process.
We start by making some assumptions about $(\one ~ \Gb^*) \circ \Bb^*$ and $\wb_*^{(1:N)}$.

\begin{assumption}\label{assu:oracle_w}
The set of true attribute vectors $\{\wb_*^{(n)}:~ n \in [N]\} \supset \{0, 1\}^q$.
\end{assumption}

The Assumption \ref{assu:oracle_w} essentially requires that the collection of $\wb_*^{(1:N)}$ is diverse, covering all $2^q$ possible combinations of binary attributes. For example, the collection of $\wb_*^{(1:4)} = (0, 0)$, $(0, 1)$, $(1, 0)$, $(1, 1)$ satisfies Assumption \ref{assu:oracle_w} with $q = 2$.

\begin{assumption}\label{assu:oracle_linear}
There exists some $\upsilon_0^* \in \RR$, vector $\bupsilon^* \in \RR^q$, and $\delta > 0$ such that
$$
\limsup_{p \to \infty} \left| \logit\left(
\frac{1}{p} \sum_{i = 1}^p \frac{\exp(\beta_{i, 0}^* + (\gb_i^* \circ \bbeta_i^*)^\top \wb_*^{(n)})}{1 + \exp(\beta_{i, 0}^* + (\gb_i^* \circ \bbeta_i^*)^\top \wb_*^{(n)})}
\right)
-
\upsilon_0^*
-
(\bupsilon^*)^\top \wb_*^{(n)}
\right|
<
\delta
,\quad
\forall n \in [N]
.
$$
\end{assumption}

Assumption \ref{assu:oracle_linear} is a mild condition that ensures that $(\one ~ \Gb^*) \circ \Bb^*$ is asymptotically well-conditioned. There are simpler sufficient conditions for this assumption, such as $\limsup_{p \to \infty} \allowbreak \frac{1}{p} \sum_{i = 1}^p \beta_{i, 0}^* = \upsilon_0^*$ and $\limsup_{p \to \infty} \frac{1}{p} \sum_{i = 1}^p (\gb_i^* \circ \bbeta_i^*) = \bupsilon^*$.
We provide details in the Supplementary Material Section \ref{supp_sec:oracle}.

\begin{assumption}\label{assu:oracle_inequality}
For some known constant $\tau^*$, we have
$$
\inf\left\{
|\cbb^\top \bupsilon^*|:~
\cbb \in \{-2, -1, 0, 1, 2\}^q,~
\cbb \ne 0
\right\}
\ge
\tau^*
>
8\delta
.
$$
\end{assumption}

We note that $\inf\big\{ |\cbb^\top \bupsilon^*|:~ \cbb \in \{-2, -1, 0, 1, 2\}^q,~ \cbb \ne 0 \big\}$ is equivalent to the minimum of $\big| (\bupsilon^*)^\top (\wb_*^{(n_1)} - \wb_*^{(n_2)} - \wb_*^{(n_3)} + \wb_*^{(n_4)}) \big|$ over all $n_1, n_2, n_3, n_4 \in [N]$ with $\wb_*^{(n_1)} - \wb_*^{(n_2)} - \wb_*^{(n_3)} + \wb_*^{(n_4)} \ne \zero$, which is positive for almost every $\bupsilon^* \in \RR^q$.
This quantity appears in the proof of the following theorem, where we construct an estimator for $(\bupsilon^*)^\top \wb_*^{(n)}$ using $\yb^{(n)}$.
The assumption $\tau^* > 8\delta$ provides theoretical control over the accuracy of this estimator. In general, Assumption \ref{assu:oracle_inequality} is a mild condition for sequences of $\gb_i^*, \beta_{i, 0}^*, \bbeta_i^*$ that satisfy Assumption \ref{assu:oracle_linear} with a small $\delta$.

A prior distribution that is independent over $\Ab, \Bb, \bGamma, \Gb$, is said to be i.i.d. and symmetric on $\Ab$ if $\PP(\Ab) = \prod_{j = 1}^q \prod_{h = 1}^d \PP(\alpha_{j, h})$, where $\PP(\alpha_{j, h})$ is identical for all $j \in [q], h \in [d]$ and satisfies the symmetry condition $\PP(\alpha_{j, h}) = \PP(1 - \alpha_{j, h})$. For example, $\alpha_{j, h} \stackrel{iid}{\sim} \mathrm{Beta}(b, b)$ for some $b > 0$ specifies an i.i.d. and symmetric prior to $\Ab$. 

\begin{theorem}\label{theo:oracle}
Let Assumptions \ref{assu:oracle_w}, \ref{assu:oracle_linear}, \ref{assu:oracle_inequality} hold and the prior distribution be i.i.d. and symmetric on $\Ab$, then
$$
\lim_{p \to \infty} \PP(z^{(1:N)} ~|~ \xb^{(1:N)}, \yb^{(1:N)})
=
\PP(z^{(1:N)} ~|~ \xb^{(1:N)}, \wb_*^{(1:N)})
$$
holds $\PP(\yb^{(1:N)} ~|~ \wb_*^{(1:N)}, \Gb^*, \Bb^*)$-almost surely.
\end{theorem}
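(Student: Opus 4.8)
The plan is to show that the posterior probability of $z^{(1:N)}$ converges to the oracle probability by proving that the effect of the observed data $\yb^{(1:N)}$ on the posterior, after marginalizing out the continuous parameters, asymptotically collapses onto the information contained in the true attributes $\wb_*^{(1:N)}$. Writing the posterior as
$$
\PP(z^{(1:N)} \mid \xb^{(1:N)}, \yb^{(1:N)})
\propto
\PP(z^{(1:N)} \mid \xb^{(1:N)})
\cdot
\PP(\yb^{(1:N)} \mid z^{(1:N)}),
$$
where the second factor integrates out $\Ab, \Bb, \Gb$ and sums over the latent $\wb^{(1:N)}$, I would compare this to the oracle expression in Definition \ref{defi:oracle_prob}, which replaces $\PP(\yb^{(1:N)} \mid z^{(1:N)})$ by $\PP(\wb_*^{(1:N)} \mid z^{(1:N)})$. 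Since $\PP(z^{(1:N)} \mid \xb^{(1:N)})$ appears identically in both, and this factor is a ratio of finitely many strictly positive terms (because $\cP_x$ and the logistic model give every class positive probability), it suffices to show that the data likelihood ratio concentrates: for each fixed assignment $z^{(1:N)}$, the posterior weight assigned to the latent configuration $\wb^{(1:N)} = \wb_*^{(1:N)}$ tends to $1$ as $p \to \infty$. Equivalently, I want to argue that the data $\yb^{(n)}$ asymptotically \emph{identify} the true attribute vector $\wb_*^{(n)}$ for every $n$.

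The core of the argument is to construct, from $\yb^{(n)}$, a consistent estimator of the quantity $(\bupsilon^*)^\top \wb_*^{(n)}$ and use Assumption \ref{assu:oracle_inequality} to separate distinct attribute vectors. The natural statistic is the empirical mean $\frac{1}{p}\sum_{i=1}^p y_i^{(n)}$. By Assumption \ref{assu:oracle_linear}, its expectation, passed through the logit, sits within $\delta$ of $\upsilon_0^* + (\bupsilon^*)^\top \wb_*^{(n)}$; by a strong law of large numbers for the independent (non-identically distributed) Bernoulli entries $y_i^{(n)}$, the empirical mean converges almost surely to this population mean, so its logit lands within $\delta$ of $\upsilon_0^* + (\bupsilon^*)^\top \wb_*^{(n)}$ for all large $p$, $\PP(\yb^{(1:N)} \mid \wb_*^{(1:N)}, \Gb^*, \Bb^*)$-almost surely. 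I would then argue that any candidate attribute vector $\wb \ne \wb_*^{(n)}$ whose corresponding population mean also fits the data would force $(\bupsilon^*)^\top(\wb - \wb_*^{(n)})$ to be smaller than the margin permitted by Assumption \ref{assu:oracle_inequality}. More precisely, the difference-of-two-samples structure flagged in the text — differences of the form $\wb_*^{(n_1)} - \wb_*^{(n_2)} - \wb_*^{(n_3)} + \wb_*^{(n_4)}$, with coefficients in $\{-2,\dots,2\}$ — shows that the gap $\tau^* > 8\delta$ leaves enough room, after accounting for four $\delta$-errors, to distinguish the true configuration from every competitor. This is where the symmetric i.i.d. prior on $\Ab$ enters: it ensures the marginal likelihood $\PP(\yb^{(1:N)}\mid z^{(1:N)})$ does not a priori favor $\wb$ over $1-\wb$, so no spurious attribute labeling can dominate through the prior.

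I would then combine these pieces. For each $z^{(1:N)}$, decompose $\PP(\yb^{(1:N)}\mid z^{(1:N)}) = \sum_{\wb^{(1:N)}} \PP(\yb^{(1:N)}\mid \wb^{(1:N)}) \,\PP(\wb^{(1:N)}\mid z^{(1:N)})$, and show that the ratio of the $\wb^{(1:N)} = \wb_*^{(1:N)}$ summand to the total tends to $1$, uniformly over the finitely many $z^{(1:N)} \in [d]^N$. Granting this, $\PP(\yb^{(1:N)}\mid z^{(1:N)})$ is asymptotically proportional to $\PP(\wb_*^{(1:N)}\mid z^{(1:N)})$ with a $z^{(1:N)}$-independent factor, which cancels in the normalized posterior and yields exactly the oracle ratio. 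Assumption \ref{assu:oracle_w}, guaranteeing that all $2^q$ attribute patterns appear, ensures that the concentration argument is needed — and holds — across the full range of attribute vectors rather than a degenerate subset.

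The main obstacle I anticipate is making the attribute-identification step fully rigorous: the likelihood $\PP(\yb^{(n)} \mid \wb^{(n)})$ for a \emph{wrong} $\wb^{(n)}$ is not simply small but must be controlled after integrating $\Bb$ against its prior and summing over the exponentially many joint configurations $\wb^{(1:N)}$. The delicate point is that the per-coordinate log-likelihood difference between the true and a competing attribute vector accumulates linearly in $p$ at a rate governed by $\tau^* - 8\delta > 0$, so the log marginal-likelihood gap diverges; one must verify that this linear-in-$p$ separation dominates the prior normalizing constants and the entropy of summing over $\wb^{(1:N)}$, both of which are $O(1)$ in $p$. Turning the almost-sure logit convergence into a uniform exponential bound on these likelihood ratios, robust to the integration over the unbounded parameters in $\Bb$, is the technical heart of the proof.
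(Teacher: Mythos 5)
Your overall architecture (decompose the posterior over the latent attributes, use the margin $\tau^* > 8\delta$ to separate attribute patterns, exploit the finiteness of $[d]^N$) points in the right direction, but your central intermediate claim is false: the posterior weight on the exact configuration $\wb^{(1:N)} = \wb_*^{(1:N)}$ cannot tend to one. The reason is label switching. With the i.i.d.\ symmetric prior on $\Ab$, the factor $\PP(\wb^{(1:N)} \mid z^{(1:N)})$ is invariant under permuting the $q$ attribute coordinates and under flipping any attribute $0 \leftrightarrow 1$ simultaneously across all $n$; with the paper's (exchangeable) priors on $\Gb, \Bb$, the marginal likelihood $\PP(\yb^{(1:N)} \mid \wb^{(1:N)})$ is likewise invariant under attribute permutations. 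Under Assumption \ref{assu:oracle_w} every nontrivial permutation produces a configuration distinct from $\wb_*^{(1:N)}$ carrying exactly the same posterior weight, so the mass on $\wb_*^{(1:N)}$ is bounded by roughly $1/q!$, and your step ``the ratio of the $\wb_*^{(1:N)}$ summand to the total tends to $1$'' fails for $q \ge 2$. You also invoke the symmetric prior on $\Ab$ backwards: its role is not to prevent spurious labelings from dominating but the opposite --- it forces every label-switched configuration to carry \emph{equal} weight, which is precisely why concentration on a single configuration is impossible.

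The paper's proof repairs exactly this point. It defines the equivalence class $\cW(\wb_*^{(1:N)})$ in \eqref{eq:cW} (attribute permutations composed with entrywise $0/1$ flips) and proves two separate facts: (a) on the event $\cH_p = \bigcap_{n} \cH_{p, n}$ that $\logit\big(\frac1p\sum_{i} y_i^{(n)}\big)$ lies within $\delta$ of $\upsilon_0^* + (\bupsilon^*)^\top \wb_*^{(n)}$ for all $n$, the posterior on $\wb^{(1:N)}$ is supported \emph{inside} $\cW(\wb_*^{(1:N)})$ (Lemma \ref{lemm:w_in_cW}, a deterministic single-linkage clustering argument on the logit statistics using $\tau^* > 8\delta$); and (b) every member of $\cW(\wb_*^{(1:N)})$ induces exactly the same conditional law $\PP(z^{(1:N)} \mid \xb^{(1:N)}, \wb^{(1:N)})$, which is where the i.i.d.\ symmetric prior on $\Ab$ is genuinely needed (Lemma \ref{lemm:cW_same_prob}). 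Plugging (a) and (b) into the decomposition $\PP(z^{(1:N)} \mid \xb^{(1:N)}, \yb^{(1:N)}) = \sum_{\wb^{(1:N)}} \PP(z^{(1:N)} \mid \xb^{(1:N)}, \wb^{(1:N)})\, \PP(\wb^{(1:N)} \mid \xb^{(1:N)}, \yb^{(1:N)})$ gives \emph{exact} equality with the oracle probability on $\cH_p$, and Lemma \ref{lemm:event_cH_prob} (Azuma--Hoeffding) plus Borel--Cantelli shows $\cH_p$ holds for all large $p$ almost surely. This also dissolves the technical difficulty you flag at the end --- controlling likelihood ratios integrated over the unbounded $\Bb$ and the entropy of the $2^{qN}$ configurations --- since no asymptotic domination of marginal likelihoods is ever required. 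To fix your proof you would need to replace ``concentration on $\wb_*^{(1:N)}$'' by ``support in $\cW(\wb_*^{(1:N)})$'' and add invariance statement (b); without (b) your limit is not even well defined, since different representatives of the class would otherwise yield different conditional laws for $z^{(1:N)}$.
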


Theorem \ref{theo:oracle} states that as the data dimensionality $p \to \infty$, the posterior probability of the deep latent clusters $z^{(1:N)}$ becomes asymptotically equivalent to their oracle probability.
Intuitively, this occurs because, for sufficiently large $p$, the high-dimensional observations $\yb^{(1:N)}$ contain enough information to recover the unobserved attributes $\wb_*^{(1:N)}$, up to certain label switching that does not affect the conditional distribution of $z^{(1:N)} ~|~ \wb^{(1:N)}$.
The detailed proof is provided in the Supplementary Material Section \ref{supp_sec:oracle}.

Crucially, the oracle probability $\PP(z^{(1:N)} ~|~ \xb^{(1:N)}, \wb_*^{(1:N)})$ remains invariant and reliable regardless of the data dimensionality.
Thus, for clustering high-dimensional binary observations, Theorem \ref{theo:oracle} provides a theoretical guarantee of the reliability of posterior inference for deep latent clusters $z^{(1:N)}$.
This result demonstrates that our model effectively avoids the curse of dimensionality faced by Bayesian inference for mixture models, as established in Proposition \ref{prop:curse_example}.

The proof suggests that Theorem \ref{theo:oracle} remains broadly valid for models with arbitrary priors for $z^{(1:N)}$: as long as the model has the dependency structure $z \to \wb \to \mathbf{y}$ and satisfies assumptions \ref{assu:oracle_w}, \ref{assu:oracle_linear}, \ref{assu:oracle_inequality}, the posterior probability of $z^{(1:N)}$ converges to the oracle probability as $p \to \infty$.

\section{Posterior Computation}\label{sec:post}

In this section, we present a hierarchical prior distribution for our model and develop a data-augmented Gibbs sampler to perform posterior inference.
We further numerically demonstrate the consistency of our posterior inference and model selection through a simulated example.

\subsection{Prior Specifications}\label{ssec:prior}

We first specify prior distributions for the parameters $(\Ab, \Bb, \bGamma, \Gb)$ in model \eqref{eq:model}.
Our model incorporates observation-specific covariates $\xb$ through the conditional distribution of the deep latent cluster $z ~|~ \xb$ in \eqref{eq:model_z_x}.
In many applications, one also has access to variable- instead of sample-specific features
\citep{norberg2019comprehensive}. For example, in ecology, each species has its biological traits; in genetics, each DNA nucleotide has a corresponding location; and in survey design, each question has its semantic features. We denote the collection of such meta covariates by $p \times p_t$ matrix $\Tb$, where the $i$th row $\tb_i$ represents the features for the $i$th entry.

As discussed in Section \ref{sec:model}, the binary matrix $\Gb$ encodes the dependency of each $y_i$ on the latent attributes in $\wb$.
We design a hierarchical prior distribution that encourages data entries $y_i$ with similar meta covariates $\tb_i$ to exhibit similar dependency structures on $\wb$:
\begin{equation}\label{eq:prior_G}
g_{i, j} ~|~ \tb_i, \theta_{j, 0}, \btheta_j
\stackrel{ind}{\sim}
\mathrm{Bernoulli}\left(
\frac{
\exp(\theta_{j, 0} + \btheta_j^\top \tb_i)
}{
1 + \exp(\theta_{j, 0} + \btheta_j^\top \tb_i)
}
\right)
,\quad
\theta_{j, 0}
\stackrel{iid}{\sim}
N(0, 1)
,\quad
\btheta_j
\stackrel{iid}{\sim}
N(0, \Ib)
,
\end{equation}
where for each $j \in [q]$, $\theta_{j, 0} \in \RR$ and $\btheta_j \in \RR^{p_t}$ are hyperparameters associated with attribute $w_j$ serving as the regression intercept and coefficients.
Intuitively, we can interpret each $\btheta_j$ as a summary of the effects of the meta covariates relevant to $w_j$, such that a data entry $y_i$ has a higher prior probability of depending on $w_j$ when $\btheta_j^\top \tb_i$ is larger.

For the continuous parameters $\Ab, \Bb, \bGamma$, we adopt independent, weakly informative priors \citep{gelman2006prior} by default, though structured priors can be applied in specific cases.
A detailed discussion is provided in the Supplementary Material Section \ref{supp_sec:post}.

\subsection{Posterior Sampling and Simulation Studies}\label{ssec:sample_sim}

To perform posterior inference for our model, we design a Gibbs sampler using Polya-Gamma data augmentation \citep{polson2013bayesian}.
The details are provided in the Supplementary Material Section \ref{supp_sec:post}, along with an approximate variant to enhance computational efficiency in ultra-high-dimensional settings.
The Gibbs sampler enables efficient and fast posterior computation, with diagnostic trace plots indicating good mixing.

Additionally, we conduct a series of simulation studies in the Supplementary Material Section \ref{supp_sec:sim} to evaluate our posterior computation framework.
These studies assess model selection using Watanabe–Akaike Information Criterion \citep{watanabe2010asymptotic}, posterior inference of model parameters and latent variables, and asymptotic behavior as data dimensionality increases.
Collectively, these results confirm the efficiency and reliability of our inference framework, while numerically validating the theoretical results on posterior consistency and the Bayes oracle clustering property developed in Sections \ref{sec:theo} and \ref{sec:high_dim}.

\section{Application to Joint Species Distribution Modeling}\label{sec:appl}

Joint Species Distribution Model (\texttt{JSDM}) characterize the joint distribution of the occurrences of multiple species and their dependence on environmental covariates \citep{ovaskainen2016uncovering, ovaskainen2024common}.
Traditional \texttt{JSDM}s use continuous latent variables to capture dependencies between species occurrences \citep{norberg2019comprehensive}.
As a demonstration, we apply our model to the Finnish bird dataset, a benchmark in \texttt{JSDM} research \citep{ovaskainen2020joint}.

The Finnish bird dataset records the presence or absence of 50 bird species in various sampling locations in Finland over multiple years. The dataset includes environmental covariates for each sampling location, such as the mean temperature in April and May and habitat type, classified into broadleaf forests, coniferous forests, open habitats, urban habitats, and wetlands.
Additionally, meta covariates, which structure our prior in Section \ref{ssec:prior}, include biological traits of each bird species, such as the log-transformed typical body mass and migration type (classified as long term, short term, and resident).

We used data recorded from 2011 to 2013 as in-sample observations and data recorded in year 2014 as out-of-sample observations, comprising 363 and 137 sampling locations, respectively. The dataset also contains spatial coordinates for each sampling location and a phylogenetic tree of the 50 bird species. Although our model does not explicitly use these data, they provide valuable context for interpreting the deep latent classes and latent attributes inferred by our model.

\subsection{Interpretation}\label{ssec:interp}

\begin{figure}[ht!]
\centering
\includegraphics[width = 0.5\textwidth]{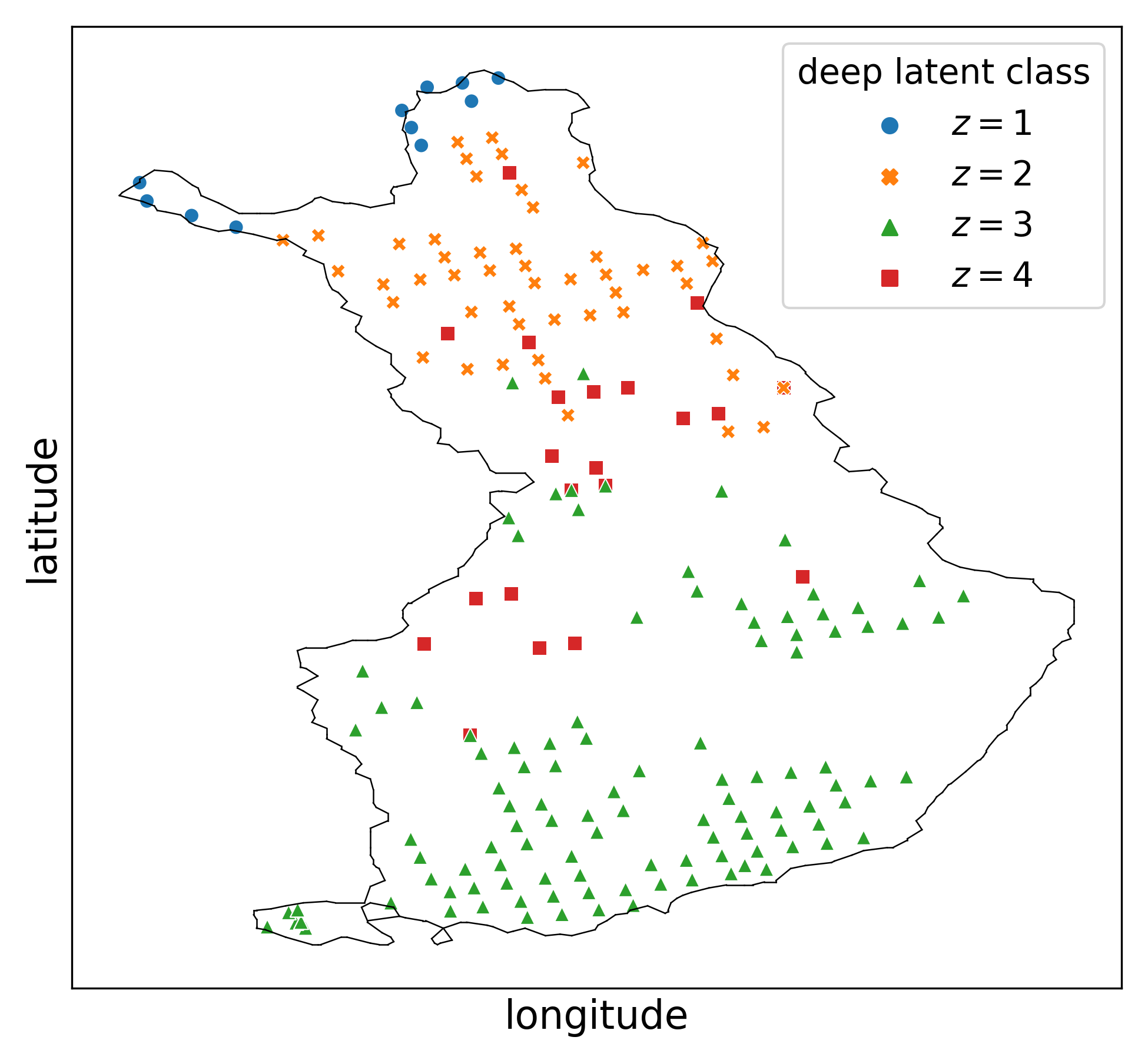}
\caption{
Spatial distribution of bird species sampling locations in Finland, color-coded by their inferred deep latent class $z$. In ecology, such clusters are referred to as regions of common profile. 
}
\label{fig:app_z_map}
\end{figure}

\begin{figure}[ht!]
\centering
\includegraphics[width = \textwidth]{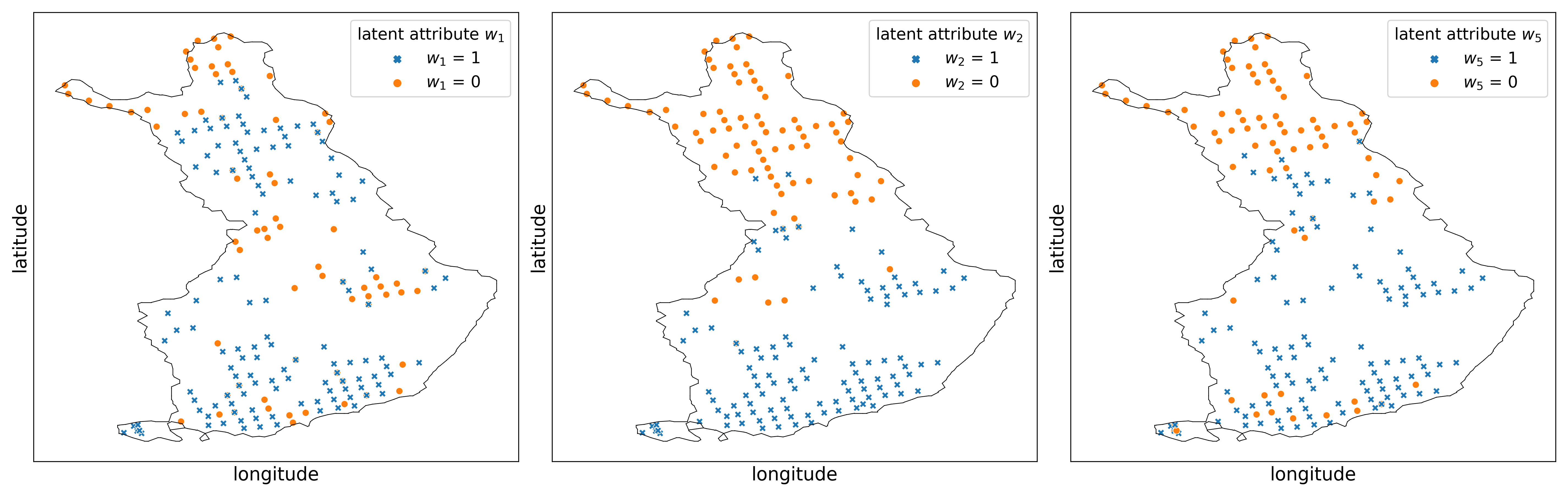}
\caption{
Spatial distribution of bird species sampling locations on the map of Finland, color-coded by the values of their binary latent attributes $w_1, w_2, w_5$, respectively.
For latent attributes $w_3, w_4$, see Supplementary Material Figure \ref{fig:app_w_map_full}.
}
\label{fig:app_w_map}
\end{figure}

\begin{figure}[ht!]
\centering
\begin{subfigure}[b]{0.32\textwidth}
\centering
\includegraphics[width = \textwidth, trim = 0 3 0 0, clip]{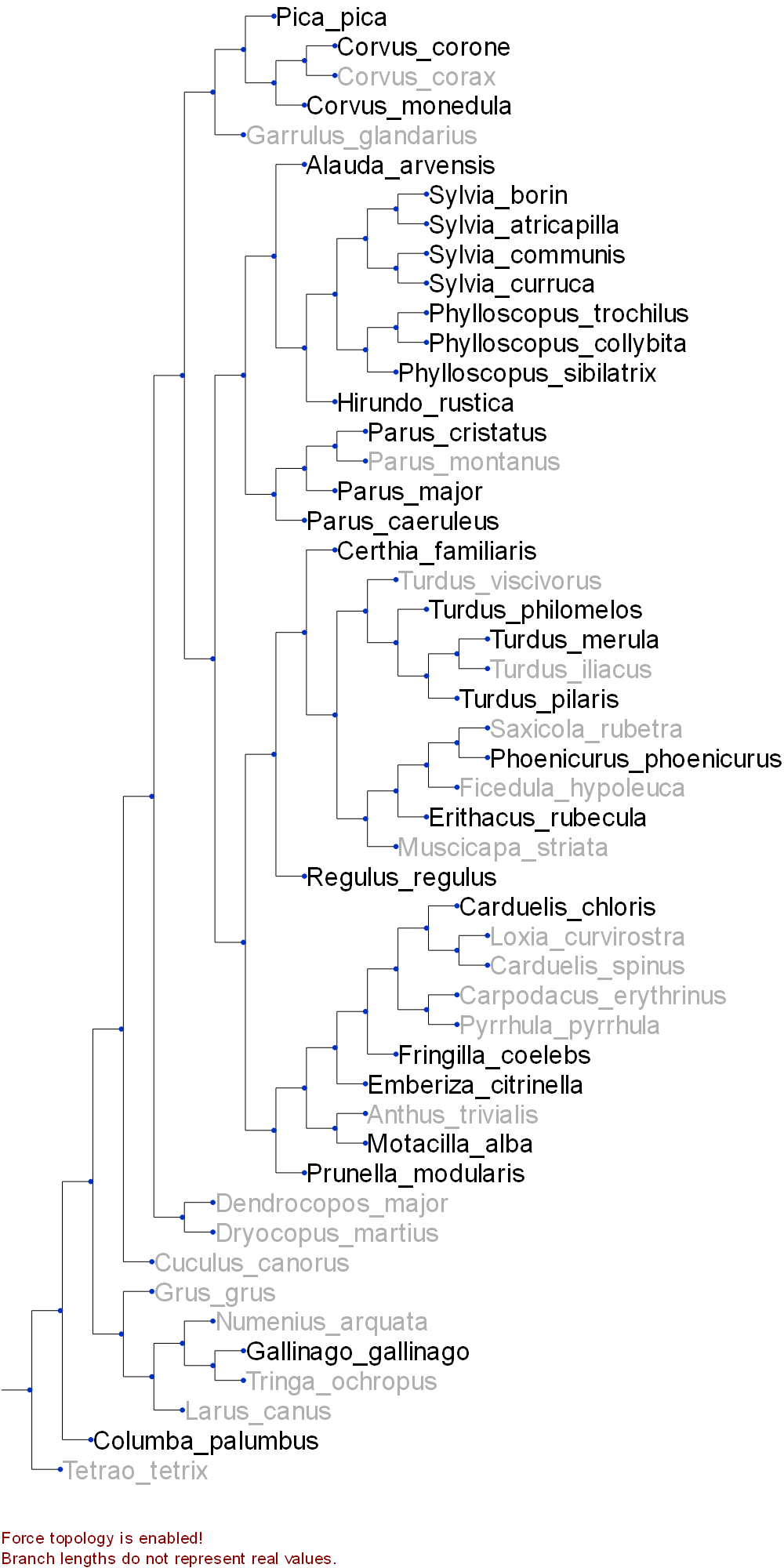}
\caption{Species dependent on $w_2$.}
\label{sub_fig:phylo2}
\end{subfigure}
\begin{subfigure}[b]{0.32\textwidth}
\centering
\includegraphics[width = \textwidth, trim = 0 3 0 0, clip]{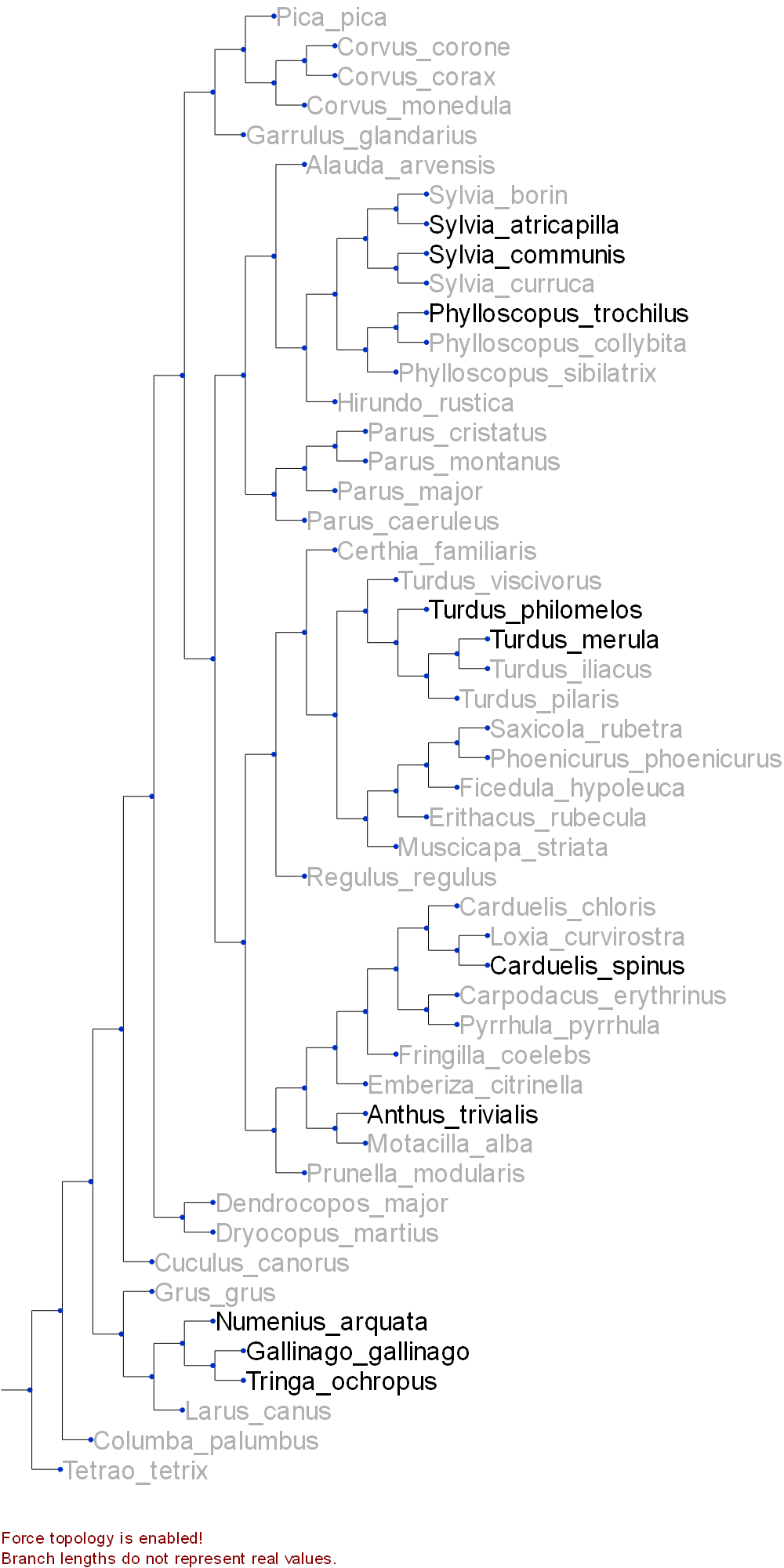}
\caption{Species dependent on $w_4$.}
\label{sub_fig:phylo4}
\end{subfigure}
\begin{subfigure}[b]{0.32\textwidth}
\centering
\includegraphics[width = \textwidth, trim = 0 3 0 0, clip]{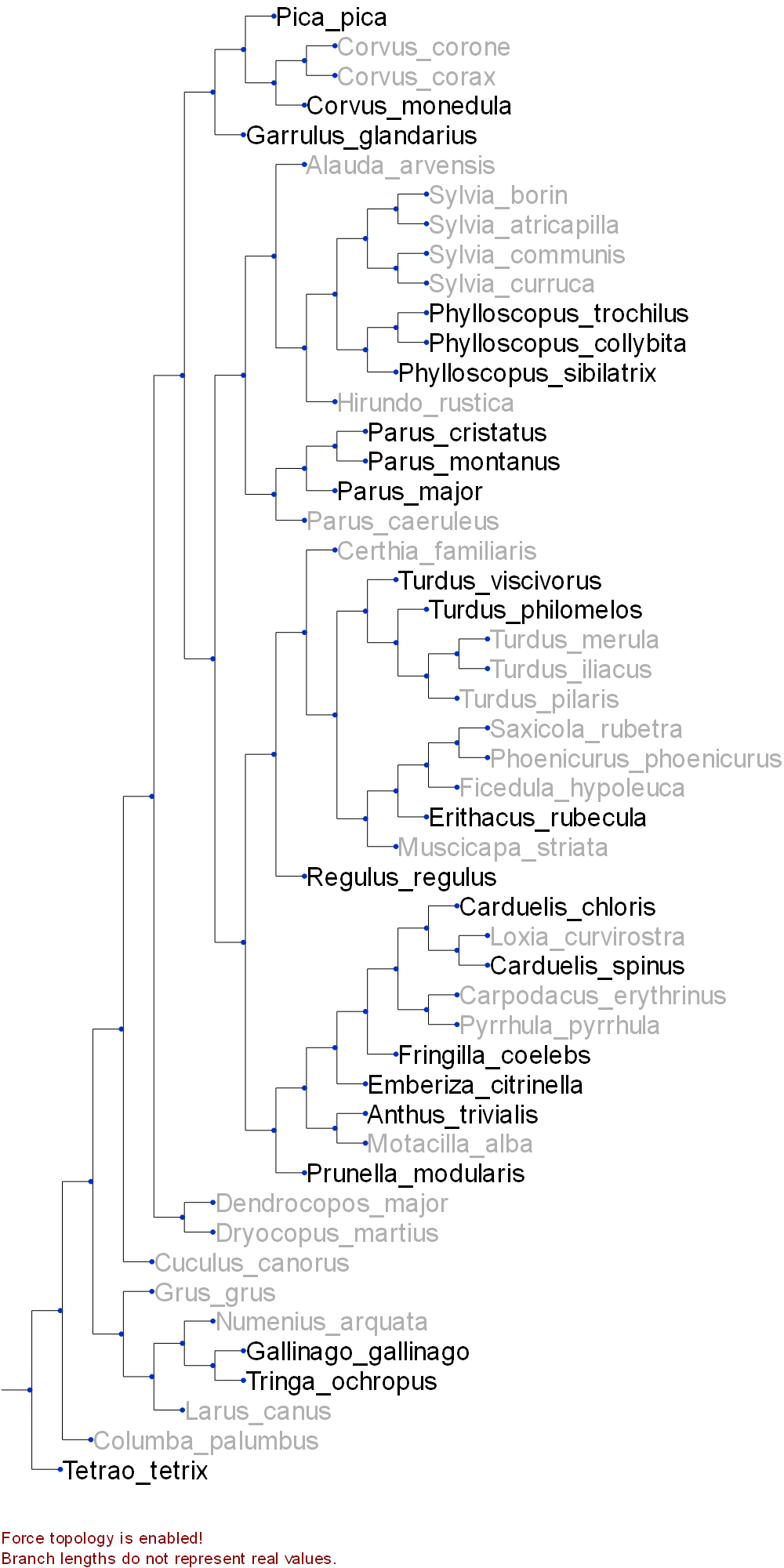}
\caption{Species dependent on $w_5$.}
\label{sub_fig:phylo5}
\end{subfigure}
\caption{
Phylogenetic trees of bird species, where species names in black indicate dependence on latent attribute $w_j$ (i.e. species $i$ with $g_{i, j} = 1$), while names in gray indicate no dependence, for $j = 2, 4, 5$.
For latent attributes $w_1, w_3$, refer to the supplementary material Figure \ref{fig:phylo_full}.
}
\label{fig:phylo}
\end{figure}

We follow the same procedures for model selection and posterior inference as in the simulation studies, with details provided in the Supplementary Material Section \ref{supp_sec:app}.
Using the Watanabe–Akaike information criterion \citep[\texttt{WAIC},][]{watanabe2010asymptotic}, we select the optimal model structure with $q = 5$ latent attributes and $d = 4$ deep latent classes.
In the following, we provide an overview of model interpretation, with additional details, including further figures and tables, also deferred to the Supplementary Material Section \ref{supp_sec:app}.

The deep latent class assignments of in-sample observations are visualized in Figure \ref{fig:app_z_map}.
The inferred deep latent classes align well with Finland's geography, effectively capturing distinct ecological zones.
Class $z = 1$ represents a borderline arctic tundra habitat, characterized by high mountains, the absence of trees, and a unique assemblage of bird species found nowhere else in the country.
Class $z = 2$ corresponds to the northern boreal forest, which harbors Lapland forest species along with other old-growth forest species.
Class $z = 3$ spans southern Finland, where a greater diversity of southern bird species is present.
Finally, class $z = 4$ is concentrated along a small southwestern coastal region, which supports southern broadleaf forest species and potentially includes habitats for mire or peatland species.

The inferred classes have clear environmental interpretations.
Class $z = 3$ encompasses almost all urban habitat sampling locations, while almost all wetland locations belong to classes $z = 2$.
Additionally, locations in class $z = 1$ tend to have lower temperatures, those in $z = 3$ have higher temperatures, and those in $z \in \{2, 4\}$ experience moderate temperatures.
Posterior inference on $\Ab$ further reveals distinct relationships between deep latent classes and latent attributes.
Observations in class $z = 1$ tend to have zero entries in their latent attributes $\wb$, while those in class $z = 3$ predominantly have ones.
In contrast, observations in class $z = 2$ are more likely to have $\wb = (1, 0, 0, 1, 0)$, while observations in class $z = 4$ are more likely to have $\wb = (1, 0, 0, 0, 1)$ or $(1, 0, 1, 0, 1)$.

We visualize three representative attributes over sampling locations in Figure \ref{fig:app_w_map}.
Each attribute $w_j$ exhibits a spatial pattern, with distinct geographical separations for different $j$, particularly along the latitude axis. Our model offers an alternative perspective to the Region of Common Profiles (\texttt{RCP}) framework, which delineates spatial regions sharing similar environmental, ecological, or statistical properties \citep{foster2013modelling, foster2017ecological}.
In Figures \ref{fig:app_z_map} and \ref{fig:app_w_map}, the clusters of sampling locations formed by the latent cluster $z$ and attributes $\wb$ align with well-defined environmental and ecological regions in Finland, consistent with results from the \texttt{RCP} literature \citep{scherting2024inferring}.
Furthermore, the combination of latent attributes introduces a hierarchical \texttt{RCP} framework, progressively partitioning the sampling locations into 2, 4, 8, 16, 32 subregions, offering a more granular representation of environmental and ecological variation.

The binary matrix $\Gb$, which governs the dependencies between species presence/absence and latent attributes, provides additional insights into species associations.
Through the posterior inference of $\Gb$, we identify species related to each latent attribute, as visualized on the phylogenetic tree in Figure \ref{fig:phylo}.
Despite the phylogenetic tree not being used by our model, we observe a clear pattern that species in the same terminal branches tend to share dependencies on the same latent attributes.
Furthermore, inferred latent attributes correlate with species' biological traits.
All resident species are not dependent on $w_4$. Among migrant species, those dependent on $w_1$, $w_2$, or $w_5$ tend to be short distance migrants, whereas those not dependent on them are typically long distance migrants.
Species dependent on $w_3$ and species not dependent on $w_2$ generally have higher body weights.

\subsection{Prediction}\label{ssec:pred}

As there is a trade-off between model interpretability and expressivity, we expect to pay a penalty in predictive performance relative to hierarchical modeling of species communities \citep[\texttt{HMSC},][]{ovaskainen2020joint}, which is a Bayesian generalized linear latent variable model that 
outperformed 33 competing methods in predictive performance in a benchmark study \citep{norberg2019comprehensive}.
\texttt{HMSC} is useful for inferring covariate effects and species correlations, but does not produce insights into latent structure.
Thus, predictive performance comparable to \texttt{HMSC} indicates that \texttt{BLIP} achieves a favorable trade-off between interpretability and accuracy.

We compare \texttt{BLIP} predictions with those of \texttt{HMSC} for in-sample and out-of-sample observations.
To assess the impact of the binary attribute layer, we also compare  with a traditional latent class regression (\texttt{LCR}) model.
\texttt{HMSC} and \texttt{LCR} are implemented using the R packages \citep{tikhonov2020joint} and \citep{linzer2011polca}, respectively.
The \texttt{LCR} model is fitted with $d = 6$ latent classes, selected based on \texttt{AIC}.
Compared to our \texttt{BLIP}, the sampling locations within each class of \texttt{LCR} exhibit less distinct and interpretable geographical patterns.
We provide a visualization of the inferred \texttt{LCR} classes on the map of Finland and further discussions of the \texttt{LCR} model in the Supplementary Material Section \ref{supp_sec:app}.

\begin{table}[h!]
\centering
\begin{tabular}{cc||ccc}
\hline\hline
model & dataset & RMSE & AUC & co-occurrence score \\
\hline\hline
\multirow{2}{*}{\texttt{HMSC}} & in-sample & 0.386 & 0.864 & 0.671 \\
& out-of-sample & 0.377 & 0.874 & 0.689 \\
\hline
\multirow{2}{*}{\texttt{BLIP}} & in-sample & 0.398 & 0.848 & 0.641 \\
& out-of-sample & 0.390 & 0.855 & 0.649 \\
\hline
\multirow{2}{*}{\texttt{LCR}} & in-sample & 0.476 & 0.721 & 0.495 \\
& out-of-sample & 0.469 & 0.735 & 0.511 \\
\hline\hline
\end{tabular}
\caption{
Prediction metrics for \texttt{HMSC}, our model \texttt{BLIP}, and \texttt{LCR}, computed over all in-sample and out-of-sample observations in the Finnish birds dataset.
}
\label{tab:pred_metrics}
\end{table}

To evaluate the predictive accuracy for the presence or absence of species, we follow \citet{norberg2019comprehensive}, using the root mean squared error (RMSE) between each observed $y_i^{(n)}$ and its posterior predictive mean $\PP(y_i^{(n)} ~|~ \cdot)$.
We used the area under the curve (AUC) to assess the discrimination power of the posterior predictive mean.
Furthermore, to analyze species interactions at each location, we further assess the prediction of species co-occurrence, which is the product $y_i^{(n)} y_{i'}^{(n)}$ for each pair of species $i$ and $i'$.
To address class imbalance, we introduce the co-occurrence score, which measures predictive accuracy using $F_1$ score for co-occurring species pairs.
Specifically, it is defined as $\frac{2TP}{2TP + FP + FN}$, where $TP, FP, FN$ denote the number of true positives, false positives, and false negatives, respectively, for $y_i^{(n)} y_{i'}^{(n)}$.

The overall predictive metrics for \texttt{HMSC}, our model \texttt{BLIP}, and \texttt{LCR} are evaluated in-sample and out-of-sample and reported in Table \ref{tab:pred_metrics}.
Focusing on the out-of-sample performance using \texttt{HMSC} as a reference, our \texttt{BLIP} is within 2\%, 2\% and 4\% for the three metrics, while the \texttt{LCR} performance decreases to 11\%, 14\% and 17\%, respectively.
Additional results in the Supplementary Material Section \ref{supp_sec:app} further illustrate that \texttt{BLIP} is competitive with \texttt{HMSC} in predictive accuracy for all species and sampling locations.
These results suggest that \texttt{BLIP} achieves an excellent trade-off between model accuracy and interpretability.

\section{Discussion}\label{sec:discuss}

There are multiple directions for future work. First, we currently select the model structure using the \texttt{WAIC} criterion.
An alternative approach could involve Bayesian nonparametric methods, where a Dirichlet process prior is placed on the deep latent classes, and an Indian buffet process prior is used for the latent attributes.
Notably, the Bayes oracle clustering property established in this paper remains valid for the model with a Dirichlet process prior.
Second, we have extended our model to handle mixed discrete observations using multinomial logistic regression and Poisson regression.
Other generalized linear models could also be explored for modeling the conditional distribution of observed data given the latent attributes, potentially enhancing flexibility and robustness in different data settings.
Third, our current model employs a weakly informative prior that incorporates the meta covariates through a simple hierarchical structure.
Exploring more structured and informative priors may further improve model performance and parameter estimation.
Finally, while we have applied our model to joint species distribution modeling with multivariate binary observations, it would be of significant interest to explore its application to high-dimensional mixed discrete observations in domains such as genetics and sociology.

\medskip
\textbf{Code and data availability.}
Code and data are available at \url{https://github.com/BayesianModels01/Bayesian-Latent-Class-Regression-with-Interpretable-Binary-Profiles}.
The repository includes implementations of the methods used in the paper, scripts for the simulation studies and real-data applications, as well as the datasets and visualization code used in the numerical experiments.

\medskip
\textbf{Acknowledgments.}
This research was partially supported by the National Institutes of Health (R01ES035625), by the European Research Council under the European Union’s Horizon 2020 research and innovation programme (856506), by the National Science Foundation (NSF IIS-2426762, DMS-2210796), and by the Office of Naval Research (N00014-24-1-2626).

\bibliography{ref}
\bibliographystyle{apalike}

\clearpage
\newpage

\appendix

\tableofcontents

\textbf{Notations.}
We use bold capital letters (e.g. $\Gb, \Bb$) to denote matrices, bold letters (e.g. $\xb, \yb$) to denote vectors, and non-bold letters (e.g. $z$) to denote scalars.
Following \cite{kolda2009tensor}, we use $\Xb \circ \Yb$, $\Xb \otimes \Yb$, and $\Xb \odot \Yb$ to denote the Hadamard product, Kronecker product, and Khatri-Rao product, of two matrices $\Xb, \Yb$, and use $\Vec(\cdot)$ to denote the vectorization of tensors.

For integer-valued variables and integers $k_1, k_2 \in \ZZ$, we let $[k_1, k_2]$ denote the collection of integers $\{k_1, k_1 + 1, \ldots, k_2\}$ and further abbreviate $[k_2] := [1, k_2]$.
For a sequence of random variables $\{Z^{(n)}\}_{n = 1}^\infty$, we let $Z^{(N_1:N_2)}$ denote the collection $\{Z^{(n)}\}_{n = N_1}^{N_2}$ and let $Z^{(1:\infty)}$ denote the whole sequence.
The cardinality of a set $\cB$ is denoted by $\card(\cB)$.

The logit function is defined as $\logit(p) := \log\frac{p}{1 - p}$ for $p \in (0, 1)$.
The logistic function is defined as $\logistic(x) := \frac{\exp(x)}{1 + \exp(x)}$.
For any dimension $d$, we define the softmax function from $\RR^d$ to $\RR^d$ as $\softmax(x_1, \ldots, x_d) := \big( \frac{\exp(x_1)}{\sum_{h = 1}^d \exp(x_h)}, \ldots, \frac{\exp(x_d)}{\sum_{h = 1}^d \exp(x_h)} \big)$.

We let $\eb_i$ denote the canonical basis vector with the $i$th entry equal to one and all other entries equal to zero.
For a vector $\vb$, we let $\diag(\vb)$ denote the diagonal matrix with diagonal entries being entries of $\vb$.
For a matrix $\Ab$ and a subset of row indices $\cQ$, we let $\Ab_{\cQ}$ denote the submatrix of $\Ab$ formed by the rows in $\cQ$.

For a $p \times q$ matrix $\Gb$, we let $(\one ~ \Gb)$ denote the $p \times (q + 1)$ matrix obtained by left appending an all-ones column to $\Gb$.
For a $q \times q$ matrix $\Pb$, we let $\diag(1, \Pb)$ denote the $(q + 1) \times (q + 1)$ block diagonal matrix with its upper-left entry being 1 and its $q \times q$ lower-right block being $\Pb$.
For $\beta_0 \in \RR$ and a $q$-dimensional vector $\bbeta$, we let $(\beta_0, \bbeta)$ denote the $(q + 1)$-dimensional vector obtained by left appending an entry $\beta_0$ to $\bbeta$.

\section{More Intuition behind the Theorems}\label{supp_sec:intuit}

In this section, we provide additional intuition and proof overviews for the theorems presented in the main paper, which are deferred here due to space constraints.

\subsection{Theorem \ref{theo:strict} on Strict Identifiability}\label{supp_ssec:intuit_strict}

To further illustrate the intuition behind the conditions in Theorem \ref{theo:strict}, we briefly outline its proof technique.
For any covariate $\xb \in \supp(\cP_x)$, the conditional distribution of $\yb ~|~ \xb$ is represented as a vector in the $(2^p - 1)$-dimensional probability simplex, which can be transformed into a three-way tensor.
The conditional distributions $\yb ~|~ \wb$ and $\wb ~|~ \xb$ serve as its CP decomposition \citep{kolda2009tensor}.
The seminal work of \citet{kruskal1977three} established a sufficient condition for the uniqueness of three-way tensor CP decomposition, now known as Kruskal's condition.
In Theorem \ref{theo:strict}, both the partition in condition (i) and the blockwise structure in condition (ii) are designed such that Kruskal's condition holds in our settings.
These structural conditions enable the unique recovery of the conditional distributions $z ~|~ \xb$, $\wb ~|~ z$, and $\yb ~|~ \wb$ from $\yb ~|~ \xb$, up to permutations.
The full proof of Theorem \ref{theo:strict} is provided in Section \ref{supp_sec:iden}.

\subsection{Theorem \ref{theo:generic} on Generic Identifiability}\label{supp_ssec:intuit_generic}

The proof of Theorem \ref{theo:generic} relies on showing that the set of non-identifiable parameters in $\sS_2$ is contained within the zero set of a non-constant holomorphic function, a technique for establishing generic identifiability developed in \citet{zhou2024bayesian}.
Details of the proof is deferred to Section \ref{supp_sec:iden}.

\subsection{Theorem \ref{theo:post} on Posterior Consistency}\label{supp_ssec:intuit_post}

Theorem \ref{theo:post} establishes the posterior consistency in multiple scenarios.
In scenario (i), where $\sS$ is compact, any identifiable true parameter within the prior support $\supp(\pi)$ has a consistent posterior distribution.
Its proof follows Schwartz' theory of posterior consistency \citep{schwartz1965bayes}, which derives the posterior consistency of model likelihood within the Wasserstein space of probability measures.
The compactness of $\sS$ then allows us to convert this result into the posterior consistency of model parameters, as defined in Definition \ref{defi:post}.

In scenario (ii), where $\sS$ is a subset of $\sS_1$, any true parameter within $\supp(\pi)$ is automatically identifiable by Theorem \ref{theo:strict}.
As in scenario (i), this ensures a consistent posterior distribution, following a similar proof strategy.
However, unlike in scenario (i), the compactness of $\sS$ is no longer required.
This is due to the topological properties of the space $\sS_1$, which guarantee any $\sS \subset \sS_1$ to satisfy a weaker form of compactness sufficient for the proof.

In scenario (iii), where $\sS$ is a subset of $\sS_2$, the posterior consistency is established for almost every parameter in $\sS$ under the prior measure $\uppi$.
Its proof adapts Doob's theory of posterior consistency \citep{doob1949application} to our model by introducing the quotient parameter space induced by the equivalence class $\perm_{\sS}(\Ab^*, \Bb^*, \bGamma^*, \Gb^*)$.
We defer the detailed proof of Theorem \ref{theo:post} to Section \ref{supp_sec:post_cons}.

\subsection{Proposition \ref{prop:curse_example} on Bayesian Inference of Mixture Models}\label{supp_ssec:intuit_mix}

The proof of Proposition \ref{prop:curse_example} relies on an intermediate result, which states that for any partition of observations $[N]$ where at least one cluster contains multiple observations, there always exists a finer partition that divides this cluster into two and has a dominantly higher posterior probability as $p \to \infty$.
As a consequence, as $p \to \infty$, the posterior distribution increasingly favors partitions with smaller clusters, ultimately leading to each observation being assigned to its own individual cluster.
We defer the detailed proof of Proposition \ref{prop:curse_example} to Section \ref{supp_sec:mix}.

\section{Identifiability Theory}\label{supp_sec:iden}

In this section, we develop the identifiability theory for our model.
Section \ref{supp_ssec:prelim} provides an overview of our proof techniques, which are based on tensor decomposition, complex analysis, and geometric measure theory.
We present the proof of Theorem \ref{theo:strict} on strict identifiability in Section \ref{supp_ssec:proof_strict}, deferring the proofs of its auxiliary lemmas to Section \ref{supp_ssec:proof_lemmas_strict}.
Similarly, the proof of Theorem \ref{theo:generic} on generic identifiability is given in Section \ref{supp_ssec:proof_generic}, with auxiliary lemmas deferred to Section \ref{supp_ssec:proof_lemmas_generic}.

To clearly distinguish between conditional distributions and conditional probabilities in this section and the following Section \ref{supp_sec:post_cons}, we adopt the notation $\law$ for distributions and $\PP$ for probabilities or densities.
For instance, we use $\law(\yb ~|~ \wb, \Gb, \Bb)$ to denote the categorical distribution of $\yb ~|~ \wb, \Gb, \Bb$ over the space of binary vectors $\{0, 1\}^p$, while $\PP(\yb ~|~ \wb, \Gb, \Bb)$ represents the probability of a particular $\yb \in \{0, 1\}^p$ under this distribution.

\subsection{Preliminaries and Core Theorems}\label{supp_ssec:prelim}

For the proof of strict identifiability, we need Kruskal's Theorem \citep{kruskal1977three} on the uniqueness of three-way tensor CP decomposition.
We provide a brief overview and refer interested readers to \cite{kolda2009tensor, kruskal1977three} for more details.

For a matrix $\Ab$, we let its \textit{Kruskal rank} $\rank_K(\Ab)$ be the maximal number $r$ such that any $r$ columns of $\Ab$ are linearly independent.
For three matrices $\Ab \in \RR^{a \times d}, \Bb \in \RR^{b \times d}, \Cb \in \RR^{c \times d}$, we let $\cX := \llbracket \Ab, \Bb, \Cb \rrbracket \in \RR^{a \times b \times c}$ denote the three-way tensor defined entrywise by
$$
X_{i, j, k}
:=
\sum_{\ell = 1}^d A_{i, \ell} B_{i, \ell} C_{i, \ell}
,\quad
\forall i \in [a], j \in [b], k \in [c]
.
$$
Letting $\ab_\ell, \bbb_\ell, \cbb_\ell$ denote columns of $\Ab, \Bb, \Cb$, then we have $\llbracket \Ab, \Bb, \Cb \rrbracket = \sum_{\ell = 1}^d \llbracket \ab_\ell, \bbb_\ell, \cbb_\ell \rrbracket$, which is know as the CP decomposition of $\cX$ of rank $d$.
For a matrix $\Ab$, we let its \textit{Kruskal rank} $\rank_K(\Ab)$ be the maximal number $r$ such that any $r$ columns of $\Ab$ are linearly independent.
The following theorem from \citet{kruskal1977three} establishes, up to column rescaling and permutation, a sufficient condition for the uniqueness of CP decomposition of rank $d$, or equivalently, the uniqueness of $\Ab, \Bb, \Cb$ from $\llbracket \Ab, \Bb, \Cb \rrbracket$.

\begin{theorem}[Kruskal's Theorem]\label{theo:kruskal}
Let $\Ab \in \RR^{a \times d}$, $\Bb \in \RR^{b \times d}$, $\Cb \in \RR^{c \times d}$ be matrices satisfying the condition
$$
\rank_K(\Ab) + \rank_K(\Bb) + \rank_K(\Cb)
\ge
2d + 2
.
$$
Then for any matrices $\tilde\Ab \in \RR^{a \times d}$, $\tilde\Bb \in \RR^{b \times d}$, $\tilde\Cb \in \RR^{c \times d}$ such that
$$
\llbracket \Ab, \Bb, \Cb \rrbracket
=
\llbracket \tilde\Ab, \tilde\Bb, \tilde\Cb \rrbracket
,
$$
there exist diagonal matrices $\Sb_A, \Sb_B, \Sb_C \in \RR^{d \times d}$ satisfying $\Sb_A \Sb_B \Sb_C = \Ib$ and a permutation matrix $\Qb \in \RR^{d \times d}$ such that
$$
\tilde\Ab
=
\Ab \Sb_A \Qb
,\quad
\tilde\Bb
=
\Bb \Sb_B \Qb
,\quad
\tilde\Cb
=
\Cb \Sb_C \Qb
.
$$
\end{theorem}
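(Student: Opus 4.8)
The plan is to prove Kruskal's Theorem through his combinatorial permutation lemma, using the third mode as a pivot and then cycling the roles of the three factors. First I would pass to matrix slices: writing $\cbb^{(k)} \in \RR^d$ for the $k$th row of $\Cb$ and $\tilde\cbb^{(k)}$ for that of $\tilde\Cb$, the tensor identity $\llbracket \Ab,\Bb,\Cb\rrbracket = \llbracket \tilde\Ab,\tilde\Bb,\tilde\Cb\rrbracket$ is equivalent to $\Ab \diag(\cbb^{(k)}) \Bb^\top = \tilde\Ab \diag(\tilde\cbb^{(k)}) \tilde\Bb^\top$ for every $k \in [c]$. Taking arbitrary linear combinations over $k$, for each $\wb \in \RR^c$ I obtain the single parameterized matrix identity $\Ab \diag(\Cb^\top\wb)\Bb^\top = \tilde\Ab \diag(\tilde\Cb^\top\wb)\tilde\Bb^\top$, which is the object I would analyze as $\wb$ ranges over $\RR^c$.

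The engine of the argument is a rank inequality. Writing $k_A = \rank_K(\Ab)$, $k_B = \rank_K(\Bb)$, and letting $\nu(\vb)$ count the nonzero entries of $\vb \in \RR^d$, I would show $\mathrm{rank}(\Ab\diag(\vb)\Bb^\top) \ge \min(\nu(\vb),k_A) + \min(\nu(\vb),k_B) - \nu(\vb)$: restricting to the support $S$ of $\vb$ reduces the product to $\Ab_S \diag(\vb_S) \Bb_S^\top$, the submatrices $\Ab_S,\Bb_S$ have rank at least $\min(\nu,k_A)$ and $\min(\nu,k_B)$ by definition of Kruskal rank, and Sylvester's rank inequality supplies the bound. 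Applying this to both decompositions and letting $\wb$ vary, the Kruskal condition $k_A + k_B + k_C \ge 2d+2$ is precisely what makes these bounds tight enough to force the zero-pattern of $\tilde\Cb^\top\wb$ to dominate that of $\Cb^\top\wb$ for all $\wb$, and by symmetry the reverse inclusion as well.

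Next I would invoke Kruskal's permutation lemma: if two $c\times d$ matrices $\Cb,\tilde\Cb$ satisfy $\rank_K(\Cb)\ge 2$ and, for every $\wb$, the zero set of $\tilde\Cb^\top\wb$ contains that of $\Cb^\top\wb$ whenever the latter is sufficiently large, then $\tilde\Cb = \Cb \Sb_C \Qb_C$ for a permutation $\Qb_C$ and a nonsingular diagonal $\Sb_C$. Running this with the three modes cyclically permuted yields $\tilde\Ab = \Ab \Sb_A \Qb_A$, $\tilde\Bb = \Bb \Sb_B \Qb_B$, and $\tilde\Cb = \Cb \Sb_C \Qb_C$. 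Substituting these scaled-and-permuted forms back into the slice identity and using the high Kruskal ranks to keep the columns in sufficiently general position, I would match the rank-one summands one by one, forcing $\Qb_A=\Qb_B=\Qb_C=:\Qb$ and, since each $\ab_\ell \otimes \bbb_\ell \otimes \cbb_\ell$ determines its three factors up to reciprocal scalars, $\Sb_A\Sb_B\Sb_C = \Ib$.

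The main obstacle is the permutation lemma itself: it is the genuinely hard combinatorial heart of the theorem, proved by induction on the number of columns together with a delicate analysis of how the zero-patterns of $\Cb^\top\wb$ change as $\wb$ is perturbed. The rank inequality is comparatively routine linear algebra and the final reconciliation of permutations and scalings is bookkeeping, but the permutation lemma is where the full strength of the $2d+2$ threshold is consumed, so that is where I would expect to concentrate the effort (or, in practice, cite \citet{kruskal1977three} for the combinatorial core).
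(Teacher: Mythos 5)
The first thing to note is that the paper does not prove this statement at all: Theorem \ref{theo:kruskal} is imported verbatim from \citet{kruskal1977three}, with the paper explicitly deferring to \citet{kolda2009tensor, kruskal1977three} for details. So there is no internal proof to compare against, and your outline should be judged against the classical literature. Measured that way, your sketch follows the standard (Kruskal / Stegeman--Sidiropoulos) architecture and its stated steps are sound: the slice reformulation $\Ab \diag(\Cb^\top\wb)\Bb^\top = \tilde\Ab \diag(\tilde\Cb^\top\wb)\tilde\Bb^\top$ is correct; the rank bound $\rank(\Ab\diag(\vb)\Bb^\top) \ge \min(\nu(\vb),k_A) + \min(\nu(\vb),k_B) - \nu(\vb)$ via restriction to the support and Sylvester's inequality is the right workhorse; and the cyclic application of the permutation lemma followed by matching rank-one terms (which uses $k_A, k_B, k_C \ge 2$, itself a consequence of the $2d+2$ condition) is how the scalings are forced to satisfy $\Sb_A\Sb_B\Sb_C = \Ib$ with a common $\Qb$. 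Two caveats. First, your phrasing of the permutation lemma's hypothesis as containment of zero sets is loose: the correct hypothesis compares counts, namely $\nu(\Cb^\top\wb) \le \nu(\tilde\Cb^\top\wb)$ for every $\wb$ with $\nu(\tilde\Cb^\top\wb)$ below the relevant threshold, and the derivation of this from the slice identity goes through the chain $\nu(\tilde\Cb^\top\wb) \ge \rank(\tilde\Ab\diag(\tilde\Cb^\top\wb)\tilde\Bb^\top) = \rank(\Ab\diag(\Cb^\top\wb)\Bb^\top)$ followed by your rank bound and a case analysis on $\nu(\Cb^\top\wb)$. Second, and more importantly, you concede that the permutation lemma itself --- the genuinely hard combinatorial core where the full strength of $2d+2$ is spent --- would in practice be cited from Kruskal. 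That is a reasonable division of labor, but it means your proposal is no more self-contained than the paper's treatment: both ultimately rest on the same external source, yours simply exposes the scaffolding around it.
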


For the proof of generic identifiability, we need some results based on complex analysis. The following theorems were originally introduced and proven in \citet{zhou2024bayesian}.
We will only present the necessary background for these theorems and refer interested readers to Sections B.4 and B.5 in \citet{zhou2024bayesian} for more details and proofs.

A function of several complex variables is \textit{holomorphic} if it is complex-differentiable within its domain, or equivalently, if it satisfies Cauchy-Riemann equations in each variable within its domain.
The following theorem states that a holomorphic function of $n$ complex variables with zero set having positive Lebesgue measure in $\RR^n$ is constant zero.

\begin{theorem}[Theorem B.6 in \citet{zhou2024bayesian}]\label{theo:holo_zero}
Let $\Omega_1, \ldots, \Omega_n \subset \CC$ be connected regions and $f(z_1, \ldots, z_n)$ be a holomorphic function in the region $\Omega = \Omega_1 \times \cdots \Omega_n \subset \CC^n$.
If its zero set in $\RR^n \cap \Omega$ has Lebesgue measure
$$
\uplambda\left( \left\{
x \in \RR^n \cap \Omega:~ f(x) = 0
\right\} \right)
>
0
,
$$
then $f \equiv 0$ in $\Omega$.
\end{theorem}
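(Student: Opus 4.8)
The plan is to proceed by induction on the number of complex variables $n$, using the classical one-variable identity theorem as the engine and Fubini's theorem to pass between dimensions. Throughout I would note that, by the stated Cauchy--Riemann characterization, a jointly holomorphic $f$ is separately holomorphic in each coordinate when the remaining coordinates are held fixed, and that $f$ is continuous so its real zero set $Z := \{x \in \RR^n \cap \Omega :~ f(x) = 0\}$ is measurable.

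For the base case $n = 1$, set $E := \{x \in \RR \cap \Omega_1 :~ f(x) = 0\}$ and assume $\uplambda(E) > 0$. By the Lebesgue density theorem almost every point of $E$ is a density point, so I would pick such a point $x_0 \in E \subset \Omega_1$; a density point is automatically an accumulation point of $E$. Thus $f$ vanishes on a subset of the connected open region $\Omega_1 \subset \CC$ that accumulates at the interior point $x_0$, and the one-variable identity theorem gives $f \equiv 0$ on $\Omega_1$.

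For the inductive step, assume the result for $n - 1$ variables and write $x = (x', x_n)$ with $x' \in \RR^{n-1}$ and $x_n \in \RR$. Writing $Z_{x'} := \{x_n \in \RR \cap \Omega_n :~ f(x', x_n) = 0\}$ for the vertical slice, Fubini's theorem gives $\uplambda_n(Z) = \int \uplambda_1(Z_{x'}) \, \ud \uplambda_{n-1}(x')$, so $\uplambda_n(Z) > 0$ forces the set $G := \{x' :~ \uplambda_1(Z_{x'}) > 0\}$ to have positive $(n-1)$-dimensional measure inside $\Omega_1 \times \cdots \times \Omega_{n-1}$. For each fixed $x' \in G$, the map $x_n \mapsto f(x', x_n)$ is holomorphic on the connected region $\Omega_n$ and vanishes on a set of positive one-dimensional measure, so the base case yields $f(x', x_n) = 0$ for all $x_n \in \Omega_n$. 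Now fix an arbitrary $x_n \in \Omega_n$ and consider $g(x') := f(x', x_n)$, which is holomorphic in $n - 1$ variables on the connected product region $\Omega_1 \times \cdots \times \Omega_{n-1}$ and whose real zero set contains $G$, a set of positive $(n-1)$-dimensional measure. The inductive hypothesis gives $g \equiv 0$, i.e. $f(x', x_n) = 0$ for all $x'$; since $x_n \in \Omega_n$ was arbitrary, $f \equiv 0$ on all of $\Omega$.

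The main obstacle --- and the place where the hypotheses are genuinely used --- is ensuring that the vanishing set produces an accumulation point lying in the \emph{interior} of the connected region, so that the identity theorem applies; the Lebesgue density theorem handles this cleanly, since positivity of measure (rather than mere infinitude) guarantees a density point that is simultaneously in $E$ and a limit of $E$. The remaining care is routine: verifying measurability of $Z$ (immediate from continuity of $f$), justifying the Fubini slicing, and observing that finite products of connected sets are connected so the inductive hypothesis applies to $g$.
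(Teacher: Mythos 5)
Your proof is correct. The paper does not reprove this result but defers it to Theorem B.6 of \citet{zhou2024bayesian}, and the argument there is essentially the one you give: handle $n=1$ via a Lebesgue density point of the real zero set feeding the one-variable identity theorem on the connected region, then induct on dimension using Fubini slicing, applying the base case to each positive-measure vertical slice and the inductive hypothesis to the resulting positive-measure set of horizontal slices.
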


A function on $\RR^n$ is \textit{analytic} if it has convergent Taylor series at every point in its domain, or equivalently, if it is smooth and can be locally extended to a holomorphic function of $n$ complex variables.
Recall that $\uplambda$ denotes the Lebesgue measure and $\upmu$ denotes the counting measure.
The following theorem is a technical result based on the area formula from geometric measure theory and is useful in dealing with equivalence classes of parameters that have finite cardinalities.

\begin{theorem}[Theorem B.7 in \citet{zhou2024bayesian}]\label{theo:geom}
Let $f, g:~ \RR^n \to \RR^m$ $(m \ge n)$ be analytic functions.
Let $B \subset \RR^n$ be a set satisfying
$$
\sup_{y \in \RR^m} \upmu\left(
f^{-1}(y) \cap B
\right)
<
\infty
.
$$
Then for any set $A \subset \RR^n$ with measure $\uplambda(A) = 0$, we have
$$
\uplambda\left(
f^{-1}(g(A)) \cap B
\right)
=
0
.
$$
\end{theorem}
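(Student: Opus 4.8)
The plan is to prove the statement using the \emph{area formula} from geometric measure theory, using the uniform fiber bound to control the fiber-counting function and analyticity to rule out a degenerate Jacobian. Write $E := f^{-1}(g(A)) \cap B$ and $C := \sup_{y \in \RR^m} \upmu(f^{-1}(y) \cap B) < \infty$; the goal is $\uplambda(E) = 0$ (interpreting $\uplambda$ as outer measure if $B$ is not measurable). Since $f$ and $g$ are analytic, they are $C^1$ and hence Lipschitz on every compact set, so it suffices to prove the claim with $A$ and $E$ replaced by their intersections with an arbitrary closed ball $\overline{B}(0, k)$ and then take a countable union over $k$. First I would record that a Lipschitz map cannot increase $n$-dimensional Hausdorff measure: from $\mathcal{H}^n(A) = \uplambda(A) = 0$ one obtains $\mathcal{H}^n(g(A)) = 0$, where $\mathcal{H}^n$ denotes Hausdorff measure on $\RR^m$.

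Next I would apply the area formula to $f$ on the set $E$ (which is Lebesgue measurable, as $g(A)$ is analytic and $f$ is continuous). For a Lipschitz map $f : \RR^n \to \RR^m$ with $m \ge n$ and Jacobian $J_f := \sqrt{\det(Df^\top Df)}$, it yields $\int_E J_f \, \ud\uplambda = \int_{\RR^m} \card(E \cap f^{-1}(y)) \, \ud\mathcal{H}^n(y)$. The integrand on the right vanishes unless $y \in f(E) \subseteq g(A)$, and there it is bounded by $\card(B \cap f^{-1}(y)) \le C$. Hence the right-hand side is at most $C \cdot \mathcal{H}^n(g(A)) = 0$, so $\int_E J_f \, \ud\uplambda = 0$, and therefore $J_f = 0$ for $\uplambda$-almost every $x \in E$.

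It then remains to pass from ``$J_f = 0$ a.e.\ on $E$'' to ``$\uplambda(E) = 0$,'' and here the behavior of $J_f$ splits into two cases. The function $J_f^2 = \det(Df^\top Df)$ is real-analytic, since the entries of $Df$ are analytic. If it is not identically zero, its real zero set has Lebesgue measure zero by Theorem~\ref{theo:holo_zero}, applied to the holomorphic extension of $\det(Df^\top Df)$ to a complex neighborhood of $\RR^n$. Writing $Z := \{x : J_f(x) = 0\}$, we then have $\uplambda(Z) = 0$; combined with the previous paragraph, $E \subseteq Z \cup \{x \in E : J_f(x) \ne 0\}$ is a union of two $\uplambda$-null sets, giving $\uplambda(E) = 0$.

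The main obstacle is the degenerate case $J_f \equiv 0$, i.e.\ $\mathrm{rank}\, Df(x) < n$ for all $x$, where the argument above is vacuous and one must instead show directly that $\uplambda(B) = 0$, whence $\uplambda(E) = 0$ trivially. I would let $r < n$ be the generic (maximal) rank of $Df$, attained on a nonempty open set, so that the set $U$ on which $\mathrm{rank}\, Df = r$ is open with $\uplambda(\RR^n \setminus U) = 0$, its complement being an analytic variety. On $U$ the constant-rank theorem presents $f$ locally as a projection, so each fiber $f^{-1}(y)$ meets a small chart in an $(n - r)$-dimensional submanifold with $n - r \ge 1$. The hypothesis forces $B$ to meet every such fiber in a finite, hence $\mathcal{H}^{n-r}$-null, set; the coarea formula applied to the local projection then gives $\uplambda(B \cap V) = 0$ on each chart $V$, and covering $U$ by countably many charts yields $\uplambda(B) = \uplambda(B \cap U) = 0$. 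It is precisely this interplay---that a uniformly finite fiber count is incompatible with a positive-measure $B$ once the Jacobian degenerates---that makes the fiber-boundedness hypothesis essential, and handling it carefully is the most delicate part of the proof.
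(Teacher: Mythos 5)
The paper itself contains no proof of this statement: Theorem~\ref{theo:geom} is imported verbatim from \citet{zhou2024bayesian}, with proofs deferred to that reference, and the surrounding text records only that it ``is a technical result based on the area formula from geometric measure theory.'' Your proposal is exactly that argument, so there is no divergence of method to report; the question is whether your execution is sound, and for the most part it is. The reduction to compact balls, the bound $\mathcal{H}^n(g(A)) = 0$ via local Lipschitzness, the area-formula computation $\int_E J_f \,\ud\uplambda \le C \, \mathcal{H}^n(g(A)) = 0$, and the real-analytic dichotomy on $J_f^2 = \det(Df^\top Df)$ are all correct. Your treatment of the degenerate case $J_f \equiv 0$ is indeed the delicate step, and you handle it correctly: the set where the rank drops below the maximal rank $r < n$ is the common zero set of the $r \times r$ minors (real-analytic, not all identically zero), hence null; on the open set of constant rank, the constant-rank theorem plus Fubini on slices (each fiber meets $B$ in at most $C$ points, while fibers are locally $(n-r)$-dimensional with $n - r \ge 1$) forces $\uplambda(B) = 0$.

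Two wrinkles, both about measurability, need repair. First, your parenthetical claim that one may ``interpret $\uplambda$ as outer measure if $B$ is not measurable'' is false, and the theorem as literally stated (arbitrary $B$) fails under that reading: take $n = m = 2$, $f = g$ the map $(x, y) \mapsto (x, 0)$, $A = \RR \times \{0\}$ (null), and $B$ the graph of a function $\RR \to \RR$ built by transfinite recursion to meet every compact planar set of positive measure, so that $B$ has full outer measure; then every fiber of $f$ meets $B$ in at most one point, yet $f^{-1}(g(A)) \cap B = B$. So $B$ must be assumed Lebesgue measurable --- as it is in the paper's application inside the proof of Theorem~\ref{theo:generic} --- and your proof genuinely uses this twice, since both the area formula and the Fubini step require measurable sets. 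Second, your justification that $E$ is measurable ``as $g(A)$ is analytic'' does not stand as written: $A$ is an arbitrary null set, not necessarily Borel, so $g(A)$ need not be a Suslin set. The standard fix is to replace $A$ by a Borel null superset $A'$; then $g(A')$ is analytic, $f^{-1}(g(A'))$ is universally (hence Lebesgue) measurable, $\mathcal{H}^n(g(A')) = 0$ still holds, and $E \subseteq f^{-1}(g(A')) \cap B$, so it suffices to prove the claim for $A'$. With these two repairs your argument is complete.
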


\subsection{Proof of Theorem \ref{theo:strict}}\label{supp_ssec:proof_strict}

Recall that $\cP_x$ denotes the data generating distribution of the covariates.
Unlike discrete latent variable models without covariates (e.g. \citet{gu2023bayesian, zhou2024bayesian}), the presence of covariates $\xb$ requires the proof to hold for $\cP_x$-almost every $\xb$.
This is established by proving the result for all $\xb \in \supp(\cP_x)$.
In the following proof, we consider an arbitrary $\xb \in \supp(\cP_x)$ and note that the proof remains valid regardless of the specific choice of $\xb$ within this support.

Define $\sY: \{0, 1\}^p \to [2^p]$ as an arbitrary bijective mapping from the space of $p$-dimensional binary vectors $\{0, 1\}^p$ to the integers $1, \ldots, 2^p$.
Similarly, let $\sW: \{0, 1\}^q \to [2^q]$ be an arbitrary bijective mapping from the space of $q$-dimensional binary vectors $\{0, 1\}^q$ to the integers $1, \ldots, 2^q$.

For any given parameters $\Gb$ and $\Bb$, we characterize the conditional distribution of $\yb ~|~ \wb, \Gb, \Bb$ using the $2^p \times 2^q$ probability matrix $\bLambda$, defined entrywise as:
\begin{equation}\label{eq:Lambda_st}
\Lambda_{s, t}
:=
\PP(\yb = \sY^{-1}(s) ~|~ \wb = \sW^{-1}(t), \Gb, \Bb)
,\quad
\forall s \in [2^p], t \in [2^q]
.
\end{equation}
Similarly, for each $i \in [p]$, the conditional distribution of entry $y_i ~|~ \wb, \gb_i, \beta_{i, 0}, \bbeta_i$ is characterized by the $2 \times 2^q$ probability matrix $\bLambda_{(i)}$, defined entrywise as:
\begin{equation}\label{eq:Lambda_ist}
\Lambda_{(i), s, t}
:=
\PP(y_i = 1_{s = 1} ~|~ \wb = \sW^{-1}(t), \gb_i, \beta_{i, 0}, \bbeta_i)
,\quad
\forall s \in [2], t \in [2^q]
.
\end{equation}
Since the entries $y_i$ are conditionally independent given $\wb$, it follows that
$$
\PP(\yb ~|~ \wb, \Gb, \Bb)
=
\prod_{i = 1}^p \PP(y_i ~|~ \wb, \gb_i, \beta_{i, 0}, \bbeta_i)
.
$$
Consequently, under appropriate row permutations, the Khatri-Rao product $\odot_{i = 1}^p \bLambda_{(i)}$ coincides with $\bLambda$.

Next, let $\bnu \in \RR^{2^q}$ be the probability vector characterizing the conditional distribution of $\wb ~|~ \xb, \Ab, \bGamma$, defined entrywise as:
\begin{equation}\label{eq:nu_t}
\nu_t
:=
\PP(\wb = \sW^{-1}(t) ~|~ \xb, \Ab, \bGamma)
,\quad
\forall t \in [2^q]
.
\end{equation}
This allows us to characterize the conditional distribution of $\yb ~|~ \xb, \Ab, \Bb, \bGamma, \Gb$ through the probability vector $\bLambda \bnu$, i.e.
\begin{equation}\label{eq:Lambdanu_s}
(\bLambda \bnu)_s
=
\PP(\yb = \sY^{-1}(s) ~|~ \xb, \Ab, \Bb, \bGamma, \Gb)
,\quad
\forall s \in [2^p]
.
\end{equation}

The following lemma establishes that if $\Gb, \Bb$ satisfy conditions (ii) and (iii) of Theorem \ref{theo:strict}, then the probability matrix $\bLambda$ and the probability vector $\bnu$ can be uniquely identified, up to column permutations, from their product $\bLambda \bnu$.
Its proof is provided in Section \ref{supp_ssec:proof_lemmas_strict}.

\begin{lemma}\label{lemm:y_to_w}
Let $\bLambda$ be the probability matrix defined in \eqref{eq:Lambda_st}, and let $\bnu$ be the probability vector defined in \eqref{eq:nu_t}.
Suppose the parameters $\Gb$ and $\Bb$ satisfy the following conditions:
\begin{enumerate}[(i)]
\item
The matrix $\Gb$ contains three distinct identity blocks $\Ib_q$, i.e.
$$
\Pb \Gb
=
\left(\begin{matrix}
\Ib_q \\ \Ib_q \\ \Ib_q \\ \Gb_{\cI}
\end{matrix}\right)
$$
for some $p \times p$ permutation matrix $\Pb$ and subset $\cI \subset [p]$ with $\card(\cI) = p - 3q$;
\item
$\forall i \in [p], j \in [q]$, if $g_{i, j} = 1$ then $\beta_{i, j} \ne 0$.
\end{enumerate}
Then, for any other $2^p \times 2^q$ matrix $\tilde\bLambda$ and $2^q$-dimensional vector $\tilde\bnu$ satisfying
$$
\bLambda \bnu
=
\tilde\bLambda \tilde \bnu
,\quad
\tilde \bLambda^\top \one
=
\one
,\quad
\one^\top \tilde\bnu
=
1
,
$$
there exists a $2^q \times 2^q$ permutation matrix $\Qb$ such that
$$
\tilde\bLambda
=
\bLambda \Qb
,\quad
\tilde\bnu
=
\Qb^\top \bnu
.
$$
\end{lemma}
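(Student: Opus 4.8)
The plan is to exploit the three identity blocks from condition~(i) to recast the vector $\bLambda\bnu$, which is exactly the conditional distribution $\PP(\yb \mid \xb, \Ab, \Bb, \bGamma, \Gb)$, as a three-way tensor with a CP decomposition, and then invoke Kruskal's Theorem~\ref{theo:kruskal}. Using the permutation $\Pb$ in condition~(i), I would split the coordinates of $\yb$ into three blocks $B_1, B_2, B_3$ of size $q$, on each of which $\Gb$ restricts to $\Ib_q$, together with the remaining coordinates indexed by $\cI$. Because the entries of $\yb$ are conditionally independent given $\wb$, and because the $j$th coordinate of each block $B_k$ depends on $\wb$ only through $w_j$, summing $\bLambda\bnu$ over the $\cI$-coordinates produces a tensor $\cT \in \RR^{2^q \times 2^q \times 2^q}$ that factorizes as $\cT = \llbracket \Mb^{(1)}, \Mb^{(2)}, \Mb^{(3)}\diag(\bnu)\rrbracket$, where $\Mb^{(k)}_{a, t} = \PP(y_{B_k} = a \mid \wb = \sW^{-1}(t), \Gb, \Bb)$. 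Each block thus furnishes one conditionally independent ``view'' of the latent $\wb$.

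Each emission matrix $\Mb^{(k)}$ is a Kronecker product of $q$ single-coordinate $2 \times 2$ matrices whose two columns are $(1 - p_0,\, p_0)^\top$ and $(1 - p_1,\, p_1)^\top$, with $p_0 = \logistic(\beta_{i,0})$ and $p_1 = \logistic(\beta_{i,0} + \beta_{i,j})$ for the relevant $i, j$. Condition~(ii) forces $\beta_{i,j} \ne 0$, hence $p_0 \ne p_1$, so every $2 \times 2$ factor is invertible and each $\Mb^{(k)}$ has full rank $2^q$. Since $\bnu$ has strictly positive entries (as $\alpha_{j, z} \in (0, 1)$ and the softmax weights are positive), $\Mb^{(3)} \diag(\bnu)$ is invertible as well. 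Therefore $\rank_K(\Mb^{(1)}) + \rank_K(\Mb^{(2)}) + \rank_K(\Mb^{(3)}\diag(\bnu)) = 3 \cdot 2^q \ge 2 \cdot 2^q + 2$ for every $q \ge 1$, so Kruskal's condition holds and the rank-$2^q$ CP decomposition of $\cT$ is unique up to a common column permutation and diagonal rescaling.

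Given the hypothesis $\bLambda\bnu = \tilde\bLambda\tilde\bnu$, applying the same reshaping to $\tilde\bLambda\tilde\bnu$ exhibits $\cT$ as a second rank-$2^q$ CP decomposition, whose factors are the corresponding block marginals of $\tilde\bLambda$ weighted by $\tilde\bnu$; here I use that the columns of $\tilde\bLambda$ factor across the three blocks. Kruskal's Theorem then forces these factors to agree with $\Mb^{(1)}, \Mb^{(2)}, \Mb^{(3)}, \bnu$ up to diagonal scalings $\Sb_1, \Sb_2, \Sb_3$ with $\Sb_1 \Sb_2 \Sb_3 = \Ib$ and a single permutation $\Qb$. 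I would then eliminate the scalings using the stochasticity constraints $\tilde\bLambda^\top \one = \one$ and $\one^\top \tilde\bnu = 1$: each reconstructed emission column must be a probability vector and the recovered weights must sum to one, which removes all scaling freedom and leaves only $\Qb$, giving $\tilde\bnu = \Qb^\top \bnu$ and matching the block marginals of $\tilde\bLambda$ with those of $\bLambda\Qb$. Finally, holding $B_2, B_3$ fixed as two full-rank views and pairing them with each remaining coordinate $i \in \cI$, the joint $\PP(y_i, y_{B_2}, y_{B_3} \mid \xb)$ determines $\PP(y_i \mid \wb)$ by linearly inverting the known factors $\Mb^{(2)}, \Mb^{(3)}, \bnu$; since the columns of both $\bLambda$ and $\tilde\bLambda$ are products of their per-coordinate conditionals, matching these conditionals across all $i$ yields $\tilde\bLambda = \bLambda\Qb$.

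The main obstacle I anticipate lies in the final two steps rather than in the rank computations of the second paragraph, which are routine. Converting Kruskal's joint scaling-and-permutation indeterminacy into the single permutation $\Qb$ asserted in the lemma requires the positivity of $\bnu$ to be used in an essential way, so that no recovered column weight degenerates and the stochasticity normalization pins the scalings uniquely. Equally delicate is the lifting from the $2^q$-dimensional block marginals to the full $2^p$-dimensional columns: one must verify that the permutation extracted from the $(B_1, B_2, B_3)$-tensor is consistent with those implicitly used in each auxiliary $(i, B_2, B_3)$-recovery, and that the conditional-independence (product) structure of the emission matrices genuinely licenses reconstructing each full column from its per-coordinate conditionals.
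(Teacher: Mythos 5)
Your proposal is correct and rests on the same core mechanism as the paper's proof: reshape the mixture $\bLambda \bnu$ into a three-way tensor whose CP factors come from the identity blocks of $\Gb$, verify Kruskal's condition, and use column-stochasticity to kill the diagonal scaling indeterminacy, leaving a single permutation $\Qb$. The differences are in execution. First, the paper never marginalizes out the coordinates in $\cI$: its third CP factor is $(\bLambda_{[3]} \odot \bLambda_{[4]})\diag(\bnu)$, the joint conditional of \emph{all} coordinates outside the first two blocks, so one application of Kruskal's theorem matches everything at once and $\tilde\bLambda = \bLambda \Qb$ follows by re-multiplying the Khatri-Rao factors; your version, which collapses $\cI$ to get a $2^q \times 2^q \times 2^q$ tensor, requires the extra per-coordinate recovery step with views $(i, B_2, B_3)$. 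That step is valid — the Khatri-Rao product of the two invertible, already-identified factors has full column rank, so the linear solve is unique and inherits the previously fixed $\Qb$, exactly answering the consistency worry you raise — but it is work the paper's choice of third factor avoids. Second, your rank argument (each block's emission matrix is a Kronecker product of invertible $2 \times 2$ matrices, hence invertible, using $\beta_{i,j} \ne 0$ from condition (ii)) is cleaner than the paper's, which proves $\rank_K(\bLambda_{[1]}) = 2^q$ by an explicit triangularization under a lexicographic ordering of $\{0,1\}^q$. Finally, be aware that both your argument and the paper's use, at the tensor-equality step and at the final reconstruction, that the columns of $\tilde\bLambda$ factorize across coordinates; this does not follow from the lemma's stated hypotheses (an arbitrary $\tilde\bLambda$ with unit column sums), and without some such restriction the stated conclusion can fail, since $\tilde\bLambda \tilde\bnu = \bLambda\bnu$ with unit column sums is an affine constraint admitting non-permutation solutions. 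You flag this usage explicitly ("the columns of $\tilde\bLambda$ factor across the three blocks"), whereas the paper invokes it silently when asserting the tensor equality; it is harmless where the lemma is actually applied, in the proof of Theorem \ref{theo:strict}, because there $\tilde\bLambda$ is generated by model parameters $(\tilde\Gb, \tilde\Bb)$ and so has the product structure.
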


Similar to the definition of $\bLambda$ in \eqref{eq:Lambda_st}, for any given parameter $\Ab$, we characterize the conditional distribution of $\wb ~|~ z, \Ab$ using the $2^q \times d$ probability matrix $\bSigma$, defined entrywise as:
\begin{equation}\label{eq:Sigma_th}
\Sigma_{t, h}
:=
\PP(\wb = \sW^{-1}(t) ~|~ z = h, \Ab)
,\quad
\forall t \in [2^q], h \in [d]
.
\end{equation}
For each $j \in [q]$, analogous to \eqref{eq:Lambda_ist}, we define the $2 \times d$ probability matrix $\bSigma_{(j)}$ to characterize the conditional distribution of entry $w_j ~|~ z, \Ab$:
\begin{equation}\label{eq:Sigma_jth}
\Sigma_{(j), 1, h}
:=
\PP(w_j = 1 ~|~ z = h, \Ab)
,\quad
\Sigma_{(j), 2, h}
:=
\PP(w_j = 0 ~|~ z = h, \Ab)
,\quad
\forall h \in [d]
.
\end{equation}
Since the entries $w_j$ are conditionally independent given $z$, it follows that $\odot_{j = 1}^q \bSigma_{(j)}$ coincides with $\bSigma$ up to row permutations.

Similar to $\bnu$ in \eqref{eq:nu_t}, we define $\bmu \in \RR^d$ as the probability vector characterizing the conditional distribution of $\zb ~|~ \xb, \bGamma$, defined entrywise as:
\begin{equation}\label{eq:mu_h}
\mu_h
:=
\PP(z = h ~|~ \xb, \bGamma)
,\quad
h \in [d]
.
\end{equation}
Since
$$
\PP(\wb ~|~ \xb, \Ab, \bGamma)
=
\sum_{z \in [d]} \PP(\wb ~|~ z, \Ab) \PP(z ~|~ \xb, \bGamma)
,\quad
\forall \wb \in \{0, 1\}^q
,
$$
we have $\bnu = \bSigma \bmu$.

The following lemma establishes that if $\Ab$ satisfies condition (i) of Theorem \ref{theo:strict}, then the probability matrix $\bSigma$ and probability vector $\bmu$ can be uniquely identified, up to column permutations, from their product $\bSigma \bmu$.
We defer its proof to Section \ref{supp_ssec:proof_lemmas_strict}.

\begin{lemma}\label{lemm:w_to_z}
Let $\bSigma$ be the probability matrix defined in \eqref{eq:Sigma_th} and $\bmu$ be the probability vector defined in \eqref{eq:mu_h}.
Suppose there exists a partition of $[q]$ as $\cQ_1 \cup \cQ_2 \cup \cQ_3$ with $\card(\cQ_1), \card(\cQ_2) \ge \log_2 d$, such that the two sets of vectors $\bigcirc_{j \in \cQ_1} \{\one, \balpha_j\}$, $\bigcirc_{j \in \cQ_2} \{\one, \balpha_j\}$ have full ranks and the submatrix $\Ab_{\cQ_3}$ has distinct columns.
Then, for any other $2^q \times d$ matrix $\tilde\bSigma$ and $d$-dimensional vector $\tilde\bmu$ satisfying
$$
\bSigma \bmu
=
\tilde\bSigma \tilde\bmu
,\quad
\tilde\bSigma^\top \one
=
\one
,\quad
\one^\top \tilde\bmu
=
1
,
$$
there exists a $d \times d$ permutation matrix $\Rb$ such that
$$
\tilde\bSigma
=
\bSigma \Rb
,\quad
\tilde\bmu
=
\Rb^\top \bmu
.
$$
\end{lemma}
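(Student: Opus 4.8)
The plan is to mirror the tensor-decomposition argument behind Lemma~\ref{lemm:y_to_w}, now applied one layer up to the pair $(\bSigma,\bmu)$ governing $\wb\mid z$ and $z\mid\xb$. The key structural fact is that, because the attributes $w_j$ are conditionally independent given $z$, the matrix $\bSigma$ factors as the Khatri--Rao product $\bSigma=\odot_{j=1}^q\bSigma_{(j)}$ of the single-attribute matrices in \eqref{eq:Sigma_jth}, and the same holds for any candidate $\tilde\bSigma$ arising from alternative Bernoulli parameters. First I would group the $q$ coordinates according to the partition $[q]=\cQ_1\cup\cQ_2\cup\cQ_3$ and define the three block matrices $\bSigma^{(k)}:=\odot_{j\in\cQ_k}\bSigma_{(j)}$, whose columns are the conditional laws of $\wb_{\cQ_k}\mid z=h$. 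Reshaping the marginal vector $\bnu=\bSigma\bmu$ of $\wb\mid\xb$ along these three blocks yields a three-way tensor with the rank-$d$ CP decomposition $\cT=\llbracket \bSigma^{(1)},\bSigma^{(2)},\bSigma^{(3)}\diag(\bmu)\rrbracket$, and likewise $\cT=\llbracket \tilde\bSigma^{(1)},\tilde\bSigma^{(2)},\tilde\bSigma^{(3)}\diag(\tilde\bmu)\rrbracket$.

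The heart of the argument is verifying Kruskal's condition $\rank_K(\bSigma^{(1)})+\rank_K(\bSigma^{(2)})+\rank_K(\bSigma^{(3)}\diag(\bmu))\ge 2d+2$ so that Theorem~\ref{theo:kruskal} applies. For the first two blocks I would exploit that $\alpha^{w}(1-\alpha)^{1-w}$ is a linear combination of $\{1,\alpha\}$: this gives an invertible triangular change of basis (a Kronecker product of $2\times2$ invertible blocks) relating $\bSigma^{(1)}$ to the matrix whose rows are exactly the vectors in $\bigcirc_{j\in\cQ_1}\{\one,\balpha_j\}$. Hence the full-rank hypothesis on $\bigcirc_{j\in\cQ_1}\{\one,\balpha_j\}$ (together with $\card(\cQ_1)\ge\log_2 d$, which guarantees at least $d$ rows) forces $\bSigma^{(1)}$ to have full column rank $d$, so $\rank_K(\bSigma^{(1)})=d$, and similarly $\rank_K(\bSigma^{(2)})=d$. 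For the third block, distinct columns of $\Ab_{\cQ_3}$ correspond to distinct product-Bernoulli laws, hence distinct columns of $\bSigma^{(3)}$; since these are probability vectors summing to one, any two are linearly independent, and multiplying by the strictly positive $\diag(\bmu)$ preserves this, giving $\rank_K(\bSigma^{(3)}\diag(\bmu))\ge 2$. The three contributions sum to exactly $2d+2$.

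Kruskal's theorem then yields diagonal scalings $\Sb_1,\Sb_2,\Sb_3$ with $\Sb_1\Sb_2\Sb_3=\Ib$ and a single $d\times d$ permutation $\Rb$ matching the two decompositions columnwise. I would pin down the scalings using stochasticity: each column of $\bSigma^{(1)}$ and $\tilde\bSigma^{(1)}$ sums to one, so equating column sums forces $\Sb_1=\Ib$, and likewise $\Sb_2=\Ib$, whence $\Sb_3=\Ib$; then comparing column sums in the third block, which equal $\mu_h$ and $\tilde\mu_h$, gives $\tilde\bmu=\Rb^\top\bmu$. Finally, because $\wb_{\cQ_1},\wb_{\cQ_2},\wb_{\cQ_3}$ are conditionally independent given $z$, the $h$th column of the full $\bSigma$ is the Kronecker product of the corresponding columns of the three blocks; as each block is recovered under the common permutation $\Rb$, reassembling gives $\tilde\bSigma=\bSigma\Rb$, completing the proof.

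I expect the main obstacle to be the rank bookkeeping in the second paragraph: cleanly establishing that full rank of the entrywise-product set $\bigcirc_{j\in\cQ_k}\{\one,\balpha_j\}$ is equivalent to full column rank of the Khatri--Rao matrix $\bSigma^{(k)}$ (the invertible-transform step), and verifying that the third factor contributes Kruskal rank exactly $2$ rather than something smaller. Translating the abstract full-rank hypotheses into the precise $2d+2$ bound, while tracking that all scalings collapse to the identity under the probability constraints, is the delicate part; the remaining steps are direct analogues of Lemma~\ref{lemm:y_to_w}.
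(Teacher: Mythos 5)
Your proposal is correct and follows essentially the same route as the paper's own proof: the same Khatri--Rao blocking $\bSigma_{[k]} = \odot_{j \in \cQ_k}\bSigma_{(j)}$ along the partition, the same invertible triangular (Kronecker-product) change of basis showing $\rank_K(\bSigma_{[1]}) = \rank_K(\bSigma_{[2]}) = d$ from the full-rank hypothesis on $\bigcirc_{j\in\cQ_k}\{\one,\balpha_j\}$, the same distinct-columns argument giving $\rank_K(\bSigma_{[3]}\diag(\bmu)) \ge 2$, and then Kruskal's theorem with the diagonal scalings forced to the identity by column-stochasticity, followed by reading off $\tilde\bmu = \Rb^\top\bmu$ from column sums and reassembling $\tilde\bSigma = \bSigma\Rb$. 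The one caveat — that your tensor identity for the tilde side requires $\tilde\bSigma$ to factorize as a Khatri--Rao product, which you impose by restricting to candidates ``arising from alternative Bernoulli parameters'' even though the lemma is stated for arbitrary column-stochastic $\tilde\bSigma$ — is shared by the paper, whose tensor-equality and final reassembly steps tacitly rely on the same factorization, and which only ever invokes the lemma for product-structured $\tilde\bSigma$ in the proof of Theorem \ref{theo:strict}.
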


Lemmas \ref{lemm:y_to_w} and \ref{lemm:w_to_z} collectively establish that the conditional distributions $\law(\yb ~|~ \wb, \Gb, \Bb)$, $\law(\wb ~|~ z, \Ab)$, and $\law(z ~|~ \xb, \bGamma)$ can be uniquely recovered from the marginal distribution $\law(\yb ~|~ \xb, \Ab, \Bb, \bGamma, \Gb)$, up to permutations of $\wb$ over $\{0, 1\}^q$ and permutations of $z$ over $[d]$.
The following lemma further demonstrates that the parameters $\Ab, \Bb, \bGamma, \Gb$ can be recovered from these conditional distributions, with proof included in Section \ref{supp_ssec:proof_lemmas_strict}.

\begin{lemma}\label{lemm:law_to_params}
Suppose the data generating distribution $\cP_x$ of $\xb$ has full rank.
Let $(\Ab, \Bb, \bGamma, \Gb)$ and $(\tilde\Ab, \tilde\Bb, \tilde\bGamma, \tilde\Gb)$ be two sets of model parameters satisfying the following condition:
$$
\forall i \in [p], j \in [q],
\text{ if } g_{i, j} = 1,
\text{ then } \beta_{i, j} \ne 0,
\text{ and if } \tilde{g}_{i, j} = 1,
\text{ then } \tilde\beta_{i, j} \ne 0
.
$$
Then, the following statements hold:
\begin{enumerate}[(i)]
\item
If the conditional distributions satisfy
$$
\law(\yb ~|~ \wb, \Gb, \Bb)
=
\law(\yb ~|~ \wb, \tilde\Gb, \tilde\Bb)
,\quad
\forall \wb \in \{0, 1\}^q
,
$$
then $\Gb = \tilde\Gb$ and $(1 ~ \Gb) \circ \Bb = (1 ~ \tilde\Gb) \circ \tilde\Bb$.
\item
If the conditional distributions satisfy
$$
\law(\wb ~|~ z, \Ab)
=
\law(\wb ~|~ z, \tilde\Ab)
,\quad
\forall z \in [d]
,
$$
then $\Ab = \tilde\Ab$.
\item
If the conditional distributions satisfy
$$
\law(z ~|~ \xb, \bGamma)
=
\law(z ~|~ \xb, \tilde\bGamma)
,\quad
\cP_x\text{-a.e.}
$$
then $\bGamma = \tilde\bGamma$.
\end{enumerate}
\end{lemma}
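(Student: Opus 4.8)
The plan is to prove the three parts separately, in each case exploiting the conditional-independence structure to reduce the joint conditional law to its one-dimensional marginals, and then inverting the relevant link function to obtain affine constraints that pin down the parameters. I would handle the parts in the order (iii), (ii), (i), since the third is the quickest and the first is the most delicate.

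For part (iii), I would divide $\PP(z = h \mid \xb, \bGamma)$ by the baseline probability $\PP(z = d \mid \xb, \bGamma)$; because the baseline row is fixed to $\gamma_{d, 0} = 0, \bgamma_d = \zero$ for both $\bGamma$ and $\tilde\bGamma$, the resulting log-ratio equals the affine form $\gamma_{h, 0} + \bgamma_h^\top \xb$. Equality of the two laws for $\cP_x$-almost every $\xb$ therefore forces $(\gamma_{h, 0} - \tilde\gamma_{h, 0}) + (\bgamma_h - \tilde\bgamma_h)^\top \xb = 0$ for $\cP_x$-almost every $\xb$. Applying the full-rank hypothesis on $\cP_x$ with $\vb = (\gamma_{h, 0} - \tilde\gamma_{h, 0},\, \bgamma_h - \tilde\bgamma_h)$ yields $\vb = \zero$ for every $h$, hence $\bGamma = \tilde\bGamma$. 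For part (ii), I would note that under the product-Bernoulli law the marginal $\PP(w_j = 1 \mid z, \Ab) = \alpha_{j, z}$ is read off directly from the joint law $\law(\wb \mid z, \Ab)$; equality of the joint laws for every $z$ thus gives $\alpha_{j, z} = \tilde\alpha_{j, z}$ entrywise, i.e. $\Ab = \tilde\Ab$. This step is essentially immediate.

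Part (i) is the crux. Conditional independence of the $y_i$ given $\wb$ means the joint law determines each marginal $\law(y_i \mid \wb)$, and inverting the logistic link recovers the logit $\eta_i(\wb) = \beta_{i, 0} + \sum_{j} g_{i, j} \beta_{i, j} w_j$, an affine function of $\wb$ on $\{0, 1\}^q$. Evaluating the matched logit functions at $\wb = \zero$ gives $\beta_{i, 0} = \tilde\beta_{i, 0}$, and at $\wb = \eb_j$ gives $g_{i, j} \beta_{i, j} = \tilde g_{i, j} \tilde\beta_{i, j}$ for each $j$. The remaining obstacle, disentangling the binary $\Gb$ from the product $g_{i, j} \beta_{i, j}$, is exactly where the condition $g_{i, j} = 1 \Rightarrow \beta_{i, j} \ne 0$ enters: under this condition $g_{i, j} \beta_{i, j}$ vanishes precisely when $g_{i, j} = 0$, so the common value $g_{i, j} \beta_{i, j} = \tilde g_{i, j} \tilde\beta_{i, j}$ is zero for both or nonzero for both, forcing $g_{i, j} = \tilde g_{i, j}$.

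Combining, I obtain $\Gb = \tilde\Gb$, and since the entries of $(\one ~ \Gb) \circ \Bb$ are exactly $\beta_{i, 0}$ (from the appended column) and $g_{i, j} \beta_{i, j}$, all of which have been matched, I conclude $(\one ~ \Gb) \circ \Bb = (\one ~ \tilde\Gb) \circ \tilde\Bb$. I expect no serious difficulty beyond carefully recording this zero/nonzero dichotomy in part (i); parts (ii) and (iii) reduce to standard marginalization and a single application of the full-rank property of $\cP_x$.
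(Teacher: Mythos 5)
Your proposal is correct and follows essentially the same route as the paper's proof: marginalize via conditional independence to recover the one-dimensional laws, invert the logistic link to get affine identities in $\wb$, evaluate at $\wb = \zero$ and $\wb = \eb_j$ to match $\beta_{i,0}$ and $g_{i,j}\beta_{i,j}$, use the zero/nonzero dichotomy to force $\Gb = \tilde\Gb$, read $\alpha_{j,z}$ off the marginal of $w_j$, and kill the log-odds differences in part (iii) with the full-rank property of $\cP_x$. The only differences are cosmetic: you treat the parts in reverse order, and the paper makes explicit (via the continuous function $\psi(\xb) = \sum_h ((\gamma_{h,0}-\tilde\gamma_{h,0}) + (\bgamma_h - \tilde\bgamma_h)^\top\xb)^2$ vanishing on $\supp(\cP_x)$) the small bridging step from ``$\cP_x$-almost every $\xb$'' to a point of the support where the full-rank condition can be applied, which your argument uses implicitly.
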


By combining Lemmas \ref{lemm:y_to_w}, \ref{lemm:w_to_z}, \ref{lemm:law_to_params}, we now present the formal proof of Theorem \ref{theo:strict}.
Lemma \ref{lemm:y_to_w} establishes the uniqueness of $\bLambda$ and $\bnu$ up to permutations of $\wb$ over the full space $\{0, 1\}^q$, resulting in $(2^q)!$ possible permutations.
However, Theorem \ref{theo:strict} only allows for permutations of the $q$ dimensions of $\wb$, which corresponds to $q!$ possible permutations.
The following proof bridges this gap and ensures the desired level of uniqueness.

\begin{proof}[Proof of Theorem \ref{theo:strict}]
Recall that $\sS_1$ denotes the space of parameters $(\Ab, \Bb, \bGamma, \Gb)$ satisfying conditions (i), (ii), and (iii) in Theorem \ref{theo:strict}.
Since permutations of the entries in $\wb$ and the classes of $z$ do not affect the conditional distribution of $\yb ~|~ \xb, \Ab, \Bb, \bGamma, \Gb$, it follows that
$$
\perm_{\sS_1}(\Ab, \Bb, \bGamma, \Gb)
\subseteq
\marg_{\sS_1}(\Ab, \Bb, \bGamma, \Gb)
.
$$
Thus, to establish the desired identifiability result, it suffices to prove the reverse inclusion:
$$
\perm_{\sS_1}(\Ab, \Bb, \bGamma, \Gb)
\supseteq
\marg_{\sS_1}(\Ab, \Bb, \bGamma, \Gb)
.
$$

Let $(\tilde\Ab, \tilde\Bb, \tilde\bGamma, \tilde\Gb)$ be an arbitrary set of model parameters belonging to $\marg_{\sS_1}(\Ab, \Bb, \bGamma, \Gb)$.
By the definition of $\marg$ in \eqref{eq:marg}, we have
\begin{equation}\label{eq:tilde_from_marg_laws}
\law(\yb ~|~ \xb, \Ab, \Bb, \bGamma, \Gb)
=
\law(\yb ~|~ \xb, \tilde\Ab, \tilde\Bb, \tilde\bGamma, \tilde\Gb)
.
\end{equation}
Recall from \eqref{eq:Lambda_st} and \eqref{eq:nu_t} that the probability matrix $\bLambda$ and probability vector $\bnu$ characterize the conditional distributions $\law(\yb ~|~ \wb, \Gb, \Bb)$ and $\law(\wb ~|~ \xb, \Ab, \bGamma)$.
Similarly, we define the probability matrix $\tilde\bLambda$ and probability vector $\tilde\bnu$ to characterize the conditional distributions $\law(\yb ~|~ \wb, \tilde\Gb, \tilde\Bb)$ and $\law(\wb ~|~ \xb, \tilde\Ab, \tilde\bGamma)$.
From \eqref{eq:tilde_from_marg_laws}, it follows that $\bLambda \bnu = \tilde\bLambda \tilde\bnu$.
Applying Lemma \ref{lemm:y_to_w}, we conclude that there exists a $2^q \times 2^q$ permutation matrix $\Qb$ such that
$$
\tilde\bLambda
=
\bLambda \Qb
,\quad
\tilde\bnu
=
\Qb^\top \bnu
.
$$

We let $\sQ$ denote the permutation map on $\{0, 1\}^q$ corresponding to $\Qb$, such that
$$
\law(\yb ~|~ \wb, \tilde\Gb, \tilde\Bb)
=
\law(\yb ~|~ \sQ(\wb), \Gb, \Bb)
,\quad
\law(\wb ~|~ \xb, \tilde\Ab, \tilde\bGamma)
=
\law(\sQ(\wb) ~|~ \xb, \Ab, \bGamma)
.
$$
This further implies that for each $i \in [p]$,
$$
\law(y_i ~|~ \wb, \tilde\gb_i, \tilde\beta_{i, 0}, \tilde\bbeta_i)
=
\law(y_i ~|~ \sQ(\wb), \gb_i, \beta_{i, 0}, \bbeta_i)
.
$$
For each $i \in [p]$, we define the sets
\begin{align*}
\cC_i
&:=
\Big\{
\PP(y_i = 1 ~|~ \wb, \gb_i, \beta_{i, 0}, \bbeta_i)
:~
\wb \in \{0, 1\}^q
\Big\}
\\&=
\left\{
\frac{\exp(\beta_{i, 0} + \sum_{j = 1}^q w_j g_{i, j} \beta_{i, j})}{1 + \exp(\beta_{i, 0} + \sum_{j = 1}^q w_j g_{i, j} \beta_{i, j})}
:~
\wb \in \{0, 1\}^q
\right\}
,
\end{align*}
and
\begin{align*}
\tilde\cC_i
&:=
\Big\{
\PP(y_i = 1 ~|~ \sQ(\wb), \tilde\gb_i, \tilde\beta_{i, 0}, \tilde\bbeta_i)
:~
\wb \in \{0, 1\}^q
\Big\}
\\&=
\Big\{
\PP(y_i = 1 ~|~ \wb, \tilde\gb_i, \tilde\beta_{i, 0}, \tilde\bbeta_i)
:~
\wb \in \{0, 1\}^q
\Big\}
\\&=
\left\{
\frac{\exp(\tilde\beta_{i, 0} + \sum_{j = 1}^q w_j \tilde{g}_{i, j} \tilde\beta_{i, j})}{1 + \exp(\tilde\beta_{i, 0} + \sum_{j = 1}^q w_j \tilde{g}_{i, j} \tilde\beta_{i, j})}
:~
\wb \in \{0, 1\}^q
\right\}
,
\end{align*}
then $\cC_i = \tilde\cC_i$.

Recall that $\Gb$ and $\Bb$ satisfy the condition that $\forall i \in [p], j \in [q]$, if $g_{i, j} = 1$ then $\beta_{i, j} \ne 0$.
Consequently, if $\gb_i = \zero$, then the set
$$
\cC_i
=
\left\{
\frac{\exp(\beta_{i, 0})}{1 + \exp(\beta_{i, 0})}
\right\}
$$
has cardinality $\card(\cC_i) = 1$.
If $\gb_i = \eb_j$ for some $j \in [q]$, then the set
$$
\cC_i
=
\left\{
\frac{\exp(\beta_{i, 0})}{1 + \exp(\beta_{i, 0})}
,~
\frac{\exp(\beta_{i, 0} + \beta_{i, j})}{1 + \exp(\beta_{i, 0} + \beta_{i, j})}
\right\}
$$
has cardinality $\card(\cC_i) = 2$.
If $\one^\top \gb_i \ge 2$, then there exist distinct $j, j' \in [q]$ such that $g_{i, j} = g_{i, j'} = 1$, and the set
$$
\cC_i
\supset
\left\{
\frac{\exp(\beta_{i, 0})}{1 + \exp(\beta_{i, 0})}
,~
\frac{\exp(\beta_{i, 0} + \beta_{i, j})}{1 + \exp(\beta_{i, 0} + \beta_{i, j})}
,~
\frac{\exp(\beta_{i, 0} + \beta_{i, j'})}{1 + \exp(\beta_{i, 0} + \beta_{i, j'})}
,~
\frac{\exp(\beta_{i, 0} + \beta_{i, j} + \beta_{i, j'})}{1 + \exp(\beta_{i, 0} + \beta_{i, j} + \beta_{i, j'})}
\right\}
.
$$
We notice that either $\beta_{i, 0}, \beta_{i, 0} + \beta_{i, j}, \beta_{i, 0} + \beta_{i, j'}$ are three distinct values, or $\beta_{i, 0}, \beta_{i, 0} + \beta_{i, j}, \beta_{i, 0} + \beta_{i, j} + \beta_{i, j'}$ are three distinct values, therefore suggesting that $\card(\cC_i) \ge 3$ if $\one^\top \gb_i \ge 2$.

As required by the parameter space $\sS_1$, $\tilde\Gb$ and $\tilde\Bb$ also satisfy that $\forall i \in [p], j \in [q]$, if $\tilde{g}_{i, j} = 1$ then $\tilde\beta_{i, j} \ne 0$.
Similarly, we can derive that if $\one^\top \gb_i = 0$, then $\card(\tilde\cC_i) = 1$; if $\one^\top \gb_i = 1$, then $\card(\tilde\cC_i) = 2$; and if $\one^\top \gb_i \ge 2$, then $\card(\tilde\cC_i) \ge 3$.
Consequently, for each $i \in [p]$, by $\cC_i = \tilde\cC_i$ we have
\begin{equation}\label{eq:gap_step1}
\gb_i = \zero
~\iff~
\tilde\gb_i = \zero
,\quad
\one^\top \gb_i = 1
~\iff~
\one^\top \tilde\gb_i = 1
.
\end{equation}

Let $i, i' \in [p]$ be any two indices with $\one^\top \gb_i = \one^\top \gb_{i'} = 1$.
Since $\card(\cC_i) = 2$, we can partition the $2^q$ columns of $\blambda_{(i)}$ into two groups $\cW_{i, 1}$ and $\cW_{i, 2}$, both having cardinality $2^{q - 1}$, such that $\PP(y_i = 1 ~|~ \wb = \sW^{-1}(j), \gb_i, \beta_{i, 0}, \bbeta_i)$ agrees within $\cW_{i, 1}$ and $\cW_{i, 2}$.
Similarly, we also obtain the partition of $[2^q]$ into $\cW_{i', 1} \cup \cW_{i', 2}$ by grouping on the values of $\PP(y_{i'} = 1 ~|~ \wb, \gb_i, \bbeta_{i, 0}, \bbeta_i)$.
We notice that $\gb_i = \gb_{i'}$ if and only if the partition $\cW_{i, 1} \cup \cW_{i, 2}$ is the same as the partition $\cW_{i', 1} \cup \cW_{i', 2}$.
Analogous results hold for $\tilde\Gb$.
Therefore, for any $i, i' \in [p]$ with $\one^\top \gb_i = \one^\top \gb_{i'} = 1$, we have
\begin{equation}\label{eq:gap_step2}
\gb_i = \gb_{i'}
\quad\iff\quad
\tilde\gb_i = \tilde\gb_{i'}
.
\end{equation}

For any two vectors $\vb, \vb' \in \RR^q$, we denote $\vb \succeq \vb'$ if entrywise we have $v_j \ge v_j'$ for all $j \in [q]$.
Now let $i \in [p]$ be any index with $\one^\top \gb_i = 1$ and $i' \in [p]$ any index with $\one^\top \gb_{i'} \ge 2$.
We notice that $\gb_{i'} \succeq \gb_i$ if and only if there exists a bijective mapping $\sW$ from $\cW_{i, 1}$ to $\cW_{i, 2}$ such that
$$
\logit \Lambda_{(i'), 1, j} - \logit \Lambda_{(i'), 1, \sW(j)}
=
\logit \Lambda_{(i), 1, j} - \logit \Lambda_{(i), 1, \sW(j)}
,\quad
\forall j \in \cW_{i, 1}.
$$
Analogous results again hold for $\tilde\Gb$.
This suggests that for any $i, i' \in [p]$ with $\one^\top \gb_i = 1$ and $\one^\top \gb_{i'} \ge 2$, we have
\begin{equation}\label{eq:gap_step3}
\gb_{i'} \succeq \gb_i
\quad\iff\quad
\tilde\gb_{i'} \succeq \tilde\gb_i
.
\end{equation}

Since $\Gb, \tilde\Gb$ contain identity blocks, by combining \eqref{eq:gap_step1}, \eqref{eq:gap_step2}, and \eqref{eq:gap_step3}, there exists a permutation map $\sP_w$ on $[q]$, denoted by the $q \times q$ permutation matrix $\Pb_w$, such that $\tilde\Gb = \Gb \Pb_w^\top$.
We further notice that for $i \in [p]$ with $\one^\top \gb_i = 1$, we have $\wb \succeq \gb_i$ if and only if
$$
\PP(y_i = 1 ~|~ \wb, \gb_i, \beta_{i, 0}, \bbeta_i)
-
\PP(y_i = 0 ~|~ \wb, \gb_i, \beta_{i, 0}, \bbeta_i)
\ne
0
,
$$
and $\wb \succeq \tilde\gb_i$ if and only if
$$
\PP(y_i = 1 ~|~ \wb, \tilde\gb_i, \tilde\beta_{i, 0}, \tilde\bbeta_i)
-
\PP(y_i = 0 ~|~ \wb, \tilde\gb_i, \tilde\beta_{i, 0}, \tilde\bbeta_i)
\ne
0
.
$$
This suggests
\begin{equation}\label{eq:Q_to_P}
\sQ(\wb)
=
\Pb_w^\top \wb
,\quad
\forall \wb \in \{0, 1\}^q
,
\end{equation}
which gives
$$
\law\left(
\yb
~\big|~
\wb, \tilde\Gb, \tilde\Bb
\right)
=
\law\left(
\yb
~\big|~
\Pb_w^\top \wb, \Gb, \Bb
\right)
=
\law\left(
\yb
~\big|~
\wb, \Gb\Pb_w^\top, \Bb \diag(1, \Pb_w)^\top
\right)
.
$$
Along with Lemma \ref{lemm:law_to_params}, we obtain
\begin{equation}\label{eq:G_and_1GB_perm}
\tilde\Gb
=
\Gb \Pb_w^\top
,\quad
(\one ~ \tilde\Gb) \circ \tilde\Bb
=
((1 ~ \Gb) \circ \Bb) \diag(1, \Pb_w)
.
\end{equation}
This bridges the gap from the permutation $\sQ$ over $\{0, 1\}^q$ to the permutation matrix $\Pb_w$ on the $q$ entries of $\wb$.

\eqref{eq:Q_to_P} further suggests
$$
\law(\wb ~|~ \xb, \tilde\Ab, \tilde\bGamma)
=
\law(\Pb_w^\top \wb ~|~ \xb, \Ab, \bGamma)
=
\law(\wb ~|~ \xb, \Pb_w \Ab, \bGamma)
.
$$
Similar to \eqref{eq:Sigma_th} and \eqref{eq:mu_h}, we let $\bSigma'$ and $\bmu$ characterize the distributions $\law(\wb ~|~ z, \Pb_w \Ab)$ and $\law(\wb ~|~ \xb, \bGamma)$, and let $\tilde\bSigma$ and $\tilde\bmu$ characterize the distributions $\law(\wb ~|~ z, \tilde\Ab)$ and $\law(\wb ~|~ \xb, \tilde\bGamma)$.
Applying Lemma \ref{lemm:w_to_z} suggests the existence of some $d \times d$ permutation matrix $\Pb_z$ such that
$$
\tilde\bSigma
=
\bSigma' \Pb_z
,\quad
\tilde\bmu
=
\Pb_z^\top \bmu
,
$$
which equivalently gives
$$
\law(\wb ~|~ z, \tilde\Ab)
=
\law(\wb ~|~ z, \Pb_w \Ab \Pb_z^\top)
,\quad
\law(z ~|~ \xb, \tilde\bGamma)
=
\law(z ~|~ \xb, \Pb_z\bGamma)
.
$$
Applying Lemma \ref{lemm:law_to_params}, we further obtain
\begin{equation}\label{eq:A_and_Gamma_perm}
\tilde\Ab
=
\Pb_w \Ab \Pb_z^\top
,\quad
\tilde\bGamma
=
\Pb_z \bGamma
.
\end{equation}

Recall the definition of $\perm$ from \eqref{eq:perm}.
By combining \eqref{eq:G_and_1GB_perm} and \eqref{eq:A_and_Gamma_perm}, it follows that $(\tilde\Ab, \tilde\Bb, \tilde\bGamma, \tilde\Gb) \in \perm_{\sS_1}(\Ab, \Bb, \bGamma, \Gb)$.
This shows that $\perm_{\sS_1}(\Ab, \Bb, \bGamma, \Gb) \supseteq \marg_{\sS_1}(\Ab, \Bb, \bGamma, \Gb)$, which completes the proof.
\end{proof}

\subsection{Proofs of Auxiliary Lemmas for Theorem \ref{theo:strict}}\label{supp_ssec:proof_lemmas_strict}

We begin by presenting the proof of Lemma \ref{lemm:y_to_w}, which leverages Kruskal's Theorem \ref{theo:kruskal} on the uniqueness of three-way tensor CP decomposition.

\begin{proof}[Proof of Lemma \ref{lemm:y_to_w}]
Without loss of generality, we assume that the $p \times p$ permutation $\Pb$ in condition (i) of Lemma \ref{lemm:y_to_w} is the identity matrix $\Ib$, with the row index subset $\cI = [3q + 1, p]$.

We recall the definition of $\bLambda_{[i]}$ from \eqref{eq:Lambda_ist}.
Based on the partition $[p] = [q] \cup [q + 1, 2q] \cup [2q + 1, 3q] \cup [3q + 1, p]$ corresponding to the blockwise structure of $\Gb$, we define the probability matrices
\begin{equation}\label{eq:Lambda[1234]}
\bLambda_{[1]}
:=
\odot_{i \in [q]} \bLambda_{(i)}
,\quad
\bLambda_{[2]}
:=
\odot_{i \in [q + 1, 2q]} \bLambda_{(i)}
,\quad
\bLambda_{[3]}
:=
\odot_{i \in [2q + 1, 3q]} \bLambda_{(i)}
,\quad
\bLambda_{[4]}
:=
\odot_{i \in [3q + 1, p]} \bLambda_{(i)}
.
\end{equation}
The Khatri-Rao product $\bLambda_{[1]} \odot \bLambda_{[2]} \odot \bLambda_{[3]} \odot \bLambda_{[4]}$ then coincides with $\bLambda$, up to row permutations.
Notably, each row of $\bLambda_{(i)}$ can be obtained by summing specific rows of $\bLambda$, formally expressed as
$$
\Lambda_{(i), s, t}
=
\sum_{r \in \sY(\{\yb \in \{0, 1\}^p:~ y_i = 1_{s = 1}\})} \Lambda_{r, t}
,\quad
\forall s \in [2], t \in [2^q]
.
$$

Starting from the probability matrix $\tilde\bLambda$, we similarly define $\tilde\bLambda_{(i)}$ for each $i \in [p]$ entrywise as
$$
\tilde\Lambda_{(i), s, t}
:=
\sum_{r \in \sY(\{\yb \in \{0, 1\}^p:~ y_i = 1_{s = 1}\})} \tilde\Lambda_{r, t}
,\quad
\forall s \in [2], t \in [2^q]
$$
and introduce the probability matrices
$$
\tilde\bLambda_{[1]}
:=
\odot_{i \in [q]} \tilde\bLambda_{(i)}
,\quad
\tilde\bLambda_{[2]}
:=
\odot_{i \in [q + 1, 2q]} \tilde\bLambda_{(i)}
,\quad
\tilde\bLambda_{[3]}
:=
\odot_{i \in [2q + 1, 3q]} \tilde\bLambda_{(i)}
,\quad
\tilde\bLambda_{[4]}
:=
\odot_{i \in [3q + 1, p]} \tilde\bLambda_{(i)}
.
$$
Since $\bLambda \bnu = \tilde\bLambda \tilde\bnu$, we equivalently rewrite this as the equality of the three-way tensors
$$
\llbracket \bLambda_{[1]}, \bLambda_{[2]}, (\bLambda_{[3]} \odot \bLambda_{[4]}) \diag(\bnu) \rrbracket
=
\llbracket \tilde\bLambda_{[1]}, \tilde\bLambda_{[2]}, (\tilde\bLambda_{[3]} \odot \tilde\bLambda_{[4]}) \diag(\tilde\bnu) \rrbracket
.
$$
This reformulation transforms the problem of uniqueness of $\bLambda$ and $\bnu$ from their product $\bLambda \bnu$ into the uniqueness of a three-way tensor CP decomposition.

We first analyze the Kruskal ranks of $\bLambda_{[1]}$.
To do so, we define the $2^q \times 2^q$ matrix
\begin{align*}
\bLambda_{[1]}'
&:=
\bigodot_{i \in [q]} \left(\begin{matrix}
\Lambda_{(i), 1, 1} - \frac{\exp(\beta_{i, 0})}{1 + \exp(\beta_{i, 0})} & \cdots & \Lambda_{(i), 1, 2^q} - \frac{\exp(\beta_{i, 0})}{1 + \exp(\beta_{i, 0})} \\
1 & \cdots & 1
\end{matrix}\right)
\\&=
\bigodot_{i \in [q]} \left(
\left(\begin{matrix}
1 & -\frac{\exp(\beta_{i, 0})}{1 + \exp(\beta_{i, 0})} \\
0 & 1
\end{matrix}\right)
\cdot
\left(\begin{matrix}
1 & 0 \\
1 & 1
\end{matrix}\right)
\cdot
\bLambda_{(i)}
\right)
\\&=
\underbrace{
\bigotimes_{i \in [q]} \left(\begin{matrix}
1 & -\frac{\exp(\beta_{i, 0})}{1 + \exp(\beta_{i, 0})} \\
0 & 1
\end{matrix}\right)
}_{:= \Mb_1}
\cdot
\underbrace{
\bigotimes_{i \in [q]} \left(\begin{matrix}
1 & 0 \\
1 & 1
\end{matrix}\right)
}_{:= \Mb_2}
\cdot
\bigodot_{i \in [q]} \bLambda_{(i)}
,
\end{align*}
where the matrices $\Mb_1, \Mb_2$ are defined via Kronecker products $\otimes$, leading to the relation
$$
\bLambda_{[1]}'
=
\Mb_1 \Mb_2 \bLambda_{[1]}
.
$$
Observing the structure of these matrices, we note that $\Mb_1$ is an upper-triangular $2^q \times 2^q$ matrix with an all-ones diagonal, while $\Mb_2$ is a lower-triangular $2^q \times 2^q$ matrix $\Mb_2$, also with an all-ones diagonal.
This ensures that both $\Mb_1$ and $\Mb_2$ are invertible.
Therefore, if we can establish that $\bLambda_{[1]}'$ is invertible, it follows that $\bLambda_{[1]}$ must also be invertible, which directly implies that the Kruskal rank satisfies $\rank_K(\bLambda_{[1]}) = 2^q$.

We define the lexicographical order over the space of binary vectors $\{0, 1\}^q$ by
\begin{equation}\label{eq:lex}
\ub \lex \vb
\quad\iff\quad
\sum_{j = 1}^q u_j 2^j < \sum_{j = 1}^q v_j 2^j
.
\end{equation}
Since $\sW$ is an arbitrary bijective map from $\{0, 1\}^q$ to $[2^q]$, we assume without loss of generality that it is the unique bijective map satisfying $\sW^{-1}(1) \lex \cdots \lex \sW^{-1}(2^q)$.

Next, we define the $2^q \times 2^q$ matrix $\bLambda_{[1]}''$ entrywise by
$$
\Lambda_{[1], s, t}''
:=
\prod_{i = 1}^q \left(
\left(
\Lambda_{(i), 1, t} - \frac{\exp(\beta_{i, 0})}{1 + \exp(\beta_{i, 0})}
\right) 1_{\sW^{-1}(s)_i = 1} + 1_{\sW^{-1}(s)_i = 0}
\right)
,\quad
\forall s, t \in [2^q]
.
$$
The matrix represents a row and column permutation of $\bLambda_{[1]}'$.
For any $1 \le t < s \le 2^q$, since $\sW^{-1}(t) \lex \sW^{-1}(s)$, it follows from the definition in \eqref{eq:lex} that there exists an index $\iota \in [q]$ such that the $\iota$th entries of $\sW^{-1}(t)$ and $\sW^{-1}(s)$ satisfy $\sW^{-1}(t)_\iota = 0$ while $\sW^{-1}(s)_\iota = 1$.
Given that the $\iota$th row of $\Gb$ is $\gb_{\iota} = \eb_\iota$, we obtain
\begin{align*}
&\qquad
\left(
\Lambda_{(\iota), 1, t} - \frac{\exp(\beta_{\iota, 0})}{1 + \exp(\beta_{\iota, 0})}
\right) 1_{\sW^{-1}(s)_\iota = 1} + 1_{\sW^{-1}(s)_\iota = 0}
\\&=
\Lambda_{(\iota), 1, t} - \frac{\exp(\beta_{\iota, 0})}{1 + \exp(\beta_{\iota, 0})}
\\&=
\frac{
\exp(\beta_{\iota, 0} + (\gb_\iota \circ \sW^{-1}(t))^\top \bbeta_\iota)
}{
1 + \exp(\beta_{\iota, 0} + (\gb_\iota \circ \sW^{-1}(t))^\top \bbeta_\iota)
}
-
\frac{\exp(\beta_{\iota, 0})}{1 + \exp(\beta_{\iota, 0})}
\\&=
\frac{
\exp(\beta_{\iota, 0} + \sW^{-1}(t)_\iota \beta_{\iota, \iota})
}{
1 + \exp(\beta_{\iota, 0} + \sW^{-1}(t)_\iota \beta_{\iota, \iota})
}
-
\frac{\exp(\beta_{\iota, 0})}{1 + \exp(\beta_{\iota, 0})}
\\&=
\frac{\exp(\beta_{\iota, 0})}{1 + \exp(\beta_{\iota, 0})}
-
\frac{\exp(\beta_{\iota, 0})}{1 + \exp(\beta_{\iota, 0})}
\\&=
0
.
\end{align*}
Thus, $\bLambda_{[1], s, t}'' = 0$ for $t < s$, confirming that $\bLambda_{[1]}''$ is upper triangular.
For any $t = s \in [2^q]$ and each $i \in [q]$, we compute
\begin{align*}
&\qquad
\left(
\Lambda_{(i), 1, t} - \frac{\exp(\beta_{i, 0})}{1 + \exp(\beta_{i, 0})}
\right) 1_{\sW^{-1}(t)_i = 1} + 1_{\sW^{-1}(t)_i = 0}
\\&=
\left(
\frac{
\exp(\beta_{i, 0} + (\gb_i \circ \sW^{-1}(t))^\top \bbeta_i)
}{
1 + \exp(\beta_{i, 0} + (\gb_i \circ \sW^{-1}(t))^\top \bbeta_i)
}
-
\frac{\exp(\beta_{i, 0})}{1 + \exp(\beta_{i, 0})}
\right) 1_{\sW^{-1}(t)_i = 1}
+
1_{\sW^{-1}(t)_i = 0}
\\&=
\left(
\frac{
\exp(\beta_{i, 0} + \sW^{-1}(t)_i \beta_{i, i})
}{
1 + \exp(\beta_{i, 0} + \sW^{-1}(t)_i \beta_{i, i})
}
-
\frac{\exp(\beta_{i, 0})}{1 + \exp(\beta_{i, 0})}
\right) 1_{\sW^{-1}(t)_i = 1}
+
1_{\sW^{-1}(t)_i = 0}
\\&\ne
0
,
\end{align*}
regardless of the value of $\sW^{-1}(t)_i$.
This confirms that $\bLambda_{[1], t, t}'' \ne 0$ for all $t \in [2^q]$.
Since $\bLambda_{[1]}''$ is upper triangular with non-zero diagonal entries, it is invertible, implying that $\bLambda_{[1]}'$ is also invertible.
This establishes the Kruskal $\rank_K(\bLambda_{[1]}) = 2^q$.
By applying the same argument to $\bLambda_{[2]}$ and $\bLambda_{[3]}$, we also conclude that $\rank_K(\bLambda_{[2]}) = 2^q$ and $\rank_K(\bLambda_{[3]}) = 2^q$.

We now analyze the Kruskal rank of $(\bLambda_{[3]} \odot \bLambda_{[4]}) \diag(\bnu)$.
Since each $\alpha_{j, h} \in (0, 1)$, the probability vector $\bnu$ has no zero entry, implying that $$
\rank_K((\bLambda_{[3]} \odot \bLambda_{[4]}) \diag(\bnu))
=
\rank_K(\bLambda_{[3]} \odot \bLambda_{[4]})
.
$$
Furthermore, we observe that each row of $\bLambda_{[3]}$ can be obtained as a summation of rows in $\bLambda_{[3]} \odot \bLambda_{[4]}$, which implies that a subset of columns in $\bLambda_{[3]} \odot \bLambda_{[4]}$ is linearly dependent only if the corresponding columns in $\bLambda_{[3]}$ are also linearly dependent.
Since we have already established that $\rank_K(\bLambda_{[3]}) = 2^q$, it follows that $\rank_K((\bLambda_{[3]} \odot \bLambda_{[4]}) \diag(\bnu)) = 2^q$.

Putting everything together, we obtain
$$
\rank_K(\bLambda_{[1]}) + \rank_K(\bLambda_{[2]}) + \rank_K(\bLambda_{[3]} \diag(\bnu))
=
3 \cdot 2^q
\ge
2 \cdot 2^q + 2
.
$$
By applying Theorem \ref{theo:kruskal}, we conclude that there exist $2^q \times 2^q$ diagonal matrices $\Sb_1, \Sb_2, \Sb_3$ satisfying $\Sb_1 \Sb_2 \Sb_3 = \Ib$ and a $2^q \times 2^q$ permutation matrix $\Qb$ such that
$$
\tilde\bLambda_{[1]}
=
\bLambda_{[1]} \Sb_1 \Qb
,\quad
\tilde\bLambda_{[2]}
=
\bLambda_{[2]} \Sb_2 \Qb
,\quad
(\tilde\bLambda_{[3]} \odot \tilde\bLambda_{[4]}) \diag(\tilde\bnu)
=
(\bLambda_{[3]} \odot \bLambda_{[4]}) \diag(\bnu) \Sb_3 \Qb
.
$$
Since each column of $\bLambda_{[i]}$ and $\tilde\bLambda_{[i]}$ $(i \in \{1, 2, 3, 4\})$ sums to one, we must have $\Sb_1 = \Sb_2 = \Sb_3 = \Ib$.
Therefore,
$$
\tilde\bLambda_{[1]}
=
\bLambda_{[1]} \Qb
,\quad
\tilde\bLambda_{[2]}
=
\bLambda_{[2]} \Qb
.
$$
For $\tilde\bnu$, we compute
$$
\tilde\bnu
=
\left(
(\tilde\bLambda_{[3]} \odot \tilde\bLambda_{[4]}) \diag(\tilde\bnu)
\right)^\top \one
=
\left(
(\bLambda_{[3]} \odot \bLambda_{[4]}) \diag(\bnu) \Qb
\right)^\top \one
=
\Qb^\top \bnu
.
$$
For $\tilde\bLambda_{[3]} \odot \tilde\bLambda_{[4]}$, we have
\begin{align*}
\tilde\bLambda_{[3]} \odot \tilde\bLambda_{[4]}
&=
(\bLambda_{[3]} \odot \bLambda_{[4]}) \diag(\bnu) \Qb \diag(\tilde\bnu)^{-1}
\\&=
(\bLambda_{[3]} \odot \bLambda_{[4]}) \diag(\bnu) \Qb (\Qb^\top \diag(\bnu)^{-1} \Qb)
\\&=
(\bLambda_{[3]} \odot \bLambda_{[4]}) \Qb
.
\end{align*}
Finally, we obtain
\begin{align*}
\tilde\bLambda
&=
\tilde\bLambda_{[1]} \odot \tilde\bLambda_{[2]} \odot \tilde\bLambda_{[3]} \odot \tilde\bLambda_{[4]}
\\&=
(\bLambda_{[1]} \Qb) \odot (\bLambda_{[2]} \Qb) \odot ((\bLambda_{[3]} \odot \bLambda_{[4]}) \Qb)
\\&=
(\bLambda_{[1]} \odot \bLambda_{[2]} \odot \bLambda_{[3]} \odot \bLambda_{[4]}) \Qb
\\&=
\bLambda \Qb
.
\end{align*}
\end{proof}

We next present the proof of Lemma \ref{lemm:w_to_z}, employing the same technique based on Kruskal's Theorem \ref{theo:kruskal} as used in the proof of Lemma \ref{lemm:y_to_w}.

\begin{proof}[Proof of Lemma \ref{lemm:w_to_z}]
We recall the definition of $\bSigma_{[j]}$ from \eqref{eq:Sigma_jth}.
Based on the partition $[q] = \cQ_1 \cup \cQ_2 \cup \cQ_3$, we define the probability matrices
$$
\bSigma_{[1]}
:=
\odot_{j \in \cQ_1} \bSigma_{(j)}
,\quad
\bSigma_{[2]}
:=
\odot_{j \in \cQ_2} \bSigma_{(j)}
,\quad
\bSigma_{[3]}
:=
\odot_{j \in \cQ_3} \bSigma_{(j)}
.
$$
Then, the Khatri-Rao product $\bSigma_{[1]} \odot \bSigma_{[2]} \odot \bSigma_{[3]}$ coincides with $\bSigma$, up to row permutations.
Observe that each row of $\bSigma_{(j)}$ can be obtained as the summation of rows in $\bSigma$, i.e.
$$
\Sigma_{(j), t, h}
=
\sum_{r \in \sW(\{\wb \in \{0, 1\}^q:~ w_j = 1_{t = 1}\})} \Sigma_{r, h}
,\quad
\forall t \in [2], h \in [d]
.
$$
Accordingly, we define the probability matrices $\tilde\bSigma_{(j)}$ for each $j \in [q]$ entrywise as
$$
\tilde\Sigma_{(j), t, h}
:=
\sum_{r \in \sW(\{\wb \in \{0, 1\}^q:~ w_j = 1_{t = 1}\})} \tilde\Sigma_{r, h}
,\quad
\forall t \in [2], h \in [d]
.
$$
Further, we define the probability matrices
$$
\tilde\bSigma_{[1]}
:=
\odot_{j \in \cQ_1} \tilde\bSigma_{(j)}
,\quad
\tilde\bSigma_{[2]}
:=
\odot_{j \in \cQ_2} \tilde\bSigma_{(j)}
,\quad
\tilde\bSigma_{[3]}
:=
\odot_{j \in \cQ_3} \tilde\bSigma_{(j)}
.
$$
Since $\bSigma \bmu = \tilde\bSigma \tilde\bmu$, we obtain its equivalent representation in terms of three-way tensors as
$$
\llbracket \bSigma_{[1]}, \bSigma_{[2]}, \bSigma_{[3]} \diag(\bmu) \rrbracket
=
\llbracket \tilde\bSigma_{[1]}, \tilde\bSigma_{[2]}, \tilde\bSigma_{[3]} \diag(\tilde\bmu) \rrbracket
.
$$
To establish the uniqueness of $\bSigma, \bmu$ up to column permutations given their product $\bSigma \bmu$, it suffices to demonstrate the uniqueness of the above three-way tensor CP decomposition.

Recall that the vector $\balpha_j = (\alpha_{j, 1}, \ldots, \alpha_{j, d}) \in \RR^d$ represents the $j$th row of $\Ab$.
For each $j \in [q]$, we observe that
$$
\bSigma_{(j)}
=
\left(\begin{matrix}
\balpha_j^\top \\ \one^\top - \balpha_j^\top
\end{matrix}\right)
=
\left(\begin{matrix}
1 & 0 \\
-1 & 1
\end{matrix}\right)
\left(\begin{matrix}
\balpha_j^\top \\ \one^\top
\end{matrix}\right)
.
$$
This leads to the expression
$$
\bSigma_{[1]}
=
\odot_{j \in \cQ_1} \bSigma_{(j)}
=
\underbrace{
\bigotimes_{j \in \cQ_1} \left(\begin{matrix}
1 & 0 \\
-1 & 1
\end{matrix}\right)
}_{:= \Mb_3}
\cdot
\underbrace{
\bigodot_{j \in \cQ_1} \left(\begin{matrix}
\balpha_j^\top \\ \one^\top
\end{matrix}\right)
}_{:= \Mb_4}
.
$$
The matrix $\Mb_3$ is of dimension $2^{\card(\cQ_1)} \times 2^{\card(\cQ_1)}$ and is invertible since it is lower triangular with an all-ones diagonal.
The row vectors of $\Mb_4$ span the set $\bigcirc_{j \in \cQ_1} \{\one, \balpha_j\} \subset \RR^d$, which has full rank $d$ as assumed in Lemma \ref{lemm:w_to_z}.
Therefore, we conclude that $\rank(\bSigma_{[1]}) = \rank(\Mb_4) = d$, which implies Kruskal $\rank_K(\bSigma_{[1]}) = d$.
Using the same argument, we also establish $\rank_K(\bSigma_{[2]}) = d$.

By definition, each column of the probability matrix $\bSigma_{[3]}$ sums to one.
Furthermore, our model formulation \eqref{eq:model_z_x} ensures that every entry of the probability vector $\bmu$ is positive.
Consequently, two arbitrary columns of $\bSigma_{[3]} \diag(\bmu)$ are linearly dependent if and only if the corresponding two columns of $\bSigma_{[3]}$ are linearly dependent, which in turn holds if and only if these two columns of $\bSigma_{[3]}$ are identical.
For $\bSigma_{[3]} = \odot_{j \in \cQ_3} \bSigma_{(j)}$, two columns $h, h' \in [d]$ are identical if and only if the entries of $\bSigma_{(j)}$ satisfy $\Sigma_{(j), 1, h} = \Sigma_{(j), 1, h'}$ for all $j \in \cQ_3$.
By definition \eqref{eq:Sigma_jth}, we have $\Sigma_{(j), 1, h} = \alpha_{j, h}$ and $\Sigma_{(j), 1, h'} = \alpha_{j, h'}$.
Since the submatrix $\Ab_{\cQ_3}$ is assumed to have distinct columns, it follows that any two columns of $\bSigma_{[3]} \diag(\bmu)$ are linearly independent.
This implies that the Kruskal rank $\rank_K(\bSigma_{[3]} \diag(\bmu)) \ge 2$.

The results above together yield
$$
\rank_K(\bSigma_{[1]}) + \rank_K(\bSigma_{[2]}) + \rank_K(\bSigma_{[3]} \diag(\bmu)) 
\ge
2d + 2
.
$$
The remainder of the proof follows the same approach as the proof of Lemma \ref{lemm:y_to_w}, which we include here for completeness.

By applying Theorem \ref{theo:kruskal}, there exist $d \times d$ diagonal matrices $\Sb_1, \Sb_2, \Sb_3$ satisfying $\Sb_1 \Sb_2 \Sb_3 = \Ib$ and a $d \times d$ permutation matrix $\Rb$ such that
$$
\tilde\bSigma_{[1]}
=
\bSigma_{[1]} \Sb_1 \Rb
,\quad
\tilde\bSigma_{[2]}
=
\bSigma_{[2]} \Sb_2 \Rb
,\quad
\tilde\bSigma_{[3]} \diag(\tilde\bmu)
=
\bSigma_{[3]} \diag(\bmu) \Sb_3 \Rb
.
$$
Since each column of $\bSigma_{[i]}$ and $\tilde\bSigma_{[i]}$ $(i \in \{1, 2, 3, 4\})$ sums to one, we must have $\Sb_1 = \Sb_2 = \Sb_3 = \Ib$.
Therefore,
$$
\tilde\bSigma_{[1]}
=
\bSigma_{[1]} \Rb
,\quad
\tilde\bSigma_{[2]}
=
\bSigma_{[2]} \Rb
.
$$
For $\tilde\bmu$, we have
$$
\tilde\bmu
=
\left(
\tilde\bSigma_{[3]} \diag(\tilde\bmu)
\right)^\top \one
=
\left(
\bSigma_{[3]} \diag(\bmu) \Rb
\right)^\top \one
=
\Rb^\top \bmu
.
$$
For $\tilde\bSigma_{[3]}$, we have
\begin{align*}
\tilde\bSigma_{[3]}
&=
\bSigma_{[3]} \diag(\bmu) \Rb \diag(\tilde\bmu)^{-1}
\\&=
\bSigma_{[3]} \diag(\bmu) \Rb (\Rb^\top \diag(\bmu)^{-1} \Rb)
\\&=
\bSigma_{[3]} \Rb
.
\end{align*}
Finally, we conclude
\begin{align*}
\tilde\bSigma
&=
\tilde\bSigma_{[1]} \odot \tilde\bSigma_{[2]} \odot \tilde\bSigma_{[3]}
\\&=
(\bSigma_{[1]} \Rb) \odot (\bSigma_{[2]} \Rb) \odot (\bSigma_{[3]} \Rb)
\\&=
(\bSigma_{[1]} \odot \bSigma_{[2]} \odot \bSigma_{[3]}) \Rb
\\&=
\bSigma \Rb
.
\end{align*}
\end{proof}

\begin{proof}[Proof of Lemma \ref{lemm:law_to_params}]
Let the conditional distributions satisfy
$$
\law(\yb ~|~ \wb, \Gb, \Bb)
=
\law(\yb ~|~ \wb, \tilde\Gb, \tilde\Bb)
,\quad
\forall \wb \in \{0, 1\}^q
.
$$
By the conditional independence of entries $y_i$ given $\wb$, for each $i \in [p]$ and $\wb \in \{0, 1\}^q$, we have
\begin{align*}
\frac{\exp(\beta_{i, 0} + (\gb_i \circ \bbeta_i)^\top \wb)}{1 + \exp(\beta_{i, 0} + (\gb_i \circ \bbeta_i)^\top \wb)}
&=
\PP(y_i = 1 ~|~ \wb, \gb_i, \beta_{i, 0}, \bbeta_i)
\\&=
\sum_{\{\yb \in \{0, 1\}^p:~ y_i = 1\}} \PP(\yb ~|~ \wb, \Gb, \Bb)
\\&=
\sum_{\{\yb \in \{0, 1\}^p:~ y_i = 1\}} \PP(\yb ~|~ \wb, \tilde\Gb, \tilde\Bb)
\\&=
\PP(y_i = 1 ~|~ \wb, \tilde\gb_i, \tilde\beta_{i, 0}, \tilde\bbeta_i)
\\&=
\frac{\exp(\tilde\beta_{i, 0} + (\tilde\gb_i \circ \tilde\bbeta_i)^\top \wb)}{1 + \exp(\tilde\beta_{i, 0} + (\tilde\gb_i \circ \tilde\bbeta_i)^\top \wb)}
.
\end{align*}
Since the logistic function is monotonically increasing, it follows that
$$
\beta_{i, 0} + (\gb_i \circ \bbeta_i)^\top \wb
=
\tilde\beta_{i, 0} + (\tilde\gb_i \circ \tilde\bbeta_i)^\top \wb
,\quad
\forall i \in [p]
,\quad
\forall \wb \in \{0, 1\}^q
.
$$
This yields the equalities
$$
\beta_{i, 0} = \tilde\beta_{i, 0}
,\quad
\gb_i \circ \bbeta_i = \tilde\gb_i \circ \tilde\bbeta_i
,\quad
\forall i \in [p]
,
$$
which can be rewritten in matrix form as $(1 ~ \Gb) \circ \Bb = (1 ~ \tilde\Gb) \circ \Bb$.
Moreover, since each $\beta_{i, j} \ne 0$ if $g_{i, j} = 1$ and each $\tilde\beta_{i, j} \ne 0$ if $\tilde{g}_{i, j} = 1$, it follows that $\Gb = \tilde\Gb$.

Let the conditional distributions satisfy
$$
\law(\wb ~|~ z, \Ab)
=
\law(\wb ~|~ z, \tilde\Ab)
,\quad
\forall z \in [d]
.
$$
By the conditional independence of entries $w_j$ given $z$, for each $j \in [q]$ and $z \in [d]$ we have
\begin{align*}
\alpha_{j, z}
&=
\PP(w_j = 1 ~|~ z, \Ab)
\\&=
\sum_{\{\wb \in \{0, 1\}^q:~ w_j = 1\}} \PP(\wb ~|~ z, \Ab)
\\&=
\sum_{\{\wb \in \{0, 1\}^q:~ w_j = 1\}} \PP(\wb ~|~ z, \tilde\Ab)
\\&=
\PP(w_j = 1 ~|~ z, \tilde\Ab)
\\&=
\tilde\alpha_{j, z}
.
\end{align*}
Therefore, we have $\Ab = \tilde\Ab$.

Let the conditional distributions satisfy
$$
\law(z ~|~ \xb, \bGamma)
=
\law(z ~|~ \xb, \tilde\bGamma)
,\quad
\cP_x\text{-almost everywhere}
.
$$
Recall that the parameters $\gamma_{d, 0}, \bgamma_d$ and $\tilde\gamma_{d, 0}, \tilde\bgamma_d$ associated with the baseline class $d$ are fixed as zero to ensure identifiability.
Therefore, for each $h \in [d]$, we have
$$
\gamma_{h, 0} + \bgamma_h^\top \xb
=
\log\frac{
\PP(z = h ~|~ \xb, \bGamma)
}{
\PP(z = d ~|~ \xb, \bGamma)
}
=
\log\frac{
\PP(z = h ~|~ \xb, \tilde\bGamma)
}{
\PP(z = d ~|~ \xb, \tilde\bGamma)
}
=
\tilde\gamma_{h, 0} + \tilde\bgamma_h^\top \xb
,
$$
holding for $\cP_x$-almost every $\xb$.
We define the continuous function
$$
\psi(\xb)
:=
\sum_{h = 1}^d \left(
(\gamma_{h, 0} - \tilde\gamma_{h, 0})
+
(\bgamma_h - \tilde\bgamma_h)^\top \xb
\right)^2
.
$$
The result above implies that $\psi(\xb) = 0$ for all $\xb \in \supp(\cP_x)$.
Since the data generating distribution $\cP_x$ is assumed to have full rank, for any $\vb \in \RR^{p_x + 1}$ with $\vb \ne \zero$, there exists $\xb \in \supp(\cP_x)$ such that $\vb^\top (1, \xb) \ne 0$.
This ensures that $\gamma_{h, 0} - \tilde\gamma_{h, 0} = 0$ and $\bgamma_h - \tilde\bgamma_h = 0$ for all $h \in [d]$, which implies $\bGamma = \tilde\bGamma$.
\end{proof}

\subsection{Proof of Theorem \ref{theo:generic}}\label{supp_ssec:proof_generic}

Recall that $\sS_2$ is the space of parameters $(\Ab, \Bb, \bGamma, \Gb)$ satisfying condition ($*$) of Theorem \ref{theo:generic}, which states that $\Gb$ has distinct columns and contains two distinct submatrices with all-one diagonals, while the remaining submatrix does not have all-zeros columns.
We adopt the same notation of $\bLambda, \bnu, \bSigma, \bmu$ as defined in \eqref{eq:Lambda_st}, \eqref{eq:nu_t}, \eqref{eq:Sigma_th}, \eqref{eq:mu_h} in Section \ref{supp_ssec:proof_strict}.

For any subset $\sS_* \subset \sS_2$, we say that $\sS_*$ is \textit{permutation invariant} if
\begin{equation}\label{eq:perm_inv}
\perm_{\sS_2}(\Ab, \Bb, \bGamma, \Gb)
\subset
\sS_*
,\quad
\forall (\Ab, \Bb, \bGamma, \Gb) \in \sS_*
,
\end{equation}
meaning that all permutations of parameters $(\Ab, \Bb, \bGamma, \Gb)$ over the $q$ dimensions of $\wb$ and the $d$ classes of $z$ are either entirely included in $\sS_*$ or entirely excluded.

Compared to Theorem \ref{theo:strict}, Theorem \ref{theo:generic} assumes only the weaker condition ($*$).
However, even under condition ($*$), we can still establish a weaker version of Lemma \ref{lemm:y_to_w} by excluding a measure-zero subset from the parameter space $\sS_2$, as stated in the following lemma.
Its proof is deferred to Section \ref{supp_ssec:proof_lemmas_generic}.
Recall that the base measure on $\sS_2$ is the product $\uplambda \times \upmu$ of the Lebesgue measure for the continuous parameters $\Ab, \Bb, \bGamma$ and the counting measure $\upmu$ for the discrete parameter $\Gb$.

\begin{lemma}\label{lemm:y_to_w_meas0}
There exists a permutation invariant subset $\sS_{2, 1} \subset \sS_2$ with measure $(\uplambda \times \upmu)(\sS_{2, 1}) = 0$ such that the following holds:
For any parameter $(\Ab, \Bb, \bGamma, \Gb) \in \cS_2 \cap \cS_{2, 1}^c$, let $\bLambda$ be the probability matrix defined in \eqref{eq:Lambda_st} and let $\bnu$ be the probability vector defined in \eqref{eq:nu_t}.
Then, for any other $2^p \times 2^q$ matrix $\tilde\bLambda$ and $2^q$-dimensional vector $\tilde\bnu$ satisfying
$$
\bLambda \bnu
=
\tilde\bLambda \tilde\bnu
,\quad
\tilde\bLambda^\top \one
=
\one
,\quad
\one^\top \tilde\bnu
=
1
,
$$
there exists a $2^q \times 2^q$ permutation matrix $\Qb$ such that
$$
\tilde\bLambda
=
\bLambda \Qb
,\quad
\tilde\bnu
=
\Qb^\top \bnu
.
$$
\end{lemma}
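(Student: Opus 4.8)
The plan is to mirror the Kruskal-based argument used in the proof of Lemma~\ref{lemm:y_to_w}, but to account for the fact that condition~($*$) only furnishes two blocks $\Gb_{\cI_1},\Gb_{\cI_2}$ with all-one diagonals (rather than exact identity blocks) and a remainder block $\Gb_{\cI_3}$ with no all-zero columns. Under these weaker hypotheses the Kruskal rank conditions need no longer hold for \emph{every} parameter, so instead I would show that they hold outside a permutation-invariant set $\sS_{2,1}$ of $(\uplambda\times\upmu)$-measure zero, after which the conclusion follows verbatim from Theorem~\ref{theo:kruskal}.

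First I would set up the three blocks as in Lemma~\ref{lemm:y_to_w}: using the partition $[p]=\cI_1\cup\cI_2\cup\cI_3$, define $\bLambda_{[1]}:=\odot_{i\in\cI_1}\bLambda_{(i)}$, $\bLambda_{[2]}:=\odot_{i\in\cI_2}\bLambda_{(i)}$, and $\bLambda_{[3]}:=\odot_{i\in\cI_3}\bLambda_{(i)}$, so that $\bLambda\bnu$ equals the three-way tensor $\llbracket \bLambda_{[1]},\bLambda_{[2]},\bLambda_{[3]}\diag(\bnu)\rrbracket$ up to a row permutation. Since $\bLambda_{[1]},\bLambda_{[2]}$ are $2^q\times 2^q$ and $\bnu$ has strictly positive entries (every $\alpha_{j,h}\in(0,1)$ and every class probability is positive), the condition $\rank_K(\bLambda_{[1]})+\rank_K(\bLambda_{[2]})+\rank_K(\bLambda_{[3]}\diag(\bnu))\ge 2\cdot 2^q+2$ reduces to showing that $\bLambda_{[1]}$ and $\bLambda_{[2]}$ are invertible and that no two columns of $\bLambda_{[3]}$ are proportional.

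I would then realize the exceptional set as the zero set of finitely many non-vanishing analytic functions. For invertibility of $\bLambda_{[1]}$, its entries are logistic transforms of the parameters in $\Bb_{\cI_1}$, so $\det\bLambda_{[1]}$ extends to a holomorphic function of those parameters; to see it is not identically zero I would exhibit one witness, namely setting every off-diagonal effect $\beta_{i,j}$ (with $i\in\cI_1$ and $j$ not the diagonal index of row $i$) to zero, which reduces each row of the block to a dependence on a single distinct attribute and recovers exactly the invertible triangular structure established in Lemma~\ref{lemm:y_to_w}. By Theorem~\ref{theo:holo_zero} the set of $\Bb_{\cI_1}$ with $\det\bLambda_{[1]}=0$ thus has Lebesgue measure zero, and likewise for $\bLambda_{[2]}$. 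For the columns of $\bLambda_{[3]}$, two columns indexed by distinct $\wb\ne\wb'$ are proportional iff $(\gb_i\circ\bbeta_i)^\top(\wb-\wb')=0$ for all $i\in\cI_3$; choosing an attribute $j$ with $w_j\ne w_j'$, the no-all-zero-column condition $\sum_{i\in\cI_3}g_{i,j}>0$ supplies some $i\in\cI_3$ with $g_{i,j}=1$, making this a nontrivial linear, hence measure-zero, constraint on $\Bb_{\cI_3}$. Taking the union of these exceptional sets over the finitely many admissible $\Gb$ and the finitely many pairs $(\wb,\wb')$, and symmetrizing over permutations of the $q$ attributes and $d$ classes to enforce \eqref{eq:perm_inv}, yields the desired permutation-invariant $\sS_{2,1}$ with $(\uplambda\times\upmu)(\sS_{2,1})=0$.

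Finally, for any $(\Ab,\Bb,\bGamma,\Gb)\in\sS_2\cap\sS_{2,1}^c$ the Kruskal condition holds, and the remainder of the argument---extracting the diagonal rescalings $\Sb_1,\Sb_2,\Sb_3$, forcing them to be the identity via the column-sum-one normalization, and recovering $\tilde\bLambda=\bLambda\Qb$ and $\tilde\bnu=\Qb^\top\bnu$---is identical to the corresponding steps in the proof of Lemma~\ref{lemm:y_to_w}. I expect the main obstacle to be the invertibility of $\bLambda_{[1]},\bLambda_{[2]}$ when $\Gb_{\cI_1},\Gb_{\cI_2}$ carry arbitrary off-diagonal ones: unlike the clean triangular computation for exact identity blocks, here non-degeneracy can be argued only generically, and the crux is producing the zeroed-effect witness that certifies $\det\bLambda_{[1]}$ is a genuinely non-constant holomorphic function so that Theorem~\ref{theo:holo_zero} applies.
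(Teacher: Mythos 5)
Your proposal is correct and follows essentially the same route as the paper's proof: the same block decomposition and reduction to the Kruskal rank condition, the same holomorphic-determinant argument for $\bLambda_{[1]},\bLambda_{[2]}$ with the zeroed-off-diagonal witness (the paper formalizes this witness via auxiliary parameters $\Gb', \Bb'$ in which the all-one-diagonal blocks are replaced by exact identity blocks, reducing to the triangular computation of Lemma \ref{lemm:y_to_w}), the same linear-subspace measure-zero argument for distinctness of the columns of $\bLambda_{[3]}$ using the no-all-zero-column condition, and the same union bound over the finitely many admissible $\Gb$. The only cosmetic difference is that the paper obtains permutation invariance of $\sS_{2,1}$ directly from the invariance of Kruskal ranks under column permutations, rather than by explicit symmetrization over the permutation orbits.
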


By excluding a zero measure subset $\sS_{2, 1}$ from the parameter space $\sS_2$, Lemma \ref{lemm:y_to_w} allows recovery of $\bLambda$ and $\bnu$ from their product $\bLambda \bnu$, up to column permutations, i.e. permutations on the $2^q$ values of $\wb$ within $\{0, 1\}^q$.
Similarly, by excluding another zero-measure subset $\sS_{2, 2}$ from the parameter space $\sS_2$, we can recover $\bSigma$ and $\bmu$ from their product $\bSigma \bmu$, up to column permutations, i.e. permutations on the $d$ classes of $z$.
This result is formally stated in the following lemma, with the proof provided in Section \ref{supp_ssec:proof_lemmas_generic}.

\begin{lemma}\label{lemm:w_to_z_meas0}
There exists a permutation invariant subset $\sS_{2, 2} \subset \sS_2$ with measure $(\uplambda \times \upmu)(\sS_{2, 2}) = 0$ such that the following holds:
For any parameter $(\Ab, \Bb, \bGamma, \Gb) \in \sS_2 \cap \sS_{2, 2}^c$, let $\bSigma$ be the probability matrix defined in \eqref{eq:Sigma_th} and $\bmu$ be the probability vector defined in \eqref{eq:mu_h}.
Then, for any other $2^q \times d$ matrix $\tilde\bSigma$ and $d$-dimensional vector $\tilde\bmu$ satisfying
$$
\bSigma \bmu
=
\tilde\bSigma \tilde\bmu
,\quad
\tilde\bSigma^\top \one
=
\one
,\quad
\one^\top \tilde\bmu
=
1
,
$$
there exists a $d \times d$ permutation matrix $\Rb$ such that
$$
\tilde\bSigma
=
\bSigma \Rb
,\quad
\tilde\bmu
=
\Rb^\top \bmu
.
$$
\end{lemma}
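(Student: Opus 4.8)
The plan is to exploit the fact that the probability matrix $\bSigma$ in \eqref{eq:Sigma_th} is a function of the continuous parameter $\Ab$ alone, while the probability vector $\bmu$ always has strictly positive entries under the model. Consequently the recovery conclusion is exactly the content of the already-proved strict Lemma \ref{lemm:w_to_z}, \emph{provided} $\Ab$ satisfies that lemma's hypotheses. The task therefore reduces to showing that the set of $\Ab$ for which those hypotheses fail is Lebesgue-null, and then taking $\sS_{2,2}$ to be the corresponding excluded set (defined by a condition on $\Ab$ only, ignoring $\Bb, \bGamma, \Gb$, then intersected with $\sS_2$). First I would fix, once and for all, a partition $[q] = \cQ_1 \cup \cQ_2 \cup \cQ_3$ with $\card(\cQ_1) = \card(\cQ_2) = \lceil \log_2 d \rceil$ and $\cQ_3$ nonempty, which is possible whenever $q$ is large enough relative to $d$.

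Next I would translate each hypothesis of Lemma \ref{lemm:w_to_z} into the non-vanishing of a polynomial in the entries $\{\alpha_{j,h}\}$. The set $\bigcirc_{j \in \cQ_1}\{\one, \balpha_j\}$ fails to have full rank $d$ only if every $d \times d$ minor of the $2^{\card(\cQ_1)} \times d$ matrix whose row indexed by $S \subseteq \cQ_1$ has $h$th entry $\prod_{j \in S} \alpha_{j,h}$ vanishes; fixing one such minor $P_1$, the failure set is contained in $\{P_1 = 0\}$, and likewise for $\cQ_2$ with a minor $P_2$. The columns of $\Ab_{\cQ_3}$ fail to be distinct only if $\alpha_{j,h} = \alpha_{j,h'}$ for all $j \in \cQ_3$ for some pair $h \ne h'$, so this failure set is contained in the finite union over pairs of the zero sets of $\alpha_{j_0, h} - \alpha_{j_0, h'}$ for a fixed $j_0 \in \cQ_3$. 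In total, the bad $\Ab$-set lies inside a finite union of real zero sets of polynomials.

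It then remains to check that each of these polynomials is not identically zero on the open cube $(0,1)^{q \times d}$. For the distinctness polynomials this is immediate. For the minors $P_1, P_2$ I would exhibit a single witness $\Ab \in (0,1)^{q \times d}$ at which $\bigcirc_{j \in \cQ_1}\{\one, \balpha_j\}$ spans $\RR^d$: since $2^{\card(\cQ_1)} \ge d$, a Vandermonde-type assignment of the rows $\balpha_j$ makes $d$ of the entrywise-product vectors linearly independent, so $P_1$ (and by symmetry $P_2$) is a nonzero entire function. Theorem \ref{theo:holo_zero}, applied on the connected region $(0,1)^{q \times d} \subset \CC^{qd}$, then forces each of these polynomial zero sets to have Lebesgue measure zero; their finite union, which I take to define $\sS_{2,2} \cap \sS_2$, is thus $(\uplambda \times \upmu)$-null.

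Finally, because full rank of $\bigcirc_{j \in \cQ_k}\{\one, \balpha_j\}$ and distinctness of the columns of $\Ab_{\cQ_3}$ are preserved under relabeling the $q$ coordinates of $\wb$ and the $d$ classes of $z$, the good set $\sS_2 \cap \sS_{2,2}^c$ is permutation invariant in the sense of \eqref{eq:perm_inv}, hence so is $\sS_{2,2}$; invoking Lemma \ref{lemm:w_to_z} on $\sS_2 \cap \sS_{2,2}^c$ produces the required permutation matrix $\Rb$. The hard part will be the witness construction certifying that $P_1$ is not identically zero, i.e. verifying that $d$ of the $2^{\card(\cQ_1)}$ entrywise products of the vectors $\{\one, \balpha_j\}_{j \in \cQ_1}$ can be made linearly independent strictly inside the cube $(0,1)^{q \times d}$; once a single such $\Ab$ is produced, the remaining measure-theoretic and permutation bookkeeping is routine.
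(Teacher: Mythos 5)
Your route is essentially the paper's: reduce to Lemma \ref{lemm:w_to_z} by excising a null set of bad $\Ab$'s, show the rank-failure set sits inside the zero set of a non-trivial holomorphic (polynomial) function via a Vandermonde-type witness plus Theorem \ref{theo:holo_zero}, and handle failure of column-distinctness of $\Ab_{\cQ_3}$ by finitely many hyperplanes. However, there is a genuine gap in your permutation invariance step. Because you fix a single partition $\cQ_1 \cup \cQ_2 \cup \cQ_3$ once and for all, your bad set is \emph{not} invariant under permutations of the $q$ attributes. Permuting the $d$ classes of $z$ only permutes the columns of $\Ab$ and indeed preserves your conditions, but permuting the attributes permutes the \emph{rows} of $\Ab$: it replaces $\bigcirc_{j \in \cQ_1}\{\one, \balpha_j\}$ by $\bigcirc_{j \in \sigma(\cQ_1)}\{\one, \balpha_j\}$ for a permuted index set $\sigma(\cQ_1) \ne \cQ_1$, and full rank for your fixed blocks says nothing about the relabeled blocks. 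So the invariance claim in your final paragraph is false as stated. The paper sidesteps this by defining $\sS_{2,2}$ in \eqref{eq:sS22} with a universal quantifier over \emph{all} admissible partitions; that set is symmetric under relabeling by construction, yet still contained in the failure set of any one partition, hence null. Your construction is repaired the same way: declare a parameter bad only if \emph{every} partition of the chosen block sizes fails, then run your single-partition measure bound to show this smaller set is null.

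The second issue is the one you flag yourself: ``a Vandermonde-type assignment'' is not yet a proof. The row of the product matrix indexed by $S \subseteq \cQ_1$ has $h$th entry $\prod_{j \in S}\alpha_{j,h}$, and linear independence of $d$ such rows requires the subset products to be pairwise distinct across $S$, which a casual choice does not guarantee. The paper's witness is $\alpha_{j,h} = p_j^{-h}$ with distinct primes $p_j$: then $\prod_{j\in S}\alpha_{j,h} = a_S^{h}$ where the values $a_S = \prod_{j \in S} p_j^{-1}$ are pairwise distinct by unique factorization, so every $d \times d$ submatrix is a (scaled) Vandermonde matrix with distinct nodes and \emph{all} minors are simultaneously nonzero. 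That last point also resolves a small circularity in your write-up, where you fix the minor $P_1$ before producing the witness; either choose the minor after the witness, or do as the paper does and replace the single minor by the sum of squared minors, whose vanishing is exactly rank deficiency.
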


Intuitively, when restricting parameters within the space $\sS_2 \cap (\sS_{2, 1} \cup \sS_{2, 2})^c$, Lemmas \ref{lemm:y_to_w_meas0} and \ref{lemm:w_to_z_meas0} allow us to recover the conditional distributions $\law(\yb ~|~ \wb, \Gb, \Bb)$, $\law(\wb ~|~ z, \Ab)$, and $\law(z ~|~ \xb, \bGamma)$ from the distribution $\law(\yb ~|~ \xb, \Ab, \Bb, \bGamma, \Gb)$.
To further identify the parameters $\Ab, \Bb, \bGamma, \Gb$ from these recovered conditional distributions, the following lemma rephrases Lemma \ref{lemm:law_to_params} and ensures this by additionally excluding a zero measure set from $\sS_{2, 3}$.
We include the proof of the lemma in Section \ref{supp_ssec:proof_lemmas_generic}.

\begin{lemma}\label{lemm:law_to_params_meas0}
Suppose the data generating distribution $\cP_x$ of $\xb$ has full rank.
There exists a permutation invariant subset $\sS_{2, 3} \subset \sS_2$ with measure $(\uplambda \times \upmu)(\sS_{2, 3}) = 0$ such that, for any two sets of parameters $(\Ab, \Bb, \bGamma, \Gb), (\tilde\Ab, \tilde\Bb, \tilde\bGamma, \tilde\Gb) \in \sS_2 \cap \sS_{2, 3}^c$, the following statements hold:
\begin{enumerate}[(i)]
\item
If the conditional distributions satisfy
$$
\law(\yb ~|~ \wb, \Gb, \Bb)
=
\law(\yb ~|~ \wb, \tilde\Gb, \tilde\Bb)
,\quad
\forall \wb \in \{0, 1\}^q
,
$$
then $\Gb = \tilde\Gb$ and $(1 ~ \Gb) \circ \Bb = (1 ~ \tilde\Gb) \circ \tilde\Bb$.
\item
If the conditional distributions satisfy
$$
\law(\wb ~|~ z, \Ab)
=
\law(\wb ~|~ z, \tilde\Ab)
,\quad
\forall z \in [d]
,
$$
then $\Ab = \tilde\Ab$.
\item
If the conditional distributions satisfy
$$
\law(z ~|~ \xb, \bGamma)
=
\law(z ~|~ \xb, \tilde\bGamma)
,\quad
\cP_x-\text{a.e.}
$$
then $\bGamma = \tilde\bGamma$.
\end{enumerate}
\end{lemma}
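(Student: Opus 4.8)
The plan is to reduce the lemma to the already-proved Lemma~\ref{lemm:law_to_params} by isolating the single place where condition~(iii) of Theorem~\ref{theo:strict} is used and excluding only the measure-zero set on which that condition can fail. Inspecting the proof of Lemma~\ref{lemm:law_to_params}, I note that parts~(ii) and~(iii) never invoke condition~(iii): part~(ii) recovers each $\alpha_{j,z}$ directly by marginalizing $\law(\wb ~|~ z,\Ab)$, and part~(iii) uses only the full-rank hypothesis on $\cP_x$ together with the baseline fixing $\gamma_{d,0}=0,\bgamma_d=\zero$. Hence these two statements hold verbatim for every parameter in $\sS_2$, with no exclusion needed. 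Only part~(i) requires condition~(iii), and only at its final step, where one passes from the entrywise identity $(\one~\Gb)\circ\Bb=(\one~\tilde\Gb)\circ\tilde\Bb$ to the sharper conclusion $\Gb=\tilde\Gb$.

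Accordingly, I would take the exclusion set to be exactly the locus where condition~(iii) fails,
\begin{equation*}
\sS_{2,3}
:=
\big\{
(\Ab,\Bb,\bGamma,\Gb)\in\sS_2:~
\exists\, i\in[p],\, j\in[q]
\text{ with } g_{i,j}=1 \text{ and } \beta_{i,j}=0
\big\}.
\end{equation*}
For the measure-zero claim, note that $\Gb$ ranges over the finite set $\{0,1\}^{p\times q}$, and for each fixed $\Gb$ the corresponding slice of $\sS_{2,3}$ lies in the finite union $\bigcup_{(i,j):\,g_{i,j}=1}\{\beta_{i,j}=0\}$ of coordinate hyperplanes in $\Bb$-space, each of $\uplambda$-measure zero, while $\Ab,\bGamma$ remain unconstrained; summing over the finitely many $\Gb$ gives $(\uplambda\times\upmu)(\sS_{2,3})=0$. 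For permutation invariance \eqref{eq:perm_inv}, observe that membership in $\sS_{2,3}$ depends on a parameter only through $\Gb$ and the product $(\one~\Gb)\circ\Bb$, since $g_{i,j}=1$ with $\beta_{i,j}=0$ holds exactly when $\Gb$ has a $1$ at a position where $(\one~\Gb)\circ\Bb$ vanishes. Because \eqref{eq:perm} acts on $\Gb$ by $\Gb\mapsto\Gb\Pb_w^\top$ and on $(\one~\Gb)\circ\Bb$ by the matching column permutation $\diag(1,\Pb_w)^\top$, while $\Pb_z$ leaves both quantities untouched, this zero-pattern property is preserved across the whole equivalence class, so $\sS_{2,3}$ is permutation invariant.

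With $\sS_{2,3}$ in hand, the conclusion is immediate: for any two parameters $(\Ab,\Bb,\bGamma,\Gb),(\tilde\Ab,\tilde\Bb,\tilde\bGamma,\tilde\Gb)\in\sS_2\cap\sS_{2,3}^c$, both satisfy condition~(iii) of Theorem~\ref{theo:strict} by construction, so Lemma~\ref{lemm:law_to_params} applies directly and delivers all three statements. I expect the only delicate point to be the permutation-invariance bookkeeping: one must verify that the coupled action of \eqref{eq:perm} on the pair $\big(\Gb,(\one~\Gb)\circ\Bb\big)$ preserves the zero-pattern constraint rather than merely the product, and confirm that removing $\sS_{2,3}$ from \emph{both} arguments is legitimate, which it is since the hypothesis already places both parameters in $\sS_{2,3}^c$. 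Everything else is a direct appeal to Lemma~\ref{lemm:law_to_params}.
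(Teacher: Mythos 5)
Your proposal is correct and matches the paper's own proof essentially verbatim: the paper defines $\sS_{2,3}$ exactly as you do (the locus where some $g_{i,j}=1$ with $\beta_{i,j}=0$), proves it has measure zero by the same union-of-linear-subspaces argument over the finitely many $\Gb$, and then applies Lemma~\ref{lemm:law_to_params} directly. Your extra care on permutation invariance (reducing membership to a zero-pattern property of $\Gb$ and $(\one~\Gb)\circ\Bb$ that the coupled action of $\Pb_w$ preserves) is a valid elaboration of what the paper dismisses as holding "by definition," and your observation that only part~(i) actually needs the exclusion is accurate but not load-bearing.
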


Combining Lemmas \ref{lemm:y_to_w_meas0}, \ref{lemm:w_to_z_meas0}, and \ref{lemm:law_to_params_meas0}, we can now establish Theorem \ref{theo:generic} in a manner similar to the proof of Theorem \ref{theo:strict}.
However, due to the weaker condition on $\Gb$, bridging the gap between permutations of $\wb$ over $\{0, 1\}^q$ in Lemma \ref{lemm:y_to_w_meas0} and the permutations of the $q$ dimensions of $\wb$ in Theorem \ref{theo:generic} is more challenging compared to the proof of Theorem \ref{theo:generic}.
Moreover, in these lemmas, we have excluded the set $\sS_{2, 1} \cup \sS_{2, 2} \cup \sS_{2, 3}$ from the parameter space $\sS_2$.
Despite its zero measure, we need to demonstrate that incorporating $\sS_{2, 1} \cup \sS_{2, 2} \cup \sS_{2, 3}$ back into the parameter space does not lead to a positive measure set of parameters becoming non-identifiable. This is not immediately obvious, as a measurable mapping can potentially transform a zero-measure set into an image of positive measure.

\begin{proof}[Proof of Theorem \ref{theo:generic}]
For the subsets $\sS_{2, 1}, \sS_{2, 2}, \sS_{2, 3} \subset \sS_2$, while they are not explicitly defined in the previous lemmas, the definitions are given in \eqref{eq:sS21}, \eqref{eq:sS22}, and \eqref{eq:sS23} in the proofs of these lemmas.
In addition to these subsets, we further exclude a permutation invariant subset $\sS_{2, 4}$ from $\sS_2$, defined as
\begin{equation}\label{eq:sS24}
\sS_{2, 4}
:=
\left\{
(\Ab, \Bb, \bGamma, \Gb) \in \sS_2:~
\exists i \in [p],~
\exists \cbb \in \ZZ^q \cap \{\zero\}^c
\text{ s.t. }
\cbb^\top \bbeta_i = 0
\right\}
,
\end{equation}
where $\cbb \in \ZZ^q \cap \{\zero\}^c$ denotes an arbitrary non-zero $q$-dimensional integer vector.
Intuitively, by excluding $\sS_{2, 4}$ from $\sS_2$, we are restricting attention to parameters $\Bb$ where, for each $i \in [p]$, the collection of its entries in the $i$th row, i.e. $\{\beta_{i, j}:~ j \in [q]\}$, are linearly independent under integer coefficients.
Furthermore, we observe that for any given $\cbb \in \ZZ^q \cap \{\zero\}^c$, the set $\{\cbb^\top \bbeta_i = 0\}$ forms a linear subspace in $\RR^q$, which has measure zero. Taking a union bound over the finitely many possible values of $i$ and the countably many possible values of $\cbb$, we obtain the measure $(\uplambda \times \upmu)(\sS_{2, 4}) = 0$.

We begin by studying identifiability within the smaller parameter space
\begin{equation}\label{eq:sS3}
\sS_3
:=
\sS_2 \cap (\sS_{2, 1} \cup \sS_{2, 2} \cup \sS_{2, 3} \cup \sS_{2, 4})^c
.
\end{equation}
For any parameter $(\Ab, \Bb, \bGamma, \Gb) \subset \sS_3$, let $(\tilde\Ab, \tilde\Bb, \tilde\bGamma, \tilde\Gb) \subset \marg_{\sS_3}(\Ab, \Bb, \bGamma, \Gb)$ be an arbitrary parameter that shares the same conditional distribution of $\yb ~|~ \xb$.
By applying Lemma \ref{lemm:y_to_w_meas0}, we know that there exists a permutation map $\sQ$ over the space $\{0, 1\}^q$ of $\wb$ such that
$$
\law(\yb ~|~ \wb, \tilde\Gb, \tilde\Bb)
=
\law(\yb ~|~ \sQ(\wb), \Gb, \Bb)
,\quad
\law(\wb ~|~ \xb, \tilde\Ab, \tilde\bGamma)
=
\law(\sQ(\wb) ~|~ \xb, \Ab, \bGamma)
.
$$
This further suggests that, for each $i \in [p]$,
$$
\law(y_i ~|~ \wb, \tilde\Gb, \tilde\Bb)
=
\law(y_i ~|~ \sQ(\wb), \Gb, \Bb)
.
$$

For each $\wb \in \{0, 1\}^q$ and $i \in [p]$, we define
$$
\xi_{\wb, i}
:=
\log\frac{
\PP(y_i = 1 ~|~ \wb, \Gb, \Bb)
}{
\PP(y_i = 0 ~|~ \wb, \Gb, \Bb)
}
=
\beta_{i, 0} + (\gb_i \circ \wb)^\top \bbeta_i
=
\beta_{i, 0} + \sum_{j = 1}^q g_{i, j} w_j \tilde\beta_{i, j}
,
$$
and similarly,
$$
\tilde\xi_{\wb, i}
:=
\log\frac{
\PP(y_i = 1 ~|~ \wb, \tilde\Gb, \tilde\Bb)
}{
\PP(y_i = 0 ~|~ \wb, \tilde\Gb, \tilde\Bb)
}
=
\tilde\beta_{i, 0} + (\tilde\gb_i \circ \wb)^\top \tilde\bbeta_i
=
\tilde\beta_{i, 0} + \sum_{j = 1}^q \tilde{g}_{i, j} w_j \tilde\beta_{i, j}
.
$$
By their definitions, we have
\begin{equation}\label{eq:xi_sQ}
\tilde\xi_{\wb, i}
=
\xi_{\sQ(\wb), i}
.
\end{equation}
The definition of $\sS_{2, 3}$ in \eqref{eq:sS23}, specified in the proof of Lemma \ref{lemm:law_to_params_meas0}, suggests that for all $i \in [p], j \in [q]$, if $g_{i, j} = 1$ then $\beta_{i, j} \ne 0$.
Recall that the parameter space $\sS_2$ ensures $\Gb$ to have distinct columns and contain two distinct submatrices with all-one diagonals.
Consequently, for $\wb \in \{0, 1\}^q$, we have
\begin{equation}\label{eq:card_xi_w}\begin{aligned}
&
\card(\{\xi_{\wb, i}:~ i \in [p]\}) = 1
\quad\iff\quad
\wb = \zero
,\\&
\card(\{\xi_{\wb, i}:~ i \in [p]\}) = 2
\quad\iff\quad
\one^\top \wb = 1
,
\end{aligned}\end{equation}
with reasoning similar to what we used in the proof of Theorem \ref{theo:strict}.
Similar results hold for $\tilde\Gb$ and $\tilde\xi$, i.e.
\begin{equation}\label{eq:card_txi_tw}\begin{aligned}
&
\card(\{\tilde\xi_{\wb, i}:~ i \in [p]\}) = 1
\quad\iff\quad
\wb = \zero
,\\&
\card(\{\tilde\xi_{\wb, i}:~ i \in [p]\}) = 2
\quad\iff\quad
\one^\top \wb = 1
.
\end{aligned}\end{equation}
Note that $\one^\top \wb = 1$ is equivalent to $\wb = \eb_j$ for some $j \in [q]$.
Combining with \eqref{eq:xi_sQ}, it follows that
$$
\sQ(\zero)
=
\zero
,\quad
\sQ\big(
\{\eb_j:~ j \in [q]\}
\big)
=
\{\eb_j:~ j \in [q]\}
.
$$
This suggests the existence of some permutation map $\sN$ on $[q]$ such that
\begin{equation}\label{eq:Q_0_1}
\sQ(\eb_j)
=
\eb_{\sN(j)}
,\quad
\forall j \in [q]
.
\end{equation}
Moreover, for each $i \in [p]$, we have
\begin{equation}\label{eq:beta_tbeta_0}
\beta_{i, 0}
=
\xi_{\zero, i}
=
\tilde\xi_{\zero, i}
=
\tilde\beta_{i, 0}
,
\end{equation}
and for each $j \in [q]$,
\begin{equation}\label{eq:beta_tbeta_Nj}
\beta_{i, \sN(j)}
=
\xi_{\eb_{\sN(j)}, i} - \xi_{\zero, i}
=
\xi_{\sQ(\eb_j), i} - \xi_{\zero, i}
=
\tilde\xi_{\eb_j, i} - \tilde\xi_{\zero, i}
=
\tilde\beta_{i, j}
.
\end{equation}
Importantly, by \eqref{eq:card_xi_w} and \eqref{eq:card_txi_tw}, we can examine the cardinality of $\{\xi_{\wb, i}:~ i \in [p]\}$ to identify the columns in $\bLambda, \tilde\bLambda$ corresponding to $\wb = \zero$ and the collection of columns corresponding to $\wb$ satisfying $\one^\top \wb = 1$.
This allows us to recover $\beta_{i, 0}$ from $\bLambda$, as well as the set of parameters $\{\beta_{i, j}:~ j \in [q]\}$, although we can not match each value in this set to the corresponding index $j$ yet.

By the definition of $\sS_{2, 4}$ in \eqref{eq:sS24}, the parameters $\beta_{i, 1}, \ldots, \beta_{i, q}$ are linearly independent with integer coefficients. We know that for each $i \in [p]$ and $\wb \in \{0, 1\}^q$, the value $\xi_{\wb, i} - \xi_{\zero, i}$ can be uniquely decomposed into the sum of a subset of $\{\beta_{i, j}:~ j \in [q]\}$ through
\begin{equation}\label{eq:xi_int_sum}
\xi_{\wb, i} - \xi_{\zero, i}
=
\sum_{j = 1}^q w_j g_{i, j} \beta_{i, j}
=
\sum_{j \in \{j \in [q]:~ g_{i, j} = 1, w_j = 1\}} \beta_{i, j}
.
\end{equation}
Similar results hold for $\tilde\Gb, \tilde\Bb$ and $\tilde\xi$, which along with $\tilde\xi_{\wb, i} = \xi_{\sQ(\wb), i}$ gives
\begin{equation}\label{eq:txi_int_sum}
\xi_{\sQ(\wb), i} - \xi_{\zero, i}
=
\tilde\xi_{\wb, i} - \tilde\xi_{\zero, i}
=
\sum_{j \in \{j \in [q];~ \tilde{g}_{i, j} = 1, w_j = 1\}} \tilde\beta_{i, j}
=
\sum_{j \in \{j \in [q];~ \tilde{g}_{i, j} = 1, w_j = 1\}} \tilde\beta_{i, \sN(j)}
.
\end{equation}
By combining \eqref{eq:beta_tbeta_0}, \eqref{eq:beta_tbeta_Nj}, \eqref{eq:xi_int_sum}, and \eqref{eq:txi_int_sum}, it follows that for each $i \in [p]$ and $\wb \in \{0, 1\}^q$, the sets
$$
\Big\{
\sN(j):~
j \in [q]
,~
\tilde{g}_{i, j} = 1
,~
w_j = 1
\Big\}
=
\Big\{
j \in [q]:~
g_{i, j} = 1
,~
\sQ(\wb)_j = 1
\Big\}
.
$$
Since $\Gb$ has no all-zeros column, taking union over $i \in [p]$, we obtain
\begin{align*}
\Big\{
\sN(j):~
j \in [q]
,~
w_j = 1
\Big\}
&=
\bigcup_{i \in [p]} \Big\{
\sN(j):~
j \in [q]
,~
\tilde{g}_{i, j} = 1
,~
w_j = 1
\Big\}
\\&=
\bigcup_{i \in [p]} \Big\{
j \in [q]:~
g_{i, j} = 1
,~
\sQ(\wb)_j = 1
\Big\}
\\&=
\Big\{
j \in [q]:~
\sQ(\wb)_j = 1
\Big\}
.
\end{align*}
That is, for any $\wb \in \{0, 1\}^q$, the entries
\begin{equation}\label{eq:sQ_to_sN}
\sQ(\wb)_j = 1
\quad\iff\quad
\wb_{\sN(j)} = 1
.
\end{equation}

Notably, \eqref{eq:sQ_to_sN} establishes that the permutation map $\sQ$ over the space $\{0, 1\}^q$ of all possible values of $\wb$ is structured by the permutation map $\sN$ over $[q]$ on the $q$ dimensions of $\wb$.
This successfully bridges the gap between Lemma \ref{lemm:y_to_w_meas0} and Theorem \ref{theo:generic}.
Now with the exact same arguments as in the proof of Theorem \ref{theo:strict}, we combine Lemmas \ref{lemm:y_to_w_meas0}, \ref{lemm:w_to_z_meas0}, and \ref{lemm:law_to_params_meas0} and conclude for parameters within $\sS_3$ that
$$
\perm_{\sS_3}(\Ab, \Bb, \bGamma, \Gb)
\supseteq
\marg_{\sS_3}(\Ab, \Bb, \bGamma, \Gb)
,\quad
\forall (\Ab, \Bb, \bGamma, \Gb) \in \cS_3
.
$$
Since each subset $\sS_{2, 1}, \sS_{2, 2}, \sS_{2, 3}, \sS_{2, 4}$ is defined to be permutation invariant, so is $\sS_3$ by its definition \eqref{eq:sS3}. Thus, permutations on the $q$ dimensions of $\wb$ and on the $D$ classes of $z$ do not affect the distribution of $\yb ~|~ \xb$, giving $\perm_{\sS_3}(\balpha, \bbeta, \bgamma, \Gb) \subseteq \marg_{\sS_3}(\balpha, \bbeta, \bgamma, \Gb)$.
The above results collectively yield
$$
\perm_{\sS_3}(\Ab, \Bb, \bGamma, \Gb)
=
\marg_{\sS_3}(\Ab, \Bb, \bGamma, \Gb)
,\quad
\forall (\Ab, \Bb, \bGamma, \Gb) \in \cS_3
.
$$

We now extend this result to the larger parameter space $\sS_2$.
By permutation invariance of $\sS_3$, we have
$$
\perm_{\sS_3}(\Ab, \Bb, \bGamma, \Gb)
=
\perm_{\sS_2}(\Ab, \Bb, \bGamma, \Gb)
,\quad
\forall (\Ab, \Bb, \bGamma, \Gb) \in \cS_3
.
$$
However, the same relation does not necessarily extend to $\marg_{\sS_3}$ and $\marg_{\sS_2}$, because we could potentially have $\marg_{\sS_3}(\Ab, \Bb, \bGamma, \Gb) \subsetneq \marg_{\sS_2}(\Ab, \Bb, \bGamma, \Gb)$.

For each $\Gb$ satisfying the condition ($*$) required by $\sS_2$, we denote the parameter space for $(\Ab, \Bb, \bGamma)$ corresponding to $\sS_3$ and $\sS_2$ by
$$
\sS_{(\Gb), 3}
:=
\Big\{
(\Ab, \Bb, \bGamma):~
(\Ab, \Bb, \bGamma, \Gb) \in \sS_3
\Big\}
,\quad
\sS_{(\Gb), 2}
:=
\Big\{
(\Ab, \Bb, \bGamma):~ (\Ab, \Bb, \bGamma, \Gb) \in \sS_2
\Big\}
.
$$
By definition, we have $\uplambda(\sS_{(\Gb), 2} \cap \sS_{(\Gb), 3}^c) = 0$.
We further let $\sB_{(\Gb)}$ denote the collection of parameters with all inactive entries $\beta_{i, j}$ (i.e. corresponding to $g_{i, j} = 0$) constrained to zero, that is,
\begin{equation}\label{eq:sB_Gb}
\sB_{(\Gb)}
:=
\Big\{
\forall i \in [p], j \in [q],~
\text{if } g_{i, j} = 0
\text{ then } \beta_{i, j} = 0
\Big\}
.
\end{equation}

Since $\yb \in \{0, 1\}^p$, the distribution $\law(\yb ~|~ \xb, \Ab, \Bb, \bGamma, \Gb)$ can be viewed as a $2^p$-dimensional probability vector from the probability simplex.
For a given $\Gb$, we let $\law_{\Gb}$ denote the mapping from the space $\sS_{(\Gb), 2}$ of continuous parameters $\Ab, \Bb, \bGamma$ to the probability simplex of the distribution $\law(\yb ~|~ \xb, \Ab, \Bb, \bGamma, \Gb)$.
Then by definition, we have
$$
\Big\{
(\Ab', \Bb', \bGamma'):~
(\Ab', \Bb', \bGamma', \Gb) \in \marg_{\sS_2}(\Ab, \Bb, \bGamma, \Gb)
\big\}
=
\law_{\Gb}^{-1}\left(
\{\law_{\Gb}(\Ab, \Bb, \bGamma)\}
\right) \cap \sS_{(\Gb), 2}
,
$$
$$
\Big\{
(\Ab', \Bb', \bGamma'):~
(\Ab', \Bb', \bGamma', \Gb) \in \marg_{\sS_3}(\Ab, \Bb, \bGamma, \Gb)
\Big\}
=
\law_{\Gb}^{-1}\left(
\{\law_{\Gb}(\Ab, \Bb, \bGamma)\}
\right) \cap \sS_{(\Gb), 3}
.
$$

Our model formulations in \eqref{eq:model_y_w}, \eqref{eq:model_w_z}, and \eqref{eq:model_z_x} ensure that for all $\Gb$, the mapping $\law_{\Gb}$ is an analytic function.
We notice that there are at most $(q!)(d!)$ permutations of the $q$ dimensions of $\wb$ and the $d$ classes of $z$.
Within $\sB_{(\Gb)}$ where all inactive parameter entries are constrained to zero, for each $\Gb$ we have
\begin{align*}
\sup_{\cP} \upmu\left(
\law_{\Gb}^{-1}(\sP) \cap \sS_{(\Gb), 3} \cap \sB_{(\Gb)}
\right)
&\le
\sup_{(\Ab, \Bb, \bGamma, \Gb) \in \sS_3} \upmu\left(
\marg_{\sS_3}(\Ab, \Bb, \bGamma, \Gb) \cap \sB_{(\Gb)}
\right)
\\&=
\sup_{(\Ab, \Bb, \bGamma, \Gb) \in \sS_3} \upmu\left(
\perm_{\sS_3}(\Ab, \Bb, \bGamma, \Gb) \cap \sB_{(\Gb)}
\right)
\\&\le
(q!)(d!)
<
\infty
,
\end{align*}
where $\sP$ denotes an arbitrary $2^p$-dimensional probability vector in the probability simplex.
Since $\uplambda(\cS_{(\Gb), 2} \cap \cS_{(\Gb), 3}^c) = 0$ for each $\Gb$, we can apply Theorem \ref{theo:geom} and obtain
$$
\uplambda\left(
\law_{\Gb}^{-1}\left( \law_{\Gb}\left(
\cS_{(\Gb), 2} \cap \cS_{(\Gb), 3}^c
\right) \right) \cap \sS_{(\Gb), 3} \cap \sB_{(\Gb)}
\right)
=
0
,
$$
with $\uplambda$ here being the Lebesgue measure over the unconstrained parameter entries.
Relaxing the constraint $\sB_{(\Gb)}$ on the inactive parameter entries, it directly follows that
$$
\uplambda\left(
\law_{\Gb}^{-1}\left( \law_{\Gb}\left(
\cS_{(\Gb), 2} \cap \cS_{(\Gb), 3}^c
\right) \right) \cap \sS_{(\Gb), 3}
\right)
=
0
.
$$

Taking a union bound, we obtain
\begin{align*}
&\qquad
(\uplambda \times \upmu) \left(\left\{
(\Ab, \Bb, \bGamma, \Gb) \in \sS_2:~
\perm_{\sS_2}(\Ab, \Bb, \bGamma, \Gb) \subsetneq \marg_{\sS_2}(\Ab, \Bb, \bGamma, \Gb)
\right\}\right)
\\&\le
(\uplambda \times \upmu) \left(\left\{
(\Ab, \Bb, \bGamma, \Gb) \in \sS_3:~
\perm_{\sS_2}(\Ab, \Bb, \bGamma, \Gb) \subsetneq \marg_{\sS_2}(\Ab, \Bb, \bGamma, \Gb)
\right\}\right)
+
(\uplambda \times \upmu)(\sS_2 \cap \sS_3^c)
\\&\le
\sum_{\Gb} \uplambda\left(\left\{
(\Ab, \Bb, \bGamma) \in \sS_{(\Gb), 3}:~
\marg_{\sS_3}(\Ab, \Bb, \bGamma, \Gb) \subsetneq \marg_{\sS_2}(\Ab, \Bb, \bGamma, \Gb)
\right\}\right)
+
0
\\&\le
\sum_{\Gb} \uplambda\left(\left\{
(\Ab, \Bb, \bGamma) \in \sS_{(\Gb), 3}:~
\law_{\Gb}^{-1}\left(
\{\law_{\Gb}(\Ab, \Bb, \bGamma)\}
\right) \cap (\sS_{(\Gb), 2} \cap \sS_{(\Gb), 3}^c)
\ne
\varnothing
\right\}\right)
\\&\le
\sum_{\Gb} \uplambda\left(
\law_{\Gb}^{-1}\left( \law_{\Gb}\left(
\cS_{(\Gb), 2} \cap \cS_{(\Gb), 3}^c
\right) \right) \cap \sS_{(\Gb), 3}
\right)
\\&=
0
.
\end{align*}
Therefore, our model is generically identifiable within parameter space $\sS_2$.
\end{proof}

\subsection{Proofs of Auxiliary Lemmas for Theorem \ref{theo:generic}}\label{supp_ssec:proof_lemmas_generic}

\begin{proof}[Proof of Lemma \ref{lemm:y_to_w_meas0}]
Let $\Gb$ be an arbitrary $p \times q$ binary matrix satisfying the condition ($*$) of Theorem \ref{theo:generic}.
That is, $\Gb$ contains two distinct submatrices with all-one diagonals, while the remaining submatrix does not have an all-zeros column.
We denote the collections of rows corresponding to the two distinct submatrices with all-one diagonals by $\cI_1$ and $\cI_2$, and denote the collection of remaining rows by $\cI_3$, such that $\cI_1 \cup \cI_2 \cup \cI_3$ forms a partition of $[p]$.
We recall the definition of the probability matrix $\bLambda$ from \eqref{eq:Lambda_st} and the definition of the probability vector $\bnu$ from \eqref{eq:nu_t}.
Similar to \eqref{eq:Lambda[1234]}, we define the probability matrices
\begin{equation}\label{eq:Lambda[123]}
\bLambda_{[1]}
:=
\odot_{i \in \cI_1} \bLambda_{(i)}
,\quad
\bLambda_{[2]}
:=
\odot_{i \in \cI_2} \bLambda_{(i)}
,\quad
\bLambda_{[3]}
:=
\odot_{i \in \cI_3} \bLambda_{(i)}
.
\end{equation}
Using the same proof technique as in Lemma \ref{lemm:y_to_w}, Kruskal's Theorem \ref{theo:kruskal} implies that the probability matrix $\bLambda$ and the probability vector $\bnu$ can be uniquely recovered, up to column permutations, if the sum of the Kruskal ranks of $\bLambda_{[1]}, \bLambda_{[2]}, \bLambda_{[3]}$ satisfies
$$
\rank_K(\bLambda_{[1]}) + \rank_K(\bLambda_{[2]}) + \rank_K(\bLambda_{[3]} \diag(\bnu))
\ge
2 \cdot 2^q + 2
.
$$
Since there are only finitely many possible values of $\Gb$, to prove Lemma \ref{lemm:y_to_w_meas0}, it suffices to show that for each $\Gb$ satisfying condition ($*$) in Theorem \ref{theo:generic}, the Lebesgue measure
$$
\uplambda\left(\left\{
(\Ab, \Bb, \bGamma):~
\rank_K(\bLambda_{[1]}) + \rank_K(\bLambda_{[2]}) + \rank_K(\bLambda_{[3]} \diag(\bnu))
<
2 \cdot 2^q + 2
\right\}\right)
=
0
.
$$

Without loss of generality, we can permute the rows of $\Gb$ and restrict our attention to the case where $\Gb$ satisfies $g_{j, j} = g_{q + j, j} = 1$ for all $j \in [q]$.
This means that the first $q$ rows and the rows from $q + 1$ to $2q$ of $\Gb$ form two submatrices that have all-one diagonals, while the remaining rows belong to $\cI_3 = [2q + 1, p]$.
Recall from \eqref{eq:sB_Gb} that $\sB_{(\Gb)}$ denotes the collection of parameters that satisfy the property that $\forall i \in [p], j \in [q]$, if $g_{i, j} = 1$ then $\beta_{i, j} \ne 0$, and its measure $\uplambda(\sB_{(\Gb)}) = 0$.
Given $\Gb$, we let $\Bb \in \sB_{(\Gb)}$ be an arbitrary such parameter.

Based on $\Gb$, we define a new binary square matrix $\Gb'$ by setting the submatrices of $\Gb$ with all-one diagonals to be identity matrices $\Ib_q$, i.e.
\begin{equation}\label{eq:Gb'}
\Gb'
:=
\left(\begin{matrix}
\Ib_q \\ \Ib_q \\ \Gb_{\cI_3}
\end{matrix}\right)
.
\end{equation}
Furthermore, we define a new parameter $\Bb'$ entrywise as
\begin{equation}\label{eq:Bb'}
\beta_{i, j}'
=
\left\{\begin{matrix}
\beta_{i, j} & \text{if } g_{i, j}' = 1 \\
0 & \text{if } g_{i, j}' = 0
\end{matrix}\right.
\quad
\forall i \in [p], j \in [q]
.
\end{equation}
By definition, we have $\Bb' = (\one ~ \Gb') \circ \Bb$ and $\Bb' \in \sB_{(\Gb')}$.
Importantly, we observe that the conditional distributions satisfy
$$
\law( \yb ~|~ \wb, \Gb, \Bb')
=
\law( \yb ~|~ \wb, \Gb', \Bb')
.
$$
Similar to \eqref{eq:Lambda_st} and \eqref{eq:Lambda[123]}, let the probability matrix $\bLambda'$ characterize the conditional distribution $\law(\yb ~|~ \wb, \Gb', \Bb')$, and let the probability matrices $\bLambda_{[1]}', \bLambda_{[2]}', \bLambda_{[3]}'$ characterize the conditional distribution of the $\yb$ entries in $\cI_1, \cI_2, \cI_3$, respectively, such that $\bLambda'$ coincides with $\bLambda_{[1]}' \odot \bLambda_{[2]}' \odot \bLambda_{[3]}'$ under row permutations.
In the proof of Lemma \ref{lemm:y_to_w}, we have established that for $\Gb_{\cI_1}' = \Gb_{\cI_2}' = \Ib$ and $\Bb' \in \sB_{(\Gb')}$,
$$
\rank_K(\bLambda_{[1]}')
=
\rank_K(\bLambda_{[2]}')
=
2^q
.
$$

We further observe that $\rank_K(\bLambda_{[1]}) = 2^q$ if and only if $\det(\bLambda_{[1]}) \ne 0$.
From our model formulations in \eqref{eq:model_y_w}, \eqref{eq:model_w_z}, and \eqref{eq:model_z_x}, $\det(\bLambda_{[1]})$ is a composition of rational, exponential, and polynomial functions of the continuous parameters $\Ab, \Bb, \bGamma$ and hence a holomorphic function.
We denote this holomorphic function as $\psi_{\Gb, [1]}(\Ab, \Bb, \bGamma)$, with the subscript $\Gb$ emphasizing its dependency on $\Gb$.
Similarly, we denote the holomorphic function $\det(\bLambda_{[2]})$ by $\psi_{\Gb, [2]}(\Ab, \Bb, \bGamma)$.
Since we have established that $\rank_K(\bLambda_{[1]}') = \rank_K(\bLambda_{[2]})' = 2^q$ for $\Gb', \Bb'$ defined in \eqref{eq:Gb'} and \eqref{eq:Bb'}, it follows that
$$
\psi_{\Gb, [1]}(\Ab, \Bb', \bGamma)
=
\psi_{\Gb', [1]}(\Ab, \Bb', \bGamma)
\ne
0
,
$$
$$
\psi_{\Gb, [2]}(\Ab, \Bb', \bGamma)
=
\psi_{\Gb', [2]}(\Ab, \Bb', \bGamma)
\ne
0
.
$$
Thus, the holomorphic functions $\psi_{\Gb, 1}$ and $\psi_{\Gb, 2}$ are not identically zero.
Applying Theorem \ref{theo:holo_zero}, we conclude that the zero sets of $\psi_{\Gb, 1}$ and $\psi_{\Gb, 2}$ have Lebesgue measure zero, which implies
$$
\uplambda\left(\left\{
(\Ab, \Bb, \bGamma):~
\rank_K(\bLambda_{[1]})
<
2^q
\right\}\right)
\le
\uplambda\left(
\psi_{\Gb, [1]}^{-1}(\{0\})
\right)
+
\uplambda\left(
\sB_{(\Gb)}
\right)
=
0
,
$$
$$
\uplambda\left(\left\{
(\Ab, \Bb, \bGamma):~
\rank_K(\bLambda_{[2]})
<
2^q
\right\}\right)
\le
\uplambda\left(
\psi_{\Gb, [2]}^{-1}(\{0\})
\right)
+
\uplambda\left(
\sB_{(\Gb)}
\right)
=
0
.
$$

Since the submatrix $\Gb_{\cI_3}$ has no all-zeros column, for each $j \in [q]$ there exists a row $i \in \cI_3$ such that $g_{i, j} = 1$.
We denote this row $i \in \cI_3$ corresponding to $j$ by $i_j$ to emphasize the dependency.
Recall from the proof of Lemma \ref{lemm:y_to_w} that $\rank_K(\bLambda_{[3]} \diag(\bnu)) \ge 2$ as long as $\bLambda_{[3]}$ has distinct columns.
This, in turn, holds if for any $\wb \ne \wb' \in \{0, 1\}^q$, there exists $i \in \cI_3$ such that
$$
\PP(y_i = 1 ~|~ \wb, \gb_i, \beta_{i, 0}, \bbeta_i)
\ne
\PP(y_i = 1 ~|~ \wb', \gb_i, \beta_{i, 0}, \bbeta_i)
,
$$
or equivalently,
$$
(\wb - \wb')^\top (\gb_i \circ \bbeta_i)
\ne
0
.
$$
We define a collection of parameters as
$$
\sC_{(\Gb)}
:=
\left\{
\Bb:~
\exists \Cb \in \{-1, 0, 1\}^{p \times q},~
\Cb \ne \zero,~
\text{s.t. }
\tr\left(
(\zero ~ \Gb)^\top \Bb
\right)
=
\sum_{i = 1}^p \sum_{j = 1}^q C_{i, j} \beta_{i, j}
=
0
\right\}
.
$$
Intuitively, $\sC_{(\Gb)}$ includes all parameters $\Bb$ satisfying the existence of two disjoint subsets $\cJ_1, \cJ_2$ of index pairs $(i, j) \in [p] \times [q]$ such that $\sum_{(i, j) \in \cJ_1} \beta_{i, j} = \sum_{(i, j) \in \cJ_2} \beta_{i, j}$.
For any $\wb \ne \wb' \in \{0, 1\}^q$, there exists $j \in [q]$ such that $w_j - w_j' \ne 0$, which implies
$$
(\wb - \wb')^\top (\gb_{i_j} \circ \bbeta_{i_j})
\ne
0
,\quad
\forall \Bb \notin \sC_{(\Gb)}
.
$$
Since there are only finitely many possible values of $\Cb \in \{-1, 0, 1\}^{p \times q}$ and each set $\{\tr((\zero ~ \Gb)^\top \Bb) = 0\}$ forms a linear subspace, we conclude that the Lebesgue measure $\uplambda(\sC_{(\Gb)}) = 0$.
This further implies
\begin{align*}
&\qquad
\uplambda\left(\left\{
(\Ab, \Bb, \bGamma):~
\rank_K(\bLambda_{[3]} \diag(\bnu))
<
2
\right\}\right)
\\&\le
\uplambda\left(\left\{
(\Ab, \Bb, \bGamma):~
\exists \wb \ne \wb' \in \{0, 1\}^q,~
\text{s.t. }
\forall i \in \cI_3,~
(\wb - \wb')^\top (\gb_i \circ \bbeta_{i, j}) = 0
\right\}\right)
\\&\le
\uplambda\left(\left\{
(\Ab, \Bb, \bGamma):~
\Bb \in \sC_{(\Gb)}
\right\}\right)
\\&=
0
.
\end{align*}

Putting everything together, we obtain
\begin{align*}
&\qquad
\uplambda\left(\left\{
(\Ab, \Bb, \bGamma):~
\rank_K(\bLambda_{[1]}) + \rank_K(\bLambda_{[2]}) + \rank_K(\bLambda_{[3]} \diag(\bnu))
<
2 \cdot 2^q + 2
\right\}\right)
\\&\le
\uplambda\left(\left\{
(\Ab, \Bb, \bGamma):~
\rank_K(\bLambda_{[1]})
<
2^q
\right\}\right)
+
\uplambda\left(\left\{
(\Ab, \Bb, \bGamma):~
\rank_K(\bLambda_{[2]})
<
2^q
\right\}\right)
\\&\qquad+
\uplambda\left(\left\{
(\Ab, \Bb, \bGamma):~
\rank_K(\bLambda_{[3]} \diag(\bnu))
<
2
\right\}\right)
\\&=
0
.
\end{align*}
The above result holds for an arbitrary choice of $\Gb$ satisfying condition (*) in Theorem \ref{theo:generic}.
Taking a union bound over all such choices of $\Gb$ completes the proof of the lemma.

We can now construct the subset $\sS_{2, 1} \subset \sS_2$ of parameters to be excluded as
\begin{equation}\label{eq:sS21}
\sS_{2, 1}
:=
\left\{
(\Ab, \Bb, \bGamma, \Gb) \in \sS_2:~
\rank_K(\bLambda_{[1]}) + \rank_K(\bLambda_{[2]}) + \rank_K(\bLambda_{[3]} \diag(\bnu))
<
2 \cdot 2^q + 2
\right\}
,
\end{equation}
which has measure $\uplambda \times \upmu(\sS_{2, 1}) = 0$.
Furthermore, the permutation invariance of $\sS_{2, 1}$ follows directly from the fact that permuting columns of $\bLambda_{[1]}$, $\bLambda_{[2]}$, and $\bLambda_{[3]}$ does not affect their Kruskal ranks.
\end{proof}

\begin{proof}[Proof of Lemma \ref{lemm:w_to_z_meas0}]
We define a subset of the parameter space $\sS_2$ as
\begin{equation}\label{eq:sS22}\begin{aligned}
\sS_{2, 2}
:=
\Big\{
(\Ab, \Bb, \bGamma, \Gb) \subset \sS_2:~
&
\forall \text{ partition } [q] = \cQ_1 \cup \cQ_2 \cup \cQ_3 \text{ with } |\cQ_1|, |\cQ_2| \ge \log_2 D
,\\&
\rank\left(\bigcirc_{j \in \cQ_1} \{\one, \balpha_j\}\right) < d
,\text{ or }
\rank\left(\bigcirc_{j \in \cQ_2} \{\one, \balpha_j\}\right) < d
,\\&\text{or }
\exists h_1 \ne h_2 \in [d] \text{ s.t. } \forall j \in \cQ_3,~ \alpha_{j, h_1} = \alpha_{j, h_2}
\Big\}
.
\end{aligned}\end{equation}
Since $\card(\cQ_1), \card(\cQ_2) \ge \log_2 d$, the conditions
$$
\rank\left(\bigcirc_{j \in \cQ_1} \{\one, \balpha_j\}\right)
<
d
,\quad
\rank\left(\bigcirc_{j \in \cQ_2} \{\one, \balpha_j\}\right)
<
d
$$
imply that either the set of vectors $\bigcirc_{j \in \cQ_1} \{\one, \balpha_j\}$ or the set of vectors $\bigcirc_{j \in \cQ_2} \{\one, \balpha_j\}$ does not have full rank.
Additionally, the condition
\begin{equation}\label{eq:cQ3_distinct_cols}
\exists h_1 \ne h_2 \in [d]
\text{ s.t. }
\forall j \in \cQ_3,~ \alpha_{j, h_1} = \alpha_{j, h_2}
\end{equation}
implies that the $h_1, h_2$ columns of the submatrix $\Ab_{\cQ_3}$ are identical.
Thus, the subset $\sS_{2, 2}$ includes all corner cases of parameters that the condition of Lemma \ref{lemm:w_to_z} excludes.
By applying Lemma \ref{lemm:w_to_z}, we conclude that the statement in Lemma \ref{lemm:w_to_z_meas0} holds within the parameter space $\sS_2 \cap \sS_{2, 2}^c$.
By its definition \eqref{eq:sS22}, $\sS_{2, 2}$ is permutation invariant, so it only remains to prove that it has measure zero.

We consider an arbitrary partition $[q] = \cQ_1 \cup \cQ_2 \cup \cQ_3$ with cardinalities $|\cQ_1|, |\cQ_2| \ge \log_2 d$.
Let $\Cb$ denote the $2^{\card(\cQ_1)} \times d$ matrix with rows consisting of the vectors in $\bigcirc_{j \in \cQ_1} \{\one, \balpha_j\}$.
We define a function of the parameter $\Ab$ as
\begin{equation}\label{eq:phi_Ab}
\phi(\Ab)
:=
\sum_{\cJ \subset \cQ_1, \card(\cJ) = D} \det(\Cb_{\cJ})^2
,
\end{equation}
where $\cJ$ is summed over all subsets of $\cQ_1$ with cardinality $d$, and $\Cb_{\cJ}$ denotes the $d \times d$ submatrix of $\Cb$ corresponding to the rows in $\cJ$.
From linear algebra, we have
\begin{equation}\label{eq:rank_to_phi}
\rank\big(\bigcirc_{j \in \cQ_1} \{\one, \balpha_j\}\big) < d
\quad\iff\quad
\phi(\Ab) = 0
.
\end{equation}

By its definition \eqref{eq:phi_Ab}, $\phi(\Ab)$ is a holomorphic function of $\Ab$.
We now verify that $\phi(\Ab)$ is not identically zero by explicitly constructing an example.
Let $\{p_j\}_{j \in \cQ_1}$ be $\card(\cQ_1)$ distinct prime numbers.
For each $j \in \cQ_1$, we specify the $j$th row of $\Ab$ as
$$
\balpha_j
:=
\left(\begin{matrix}
p_j^{-1} & p_j^{-2} & \cdots & p_j^{-d}
\end{matrix}\right)
.
$$
Then, for each $r \in [2^{\card(\cQ_1)}]$, there exists a distinct binary vector $\vb \in \{0, 1\}^{\card(\cQ_1)}$ such that for $a_r := \prod_{j \in \cQ_1} p_j^{-v_j}$, the $r$th row of $\Cb$ is given by
$$
\cbb_r
=
\left(\begin{matrix}
a_r & a_r^2 & \cdots & a_r^d
\end{matrix}\right)
.
$$
By the nature of prime numbers, we observe that $a_r \ne a_{r'}$ for any two rows $r \ne r'$.
Therefore, for any subset $\cJ \subset \cQ_1$ with $\card(\cJ) = d$, the submatrix $\Cb_{\cJ}$ forms a Vandermonde matrix, which has determinant
$$
\det(\Cb_{\cJ})
=
\prod_{r, r' \in \cJ, r < r'} (a_r - a_{r'})
\ne
0
.
$$
This suggests that $\phi(\Ab) \ne 0$ for this specific construction of $\Ab$.

Since $\phi(\Ab)$ is a holomorphic function that is not identically zero, we can apply Theorem \ref{theo:holo_zero} to conclude that its zero set has Lebesgue measure zero.
Using \eqref{eq:rank_to_phi}, this implies
$$
\uplambda\left(\left\{
\Ab:~
\rank\left(
\bigcirc_{j \in \cQ_1} \{\one, \balpha_j\}
\right)
<
d
\right\}\right)
=
\uplambda\left(
\phi^{-1}(0)
\right)
=
0
.
$$
Similarly, for the rows in $\cQ_2$, we obtain
$$
\uplambda\left(\left\{
\Ab:~
\rank\left(
\bigcirc_{j \in \cQ_2} \{\one, \balpha_j\}
\right)
<
d
\right\}\right)
=
0
.
$$
Furthermore, we observe that the parameters satisfying the condition \eqref{eq:cQ3_distinct_cols} form a finite union of linear subspaces and hence have measure zero.
Since the number of possible partitions $[q] = \cQ_1 \cup \cQ_2 \cup \cQ_3$ is finite, applying a union bound yields $(\uplambda \times \upmu)(\sS_{2, 2}) = 0$.
\end{proof}

\begin{proof}[Proof of Lemma \ref{lemm:law_to_params_meas0}]
We define a subset of parameter space $\sS_2$ as
\begin{equation}\label{eq:sS23}
\sS_{2, 3}
:=
\left\{
(\Ab, \Bb, \bGamma, \Gb) \in \sS_2:~
\exists i \in [p], j \in [q]
\text{ s.t. }
g_{i, j} = 1
\text{ and }
\beta_{i, j} = 0
\right\}
,
\end{equation}
which, by definition, is permutation invariant.
We observe that $\sS_{2, 3}$ contains all corner cases of parameters that we are excluding in Lemma \ref{lemm:law_to_params}.
Therefore, by applying Lemma \ref{lemm:law_to_params}, we conclude that the statements in Lemma \ref{lemm:law_to_params_meas0} hold within parameter space $\sS_2 \cap \sS_{2, 3}^c$.
It remains to show that $\sS_{2, 3}$ has measure zero.

For any fixed $\Gb \in \{0, 1\}^q$, the collection of parameters $\Bb$ where an entry $\beta_{i, j} = 0$ with $g_{i, j} = 1$ forms a finite union of linear subspaces, each of which has zero Lebesgue measure.
Taking a union bound over the finitely many possible values of $\Gb$, we obtain 
$(\uplambda \times \upmu)(\sS_{2, 3}) = 0$.
\end{proof}

\section{Posterior Consistency}\label{supp_sec:post_cons}

In this section, we develop the posterior consistency theory for our model, drawing on both Schwartz' theorem \citep{schwartz1965bayes} and Doob's theorem \citep{doob1949application}.
In Section \ref{supp_ssec:proof_post}, we provide the proof of Theorem \ref{theo:post}, systematically addressing the three scenarios considered in Theorem \ref{theo:post}.
Proofs of the auxiliary lemmas introduced in Section \ref{supp_ssec:proof_post} are presented in Section \ref{supp_ssec:proof_lemmas_post}.

\subsection{Proof of Theorem \ref{theo:post}}\label{supp_ssec:proof_post}

In the following, we establish parts (i) and (ii) of Theorem \ref{theo:post} using Schwartz' theory of posterior consistency, while part (iii) is proven using Doob's theory of posterior consistency.

Let $\cS^{2^p - 1} \subset \RR^{2^p}$ denote the probability simplex to which the distribution of $\yb$ belongs.
Recall that $\xb$ follows the data-generating distribution $\cP_x$.
For a parameter space $\sS$, we define the mapping $\cL:~ \sS \to \cS^{2^p - 1}$ that associates each set of parameters $(\Ab, \Bb, \bGamma, \Gb)$ with the corresponding marginal distribution $\law(\yb ~|~ \Ab, \Bb, \bGamma, \Gb)$.
By our model formulation in \eqref{eq:model_y_w}, \eqref{eq:model_w_z}, and \eqref{eq:model_z_x}, the mapping from $(\Ab, \Bb, \bGamma, \Gb)$ to the conditional distribution $\law(\yb ~|~ \xb, \Ab, \Bb, \bGamma, \Gb)$ is continuous, and hence $\cL$ is also continuous.
Given a prior measure $\pi$ on $\sS$, we denote by $\cL_\# \pi$ the pushforward measure induced by $\cL$ on $\cS^{2^p - 1}$.
For a distribution $P \in \cS^{2^p - 1}$ and a probability measure $m$ over $\cS^{2^p - 1}$, we say that $\cP$ is in the \textit{Kullback-Leibler (KL) support} of $m$ if every KL neighborhood of $P$ has positive measure under $m$, i.e.
$$
m\left\{
P' \in \cS^{2^p - 1}:~
D_{KL}(P, P')
:=
P\left[
\log \frac{\ud P}{\ud P'}
\right] < \epsilon
\right\}
>
0
,\quad
\forall \epsilon > 0
.
$$
We define the $\epsilon$-neighborhoods of $P$ in $\cS^{2^p - 1}$ under $L_1$ distance and $KL$ divergence as follows:
\begin{equation}\label{eq:eps_nbhd}
\cO_\epsilon(P)
:=
\left\{
P' \in \cS^{2^p - 1}:~
\|P - P'\|_1
<
\epsilon
\right\}
,\quad
\cO_\epsilon^{KL}(P)
:=
\left\{
P' \in \cS^{2^p - 1}:~
P\left[
\log \frac{P}{P'}
\right]
<
\epsilon
\right\}
.
\end{equation}
As a key step in applying Schwartz' theory of posterior consistency, the following lemma establishes that the distribution $\cL(\Ab, \Bb, \bGamma, \Gb)$ belongs to the KL support of $\cL_\# \pi$ whenever $(\Ab, \Bb, \bGamma, \Gb)$ lies in the support of $\pi$.

\begin{lemma}\label{lemm:prior_supp}
Suppose that the data-generating distribution $\cP_x$ on $\RR^{p_x}$ has finite first moment, that is, $\int_{\RR^{p_x}} \|\xb\|_\infty \cP_x(\ud \xb) < \infty$.
Then, for any $(\Ab, \Bb, \bGamma, \Gb) \in \supp(\pi) \subset \sS$, the distribution $\cL(\Ab, \Bb, \bGamma, \Gb)$ belongs to the KL support of $\cL_\# \pi$.
\end{lemma}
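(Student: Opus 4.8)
The plan is to unfold the definition of the KL support and reduce everything to a single continuity statement. Write $\theta=(\Ab,\Bb,\bGamma,\Gb)$, let $\theta^*=(\Ab^*,\Bb^*,\bGamma^*,\Gb^*)\in\supp(\pi)$ be the fixed parameter, and set $P^*:=\cL(\theta^*)$. What must be shown is $(\cL_\#\pi)\big(\cO_\epsilon^{KL}(P^*)\big)>0$ for every $\epsilon>0$. Since $(\cL_\#\pi)\big(\cO_\epsilon^{KL}(P^*)\big)=\pi\{\theta:\ D_{KL}(P^*,\cL(\theta))<\epsilon\}$, it suffices to exhibit, for each $\epsilon$, an $L_1$-ball about $\theta^*$ that is contained in $\{\theta:\ D_{KL}(P^*,\cL(\theta))<\epsilon\}$; the ball automatically has positive prior mass because $\theta^*\in\supp(\pi)$. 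Thus the whole lemma reduces to showing that $\theta\mapsto D_{KL}(P^*,\cL(\theta))$ is continuous at $\theta^*$ with value $0$. A first simplification is that $\Gb$ is discrete and $\sS$ carries the $L_1$ topology, so $\|\Gb-\Gb^*\|_1\ge1$ whenever $\Gb\ne\Gb^*$; hence any ball of radius below $1$ fixes $\Gb\equiv\Gb^*$, and it is enough to prove continuity over the slice of continuous parameters $(\Ab,\Bb,\bGamma)$ with $\Gb=\Gb^*$.

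Two preliminary facts make the divergence well behaved on such a ball. First, every coordinate of $P^*$ is strictly positive: because $\Ab^*\in(0,1)^{q\times d}$ each factor $\PP(\wb\mid z,\Ab^*)>0$, each factor $\logistic(\beta^*_{i,0}+(\gb^*_i\circ\bbeta^*_i)^\top\wb)\in(0,1)$, and each softmax weight $\PP(z\mid\xb,\bGamma^*)>0$, so $\cL(\theta^*)(\yb)=\int_{\RR^{p_x}}\PP(\yb\mid\xb,\theta^*)\,\cP_x(\ud\xb)>0$ for all $\yb\in\{0,1\}^p$; the same positivity holds for every $\theta$, so all the logarithms in $D_{KL}(P^*,\cL(\theta))$ are finite. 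Second, since both $\cL(\theta^*)$ and $\cL(\theta)$ are $\cP_x$-mixtures of the conditional laws $\PP(\cdot\mid\xb,\cdot)$ with the \emph{same} mixing measure $\cP_x$, the joint convexity of the Kullback--Leibler divergence yields
\[
D_{KL}\big(P^*,\cL(\theta)\big)
\le
\int_{\RR^{p_x}}
D_{KL}\big(\PP(\cdot\mid\xb,\theta^*),\,\PP(\cdot\mid\xb,\theta)\big)\,\cP_x(\ud\xb).
\]
This is the key device: it moves the analysis from the logarithm of a $\cP_x$-integral to an integral of conditional divergences that can be controlled pointwise in $\xb$.

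The core step is a dominated-convergence argument on the right-hand side. For each fixed $\xb$, the inner divergence is a continuous function of $(\Ab,\Bb,\bGamma)$ vanishing at $\theta^*$, since it is the divergence between two fully supported laws on the finite set $\{0,1\}^p$ whose masses depend continuously on the parameters. To pass the limit $\theta\to\theta^*$ under the integral I need an integrable envelope, and this is precisely where the finite-first-moment hypothesis enters. On the bounded neighborhood isolated above, the only $\xb$-dependence in $\PP(\yb\mid\xb,\theta)$ is through the linear predictors $\gamma_{h,0}+\bgamma_h^\top\xb$ inside the softmax, and a crude bound gives $\big|\log\PP(z\mid\xb,\bGamma)\big|\le C(1+\|\xb\|_\infty)$ with $C$ depending only on the bounded range of $\bGamma$ over the neighborhood. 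Consequently the conditional log-likelihood ratio, and hence the conditional divergence, is dominated by $C'(1+\|\xb\|_\infty)$ uniformly over the neighborhood, an envelope that is $\cP_x$-integrable exactly because $\int_{\RR^{p_x}}\|\xb\|_\infty\,\cP_x(\ud\xb)<\infty$. Dominated convergence then forces the integral above to tend to $0$ as $\theta\to\theta^*$, so $D_{KL}(P^*,\cL(\theta))\to0$, giving the continuity that closes the reduction.

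I expect the main obstacle to be the dependence of the likelihood on the covariates through $\bGamma$, rather than the $\Ab$ or $\Bb$ components. The latter enter only through bounded Bernoulli and logistic factors and are automatically controlled, whereas $\log\PP(z\mid\xb,\bGamma)$ is unbounded in $\xb$; constructing an integrable envelope for this term is the crux, and it is the step that invokes the finite-first-moment assumption. The convexity inequality is what lets me handle this envelope pointwise in $\xb$ instead of grappling with the logarithm of the $\cP_x$-mixture directly, and the discreteness of $\Gb$ is handled cheaply by confining attention to balls of radius below $1$.
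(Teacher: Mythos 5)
Your proof is correct, but it follows a genuinely different route from the paper's. The paper first shows that $\cL$ is \emph{Lipschitz} as a map from the parameter space into the probability simplex under $L_1$ distance, bounding the effect of perturbing $\Ab$, $\Bb$, and $\bGamma$ entrywise (the finite first moment of $\cP_x$ enters precisely in the Lipschitz constant for the $\bGamma$-direction, since differentiating the softmax in $\bGamma$ produces a factor $\|\xb\|_\infty$); it then observes that $P^*=\cL(\theta^*)$ lies in the interior of the simplex, so density ratios are bounded above and below on a small closed $L_1$-ball around $P^*$, and invokes the \emph{reverse Pinsker inequality} to show that every KL neighborhood of $P^*$ contains an $L_1$-ball, hence contains the $\cL$-image of a parameter ball of positive prior mass. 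You instead bound the KL divergence of the two $\cP_x$-mixtures by the $\cP_x$-average of the conditional divergences (joint convexity of KL, or equivalently the chain rule for joint laws sharing the marginal $\cP_x$), and then run dominated convergence in $\xb$ to get continuity of $\theta\mapsto D_{KL}(P^*,\cL(\theta))$ at $\theta^*$; this avoids reverse Pinsker entirely and is arguably softer and more modular. The paper's approach buys a quantitative, reusable statement (explicit radius relations between parameter balls and KL neighborhoods); yours buys elementarity. One refinement worth noting in your argument: your envelope $C'(1+\|\xb\|_\infty)$ is coarser than necessary, because after marginalizing the latent variables one has $\PP(\yb\mid\xb,\theta)\ge\min_{\wb}\PP(\yb\mid\wb,\Gb,\Bb)$, which is bounded away from zero uniformly in $\xb$ on a bounded parameter neighborhood; hence the conditional divergence admits a \emph{constant} envelope, and the finite-first-moment hypothesis is not actually load-bearing in your route (it is genuinely used in the paper's Lipschitz computation). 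Your use of it is valid but dispensable.
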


By applying Schwartz' theorem on posterior consistency \citep{schwartz1965bayes, ghosal2017fundamentals}, we establish the consistency of $\cL(\Ab, \Bb, \bGamma, \Gb)$ within the probability simplex $\cS^{2^p - 1}$.
The remaining task is to recover the consistency of the parameters $(\Ab, \Bb, \bGamma, \Gb)$ from this result.
Since $\cL$ is continuous, this follows naturally if the parameter space $\sS$ is compact.
We now proceed with the proof of (i) in Theorem \ref{theo:post}.

\begin{proof}[Proof of Theorem \ref{theo:post} (i)]
We select an arbitrary true parameter $(\Ab^*, \Bb^*, \bGamma^*, \Gb^*) \in \supp(\pi)$ that satisfies the identifiability condition $\perm_{\sS}(\Ab^*, \Bb^*, \bGamma^*, \Gb^*) = \marg_{\sS}(\Ab^*, \Bb^*, \bGamma^*, \Gb^*)$ within the parameter space $\sS$.
By Lemma \ref{lemm:prior_supp}, the corresponding marginal distribution $\law(\yb ~|~ \Ab^*, \Bb^*, \bGamma^*, \Gb^*)$, characterized by the probability vector $\cL(\Ab^*, \Bb^*, \bGamma^*, \Gb^*) \in \cS^{2^p - 1}$, lies in the KL support of the pushforward prior measure $\cL_\# \pi$.

Since the weak topology on the space of probability measures over a countable sample space, e.g. the probability simplex $\cS^{2^p - 1}$, can be equivalently induced by the $L_1$ metric, we apply Schwartz' theorem (see Theorem 6.16 in \citet{ghosal2017fundamentals}) to conclude that the posterior distribution over $\cS^{2^p - 1}$ is strongly consistent at $\cL(\Ab^*, \Bb^*, \bGamma^*, \Gb^*)$.
That is, for any $\delta > 0$, we have
$$
\lim_{N \to \infty} \PP\Big(
\cO_\delta(\cL(\Ab^*, \Bb^*, \bGamma^*, \Gb^*))^c
~\Big|~
\xb^{(1:N)}, \yb^{(1:N)}
\Big)
=
0
,
$$
almost surely with respect to $\PP(\xb^{(1:\infty)}, \yb^{(1:\infty)} ~|~ \Ab^*, \Bb^*, \bGamma^*, \Gb^*)$.

Let $\epsilon > 0$ be arbitrary.
Since the parameter space $\sS$ is compact, the set $\cO_\epsilon(\Ab^*, \Bb^*, \bGamma^*, \Gb^*)^c$ is compact in $\sS$.
Moreover, since $\cL$ is continuous, its image $\cL(\cO_\epsilon(\Ab^*, \Bb^*, \bGamma^*, \Gb^*)^c)$ is also compact in $\cS^{2^p - 1}$.
Given that $\perm_{\sS}(\Ab^*, \Bb^*, \bGamma^*, \Gb^*) = \marg_{\sS}(\Ab^*, \Bb^*, \bGamma^*, \Gb^*)$, we have the disjointness
$$
\cL(\cO_\epsilon(\Ab^*, \Bb^*, \bGamma^*, \Gb^*)^c)
\cap
\cL(\Ab^*, \Bb^*, \bGamma^*, \Gb^*)
=
\varnothing
.
$$
This implies that the compact set $\cL(\cO_\epsilon(\Ab^*, \Bb^*, \bGamma^*, \Gb^*)^c)$ is separated from $\cL(\Ab^*, \Bb^*, \bGamma^*, \Gb^*)$ by some positive distance $\delta > 0$, i.e.
$$
\cL(\cO_\epsilon(\Ab^*, \Bb^*, \bGamma^*, \Gb^*)^c)
\subset
\cO_\delta(\cL(\Ab^*, \Bb^*, \bGamma^*, \Gb^*))^c
.
$$
Consequently, we obtain
\begin{align*}
&\qquad
\lim_{N \to \infty} \PP\Big(
\cO_\epsilon(\Ab^*, \Bb^*, \bGamma^*, \Gb^*)^c
~\Big|~
\xb^{(1:N)}, \yb^{(1:N)}
\Big)
\\&\le
\lim_{N \to \infty} \PP\Big(
\cO_\delta(\cL(\Ab^*, \Bb^*, \bGamma^*, \Gb^*))^c
~\Big|~
\xb^{(1:N)}, \yb^{(1:N)}
\Big)
=
0
,
\end{align*}
almost surely under $\PP(\xb^{(1:\infty)}, \yb^{(1:\infty)} ~|~ \Ab^*, \Bb^*, \bGamma^*, \Gb^*)$.
Thus, the posterior distribution is strongly consistent at $(\Ab^*, \Bb^*, \bGamma^*, \Gb^*)$.
\end{proof}

Compared to (i) in Theorem \ref{theo:post}, part (ii) removes the requirement that the true parameter $(\Ab^*, \Bb^*, \bGamma^*, \Gb^*)$ be identifiable within the parameter space $\sS \subset \sS_1$.
This relaxation is justified by Theorem \ref{theo:strict}, which already ensures that all parameters in $\sS_1$ are identifiable.
Furthermore, although $\sS_1$ is not a compact set, the compactness of $\sS \subset \sS_1$ is no longer necessary for establishing posterior consistency.
The following lemma clarifies this by demonstrating that $\sS_1$ retains properties that make it close to being compact, thus enabling a similar argument to that used in the proof of part (i).

A set $\sB$ of parameters $(\Ab, \Bb, \bGamma, \Gb)$ is said to be \textit{actively compact} if it constrains all active parameters within a compact set.
Formally, this means that the set
\begin{equation}\label{eq:actively_compact}
\Big\{
(\Ab, \bGamma, (\one ~ \Gb) \circ \Bb):~
(\Ab, \Bb, \bGamma, \Gb) \in \sB
\Big\}
\end{equation}
is compact.
Intuitively, this notion of active compactness allows us to focus on the topological properties of the parameters that influence the model's behavior while ignoring the potential unboundedness introduced by inactive parameters.
Specifically, if some $\beta_{i, j}$ values become arbitrarily large while the corresponding $g_{i, j} = 0$, they do not contribute to the model's effective structure.

\begin{lemma}\label{lemm:S1_almost_compact}
Suppose the data generating distribution $\cP_x$ has full rank, i.e. $\forall \vb \in \RR^{p_x + 1}$, $\vb \ne \zero$, we have $\{\xb:~ \vb^\top (1, \xb) \ne 0\} \cap \supp(\cP_x) \ne \varnothing$.
Then, for any parameter $(\Ab, \Bb, \bGamma, \Gb) \in \sS_1$, there exists $\delta > 0$ and an actively compact set $\sB \subset \sS_1$ such that
$$
\cL(\sS_1 \cap \sB^c) \cap \cO_\delta\left(
\cL(\Ab, \Bb, \bGamma, \Gb)
\right)
=
\varnothing
.
$$
\end{lemma}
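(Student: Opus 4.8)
The plan is to prove the contrapositive in sequential form and then feed it into the one‑sided identifiability Lemmas \ref{lemm:y_to_w}, \ref{lemm:w_to_z}, and \ref{lemm:law_to_params}, always placing the fixed reference parameter $(\Ab,\Bb,\bGamma,\Gb)$ on the hypothesis side. Writing $\Phi(\Ab',\Bb',\bGamma',\Gb'):=(\Ab',\bGamma',(\one~\Gb')\circ\Bb')$ for the active‑parameter map of \eqref{eq:actively_compact}, through which $\cL$ factors, it suffices to show that every sequence $(\Ab_n,\Bb_n,\bGamma_n,\Gb_n)\in\sS_1$ with $\cL(\Ab_n,\Bb_n,\bGamma_n,\Gb_n)\to\cL(\Ab,\Bb,\bGamma,\Gb)$ has $\{\Phi(\Ab_n,\Bb_n,\bGamma_n,\Gb_n)\}$ precompact; then for small $\delta$ the set $\cL^{-1}\big(\overline{\cO_\delta(\cL(\Ab,\Bb,\bGamma,\Gb))}\big)\cap\sS_1$ is actively compact and serves as $\sB$. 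Because $\Gb_n$ ranges over a finite set I first pass to a subsequence with $\Gb_n\equiv\Gb_0$. Since $\Ab_n\in[0,1]^{q\times d}$ and every coordinate of $\bGamma_n$ and every active coordinate of $\Bb_n$ lies in the sequentially compact $[-\infty,\infty]$, a further subsequence makes all active coordinates converge to a limit $(\bar\Ab,\bar\bGamma,\bar\Bb)$ in the extended sense; as logistic, softmax, and Bernoulli outputs take values in compact sets, the conditional laws $\law(\yb\mid\wb,\cdot)$, $\law(\wb\mid z,\cdot)$, $\law(z\mid\xb,\cdot)$ converge to well‑defined limits retaining the hierarchical factorization. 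By hypothesis the limiting law of $\yb\mid\xb$ equals $\law(\yb\mid\xb,\Ab,\Bb,\bGamma,\Gb)$ for $\cP_x$‑a.e.\ $\xb$ (the equivalence noted after \eqref{eq:marg}); in the notation of Section \ref{supp_ssec:proof_strict} this reads $\bar\bLambda\,\bar\bnu(\xb)=\bLambda\bnu(\xb)$, with barred symbols denoting the arrays of the limit.

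The decisive feature is that the reference arrays are \emph{strictly interior}: since $(\Ab,\Bb,\bGamma,\Gb)\in\sS_1$ has finite logits and $\balpha_j\in(0,1)^d$, every entry of $\bLambda$ and of $\bSigma$ lies in $(0,1)$, and $\bmu(\xb)=\softmax(\cdot)$ is strictly interior. Fixing $\xb_0\in\supp(\cP_x)$ and applying Lemma \ref{lemm:y_to_w} with reference $(\Gb,\Bb)$ to the valid stochastic decomposition $(\bar\bLambda,\bar\bnu(\xb_0))$ yields an $\xb$‑independent permutation $\Qb$ with $\bar\bLambda=\bLambda\Qb$, and hence $\bar\bnu(\xb)=\Qb^\top\bnu(\xb)$ for every $\xb$ by injectivity of $\bLambda$ (Kruskal rank $2^q$). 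As $\bar\bLambda=\bLambda\Qb$ inherits all entries in $(0,1)$, every limiting logit $\bar\beta_{i,0}+\sum_j (\Gb_0)_{i,j}w_j\bar\beta_{i,j}$ is finite, and differencing over $\wb$ forces every active $\bar\beta_{i,j}$ to be finite, excluding escape of $\Bb_n$. For $\Ab_n$, from $\bSigma\bmu(\xb_0)=\bnu(\xb_0)=\Qb\,\bar\bnu(\xb_0)=(\Qb\bar\bSigma)\bar\bmu(\xb_0)$ I apply Lemma \ref{lemm:w_to_z} with reference $\bSigma$ (its hypotheses inherited from $\Ab$) and tilde side $(\Qb\bar\bSigma,\bar\bmu(\xb_0))$, obtaining $\Qb\bar\bSigma=\bSigma\Rb$, i.e.\ $\bar\bSigma=\Qb^\top\bSigma\Rb$; being a row‑ and column‑permutation of $\bSigma$, the matrix $\bar\bSigma$ has all entries in $(0,1)$, which is impossible if some $\bar\alpha_{j,h}\in\{0,1\}$, thus excluding boundary escape of $\Ab_n$.

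The remaining case — divergence of $\bGamma_n$ — is where full rank of $\cP_x$ is essential. Injectivity of $\bSigma$ upgrades the previous identity to $\bar\bmu(\xb)=\Rb^\top\bmu(\xb)$ for $\cP_x$‑a.e.\ $\xb$, so the limiting class probabilities equal a fixed permutation of the strictly interior reference softmax and lie in $(0,1)^d$. Consequently each limiting log‑odds $\lim_n(\gamma^{(n)}_{h,0}+(\bgamma^{(n)}_h)^\top\xb)=\log(\bar\mu_h(\xb)/\bar\mu_d(\xb))$ is finite and affine in $(1,\xb)$; evaluating at $p_x+1$ points of $\supp(\cP_x)$ whose augmented vectors $(1,\xb)$ are linearly independent — available precisely because $\cP_x$ has full rank — shows that $(\gamma^{(n)}_{h,0},\bgamma^{(n)}_h)$ converges finitely for each $h$, excluding escape of $\bGamma_n$. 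With all three escape modes ruled out, $\Phi(\Ab_n,\Bb_n,\bGamma_n,\Gb_n)$ converges in the active‑parameter space, contradicting non‑precompactness and completing the reduction.

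I expect the crux to be twofold. First, one must verify that the permutations $\Qb$ and $\Rb$ produced by Lemmas \ref{lemm:y_to_w} and \ref{lemm:w_to_z}, extracted at a single $\xb_0$ from the covariate‑free arrays $\bLambda$ and $\bSigma$, are genuinely independent of $\xb$, so that the identity $\bar\bmu(\xb)=\Rb^\top\bmu(\xb)$ propagates across all covariate values; this is what lets the reference sit on the one‑sided hypothesis side of both lemmas throughout, and is what makes strict interiority of $\bLambda,\bSigma,\bmu$ bite against every degenerate limit (including the awkward possibility that some active $\bar\beta_{i,j}$ vanishes, which is harmless here since we never need the limit to lie in $\sS_1$). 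Second, the softmax coercivity in the $\bGamma$ step must convert pointwise convergence of strictly interior class probabilities into finite convergence of the regression coefficients, relying on the full‑rank geometry of $\cP_x$ exactly as in Lemma \ref{lemm:law_to_params}(iii). The inactive coordinates of $\Bb_n$ play no role and are correctly ignored by the active‑compactness formulation.
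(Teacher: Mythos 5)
Your overall strategy — pass to a subsequence with fixed $\Gb$ and active parameters converging in the extended reals, pin the limiting probability arrays to shuffles of the strictly interior reference arrays via Lemmas \ref{lemm:y_to_w} and \ref{lemm:w_to_z}, and rule out each escape mode ($\Bb$, $\Ab$, $\bGamma$) by strict interiority plus full rank of $\cP_x$ — is exactly the skeleton of the paper's proof. However, there is a genuine gap at the step on which everything downstream rests: you assert that $\cL(\Ab_n,\Bb_n,\bGamma_n,\Gb_n)\to\cL(\Ab,\Bb,\bGamma,\Gb)$ forces the limiting conditional law of $\yb\mid\xb$ to equal $\law(\yb\mid\xb,\Ab,\Bb,\bGamma,\Gb)$ for $\cP_x$-a.e.\ $\xb$, citing the remark after \eqref{eq:marg}. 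That remark concerns \emph{exact} equality of the \emph{joint} law of $(\xb,\yb)$ for two model parameters, which is trivially equivalent to a.e.\ equality of conditionals since the $\xb$-marginal is fixed. Here $\cL$ is the marginal law of $\yb$ alone (a mixture over $\cP_x$), and the hypothesis only gives that the limiting \emph{mixture} $\int \bar{P}(\cdot\mid\xb)\,\cP_x(\ud\xb)$ equals the reference mixture. Equality of mixtures does not imply a.e.\ equality of the conditional components: the limit $\bar\bmu(\xb)$ of diverging softmaxes could differ from a permuted $\bmu(\xb)$ on a positive-measure set while integrating to the same vector. In fact, establishing the pointwise identity $\bar\bmu(\xb)=\Rb^\top\bmu(\xb)$ a.e.\ is essentially equivalent to excluding the $\bGamma$-escape, so invoking it there is circular. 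Without the a.e.\ conditional identity, your application of Lemma \ref{lemm:y_to_w} at a fixed $\xb_0$, the propagation $\bar\bnu(\xb)=\Qb^\top\bnu(\xb)$ by injectivity of $\bLambda$, and the finite-log-odds argument for $\bGamma_n$ all lack a starting equation.

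The paper circumvents precisely this difficulty by integrating $\xb$ out: it redefines $\bnu$ and $\bmu$ as the \emph{marginal} laws of $\wb$ and $z$ (so the hypothesis directly yields $\bar\bLambda\bar\bnu=\bLambda\bnu$ and $\bar\bSigma\bar\bmu=\bSigma\bmu$ for the integrated arrays), notes that Lemmas \ref{lemm:y_to_w} and \ref{lemm:w_to_z} remain valid in this integrated form because their proofs never use $\xb$, and then handles the divergence of covariate slopes ($C_7$) separately: full rank of $\cP_x$ produces a point $\hat\xb\in\supp(\cP_x)$ and a neighborhood $\cO_\kappa(\hat\xb)$ of positive $\cP_x$-measure on which some class probability degenerates to zero, and rerunning the same integrated identification with the restricted covariate measure $\cP_x 1_{\cO_\kappa(\hat\xb)}$ yields the contradiction. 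If you had the pointwise identity, your treatment of the $\bGamma$ case (evaluating finite limiting log-odds at $p_x+1$ points with linearly independent $(1,\xb)$) would indeed be cleaner than this localization; but as written it presupposes what must be proved. To repair your proof, replace all per-$\xb$ identities by their $\cP_x$-integrated counterparts (or by counterparts integrated against restrictions of $\cP_x$ to small neighborhoods), which is exactly the paper's route.
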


At a high level, Lemma \ref{lemm:S1_almost_compact} implies that the non-compact portion of $\sS_1$, namely $\sS_1 \cap \sB^c$, remains sufficiently distant from the marginal distribution $\cL(\Ab, \Bb, \bGamma, \Gb)$ within the probability simplex $\cS^{2^p - 1}$.
This separation ensures that, when analyzing $\cO_\delta$ neighborhoods in $\cS^{2^p - 1}$ to establish posterior consistency, the parameter space $\sS_1$ is effectively equivalent to the actively compact set $\sB$ under the pullback mapping $\cL^{-1}$.

With the topological properties of $\sS_1$ established in Lemma \ref{lemm:S1_almost_compact}, we now proceed with the proof of Theorem \ref{theo:post} (ii), in a manner similar to the proof of Theorem \ref{theo:post} (i).

\begin{proof}[Proof of Theorem \ref{theo:post} (ii)]
For an arbitrary true parameter $(\Ab^*, \Bb^*, \bGamma^*, \Gb^*) \in \supp(\pi)$, Lemma \ref{lemm:prior_supp} and Schwartz' theorem \citep{ghosal2017fundamentals} together ensure that the posterior distribution over the probability simplex $\cS^{2^p - 1}$ is strongly consistent at $\cL(\Ab^*, \Bb^*, \bGamma^*, \Gb^*)$.
That is, for any $\delta > 0$, we have
$$
\lim_{N \to \infty} \PP\Big(
\cO_\delta(\cL(\Ab^*, \Bb^*, \bGamma^*, \Gb^*))^c
~\Big|~
\xb^{(1:N)}, \yb^{(1:N)}
\Big)
=
0
,
$$
almost surely under $\PP(\xb^{(1:\infty)}, \yb^{(1:\infty)} ~|~ \Ab^*, \Bb^*, \bGamma^*, \Gb^*)$.

Applying Lemma \ref{lemm:S1_almost_compact}, we establish the existence of $\delta^* > 0$ and a compact set $\sB \subset \sS_1$ such that
$$
\cL(\sS_1 \cap \sB^c) \cap \cO_{\delta^*}(\cL(\Ab^*, \Bb^*, \bGamma^*, \Gb^*))
=
\varnothing
.
$$
Let $\epsilon > 0$ be arbitrary.
Since $\sB$ is actively compact, the set $\cO_\epsilon(\Ab^*, \Bb^*, \bGamma^*, \Gb^*)^c \cap \sB$ is actively compact in $\sS_1$.
Moreover, since $\cL$ is a continuous function of $(\Ab, \Bb, \bGamma, \Gb)$ that depends only on the active parameters $\Ab, \bGamma, (\one ~ \Gb) \circ \Bb$, it follows that $\cL(\cO_\epsilon(\Ab^*, \Bb^*, \bGamma^*, \Gb^*)^c \cap \sB)$ is compact in $\cS^{2^p - 1}$.
Given that $\perm_{\sS}(\Ab^*, \Bb^*, \bGamma^*, \Gb^*) = \marg_{\sS}(\Ab^*, \Bb^*, \bGamma^*, \Gb^*)$, we obtain
$$
\cL(\cO_\epsilon(\Ab^*, \Bb^*, \bGamma^*, \Gb^*)^c \cap \sB) \cap \cL(\Ab^*, \Bb^*, \bGamma^*, \Gb^*)
=
\varnothing
.
$$
This separation ensures the existence of $\delta > 0$ such that
$$
\cL(\cO_\epsilon(\Ab^*, \Bb^*, \bGamma^*, \Gb^*)^c \cap \sB)
\subset
\cO_\delta(\cL(\Ab^*, \Bb^*, \bGamma^*, \Gb^*))^c
.
$$
Thus, we obtain
\begin{align*}
&\qquad
\lim_{N \to \infty} \PP\Big(
\cO_\epsilon(\Ab^*, \Bb^*, \bGamma^*, \Gb^*)^c
~\Big|~
\xb^{(1:N)}, \yb^{(1:N)}
\Big)
\\&\le
\lim_{N \to \infty} \PP\Big(
\cO_\delta(\cL(\Ab^*, \Bb^*, \bGamma^*, \Gb^*))^c \cup (\sS_1 \cap \sB^c)
~\Big|~
\xb^{(1:N)}, \yb^{(1:N)}
\Big)
\\&\le
\lim_{N \to \infty} \PP\Big(
\cO_{\min\{\delta, \delta^*\}}(\cL(\Ab^*, \Bb^*, \bGamma^*, \Gb^*))^c
~\Big|~
\xb^{(1:N)}, \yb^{(1:N)}
\Big)
\\&=
0
.
\end{align*}
This result establishes that the posterior distribution is strongly consistent at $(\Ab^*, \Bb^*, \bGamma^*, \Gb^*)$.
\end{proof}

To prove (iii) of Theorem \ref{theo:post}, we apply Doob's theory of posterior consistency \citep{doob1949application}.
For a modern treatment of Doob's theory, see Section 6.2 of \citet{ghosal2017fundamentals}.
A key step in Doob's theory is to establish the measurability of the parameters $(\Ab, \Bb, \bGamma, \Gb)$ under the $\sigma$-algebra $\sigma(\{\xb^{(n)}, \yb^{(n)}\}_{n = 1}^\infty)$, which is typically proven using the identifiability of the model.
However, in our setting, the parameters are inherently non-identifiable due to permutation invariances.

To address this issue, we introduce the quotient parameter space, which consists of equivalence classes of parameters under these permutations.
By considering the posterior distribution over this quotient space, we eliminate the ambiguity introduced by permutations, thereby allowing us to establish the required measurability conditions necessary for applying Doob's theorem.
We formally define the quotient parameter space as follows.

Within the parameter space $\sS$, we observe that permutation of parameters induces an equivalence relation.
Specifically, for all $(\Ab, \Bb, \bGamma, \Gb), (\Ab', \Bb', \bGamma', \Gb') \in \sS$, we define
\begin{equation}\label{eq:perm_equiv}
(\Ab, \Bb, \bGamma, \Gb)
\sim
(\Ab', \Bb', \bGamma', \Gb')
\quad
\iff
\quad
\perm_{\sS}(\Ab, \Bb, \bGamma, \Gb)
=
\perm_{\sS}(\Ab', \Bb', \bGamma', \Gb')
.
\end{equation}
For simplicity, we denote the equivalence class of $(\Ab, \Bb, \bGamma, \Gb)$ by $\perm_{\sS}(\Ab, \Bb, \bGamma, \Gb)$ as well.
Letting $\perm(\sS) := \{\perm_{\sS}(\Ab, \Bb, \bGamma, \Gb):~ (\Ab, \Bb, \bGamma, \Gb) \in \sS\}$ represent the space of all parameter equivalence classes, the mapping
$$
\perm_{\sS}:~
(\Ab, \Bb, \bGamma, \Gb)
\to
\perm_{\sS}(\Ab, \Bb, \bGamma, \Gb)
$$
is a surjective mapping from $\sS$ onto $\perm(\sS)$, thereby defining a quotient topology on $\perm(\sS)$.

When the parameter space $\sS \subset \sS_2$, we recall that $\sS$ is said to be permutation invariant if $\perm_{\sS_2}(\Ab, \Bb, \bGamma, \Gb) \subset \sS$ for all $(\Ab, \Bb, \bGamma, \Gb) \in \sS$.
That is, if a set of parameters $(\Ab, \Bb, \bGamma, \Gb)$ belongs to $\sS$, then all of its permutations (over the dimensions of $\wb$ and the classes of $z$) must also belong to $\sS$.
The following lemma shows that for a permutation invariant $\sS$, the quotient topology on $\perm(\sS)$ is equivalent to the topology induced by the natural metric
\begin{equation}\label{eq:perm_equiv_metric}\begin{aligned}
&\qquad
\left\|
\perm_{\sS}(\Ab, \Bb, \bGamma, \Gb) - \perm_{\sS}(\Ab', \Bb', \bGamma', \Gb')
\right\|_1
\\&:=
\inf_{(\tilde\Ab, \tilde\Bb, \tilde\bGamma, \tilde\Gb) \in \perm_{\sS}(\Ab, \Bb, \bGamma, \Gb)}
\inf_{(\tilde\Ab', \tilde\Bb', \tilde\bGamma', \tilde\Gb') \in \perm_{\sS}(\Ab', \Bb', \bGamma', \Gb')}
\left\|
(\tilde\Ab, \tilde\Bb, \tilde\bGamma, \tilde\Gb) - (\tilde\Ab', \tilde\Bb', \tilde\bGamma', \tilde\Gb')
\right\|_1
.
\end{aligned}\end{equation}

\begin{lemma}\label{lemm:quotient_metric}
Let $\sS \subset \sS_2$ be a permutation invariant parameter space.
Then, $\|\cdot\|_1$ in \eqref{eq:perm_equiv_metric} defines a metric on $\perm(\sS)$ that induces its quotient topology.
\end{lemma}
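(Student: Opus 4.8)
The plan is to recognize the equivalence relation \eqref{eq:perm_equiv} as the orbit relation of a \emph{finite} group action by isometries, and then run the standard argument for quotient metrics of finite isometric actions, the only real work being the bookkeeping forced by the inactive entries of $\Bb$. Concretely, for each pair of permutation matrices $(\Pb_z, \Pb_w)$ of sizes $d \times d$ and $q \times q$, I introduce the map $\rho_{\Pb_z, \Pb_w}$ on $\sS$ sending $(\Ab, \Bb, \bGamma, \Gb)$ to the parameter with $\bGamma' = \Pb_z \bGamma$, $\Ab' = \Pb_w \Ab \Pb_z^\top$, $\Gb' = \Gb \Pb_w^\top$, and $(\one~\Gb')\circ\Bb' = ((\one~\Gb)\circ\Bb)\diag(1,\Pb_w)^\top$. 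These form an action of the finite group $G$ of such pairs, and by \eqref{eq:perm} the class $\perm_\sS(\Ab, \Bb, \bGamma, \Gb)$ is exactly the $G$-orbit, with the inactive entries $\beta_{i,j}$ (those with $g'_{i,j}=0$) free within each orbit element. Since permuting rows and columns only reshuffles coordinates, each $\rho_{\Pb_z, \Pb_w}$ is an $L_1$-isometry on the active coordinates $(\Ab, \bGamma, (\one~\Gb)\circ\Bb, \Gb)$. Writing $P=(\Ab,\Bb,\bGamma,\Gb)$ and $[P]=\perm_\sS(P)$, the first step is to observe that in the double infimum \eqref{eq:perm_equiv_metric} the inactive coordinates can always be matched to contribute nothing while the permutations range over the \emph{finite} set $G$; hence the infimum collapses to a minimum over $G$ of the active $L_1$ distance and is therefore \emph{attained}.

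Granting attainment, the metric axioms for $\|\cdot\|_1$ on $\perm(\sS)$ follow routinely. Non-negativity and symmetry are immediate from the ambient $L_1$ norm and the symmetric roles of the two classes. For the triangle inequality I pick representatives realizing $\|[P]-[Q]\|_1$ and $\|[Q]-[R]\|_1$ via achieving permutations $g_1,g_2\in G$, then bound $\|[P]-[R]\|_1$ using $g_1g_2$ and the isometry property: $\|\mathrm{act}(P)-\mathrm{act}(g_1g_2 R)\|_1 \le \|\mathrm{act}(P)-\mathrm{act}(g_1Q)\|_1 + \|\mathrm{act}(Q)-\mathrm{act}(g_2R)\|_1$. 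The axiom needing attainment is identity of indiscernibles: if $\|[P]-[Q]\|_1 = 0$, the minimum is achieved at representatives $\tilde{P}\in[P]$ and $\tilde{Q}\in[Q]$ with $\tilde{P}=\tilde{Q}$, and transitivity of \eqref{eq:perm_equiv} then forces $[P]=[Q]$.

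Finally, to identify the metric topology $\tau_m$ with the quotient topology $\tau_q$ I prove both inclusions. The inclusion $\tau_m \subseteq \tau_q$ holds because the projection $\perm_\sS$ is $1$-Lipschitz — taking the identity permutation and the ambient inactive entries in \eqref{eq:perm_equiv_metric} gives $\|[P]-[Q]\|_1 \le \|P - Q\|_1$ — so it is continuous and every $\tau_m$-open set pulls back to an open set. For $\tau_q \subseteq \tau_m$, where finiteness of the orbit is indispensable, I take a $\tau_q$-open $V$ and $[P]\in V$; then $\perm_\sS^{-1}(V)$ is open and contains the \emph{finite} orbit $GP$, so I can choose a single radius $r \in (0,1)$ with $B(gP, r)\subseteq \perm_\sS^{-1}(V)$ for every $g\in G$. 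Given any $[Q]$ with $\|[P]-[Q]\|_1 < r$, the defining minimum is realized at some $g$; since $r<1$ and $\Gb$ is integer-valued, the discrete blocks must already coincide, after which the inactive entries can be set equal to those of $P$ to produce a representative of $[Q]$ lying in $B(P, r)\subseteq \perm_\sS^{-1}(V)$. Hence the $r$-ball around $[P]$ is contained in $V$, so $V$ is $\tau_m$-open.

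The main obstacle is precisely the first step. Because the equivalence class is defined through $(\one~\Gb)\circ\Bb$ rather than through $\Bb$ itself, the naive coordinate permutation is not literally a global isometry of the full tuple $(\Ab,\Bb,\bGamma,\Gb)$, and the orbits are noncompact in the inactive $\Bb$-directions; a priori this threatens both attainment of the infimum and the separation property. Establishing that the double infimum in \eqref{eq:perm_equiv_metric} is nonetheless attained — by cleanly separating active from inactive coordinates, using the discreteness of $\Gb$ to align the combinatorial blocks at small radii, and exploiting that $G$ is finite — is what upgrades $\|\cdot\|_1$ from a pseudometric to a genuine metric and simultaneously powers the $\tau_q \subseteq \tau_m$ inclusion.
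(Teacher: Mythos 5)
You take essentially the same route as the paper: treat \eqref{eq:perm_equiv} as the orbit relation of the finite group $G$ of permutation pairs, collapse the double infimum, get the triangle inequality from the isometry property, and prove the two topology inclusions using finiteness of orbits and integrality of $\Gb$. However, the step you yourself single out as the crux is resolved incorrectly, and the error is not cosmetic. Your claim that "the inactive coordinates can always be matched to contribute nothing, hence the infimum collapses to a minimum over $G$ of the active $L_1$ distance" conflates two different quantities. Under free matching, any entry $(i,j)$ of $\Bb$ that is \emph{active for one representative but inactive for the other} also contributes zero (the free side is matched to the determined side), whereas in the active $L_1$ distance such an entry contributes $|\beta_{i,j}|$ (the determined value against the zero in the other active representation $(\one~\Gb)\circ\Bb$). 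The free-matching quantity is genuinely not a metric. Concretely, take $q = 1$ (so attribute permutations are trivial), $p = 4$, and let $P, Q, R$ share $\Ab$, $\bGamma$, and all intercepts; let $P$ and $R$ both have $\Gb = (1,1,1,1)^\top$ with $\beta_{4,1} = 0$ and $\beta_{4,1} = 10$ respectively, and let $Q$ have $\Gb = (1,1,1,0)^\top$, all satisfying condition ($*$) of Theorem \ref{theo:generic}. Matching $Q$'s free entry $\beta_{4,1}$ to $0$ against $P$ and to $10$ against $R$ gives $\|[P]-[Q]\|_1 = \|[Q]-[R]\|_1 = 1$ (only the $\Gb$ mismatch contributes), while $\|[P]-[R]\|_1 = 10 > 2$. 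The triangle inequality fails precisely because the optimal matching of an inactive entry is pair-dependent and cannot be realized by a single representative of $[Q]$; your isometry-based triangle-inequality argument proves a property of the active-distance minimum, not of the quantity your first step actually computes, and no argument can close this gap since that quantity is simply not a metric.

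The repair is to build the active/inactive separation into the ambient distance rather than recover it by optimization: read $\|\cdot\|_1$ on $\sS$ as the pseudometric on the active coordinates $\big(\Ab, \bGamma, (\one~\Gb)\circ\Bb, \Gb\big)$, so that parameters differing only on inactive entries are at distance zero by fiat. This is what the paper's proof does implicitly — when it expands $d_\sS$, the $\Bb$-component enters only through $\|\tilde\Gb \circ \tilde\Bb - \Gb' \circ \Bb'\|_1$ — and it is not an optional convention, since with the full-tuple $L_1$ distance the lemma itself is false by the counterexample above. Under this convention each equivalence class has only finitely many distinct active representations, $G$ honestly acts by $L_1$ isometries, the attainment you worried about is trivial, and the remainder of your argument (collapse to a minimum over $G$, separation via attainment, the $1$-Lipschitz projection for one inclusion, and the radius-less-than-one trick forcing the $\Gb$ blocks to coincide for the other) goes through essentially verbatim and matches the paper's proof.
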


As a consequence of Lemma \ref{lemm:quotient_metric}, we conclude that $\perm(\sS)$ is a Polish space.
This allows us to proceed with the proof of Theorem \ref{theo:post} (iii) using Doob's theory of posterior consistency.

\begin{proof}[Proof of Theorem \ref{theo:post} (iii)]
By Theorem \ref{theo:generic}, the subset of parameters $(\Ab, \Bb, \bGamma, \Gb)$ within $\sS_2$ for which $\perm_{\sS_2}(\Ab, \Bb, \bGamma, \Gb) \subsetneq \marg_{\sS_2}(\Ab, \Bb, \bGamma, \Gb)$ constitutes a zero-measure set.
We recall the definitions of the subsets $\sS_{2, 1}, \sS_{2, 2}, \sS_{2, 3}, \sS_{2, 4}$ from \eqref{eq:sS21}, \eqref{eq:sS22}, \eqref{eq:sS23}, and \eqref{eq:sS24} in the proof of Theorem \ref{theo:generic}.
Their union, $\sS_{2, 1} \cup \sS_{2, 2} \cup \sS_{2, 3} \cup \sS_{2, 4}$, is Borel measurable with measure zero, and ensures that for the restricted parameter space $\sS_3 = \sS_2 \cap (\sS_{2, 1} \cup \sS_{2, 2} \cup \sS_{2, 3} \cup \sS_{2, 4})^c$, every equivalence class $\perm_{\sS_3}(\balpha, \bbeta, \bgamma, \Gb)$ in the quotient space $\perm(\sS_3)$ is identifiable.

From our model formulation, the mapping from $(\Ab, \Bb, \bGamma, \Gb)$ to the marginal distribution $\law(\yb ~|~ \Ab, \Bb, \bGamma, \Gb)$ is Borel measurable.
By Lemma \ref{lemm:quotient_metric} and the lifting property of quotient topology, it follows that the mapping from the equivalence class $\perm_{\sS_3}(\Ab, \Bb, \bGamma, \Gb)$ to the marginal distribution $\law(\yb ~|~ \Ab, \Bb, \bGamma, \Gb)$ is also Borel measurable.
Since the prior $\pi$ on $\sS_2$ is absolutely continuous with respect to the base measure $\uplambda \times \upmu$, we can, without loss of generality, restrict it to $\sS_3$.
This restriction corresponds to a pushforward measure on $\perm(\sS_3)$ denoted by $\perm_\# \pi$.
Applying Proposition 6.10 in \citet{ghosal2017fundamentals}, we conclude that $\perm_{\sS_3}(\Ab, \Bb, \bGamma, \Gb)$ is measurable with respect to the $\sigma$-algebra $\sigma(\{\xb^{(n)}, \yb^{(n)}\}_{n = 1}^\infty)$.

By applying Doob's theorem (see Theorem 6.9 in \citet{ghosal2017fundamentals}), we obtain that the pushforward posterior distribution $\law(\perm_{\sS_3}(\Ab, \Bb, \bGamma, \Gb) ~|~ \xb^{(1:N)}, \yb^{(1:N)})$ is strongly consistent at the equivalence class of the true parameter $\perm_{\sS_3}(\Ab^*, \Bb^*, \bGamma^*, \Gb^*)$ for $\pi$-almost every $(\Ab^*, \Bb^*, \bGamma^*, \Gb^*) \in \sS_3$.
To link this result to posterior consistency in the original parameter space, we note that
\begin{align*}
\tilde\cO_\epsilon(\Ab^*, \Bb^*, \bGamma^*, \Gb^*)
&=
\bigcup_{(\Ab', \Bb', \bGamma', \Gb') \in \perm_{\sS_3}(\Ab^*, \Bb^*, \bGamma^*, \Gb^*)} \cO_\epsilon(\Ab', \Bb', \bGamma', \Gb')
\\&=
\perm_{\sS_3}^{-1}\left( \cO_\epsilon\left(
\perm_{\sS_3}(\Ab^*, \Bb^*, \bGamma^*, \Gb^*)
\right) \right)
,
\end{align*}
which shows that the preimage of an $\epsilon$-neighborhood in the quotient space corresponds to an $\tilde\cO_\epsilon$-neighborhood in the original space.
Thus, the strong consistency of the pushforward posterior distribution $\law(\perm_{\sS_3}(\Ab, \Bb, \bGamma, \Gb) ~|~ \xb^{(1:N)}, \yb^{(1:N)})$ at $\perm_{\sS_3}(\Ab^*, \Bb^*, \bGamma^*, \Gb^*)$ implies that
\begin{align*}
&\qquad
\lim_{N \to \infty}
\PP\left(
\tilde\cO_\epsilon(\Ab^*, \Bb^*, \bGamma^*, \Gb^*)^c
~\Big|~
\xb^{(1:N)}, \yb^{(1:N)}
\right)
\\&=
\lim_{N \to \infty}
\PP\left(
\cO_\epsilon(\perm_{\sS_3}(\Ab^*, \Bb^*, \bGamma^*, \Gb^*))^c
~\Big|~
\xb^{(1:N)}, \yb^{(1:N)}
\right)
=
0
\end{align*}
for any $\epsilon > 0$.
That is, the posterior distribution $\law((\Ab, \Bb, \bGamma, \Gb) ~|~ \xb^{(1:N)}, \yb^{(1:N)})$ is also strongly consistent under Definition \ref{defi:post} at $(\Ab^*, \Bb^*, \bGamma^*, \Gb^*)$.
Finally, since $\pi(\sS_2 \cap \sS_3^c) = 0$, we conclude that the posterior is strongly consistent for $\pi$-almost every parameter in $\sS_2$.
\end{proof}

\subsection{Proofs of Auxiliary Lemmas for Theorem \ref{theo:post}}\label{supp_ssec:proof_lemmas_post}

We begin by proving Lemma \ref{lemm:prior_supp}.

\begin{proof}[Proof of Lemma \ref{lemm:prior_supp}]
Let the parameters $(\Ab, \Bb, \bGamma, \Gb) \in \sS$ and the observation $\yb \in \{0, 1\}^p$ be arbitrary.
For any other parameter $\Ab' \in (0, 1)^{q \times d}$, we have
\begin{align*}
&\qquad
\left|
\PP(\yb ~|~ \Ab, \Bb, \bGamma, \Gb) - \PP(\yb ~|~ \Ab', \Bb, \bGamma, \Gb)
\right|
\\&\le
\int_{\RR^{p_x}} \sum_{\wb \in \{0, 1\}^q} \sum_{z \in [d]} \PP(\yb ~|~ \wb, \Gb, \Bb) ~ \big|
\PP(\wb ~|~ z, \Ab) - \PP(\wb ~|~ z, \Ab')
\big| ~ \PP(z ~|~ \xb, \bGamma) \cP_x(\ud \xb)
\\&\le
2^q d \sup_{\wb \in \{0, 1\}^q, z \in [d]} \left|
\prod_{j = 1}^q \alpha_{j, z}^{w_j} (1 - \alpha_{j, z})^{1 - w_j}
-
\prod_{j = 1}^q (\alpha_{j, z}')^{w_j} (1 - \alpha_{j, z}')^{1 - w_j}
\right|
\\&\le
2^q d \sup_{\wb \in \{0, 1\}^q, z \in [d]} \left(
\sum_{j = 1}^q |\alpha_{j, z} - \alpha_{j, z}'|
\right)
\\&=
2^q d \|\Ab - \Ab'\|_{1, 1}
,
\end{align*}
where, for the $q \times d$ matrix $\Ab$, the matrix norm $\|\Ab\|_{1, 1}$ is defined as the sum of the absolute value of all its entries, i.e. $\|\Ab\|_{1, 1} := \sum_{j = 1}^q \sum_{h = 1}^d |\alpha_{j, h}|$.

For any other parameter $\bGamma' \in \RR^{d \times (p_x + 1)}$, with its $d$th row fixed as $\gamma_{d, 0} = 0$ and $\bgamma_d = \zero$ to ensure identifiability, we have
\begin{align*}
&\qquad
\left|
\PP(\yb ~|~ \Ab, \Bb, \bGamma, \Gb) - \PP(\yb ~|~ \Ab, \Bb, \bGamma', \Gb)
\right|
\\&\le
\int \sum_{\wb \in \{0, 1\}^q} \sum_{z \in [d]} \PP(\yb ~|~ \wb, \Gb, \Bb) \PP(\wb ~|~ z, \Ab) ~ \big|
\PP(z ~|~ \xb, \bGamma) - \PP(z ~|~ \xb, \bGamma')
\big| ~ \cP_x(\ud \xb)
\\&\le
2^q d \sup_{z \in [d]} \int \left|
\frac{\exp(\gamma_{z, 0} + \bgamma_z^\top \xb)}{\sum_{h = 1}^d \exp(\gamma_{h, 0} + \bgamma_h^\top \xb)} - \frac{\exp(\gamma_{z, 0}' + (\bgamma_z')^\top \xb)}{\sum_{h = 1}^d \exp(\gamma_{h, 0}' + (\bgamma_h')^\top \xb)}
\right| \cP_x(\ud \xb)
.
\end{align*}
Notice that the functions $f(\gamma) := \frac{\exp(a \gamma + b)}{c + \exp(a \gamma + b)}$ and $g(\gamma) := \frac{c}{c + \exp(a \gamma + b)}$ have derivatives satisfying
\begin{equation}\label{eq:logistic_derivative}
|f'(\gamma)|
=
|g'(\gamma)|
=
\frac{a c \exp(a \gamma + b)}{(\exp(a \gamma + b) + c)^2}
\le
\frac{a}{4}
,\quad
\forall \gamma \in \RR
,~
\forall b, c \in \RR
.
\end{equation}
Thus, it follows that
\begin{align*}
&\qquad
\left|
\PP(\yb ~|~ \Ab, \Bb, \bGamma, \Gb) - \PP(\yb ~|~ \Ab, \Bb, \bGamma', \Gb)
\right|
\\&\le
2^q d \sup_{z \in [d]} \int \left|
\frac{\exp(\gamma_{z, 0} + \bgamma_z^\top \xb)}{\sum_{h = 1}^d \exp(\gamma_{h, 0} + \bgamma_h^\top \xb)} - \frac{\exp(\gamma_{z, 0}' + (\bgamma_z')^\top \xb)}{\sum_{h = 1}^d \exp(\gamma_{h, 0}' + (\bgamma_h')^\top \xb)}
\right| \cP_x(\ud \xb)
\\&\le
2^q d \cdot \frac{\max\left\{1,~ \int \|\xb\|_\infty \cP_x(\ud \xb)\right\}}{4} \|\bGamma - \bGamma'\|_{1, 1}
.
\end{align*}

For any other parameter $\Bb' \in \RR^{p \times (q + 1)}$, we have
\begin{align*}
&\qquad
\left|
\PP(\yb ~|~ \Ab, \Bb, \bGamma, \Gb) - \PP(\yb ~|~ \Ab, \Bb', \bGamma, \Gb)
\right|
\\&\le
\int \sum_{\wb \in \{0, 1\}^q} \sum_{z \in [d]} ~\big|
\PP(\yb ~|~ \wb, \Gb, \Bb) - \PP(\yb ~|~ \wb, \Gb, \Bb')
\big|~ \PP(\wb ~|~ z, \Ab) \PP(z ~|~ \xb, \bGamma) \cP_x(\ud \xb)
\\&\le
2^q d \sup_{\wb \in \{0, 1\}^q} \left|
\prod_{i = 1}^p \frac{\exp(\beta_{i, 0} + (\gb_i \circ \bbeta_i)^\top \wb)}{1 + \exp(\beta_{i, 0} + (\gb_i \circ \bbeta_i)^\top \wb)}
-
\prod_{i = 1}^p \frac{\exp(\beta_{i, 0}' + (\gb_i \circ \bbeta_i')^\top \wb)}{1 + \exp(\beta_{i, 0}' + (\gb_i \circ \bbeta_i')^\top \wb)}
\right|
\\&\le
2^q d \sup_{\wb \in \{0, 1\}^q} \sum_{i = 1}^p \left|
\frac{\exp(\beta_{i, 0} + (\gb_i \circ \bbeta_i)^\top \wb)}{1 + \exp(\beta_{i, 0} + (\gb_i \circ \bbeta_i)^\top \wb)}
-
\frac{\exp(\beta_{i, 0}' + (\gb_i \circ \bbeta_i')^\top \wb)}{1 + \exp(\beta_{i, 0}' + (\gb_i \circ \bbeta_i')^\top \wb)}
\right|
\\&\le
2^q d \frac{\|\Bb - \Bb'\|_{1, 1}}{4}
,
\end{align*}
where again we have applied \eqref{eq:logistic_derivative}.

Since $\Gb$ has finitely many possible values and given all the results established above, we conclude that $\cL$ is a Lipschitz function of $(\Ab, \Bb, \bGamma, \Gb)$, with a Lipschitz constant $C$ that depends on the first moment of $\cP_x$.

Let $(\Ab, \Bb, \bGamma, \Gb) \in \supp(\pi)$ be arbitrary.
For simplicity, we denote the marginal distribution given $(\Ab, \Bb, \bGamma, \Gb)$ by $P := \cL(\Ab, \Bb, \bGamma, \Gb)$.
Since $P$ always lies within the interior of the probability simplex $\cS^{2^p - 1}$, there exists a positive constant $\kappa_P > 0$, dependent on $P$, such that the closed neighborhood $\overline\cO_{\kappa_P}(P)$ is also contained within the interior of $\cS^{2^p - 1}$.

On the compact set $\overline\cO_{\kappa_P}(P)$, we can identify positive constants $m_1, m_2 > 0$ that serve as lower and upper bounds for the Radon-Nikodym derivative $\frac{\ud P'}{\ud P}$, which is a continuous function of $P'$, i.e.
$$
0 < m_1 \le \frac{\ud P'}{\ud P} \le m_2
,\quad
\forall P' \in \overline\cO_{\kappa_P}(P)
.
$$
Applying the reverse Pinsker's inequality \citep{binette2019note}, we obtain
$$
D_{KL}(P, P')
\le
\frac12 \left(
\frac{\log m_1}{1 - m_1} + \frac{\log m_2}{m_2 - 1}
\right) \|P - P'\|_1
,\quad
\forall P' \in \overline\cO_{\kappa_P}(P)
.
$$
Defining the positive constant $m_P := \frac12 \left( \frac{\log m_1}{1 - m_1} + \frac{\log m_2}{m_2 - 1} \right)$, this result implies
$$
\cO_\epsilon^{KL}(P)
\supset
\cO_{\min\{\kappa_P, \frac{\epsilon}{m_P}\}}(P)
,\quad
\forall \epsilon > 0
.
$$
By the Lipschitz continuity of $\cL$, we further obtain
$$
\cO_{\min\{\kappa_P, \frac{\epsilon}{m_P}\}}(P)
\supset
\cL\left(
\cO_{C^{-1} \min\{\kappa_P, \frac{\epsilon}{m_P}\}}(\Ab, \Bb, \bGamma, \Gb)
\right)
.
$$
Since $(\Ab, \Bb, \bGamma, \Gb) \in \supp(\pi)$, for any $\epsilon > 0$, it follows that
$$
(\cL_\# \pi)\left(
\cO_\epsilon^{KL}(P)
\right)
\ge
(\cL_\# \pi)\left(
\cO_{\min\{\kappa_P, \frac{\epsilon}{m_P}\}}(P)
\right)
\ge
\pi\left(
\cO_{C^{-1} \min\{\kappa_P, \frac{\epsilon}{m_P}\}}(\Ab, \Bb, \bGamma, \Gb)
\right)
>
0
.
$$
Thus, $P = \cL(\Ab, \Bb, \bGamma, \Gb)$ is in the KL support of the pushforward prior measure $\cL_\# \pi$ on $\cS^{2^p - 1}$.
\end{proof}

Next, we proceed with the proof of Lemma \ref{lemm:S1_almost_compact}.

\begin{proof}[Proof of Lemma \ref{lemm:S1_almost_compact}]
We prove by contradiction and suppose that for all $\delta > 0$ and any actively compact set $\sB \subset \sS_1$, there exists a parameter $(\Ab^*, \Bb^*, \bGamma^*, \Gb^*) \in \sS_1$ such that
\begin{equation}\label{eq:assu_contra}
\cL(\sS_1 \cap \sB^c) \cap \cO_\delta(\cL(\Ab^*, \Bb^*, \bGamma^*, \Gb^*))
\ne
\varnothing
.
\end{equation}

We consider the sequence of positive values $\{\delta_m := \frac{1}{m}\}_{m = 1}^\infty$ and the sequence of actively compact sets $\{\sB_m\}_{m = 1}^\infty$ defined as
\begin{equation}\label{eq:sBm_seq}\begin{aligned}
\sB_m
:=
\Big\{
(\Ab, \Bb, \bGamma, \Gb) \in \sS_1:~
\forall i \in [p],~
\forall j \in [q],~
\forall h \in [d],~
\forall k \in [p_x],~
|\gamma_{j, 0}| \le k,~
|\gamma_{h, k}| \le k,
&\\
\frac{1}{k} \le \alpha_{j, h} \le 1 - \frac{1}{k},~
|\beta_{i, 0}| \le k,~
|g_{i, j} \beta_{i, j}| \le k
&
\Big\}
.
\end{aligned}\end{equation}
By the assumption made in \eqref{eq:assu_contra}, corresponding to the sequence $\{\sB_m\}_{m = 1}^\infty$, there further exists a sequence of parameters $\{(\Ab^{(m)}, \Bb^{(m)}, \bGamma^{(m)}, \Gb^{(m)})\}_{m = 1}^\infty \subset \sS_1$ such that, for each $m \in \NN$,
$$
(\Ab^{(m)}, \Bb^{(m)}, \bGamma^{(m)}, \Gb^{(m)}) \notin \sB_m
,\quad
\cL(\Ab^{(m)}, \Bb^{(m)}, \bGamma^{(m)}, \Gb^{(m)}) \in \cO_{\delta_m}(\cL(\Ab^*, \Bb^*, \bGamma^*, \Gb^*))
.
$$
Utilizing the concept of the extended real line $\RR \cup \{\pm \infty\}$, we observe that for each active parameter dimension, including each $\gamma_{h, k}$, $\alpha_{j, h}$, $\beta_{i, 0}$, or $g_{i, j} \beta_{i, j}$, we can extract a subsequence from $\{(\Ab^{(m)}, \Bb^{(m)}, \bGamma^{(m)}, \Gb^{(m)})\}_{m = 1}^\infty$ that converges in that specific dimension, potentially to $\infty$ or $-\infty$.
By sequentially applying this procedure across all parameter dimensions, we obtain a convergent subsequence of $\{(\Ab^{(m)}, \Bb^{(m)}, \bGamma^{(m)}, \Gb^{(m)})\}_{m = 1}^\infty$.

For notational simplicity, we redefine $\{(\Ab^{(m)}, \Bb^{(m)}, \bGamma^{(m)}, \Gb^{(m)})\}_{m = 1}^\infty$ to represent this convergent subsequence, which must satisfy at least one of the following conditions:
\begin{enumerate}
\item[$(C_1)$]
$\exists i \in [p]$ s.t. $\lim_{m \to \infty} \beta_{i, 0}^{(m)} = \infty$;
\item[$(C_2)$]
$\exists i \in [p]$ s.t. $\lim_{m \to \infty} \beta_{i, 0}^{(m)} = -\infty$;
\item[$(C_3)$]
$\exists i \in [p], j \in [q]$ s.t. $\lim_{m \to \infty} g_{i, j}^{(m)} = 1$ and $\lim_{m \to \infty} \beta_{i, j}^{(m)} = \infty$;
\item[$(C_4)$]
$\exists i \in [p], j \in [q]$ s.t. $\lim_{m \to \infty} g_{i, j}^{(m)} = 1$ and $\lim_{m \to \infty} \beta_{i, j}^{(m)} = -\infty$;
\item[$(C_5)$]
$\exists j \in [q], h \in [d]$ s.t. $\lim_{m \to \infty} \alpha_{j, h}^{(m)} = 0$;
\item[$(C_6)$]
$\exists j \in [q], h \in [d]$ s.t. $\lim_{m \to \infty} \alpha_{j, h}^{(m)} = 1$;
\item[$(C_7)$]
$\exists h \in [d], k \in [p_k]$ s.t. $\lim_{m \to \infty} \gamma_{h, k}^{(m)} \in \{\pm\infty\}$.
\item[$(C_8)$]
$\exists h \in [d]$ s.t. $\lim_{m \to \infty} \gamma_{h, 0}^{(m)} = \infty$;
\item[$(C_9)$]
$\exists h \in [d]$ s.t. $\lim_{m \to \infty} \gamma_{h, 0}^{(m)} = -\infty$.
\end{enumerate}

We recall the definitions from \eqref{eq:Lambda_st} and \eqref{eq:Sigma_th} in the proof of Theorem \ref{theo:strict}.
The $2^p \times 2^q$ probability matrix $\bLambda$ characterizes the conditional distribution $\law(\yb ~|~ \wb, \Gb, \Bb)$, while the $2^q \times d$ probability matrix $\bSigma$ characterizes the conditional distribution $\law(\wb ~|~ z, \Ab)$.
In contrast to the definitions in \eqref{eq:nu_t} and \eqref{eq:mu_h}, we now redefine the probability vectors $\bnu$ and $\bmu$ by integrating out $\xb$ with respect to its distribution $\cP_x$.
Specifically, the $2^q$-dimensional probability vector $\bnu$ now characterizes the marginal distribution $\law(\wb ~|~ \Ab, \bGamma)$, while the $d$-dimensional probability vector $\bmu$ characterizes the marginal distribution $\law(z ~|~ \Gamma)$.

Importantly, despite this slight modification in definition, where $\xb$ has been integrated out, Lemmas \ref{lemm:y_to_w} and \ref{lemm:w_to_z} remain valid for the unique identification of $(\bLambda, \bnu)$ from $\bLambda \bnu$ and $(\bSigma, \bmu)$ from $\bSigma \bmu$, up to column permutations.
This can be verified by examining their proofs, which do not rely on the explicit presence of $\xb$ but rather on the structure of the probability matrices.

We put superscripts $(m)$ or $*$ on the probability matrices $\bLambda, \bSigma$ and probability vectors $\bnu, \bmu$ corresponding to the parameters $(\Ab^{(m)}, \Bb^{(m)}, \bGamma^{(m)}, \Gb^{(m)})$ or $(\Ab^*, \Bb^*, \bGamma^*, \Gb^*)$.
By the continuity of the mapping from the parameters $(\Ab, \Bb, \bGamma, \Gb)$ to the probability matrices and vectors $\bLambda, \bnu, \bSigma, \bmu$, it follows that the sequence $\{(\bLambda^{(m)}, \bnu^{(m)}, \bSigma^{(m)}, \bmu^{(m)})\}_{m = 1}^\infty$ also converges.

To deal with the simultaneous shuffling of columns in $\bLambda$ and entries in $\bnu$, we define $\shuff(\bLambda, \bnu)$ as the set of all possible shuffled versions of $(\bLambda, \bnu)$, given by
$$
\shuff(\bLambda, \bnu)
:=
\Big\{
(\bLambda', \bnu'):~
\exists \text{ permutation matrix } \Qb \in \RR^{2^q \times 2^q}
\text{ s.t. }
\bLambda' = \bLambda \Qb
,~
\bnu' = \Qb^\top \bnu
\Big\}
.
$$
Similarly, we define $\shuff(\bSigma, \bmu)$ as the set of all possible shuffled versions of $(\bSigma, \bmu)$, given by
$$
\shuff(\bSigma, \bmu)
:=
\Big\{
(\bSigma', \bmu'):~
\exists \text{ permutation matrix } \Rb \in \RR^{d \times d}
\text{ s.t. }
\bSigma' = \bSigma \Rb
,~
\bmu' = \Rb^\top \bmu
\Big\}
.
$$
We equip the space of $(\bLambda, \bnu, \bSigma, \bmu)$ with the $L_1$ distance.
For any $\kappa > 0$, we define the union of open neighborhoods around each shuffled version as
$$
\tilde\cO_\kappa(\bLambda, \bnu)
:=
\bigcup_{(\bLambda', \bnu') \in \shuff(\bLambda, \bnu)} \cO_\kappa(\bLambda', \bnu')
,\quad
\tilde\cO_\kappa(\bSigma, \bmu)
:=
\bigcup_{(\bSigma', \bmu') \in \shuff(\bSigma, \bmu)} \cO_\kappa(\bSigma', \bmu')
.
$$

By Lemmas \ref{lemm:y_to_w} and \ref{lemm:w_to_z}, we establish the following equivalence relations:
$$
\bLambda^{(m)} \bnu^{(m)} = \bLambda' \bnu'
\iff
(\bLambda^{(m)}, \bnu^{(m)}) \in \shuff(\bLambda', \bnu')
,
$$
$$
\bSigma^{(m)} \bmu^{(m)} = \bSigma' \bmu'
\iff
(\bSigma^{(m)}, \bmu^{(m)}) \in \shuff(\bSigma', \bmu')
.
$$
Notably, the mapping from $(\bLambda, \bnu)$ to their product $\bLambda \bnu$, which defines $\cL(\Ab, \Bb, \bGamma, \Gb)$, is continuous, as is the mapping from $(\bSigma, \bmu)$ to their product $\bnu = \bSigma \bmu$.
Since all entries of $\bLambda, \bnu, \bSigma, \bmu$ are constrained within $[0, 1]$, the domains of both mappings are compact.
Consequently, if
$$
\lim_{m \to \infty} (\bLambda^{(m)}, \bnu^{(m)})
\notin
\shuff(\bLambda^*, \bnu^*)
,
$$
then the sequence $\{(\bLambda^{(m)}, \bnu^{(m)})\}_{m = 1}^\infty$ must eventually remain within some compact set $\sC$ that is disjoint from $\shuff(\bLambda^*, \bnu^*)$, i.e. $\sC \cap \shuff(\bLambda^*, \bnu^*) = \varnothing$.
The image of $\sC$ under the mapping $(\bLambda, \bnu) \to \bLambda \bnu$ within the probability simplex $\cS^{2^p - 1}$ must then also be compact and at a positive distance from $\cL(\Ab^*, \Bb^*, \bGamma^*, \Gb^*)$.
However, this contradicts the assumption that $\cL(\Ab^{(m)}, \Bb^{(m)}, \bGamma^{(m)}, \Gb^{(m)}) \in \cO_{\delta_m}(\cL(\Ab^*, \Bb^*, \bGamma^*, \Gb^*))$ for each $m \in \NN$.

This contradiction implies that $\lim_{m \to \infty} (\bLambda^{(m)}, \bnu^{(m)}) \in \shuff(\bLambda^*, \bnu^*)$.
A similar argument applies to $(\bSigma, \bmu)$, yielding $\lim_{m \to \infty} (\bSigma^{(m)}, \bmu^{(m)}) \in \shuff(\bSigma^*, \bmu^*)$.
Since the shuffling does not affect the subsequent analysis, we can assume, without loss of generality, that
\begin{equation}\label{eq:prob_mat_vec_convergent}
\lim_{m \to \infty} (\bLambda^{(m)}, \bnu^{(m)}, \bSigma^{(m)}, \bmu^{(m)})
=
(\bLambda^*, \bnu^*, \bSigma^*, \bmu^*)
.
\end{equation}

We now analyze conditions $C_1$ through $C_9$ individually and demonstrate that each leads to a contradiction with \eqref{eq:prob_mat_vec_convergent}.

If condition $C_1$ holds for some $i \in [p]$, then for $\wb = \zero$, we have
\begin{equation}\label{eq:C1}\begin{aligned}
\lim_{m \to \infty} \PP(y_i = 0 ~|~ \wb = \zero, \gb_i^{(m)}, \beta_{i, 0}^{(m)}, \bbeta_i^{(m)})
&=
\lim_{m \to \infty} \frac{1}{1 + \exp(\beta_{i, 0}^{(m)} + (\gb_i^{(m)} \circ \zero)^\top \bbeta_i^{(m)})}
\\&=
\lim_{m \to \infty} \frac{1}{1 + \exp(\beta_{i, 0}^{(m)})}
=
0
.
\end{aligned}\end{equation}

If condition $C_2$ holds for some $i \in [p]$, then for $\wb = \zero$, we have
\begin{equation}\label{eq:C2}\begin{aligned}
\lim_{m \to \infty} \PP(y_i = 1 ~|~ \wb = \zero, \gb_i^{(m)}, \beta_{i, 0}^{(m)}, \bbeta_i^{(m)})
&=
\lim_{m \to \infty} \frac{\exp(\beta_{i, 0}^{(m)} + (\gb_i^{(m)} \circ \zero)^\top \bbeta_i^{(m)})}{1 + \exp(\beta_{i, 0}^{(m)} + (\gb_i^{(m)} \circ \zero)^\top \bbeta_i^{(m)})}
\\&=
\lim_{m \to \infty} \frac{\exp(\beta_{i, 0}^{(m)})}{1 + \exp(\beta_{i, 0}^{(m)})}
=
0
.
\end{aligned}\end{equation}

If condition $C_3 \wedge \neg(C_1 \vee C_2)$ holds for some $i \in [p], j \in [q]$, then for $\wb = \eb_j$, we have
\begin{equation}\label{eq:C3}\begin{aligned}
\lim_{m \to \infty} \PP(y_i = 0 ~|~ \wb = \eb_j, \gb_i^{(m)}, \beta_{i, 0}^{(m)}, \bbeta_i^{(m)})
&=
\lim_{m \to \infty} \frac{1}{
1 + \exp(\beta_{i, 0}^{(m)} + (\gb_i^{(m)} \circ \eb_j)^\top \bbeta_i^{(m)})
}
\\&=
\lim_{m \to \infty} \frac{1}{
1 + \exp(\beta_{i, 0}^{(m)} + \beta_{i, j}^{(m)})
}
=
0
.
\end{aligned}\end{equation}

If condition $C_4 \wedge \neg(C_1 \vee C_2)$ holds for some $i \in [p], j \in [q]$, then for $\wb = \eb_j$, we have
\begin{equation}\label{eq:C4}\begin{aligned}
\lim_{m \to \infty} \PP(y_i = 1 ~|~ \wb = \eb_j, \gb_i^{(m)}, \beta_{i, 0}^{(m)}, \bbeta_i^{(m)})
&=
\lim_{m \to \infty} \frac{
\exp(\beta_{i, 0}^{(m)} + (\gb_i^{(m)} \circ \eb_j)^\top \bbeta_i^{(m)})
}{
1 + \exp(\beta_{i, 0}^{(m)} + (\gb_i^{(m)} \circ \eb_j)^\top \bbeta_i^{(m)})
}
\\&=
\lim_{m \to \infty} \frac{
\exp(\beta_{i, 0}^{(m)} + \beta_{i, j}^{(m)})
}{
1 + \exp(\beta_{i, 0}^{(m)} + \beta_{i, j}^{(m)})
}
=
0
.
\end{aligned}\end{equation}

Therefore, if condition $C_1 \vee C_2 \vee C_3 \vee C_4$ holds, then at least one of the equations \eqref{eq:C1}, \eqref{eq:C2}, \eqref{eq:C3}, or \eqref{eq:C4} must be satisfied.
This implies that at least one entry in the matrix $\lim_{m \to \infty} \bLambda^{(m)}$ must be zero.
However, for any valid true parameters $\Gb^*, \Bb^*$, all entries of $\bLambda^*$ must be strictly positive, as implied by our model formulation \eqref{eq:model_y_w}.
This directly contradicts with $\lim_{m \to \infty} \bLambda^{(m)} = \bLambda^*$ in \eqref{eq:prob_mat_vec_convergent}.

If condition $C_5$ holds for some $j \in [q]$ and $h \in [d]$, then for $z = h$, we have
\begin{equation}\label{eq:C5}
\lim_{m \to \infty} \PP(w_j = 1 ~|~ z = h, \Ab^{(m)})
=
\lim_{m \to \infty} \alpha_{j, h}
=
0
.
\end{equation}

If condition $C_6$ holds for some $j \in [q]$ and $h \in [D]$, then we have
\begin{equation}\label{eq:C6}
\lim_{m \to \infty} \PP(w_j = 0 ~|~ z = h, \Ab^{(m)})
=
\lim_{m \to \infty} (1 - \alpha_{j, h})
=
0
.
\end{equation}

Thus, if condition $C_5 \vee C_6$ holds, then at least one of the equations \eqref{eq:C5} or \eqref{eq:C6} must be satisfied.
This implies that at least one entry in the matrix $\lim_{m \to \infty} \bSigma^{(m)}$ must be zero.
However, according to our model formulation \eqref{eq:model_w_z}, all entries of $\bSigma^*$ must be strictly positive for any valid true parameter $\Ab^*$, causing a contradiction with $\lim_{m \to \infty} \bSigma^{(m)} = \bSigma^*$ in \eqref{eq:prob_mat_vec_convergent}.

If condition $C_7$ holds for some $h \in [d]$ and $k \in [p_k]$, then since $\lim_{m \to \infty} \bmu^{(m)}$ exists and equals $\bmu^*$ by \eqref{eq:prob_mat_vec_convergent}, we conclude that for any $\xb \in \supp(\cP_x)$, the limit $\lim_{m \to \infty} \xb^\top \bgamma_h^{(m)}$ must also exist in the extended real line $\RR \cup \{\pm \infty\}$.
Furthermore, since $\cP_x$ has full rank, we can find some $\hat\xb \in \supp(\cP_x)$ such that 
$$
\lim_{m \to \infty} \hat\xb^\top \bgamma_h^{(m)}
\in
\{\pm \infty\}
.
$$
Recall that the parameters for the baseline class $d$ are fixed as $\gamma_{d, 0} = 0$ and $\bgamma_d = \zero$ to ensure identifiability, which implies $h \ne d$.
Now, if $\lim_{m \to \infty} \hat\xb^\top \bgamma_h^{(m)} = \infty$, we obtain $\lim_{m \to \infty} \frac{\exp(\hat\xb^\top \bgamma_d^{(m)})}{\exp(\hat\xb^\top \bgamma_h^{(m)})} = 0$.
Conversely, if $\lim_{m \to \infty} \hat\xb^\top \bgamma_h^{(m)} = -\infty$, we have $\lim_{m \to \infty} \frac{\exp(\hat\xb^\top \bgamma_h^{(m)})}{\exp(\hat\xb^\top \bgamma_d^{(m)})} = 0$.
In either case, we can find classes $h_1, h_2 \in [d]$, some $\hat\xb \in \supp(\cP_x)$, and a small positive constant $\kappa > 0$ such that
$$
\lim_{m \to \infty} \frac{\exp(\xb^\top \bgamma_{h_1}^{(m)})}{\exp(\xb^\top \bgamma_{h_2}^{(m)})}
=
0
,\quad
\forall \xb \in \cO_\kappa(\hat\xb)
,
$$
where $\cO_\kappa(\hat\xb)$ denotes the open neighborhood in $\cX$ around $\hat\xb$ with $L_1$-radius $\kappa$.
This further implies that for any $\xb \in \cO_\kappa(\hat\xb)$,
\begin{equation}\label{eq:C7_mid}
\lim_{m \to \infty} \PP(z = h_1 ~|~ \xb, \bGamma^{(m)})
=
0
.
\end{equation}

We define $\cP_x 1_{\cO_\kappa(\hat\xb)}$ as the probability measure obtained by restricting $\cP_x$ to $\cO_\kappa(\xb)$ and normalizing, i.e. with Radon-Nikodym derivative
$$
\frac{\ud \cP_x 1_{\cO_\kappa(\hat\xb)}}{\ud \cP_x}
=
\cP_x(\cO_\kappa(\hat\xb))^{-1} 1_{\cO_\kappa(\hat\xb)}
,
$$
which is well defined since $\xb \in \supp(\cP_x)$.
Now, let $\tilde\PP$ denote the marginal distribution of $z$ obtained by integrating out $\xb$ using this restricted measure $\cP_x 1_{\cO_\kappa(\hat\xb)}$.
By \eqref{eq:C7_mid}, we obtain
\begin{equation}\label{eq:C7_lim}
\lim_{m \to \infty} \tilde\PP(z = h_1 ~|~ \bGamma^{(m)})
=
\cP_x(\cO_k(\hat\xb))^{-1} \int_{\cO_\kappa(\hat\xb)} \PP(z = h_1 ~|~ \xb, \bGamma^{(m)}) ~ \cP_x(\ud x)
=
0
.
\end{equation}
At the same time, since our model formulation \eqref{eq:model_z_x} implies that $\PP(z ~|~ \xb, \bGamma^*)$ for any $\xb \in \RR^{p_x}$ and any valid true parameter $\bGamma^*$, we have
\begin{equation}\label{eq:C7_true}
\tilde\PP(z = h_1 ~|~ \bGamma^*)
=
\cP_x(\cO_k(\hat\xb))^{-1} \int_{\cO_\kappa(\hat\xb)} \PP(z = h_1 ~|~ \xb, \bGamma^*) ~ \cP_x(\ud x)
>
0
.
\end{equation}

We observe that Lemmas \ref{lemm:y_to_w} and \ref{lemm:w_to_z} continue to hold when considering $\cP_x 1_{\cO_\kappa(\hat\xb)}$ as the data generating distribution for $\xb$.
By employing arguments similar to those used in establishing \eqref{eq:prob_mat_vec_convergent}, we find that the marginal distributions of $z$ satisfy
$$
\lim_{m \to \infty} \tilde\PP(z = h_1 ~|~ \bGamma^{(m)})
=
\tilde\PP(z = h_1 ~|~ \bGamma^*)
,
$$
which directly contradicts the results in \eqref{eq:C7_lim} and \eqref{eq:C7_true}.
Thus, condition $C_7$ is also incompatible with \eqref{eq:prob_mat_vec_convergent}.

If condition $C_8 \wedge \neg C_7$ holds for some $h \in [d]$, recalling that $z = d$ is the baseline class with $\gamma_{d, 0} = 0$ and $\bgamma_d = \zero$, we have for any $\xb \in \supp(\cP_x)$ that
\begin{equation}\label{eq:C8}
\lim_{m \to \infty} \PP(z = d ~|~ \xb, \bGamma^{(m)})
=
\frac{1}{\sum_{h = 1}^d \exp(\gamma_{h, 0}^{(m)} + \xb^\top \bgamma_h^{(m)})}
=
0
.
\end{equation}

If condition $C_9 \wedge \neg C_7$ holds for some $h \in [d]$, similarly, we have for any $\xb \in \supp(\cP_x)$ that
\begin{equation}\label{eq:C9}
\lim_{m \to \infty} \PP(z = h ~|~ \xb, \bGamma^{(m)})
=
\frac{\exp(\gamma_{h, 0}^{(m)} + \xb^\top \bgamma_h^{(m)})}{\sum_{h' = 1}^d \exp(\gamma_{h', 0}^{(m)} + \xb^\top \bgamma_{h'}^{(m)})}
=
0
.
\end{equation}

Therefore, if condition $(C_8 \vee C_9) \wedge \neg C_7$ holds, then at least one of the equations \eqref{eq:C8} or \eqref{eq:C9} must be satisfied.
This implies that at least one entry in the vector $\lim_{m \to \infty} \bmu^{(m)}$ must be zero.
In contrast, our model formulation \eqref{eq:model_z_x} implies that all entries of $\bmu^*$ must be strictly positive for any true parameter $\bGamma^*$.
This raises a contradiction with $\lim_{m \to \infty} \bmu^{(m)} = \bmu^*$ in \eqref{eq:prob_mat_vec_convergent}.

Putting everything together, we conclude that condition $C_1 \vee C_2 \vee C_3 \vee C_4 \vee C_5 \vee C_6 \vee C_7 \vee C_8 \vee C_9$ is mutually exclusive with \eqref{eq:prob_mat_vec_convergent}.
This implies that none of the conditions $C_1, \ldots, C_9$ can hold for the chosen sequence of parameters $(\Ab^{(m)}, \Bb^{(m)}, \bGamma^{(m)}, \Gb^{(m)})$, thus completing the proof.
\end{proof}

Finally, we provide the proof of Lemma \ref{lemm:quotient_metric} below.

\begin{proof}[Proof of Lemma \ref{lemm:quotient_metric}]
Let the subset $\sS \subset \sS_2$ be permutation invariant.

We begin by showing that $\|\cdot\|_1$ defined in \eqref{eq:perm_equiv_metric} is a well-defined metric on the quotient space $\perm(\sS)$.
Although its positivity and symmetry are straightforward, its triangle inequality property is less immediate and is based on the permutation invariance of $\sS$.
The key observation is that for a permutation invariant parameter space $\sS$, we obtain the simplification
\begin{align*}
&\qquad
\left\|
\perm_{\sS}(\Ab, \Bb, \bGamma, \Gb) - \perm_{\sS}(\Ab', \Bb', \bGamma', \Gb')
\right\|_1
\\&=
\inf_{(\tilde\Ab, \tilde\Bb, \tilde\bGamma, \tilde\Gb) \in \perm_{\sS}(\Ab, \Bb, \bGamma, \Gb)}
\left\|
(\tilde\Ab, \tilde\Bb, \tilde\bGamma, \tilde\Gb) - (\Ab', \Bb', \bGamma', \Gb')
\right\|_1
.
\end{align*}
Therefore, for any equivalence classes of parameters $\perm_{\sS}(\Ab, \Bb, \bGamma, \Gb)$, $\perm_{\sS}(\Ab', \Bb', \bGamma', \Gb')$, and $\perm_{\sS}(\Ab'', \Bb'', \bGamma'', \Gb'') \in \perm(\sS)$, we have
\begin{align*}
&\qquad
\Big\|
\perm_{\sS}(\Ab, \Bb, \bGamma, \Gb) - \perm_{\sS}(\Ab'', \Bb'', \bGamma'', \Gb'')
\Big\|_1
\\&=
\inf_{(\tilde\Ab, \tilde\Bb, \tilde\bGamma, \tilde\Gb) \in \perm_{\sS}(\Ab, \Bb, \bGamma, \Gb)}
\inf_{(\tilde\Ab'', \tilde\Bb'', \tilde\bGamma'', \tilde\Gb'') \in \perm_{\sS}(\Ab'', \Bb'', \bGamma'', \Gb'')}
\left\|
(\tilde\Ab, \tilde\Bb, \tilde\bGamma, \tilde\Gb) - (\tilde\Ab'', \tilde\Bb'', \tilde\bGamma'', \tilde\Gb'')
\right\|_1
\\&\le
\inf_{(\tilde\Ab, \tilde\Bb, \tilde\bGamma, \tilde\Gb) \in \perm_{\sS}(\Ab, \Bb, \bGamma, \Gb)}
\left\|
(\tilde\Ab, \tilde\Bb, \tilde\bGamma, \tilde\Gb) - (\Ab', \Bb', \bGamma', \Gb')
\right\|_1
\\&\qquad+
\inf_{(\tilde\Ab'', \tilde\Bb'', \tilde\bGamma'', \tilde\Gb'') \in \perm_{\sS}(\Ab'', \Bb'', \bGamma'', \Gb'')}
\left\|
(\tilde\Ab'', \tilde\Bb'', \tilde\bGamma'', \tilde\Gb'') - (\Ab', \Bb', \bGamma', \Gb')
\right\|_1
\\&=
\Big\|
\perm_{\sS}(\Ab, \Bb, \bGamma, \Gb) - \perm_{\sS}(\Ab', \Bb', \bGamma', \Gb')
\Big\|_1
\\&\qquad+
\Big\|
\perm_{\sS}(\Ab'', \Bb'', \bGamma'', \Gb'') - \perm_{\sS}(\Ab', \Bb', \bGamma', \Gb')
\Big\|_1
.
\end{align*}
Hence $\|\cdot\|_1$ is a metric on $\sS$.

We next show that the metric $\|\cdot\|_1$ induces the quotient topology on $\sS$.

Let $\cU$ be any open set in $\perm(\sS)$ under the quotient topology.
By definition, its preimage in $\sS$, denoted by $\perm_{\sS}^{-1}(\cU)$, is open.
Consider an arbitrary equivalence class of parameters $\perm_{\sS}(\Ab, \Bb, \bGamma, \Gb) \in \cU$. 
Then, for any $(\tilde\Ab, \tilde\Bb, \tilde\bGamma, \tilde\Gb) \in \perm_{\sS}(\Ab, \Bb, \bGamma, \Gb)$, there exists an open ball  centered at $(\tilde\Ab, \tilde\Bb, \tilde\bGamma, \tilde\Gb)$ with positive radius, fully contained in $\perm_{\sS}^{-1}(\cU)$.
Defining the distance between a parameter $s \in \sS$ and a set $E \subset \sS$ as $d_{\sS}(s, E) := \inf_{s' \in E} \|s - s'\|_1$, then we obtain
$$
d_{\sS}\big(
(\tilde\Ab, \tilde\Bb, \tilde\bGamma, \tilde\Gb)
,~
\perm_{\sS}^{-1}(\cU)^c
\big)
>
0
,\quad
\forall (\tilde\Ab, \tilde\Bb, \tilde\bGamma, \tilde\Gb) \in \perm_{\sS}(\Ab, \Bb, \bGamma, \Gb)
.
$$
Importantly, for any two parameter sets $(\tilde\Ab, \tilde\Bb, \tilde\bGamma, \tilde\Gb), (\tilde\Ab, \tilde\Bb', \tilde\bGamma, \tilde\Gb) \in \perm_{\sS}(\Ab, \Bb, \bGamma, \Gb)$ that satisfy $(\one ~ \tilde\Gb) \circ \tilde\Bb = (\one ~ \tilde\Gb) \circ \tilde\Bb'$, meaning $\beta_{i, j} \ne \beta_{i, j}'$ only when $g_{i, j} = 0$ (i.e. only on inactive entries of $\Bb$), we have
\begin{align*}
&\qquad
d_{\sS}\left(
(\tilde\Ab, \tilde\Bb, \tilde\bGamma, \tilde\Gb)
,~
\perm_{\sS}^{-1}(\cU)^c
\right)
\\&=
\inf_{(\Ab', \Bb', \bGamma', \Gb') \in \perm_{\sS}(\cU)^c} \left(
\|\tilde\Ab - \Ab'\|_1
+
\|\tilde\Gb \circ \tilde\Bb - \Gb' \circ \Bb'\|_1
+
\|\tilde\bGamma - \bGamma'\|_1
+
\|\tilde\Gb - \Gb'\|_1
\right)
\\&=
\inf_{(\Ab', \Bb', \bGamma', \Gb') \in \perm_{\sS}(\cU)^c} \left(
\|\tilde\Ab - \Ab'\|_1
+
\|\tilde\Gb \circ \tilde\Bb' - \Gb' \circ \Bb'\|_1
+
\|\tilde\bGamma - \bGamma'\|_1
+
\|\tilde\Gb - \Gb'\|_1
\right)
\\&=
d_{\sS}\left(
(\tilde\Ab, \tilde\Bb', \tilde\bGamma, \tilde\Gb)
,~
\perm_{\sS}^{-1}(\cU)^c
\right)
.
\end{align*}
Since there are finitely many possible combinations of $\tilde\Ab, (\one ~ \tilde\Gb) \circ \tilde\Bb, \tilde\bGamma, \tilde\Gb$ within $\perm_{\sS}(\Ab, \Bb, \bGamma, \Gb)$, the distance
$$
\epsilon
:=
\inf_{(\tilde\Ab, \tilde\Bb, \tilde\bGamma, \tilde\Gb) \in \perm_{\sS}(\Ab, \Bb, \bGamma, \Gb)} d_{\sS}\big(
(\tilde\Ab, \tilde\Bb, \tilde\bGamma, \tilde\Gb)
,~
\perm_{\sS}^{-1}(\cU)
\big)
>
0
.
$$
Thus, the open ball $\cO_\epsilon(\perm_{\sS}(\Ab, \Bb, \bGamma, \Gb))$ is contained within $\cU$, implying that the metric topology induced by $\|\cdot\|_1$ is finer than the quotient topology on $\perm(\sS)$.

Now, for any $\epsilon > 0$, the preimage of an arbitrary open ball $\cO_\epsilon(\perm_{\sS}(\Ab, \Bb, \bGamma, \Gb)) \subset \perm(\sS)$ is given by
\begin{align*}
&\qquad
\perm_{\sS}^{-1}\left(
\cO_\epsilon(\perm_{\sS}(\Ab, \Bb, \bGamma, \Gb))
\right)
\\&=
\left\{
(\Ab', \Bb', \bGamma', \Gb') \in \sS:~
\inf_{(\tilde\Ab, \tilde\Bb, \tilde\bGamma, \tilde\Gb) \in \perm_{\sS}(\Ab, \Bb, \bGamma, \Gb)}
\left\|
(\tilde\Ab, \tilde\Bb, \tilde\bGamma, \tilde\Gb) - (\Ab', \Bb', \bGamma', \Gb')
\right\|_1
<
\epsilon
\right\}
\\&=
\bigcap_{m = 1}^\infty \bigcup_{(\tilde\Ab, \tilde\Bb, \tilde\bGamma, \tilde\Gb) \in \perm_{\sS}(\Ab, \Bb, \bGamma, \Gb)} \cO_{\epsilon + \frac{1}{m}}\big(
(\tilde\Ab, \tilde\Bb, \tilde\bGamma, \tilde\Gb)
\big)
,
\end{align*}
which is an open set in $\sS$.
This further implies that $\cO_\epsilon(\perm_{\sS}(\Ab, \Bb, \bGamma, \Gb))$ is open in $\perm(\sS)$ under the quotient topology.
Therefore, the metric topology induced by $\|\cdot\|_1$ is also coarser than the quotient topology on $\perm(\sS)$.

Combining these results, we conclude that the two topologies are equivalent, meaning that the metric $\|\cdot\|_1$ on $\perm(\sS)$ indeed induces its quotient topology.
\end{proof}

\section{Curse of Dimensionality in Mixture Models}\label{supp_sec:mix}

In this section, we provide the proof of Proposition \ref{prop:curse_example}, which demonstrates the curse of dimensionality faced by Bayesian Bernoulli mixture models in clustering high-dimensional binary observations.

\subsection{Proof of Proposition \ref{prop:curse_example}}

Recall that $\PP_{\mix}$ denotes the distributions specified by the Bernoulli mixture model \eqref{eq:mix_model}, and $\PP_{\mix}^*$ denotes the true data generating distribution \eqref{eq:mix_true}.
For the observations $[N] := \{1, \ldots, N\}$, we let $z^{(1:N)} = \{[N]\}$ denote the partition of $[N]$ that forms all $N$ observations into a single shared cluster, and let $z^{(1:N)} = \{\{n\}, [N] \cap \{n\}^c\}$ denote the partition of $[N]$ into two clusters, with one cluster containing only observation $n$ and the other cluster containing all remaining $N - 1$ observations.

The following lemma shows that for any number of observations $N \ge 2$, there almost surely exists an observation $n \in [N]$ such that asymptotically, the likelihood of forming $[N]$ into two clusters $\{n\}$ and $[N] \cap \{n\}^c$ dominates the likelihood of forming $[N]$ into a single cluster.

\begin{lemma}\label{lemm:take_one_out}
For any $N \ge 2$, we have
$$
\liminf_{p \to \infty} \frac{
\sum_{n = 1}^N \PP_{\mix}( \yb^{(1:N)} ~|~ z^{(1:N)} = \{\{n\}, [N] \cap \{n\}^c\})
}{
\PP_{\mix}(\yb^{(1:N)} ~|~ z^{(1:N)} = \{[N]\})
}
=
\infty
,
$$
$\PP_{\mix}^*$-almost surely.
\end{lemma}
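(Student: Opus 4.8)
The plan is to compute both marginal likelihoods in closed form by integrating out the Bernoulli parameters $\ba$ against their $\mathrm{Uniform}(0,1)=\mathrm{Beta}(1,1)$ priors, so that each per-coordinate factor becomes a Beta function. For a cluster $C\subseteq[N]$ of size $m=\card(C)$ and coordinate $i$, writing $k=\sum_{n\in C}y_i^{(n)}$, conjugacy gives $\int_0^1 a^{k}(1-a)^{m-k}\,da = k!\,(m-k)!/(m+1)!$. Summing over the single cluster $[N]$ (with counts $k_i:=\sum_{n=1}^N y_i^{(n)}$) and over the split $\{\{n\},[N]\cap\{n\}^c\}$ (where the singleton contributes $(1/2)^p$ and the remaining cluster of size $N-1$ carries counts $k_i^{(-n)}:=k_i-y_i^{(n)}$) yields fully explicit products that I can divide directly.

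Next I would form the likelihood ratio $R_n$ for a fixed $n$ and simplify the Gamma factors coordinatewise. After cancellation the ratio takes the clean form
$$
R_n \;=\; \Big(\tfrac{N+1}{2}\Big)^p \prod_{i=1}^p f_i^{(n)},
\qquad
f_i^{(n)} = \begin{cases} 1/k_i, & y_i^{(n)}=1,\\[2pt] 1/(N-k_i), & y_i^{(n)}=0,\end{cases}
$$
the two cases arising from whether removing $y_i^{(n)}$ decrements the count of ones or of zeros. The key structural observation is that $1/f_i^{(n)}$ equals either $k_i$ or $N-k_i$, each of which is $1$ plus a count over the other $N-1$ observations, so unconditionally $1/f_i^{(n)}$ is distributed as $1+\mathrm{Binomial}(N-1,1/2)$.

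Then I would take logarithms and invoke the strong law of large numbers. Under $\PP_\mix^*$ with $a_i^*=1/2$, the column vectors $(y_i^{(1)},\dots,y_i^{(N)})$ are i.i.d.\ across $i$, so the summands $\log f_i^{(n)}$ are i.i.d.\ and bounded; hence $\tfrac1p\log R_n \to \log\tfrac{N+1}{2}-\mathbb{E}[\log(1+S)]$ almost surely, with $S\sim\mathrm{Binomial}(N-1,1/2)$. The decisive step is showing this exponential rate is strictly positive: since $\mathbb{E}[1+S]=(N+1)/2$ and $\log$ is strictly concave, Jensen's inequality gives $\mathbb{E}[\log(1+S)]<\log\tfrac{N+1}{2}$, with strictness guaranteed because $S$ is non-degenerate for every $N\ge 2$. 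Thus $R_n\to\infty$ almost surely, and since $[N]$ is finite the exceptional null sets can be unioned, so the conclusion holds simultaneously for all $n$ and $\sum_{n=1}^N R_n\to\infty$, which is the claim.

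The main obstacle is precisely this strict positivity of the rate. The two contributions $\log\tfrac{N+1}{2}$ and $-\mathbb{E}[\log(1+S)]$ are individually unremarkable and nearly cancel, so the sign is not obvious from crude bounds; it is exactly the strict-concavity gap in Jensen — driven by the variance of the binomial split count — that tips the balance toward the finer partition, which is the analytic heart of the over-clustering phenomenon. I would be careful to record that this gap closes only in the degenerate case $N=1$, confirming that the argument genuinely requires $N\ge 2$, and to verify that the boundary counts never cause $1/k_i$ or $1/(N-k_i)$ to be undefined (the case $y_i^{(n)}=1$ forces $k_i\ge 1$ and $y_i^{(n)}=0$ forces $k_i\le N-1$).
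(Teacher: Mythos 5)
Your proposal is correct, and the algebraic first half coincides with the paper's: the same Beta--Bernoulli conjugacy computation yields the same closed-form ratio $R_n = \prod_{i=1}^p \frac{N+1}{2\,N_i^{y_i^{(n)}}(N-N_i)^{1-y_i^{(n)}}}$. Where you genuinely diverge is the probabilistic part. The paper reparametrizes via $\chi_i = (2N_i - N)/\sqrt{N}$, bounds the per-coordinate log-ratio using the Taylor-series inequality $f(x) = \log(1-x^2) + x\log\frac{1+x}{1-x} \le (\log 4)x^2$, proves that the $\chi_i^2$ are sub-exponential with parameters uniform in $N$ (via a CLT argument), applies a Bernstein-type inequality to get $e^{-cp}$ tail bounds, invokes Borel--Cantelli to make $\sum_n \log R_n \to \infty$ almost surely, and only then converts this into divergence of $\sum_n R_n$ through a final Jensen/AM--GM step. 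You instead observe the clean distributional identity $1/f_i^{(n)} \sim 1 + \mathrm{Binomial}(N-1,\tfrac12)$, apply the plain SLLN to the bounded i.i.d.\ summands $\log f_i^{(n)}$ (i.i.d.\ across $i$ because the columns of the data matrix are i.i.d.\ under $\PP_\mix^*$), and obtain the \emph{exact} almost-sure exponential rate $\log\frac{N+1}{2} - \EE[\log(1+S)]$ for each fixed $n$, with strict positivity coming from the strict Jensen gap for the non-degenerate binomial $S$; a union over the finitely many $n$ then gives the claim. Your route is more elementary (no sub-exponential machinery, no explicit concentration inequality), identifies the precise growth rate of each individual ratio rather than a lower bound on the sum of logs, and makes transparent exactly why $N \ge 2$ is needed (the Jensen gap closes at $N=1$). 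What the paper's heavier route buys in exchange is quantitative, finite-$p$ control: exponential tail bounds $\PP_\mix^*(\cdot) \le \kappa^{-p}$ on the exceptional events, which a bare SLLN argument does not provide. As a consistency check, your rate summed over $n$ agrees with the paper's constant, e.g.\ for $N=2$ both give $\log\frac{9}{8}$ per coordinate. Both proofs are valid; yours is a legitimate, simpler alternative.
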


The proof of Lemma \ref{lemm:take_one_out} is provided in Section \ref{supp_ssec:curse_lemmas}.
In the following, we prove Proposition \ref{prop:curse_example} using Lemma \ref{lemm:take_one_out}.

\begin{proof}[Proof of Proposition \ref{prop:curse_example}]
For a partition $z^{(1:N)}$ of the observations $[N]$, we let $\card(z^{(1:N)})$ denote the number of clusters formed in the partition.
Let $z^{(1:N)}$ be an arbitrary partition of [N] with $\card(z^{(1:N)}) < N$, then there exists a cluster in $z^{(1:N)}$ with number of observations $M \ge 2$.
Without loss of generality, we let the $M$ observations in this cluster be the observations $1, \ldots, M$ and denote this partition by $z^{(1:N)} = \{[M], \ldots\}$.

For $m \in [M]$, we let $z^{(1:N)} = \{\{m\}, [M] \cap \{m\}^c, \ldots\}$ denote the partition of $[N]$ obtained by further splitting the cluster of observations $1, \ldots, M$ in the partition $z^{(1:N)} = \{[M], \ldots\}$ into a cluster containing only observation $m$ and a cluster containing the remaining $M - 1$ observations $1, \ldots, m - 1, m + 1, \ldots, M$.
By definition, $\card(\{\{m\}, [M] \cap \{m\}^c, \ldots\}) = \card(\{[M], \ldots\}) + 1$.

In the mixture model \eqref{eq:mix_model}, using the conditional independence of $y_i$ given $z$, we have
\begin{equation}\label{eq:mix_N_to_M}
\frac{
\PP_{\mix}(\yb^{(1:N)} ~|~ z^{(1:N)} = \{\{m\}, [M] \cap \{m\}^c, \ldots\})
}{
\PP_{\mix}(\yb^{(1:N)} ~|~ z^{(1:N)} = \{[M], \ldots\})
}
=
\frac{
\PP_{\mix}(\yb^{(1:M)} ~|~ z^{(1:M)} = \{\{m\}, [M] \cap \{m\}^c\})
}{
\PP_{\mix}(\yb^{(1:M)} ~|~ z^{(1:M)} = \{[M]\})
}
.
\end{equation}
By applying Lemma \ref{lemm:take_one_out} to the $M$ observations $1, \ldots, M$ and combining with \eqref{eq:mix_N_to_M}, we obtain
\begin{equation}\label{eq:mix_N_take_one_out}\begin{aligned}
&\qquad
\liminf_{p \to \infty} \frac{
\sum_{m = 1}^M \PP_{\mix}(\yb^{(1:N)} ~|~ z^{(1:N)} = \{\{m\}, [M] \cap \{m\}^c, \ldots\})
}{
\PP_{\mix}(\yb^{(1:N)} ~|~ z^{(1:N)} = \{[M], \ldots\})
}
\\&=
\liminf_{p \to \infty} \frac{
\sum_{m = 1}^M \PP_{\mix}(\yb^{(1:M)} ~|~ z^{(1:M)} = \{\{m\}, [M] \cap \{m\}^c\})
}{
\PP_{\mix}(\yb^{(1:M)} ~|~ z^{(1:M)} = \{[M]\})
}
=
\infty
,
\end{aligned}\end{equation}
$\PP_{\mix}^*$-almost surely.

Let $P(N, K)$ denote the number of ways of partitioning $[N]$ into $K$ clusters.
Then for any $1 \le K \le N - 1$, we have
\begin{equation}\label{eq:mix_K_K_1}\begin{aligned}
&\qquad
\liminf_{p \to \infty} \frac{
\sum_{z^{(1:N)}:~\card(z^{(1:N)}) = K + 1} \PP_{\mix}(\yb^{(1:N)} ~|~ z^{(1:N)})
}{
\sum_{z^{(1:N)}:~\card(z^{(1:N)}) = K} \PP_{\mix}(\yb^{(1:N)} ~|~ z^{(1:N)})
}
\\&\ge
P(N, K)^{-1} \inf_{z^{(1:N)}:~\card(z^{(1:N)}) = K} \liminf_{p \to \infty} \frac{
\sum_{\tz^{(1:N)}:~\card(\tz^{(1:N)}) = K + 1} \PP_{\mix}(\yb^{(1:N)} ~|~ \tz^{(1:N)})
}{
\PP_{\mix}(\yb^{(1:N)} ~|~ z^{(1:N)})
}
\ge
\infty
\end{aligned}\end{equation}
$\PP_{\mix}^*$-almost surely, where the last inequality is due to \eqref{eq:mix_N_take_one_out} and the fact that the collection of partitions with $\card(z^{(1:N)}) + 1$ clusters include all the partitions obtained by splitting a cluster of $z^{(1:N)}$ into two.

Let $z^{(1:N)} = \{\{1\}, \{2\}, \ldots, \{N\}\}$ denote the partition of observations $[N]$ with each observation in a separate cluster.
Telescoping \eqref{eq:mix_K_K_1} over $K = 1, \ldots, N - 1$, we obtain for any $K \in [N - 1]$ that
$$
\liminf_{p \to \infty} \frac{
\PP_{\mix}(\yb^{(1:N)} ~|~ z^{(1:N)} = \{\{1\}, \{2\}, \ldots, \{N\}\})
}{
\sum_{z^{(1:N)}:~\card(z^{(1:N)}) = K} \PP_{\mix}(\yb^{(1:N)} ~|~ z^{(1:N)})
}
=
\infty
,
$$
$\PP_{\mix}^*$-almost surely.
Recall that the prior on $z^{(1:N)}$ assigns positive probability to any possible partition of $[N]$.
Therefore, for $\sZ$ being an arbitrary partition of $[N]$ other than the partition $z^{(1:N)} = \{\{1\}, \{2\}, \ldots, \{N\}\}$, we have the posterior probability ratio
\begin{align*}
&\qquad
\liminf_{p \to \infty} \frac{
\PP_{\mix}(z^{(1:N)} = \{\{1\}, \{2\}, \ldots, \{N\}\} ~|~ \yb^{(1:N)})
}{
\PP_{\mix}(z^{(1:N)} = \sZ ~|~ \yb^{(1:N)})
}
\\&=
\frac{
\PP_{\mix}(z^{(1:N)} = \{\{1\}, \{2\}, \ldots, \{N\}\})
}{
\PP_{\mix}(z^{(1:N)} = \sZ)
}
\liminf_{p \to \infty} \frac{
\PP_{\mix}(\yb^{(1:N)} ~|~ z^{(1:N)} = \{\{1\}, \{2\}, \ldots, \{N\}\})
}{
\PP_{\mix}(\yb^{(1:N)} ~|~ z^{(1:N)} = \sZ)
}
\\&\ge
\frac{
\PP_{\mix}(z^{(1:N)} = \{\{1\}, \{2\}, \ldots, \{N\}\})
}{
\PP_{\mix}(z^{(1:N)} = \sZ)
}
\liminf_{p \to \infty} \frac{
\PP_{\mix}(\yb^{(1:N)} ~|~ z^{(1:N)} = \{\{1\}, \{2\}, \ldots, \{N\}\})
}{
\sum_{z^{(1:N)}:~ \card(z^{(1:N)}) = \card(\sZ)} \PP_{\mix}(\yb^{(1:N)} ~|~ z^{(1:N)})
}
\\&=
\infty
,
\end{align*}
$\PP_{\mix}^*$-almost surely.
\end{proof}

\subsection{Proofs of Auxiliary Lemmas for Proposition \ref{prop:curse_example}}\label{supp_ssec:curse_lemmas}

For notational simplicity, we denote $N_i := \sum_{n = 1}^N y_i^{(n)}$ and $\chi_i := \frac{2N_i - N}{\sqrt{N}}$, such that $N_i = \frac{N}{2} + \frac{\sqrt{N}}{2} \chi_i$.
Following \citet{wainwright2019high}, a random variable $X$ with mean $\EE[X]$ is sub-exponential if there are non-negative parameters $(v, a)$ such that
$$
\EE[e^{\lambda (X - \EE[X])}]
\le
e^{\frac{v^2 \lambda^2}{2}}
,\quad
\forall |\lambda| < \frac{1}{a}
.
$$
The next lemma shows that each $\chi_i$ follows a sub-exponential distribution with parameters uniform over $N$.

\begin{lemma}\label{lemm:chi_subexp}
There exists constants $v, a > 0$, such that for any $N \ge 1$ and $i \in [p]$, $\chi_i$ is sub-exponential with parameters $(v, a)$.
\end{lemma}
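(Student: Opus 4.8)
The plan is to recognize $\chi_i$ as a normalized sum of independent bounded random variables, control its moment generating function by a Hoeffding-type argument, and thereby obtain a sub-Gaussian bound whose variance proxy is \emph{uniform in $N$}; since sub-Gaussianity implies sub-exponentiality, the claim then follows with explicit constants. First I would observe that under the true data generating distribution $\PP_\mix^*$ with $a_i^* = \tfrac12$, the entries $y_i^{(1)}, \ldots, y_i^{(N)}$ are, for each fixed $i$, i.i.d.\ $\Bernoulli(\tfrac12)$. Setting $\epsilon_n := 2 y_i^{(n)} - 1$, these are i.i.d.\ Rademacher signs, and
$$
\chi_i = \frac{2 N_i - N}{\sqrt N} = \frac{1}{\sqrt N}\sum_{n=1}^N \epsilon_n ,
$$
which has mean zero, so that $\chi_i - \EE[\chi_i] = \chi_i$ and the centering in the sub-exponential definition is vacuous.

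Next I would bound the moment generating function directly. For a single Rademacher variable, $\EE[e^{t\epsilon_n}] = \cosh(t) \le e^{t^2/2}$ (Hoeffding's lemma for a variable supported on $[-1,1]$). By independence of the $\epsilon_n$,
$$
\EE\!\left[e^{\lambda \chi_i}\right] = \prod_{n=1}^N \EE\!\left[e^{\lambda \epsilon_n / \sqrt N}\right] = \cosh\!\left(\tfrac{\lambda}{\sqrt N}\right)^{\!N} \le \left(e^{\lambda^2/(2N)}\right)^{\!N} = e^{\lambda^2/2} .
$$
The essential point is that the $\sqrt N$ normalization makes the final bound independent of $N$, so $\chi_i$ is sub-Gaussian with variance proxy $1$ for \emph{every} $N \ge 1$ and every $i \in [p]$ simultaneously.

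Finally I would note that a sub-Gaussian variable with variance proxy $v^2$ is automatically sub-exponential with parameters $(v, a)$ for any $a > 0$, because the sub-Gaussian bound $\EE[e^{\lambda(X - \EE X)}] \le e^{v^2\lambda^2/2}$ holds for all $\lambda$, in particular on $|\lambda| < 1/a$. Taking $v = 1$ and $a = 1$ then gives the stated constants, uniform in both $N$ and $i$. There is no genuine obstacle here; the only step deserving care is the uniformity in $N$, which is secured precisely by the $\tfrac{1}{\sqrt N}$ scaling together with the elementary inequality $\cosh(t) \le e^{t^2/2}$. If one wished to state the lemma for general $a_i^* \in (0,1)$ rather than $\tfrac12$, the identical argument applies to the centered increments $2(y_i^{(n)} - a_i^*) \in (-2,2)$, yielding sub-Gaussianity with variance proxy $4$ by Hoeffding's lemma and hence the same sub-exponential conclusion.
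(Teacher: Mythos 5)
Your proof is correct, but it takes a genuinely different and more elementary route than the paper's. You write $\chi_i$ as a normalized sum of i.i.d.\ Rademacher variables and bound its moment generating function directly via $\cosh(t) \le e^{t^2/2}$, obtaining $\EE[e^{\lambda \chi_i}] \le e^{\lambda^2/2}$ for \emph{all} $\lambda$ and every $N \ge 1$, i.e.\ uniform sub-Gaussianity with variance proxy $1$, from which sub-exponentiality with explicit parameters $(1,1)$ is immediate. The paper instead argues asymptotically: it invokes the central limit theorem (noting that $\chi_i^2$ converges in distribution to a $\chi^2_1$ variable, which is sub-exponential with parameters $(2,4)$), upgrades this to a bound on the normalized log-MGF uniformly over a compact interval of $\lambda$ for all $N \ge N_0$, and then handles $N < N_0$ separately using the crude bound $|\chi_i| \le \sqrt{N_0}$, finally taking maxima of the resulting parameters. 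Your argument buys several things: explicit constants rather than unspecified $(v,a)$; a strictly stronger conclusion (a sub-Gaussian MGF bound valid for all $\lambda$, not merely $|\lambda| < 1/a$); and it avoids the delicate step of converting convergence in distribution into convergence of moment generating functions, which requires a uniform integrability argument the paper leaves implicit. One caveat worth noting: where this lemma is actually applied (in the proof of Lemma \ref{lemm:take_one_out}), the quantity that must be sub-exponential is $\chi_i^2$, not $\chi_i$, and the paper's own proof wavers between the two. Your route covers this as well, since the square of a sub-Gaussian variable is sub-exponential with constants depending only on the variance proxy, so adding one sentence to that effect would make your proof serve the downstream argument with no further change.
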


The proof of Lemma \ref{lemm:chi_subexp} is as follows.

\begin{proof}[Proof of Lemma \ref{lemm:chi_subexp}]
Recalling the data generating distribution $\PP_{\mix}^*$ in \eqref{eq:mix_true}, we have $N_i \sim \mathrm{Binomial}(N, 0.5)$, which gives $\EE[\chi_i] = 0$ and $\EE[\chi_i^2] = 1$.
Moreover, by the central limit theorem, as $N \to \infty$ $\chi_i^2$ converges in distribution to the $\chi^2$-distribution with 1 degree of freedom, which is sub-exponential with parameters $(2, 4)$.
This suggests that
$$
\lim_{N \to \infty} \EE[e^{\lambda \chi_i}]
\le
e^{2 \lambda^2}
,\quad
\forall |\lambda| < \frac14
.
$$
Notice that the function $\lambda \to \frac{\log \EE[e^{\lambda \chi_i}]}{\lambda^2}$ is continuous.
Therefore, over the compact set of $|\lambda| \le \frac18$, the pointwise convergence is uniform in $\lambda$, which implies that there exists $N_0 \in \NN$ such that
$$
\frac{\log \EE[e^{\lambda \chi_i}]}{\lambda^2}
\le
4
,\quad
\forall |\lambda| \le \frac18
,\quad
\forall N \ge N_0
.
$$
By definition, this shows that for any $N \ge N_0$, each $\chi_i$ is sub-exponential with parameters $(2\sqrt{2}, 8)$.

Now for $N \le N_0$, we notice that $|\chi_i| \le \sqrt{N_0}$ is bounded, which implies that each $\chi_i$ is also sub-exponential with some parameter $(v_0, a_0)$.
Therefore, letting $v := \max\{2\sqrt{2}, v_0\}$ and $a = \max\{8, a_0\}$, we have for any $N \ge 1$ and $i \in [p]$ that $\chi_i$ is sub-exponential with parameters $(v, a)$.
\end{proof}

We now prove Lemma \ref{lemm:take_one_out} utilizing the result of Lemma \ref{lemm:chi_subexp}.

\begin{proof}[Proof of Lemma \ref{lemm:take_one_out}]
Recall the specification of independent prior distribution $a_{i, z} \sim \mathrm{Uniform}(0, 1)$.
Integrating out the parameters $\ba$ from the Bernoulli mixture model \eqref{eq:mix_model} by conjugacy, we obtain
\begin{align*}
&\qquad
\PP_{\mix}(\yb^{(1:N)} ~|~ z^{(1:N)})
\\&=
\int \PP_{\mix}(\yb^{(1:N)} ~|~ z^{(1:N)}, \ba) \PP(\ba) \ud\ba
\\&=
\prod_{i = 1}^p \prod_{h = 1}^N \int_0^1 a_{i, h}^{\sum_{n = 1}^N y_i^{(n)} 1_{z^{(n)} = h}} (1 - a_{i, h})^{\sum_{n = 1}^N (1 - y_i^{(n)}) 1_{z^{(n) = h}}} \ud a_{i, h}
\\&=
\prod_{i = 1}^p \prod_{h = 1}^N \frac{\Gamma\left( \sum_{n = 1}^N y_i^{(n)} 1_{z^{(n)} = h} + 1 \right) \Gamma\left( \sum_{n = 1}^N (1 - y_i^{(n)}) 1_{z^{(n)} = h} + 1 \right)}{\Gamma\left( \sum_{n = 1}^N 1_{z^{(n)} = h} + 2 \right)}
,
\end{align*}
where $\Gamma(t) := \int_0^\infty x^{t - 1} e^{-x} \ud x$ denotes the Gamma function. We are taking the product over the clusters $h = 1, \ldots, N$ since there are at most $N$ clusters for $N$ observations.

Recall the notation of $N_i := \sum_{n = 1}^N y_i^{(n)}$ and $\chi_i := \frac{2N_i - N}{\sqrt{N}}$, with $N_i = \frac{N}{2} + \frac{\sqrt{N}}{2} \chi_i$.
For an arbitrary observation $n \in [N]$, we study the likelihood ratio of forming $[N]$ into a cluster containing only the observation $n$ and a cluster containing all remaining $N - 1$ observations, i.e. $z^{(1:N)} = \{\{n\}, [N] \cap \{n\}^c\}$, to forming $[N]$ into a single shared cluster, i.e. $z^{(1:n)} = \{[N]\}$:
\begin{align*}
&\qquad
\frac{
\PP_{\mix}(\yb^{(1:N)} ~|~ z^{(1:N)} = \{\{n\}, [N] \cap \{n\}^c\})
}{
\PP_{\mix}(\yb^{(1:N)} ~|~ z^{(1:N)} = \{[N]\})
}
\\&=
\prod_{i = 1}^p \frac{
\frac{\Gamma(y_i^{(n)} + 1) \Gamma(1 - y_i^{(n)} + 1)}{\Gamma(1 + 2)}
\cdot
\frac{\Gamma(N_i - y_i^{(n)} + 1) \Gamma(N - N_i - (1 - y_i^{(n)}) + 1)}{\Gamma(N - 1 + 2)}
}{
\frac{\Gamma(N_i + 1) \Gamma(N - N_i + 1)}{\Gamma(N + 2)}
}
\\&=
\prod_{i = 1}^p \frac{N + 1}{2 N_i^{y_i^{(n)}} (N - N_i)^{1 - y_i^{(n)}}}
.
\end{align*}
Taking logarithm and summing over $n = 1, \ldots, N$, we obtain
{\allowdisplaybreaks\begin{align*}
&\qquad
\sum_{n = 1}^N \log\frac{
\PP_{\mix}( \yb^{(1:N)} ~|~ z^{(1:N)} = \{\{n\}, [N] \cap \{n\}^c\})
}{
\PP_{\mix}( \yb^{(1:N)} ~|~ z^{(1:N)} = \{[N]\})
}
\\&=
\sum_{i = 1}^p \left(
N \log\frac{N + 1}{2}
-
N_i \log N_i
-
(N - N_i) \log (N - N_i)
\right)
\\&=
\sum_{i = 1}^p \Bigg(
\left(
N \log \frac{N}{2}
+
N \log \left( 1 + \frac{1}{N} \right)
\right)
\\&\hspace{15mm}-
\left(
\frac{N}{2} \log \frac{N}{2}
+
\frac{N}{2} \log \left( 1 + \frac{\chi_i}{\sqrt{N}} \right)
+
\frac{\sqrt{N}}{2} \chi_i \log \left( \frac{N}{2} + \frac{\sqrt{N}}{2} \chi_i \right)
\right)
\\&\hspace{15mm}-
\left(
\frac{N}{2} \log \frac{N}{2}
+
\frac{N}{2} \log \left( 1 - \frac{\chi_i}{\sqrt{N}} \right)
-
\frac{\sqrt{N}}{2} \chi_i \log \left( \frac{N}{2} - \frac{\sqrt{N}}{2} \chi_i \right)
\right)
\Bigg)
\\&=
\sum_{i = 1}^p \left(
N \log \left( 1 + \frac{1}{N} \right)
-
\frac{N}{2} \left(
\log \left( 1 - \frac{\chi_i^2}{N} \right)
+
\frac{\chi_i}{\sqrt{N}} \log \left( 1 + \frac{\chi_i}{\sqrt{N}} \right)
-
\frac{\chi_i}{\sqrt{N}} \log \left( 1 - \frac{\chi_i}{\sqrt{N}} \right)
\right)
\right)
.
\end{align*}}

For the function $f(x) := \log(1 - x^2) + x \log (1 + x) - x \log (1 - x)$ with $-1 < x < 1$, using Taylor series $\log(1 + x) = \sum_{k = 1}^\infty \frac{(-1)^{k + 1} x^k}{k}$, we have
$$
f(x)
=
\sum_{k = 1}^\infty - \frac{x^{2k}}{k}
+
\sum_{k = 1}^\infty \frac{(-1)^{k + 1} x^{k + 1}}{k}
-
\sum_{k = 1}^\infty - \frac{x^{k + 1}}{k}
=
\sum_{k = 1}^\infty \frac{x^{2k}}{k (2k - 1)}
.
$$
Since $\sum_{k = 1}^\infty \frac{1}{k (2k - 1)} = \log 4 < \infty$, we can extend the domain of $f(x)$ to $[-1, 1]$ and obtain the elementary inequality $f(x) \le (\log 4) x^2$.
Noticing for $N \ge 2$ that $N \log\big(1 + \frac{1}{N}\big) \ge 2 \log(1 + \frac12) = \log\frac94$ and $\big|\frac{\chi_i}{\sqrt{N}}\big| \le 1$, we further obtain
\begin{align*}
&\qquad
\sum_{n = 1}^N \log\frac{
\PP_{\mix}(\yb^{(1:N)} ~|~ z^{(1:N)} = \{\{n\}, [N] \cap \{n\}^c\})
}{
\PP_{\mix}(\yb^{(1:N)} ~|~ z^{(1:N)} = \{[N]\})
}
\\&=
\sum_{i = 1}^p \left(
N \log \left( 1 + \frac{1}{N} \right)
-
\frac{N}{2} f\left( \frac{\chi_i}{\sqrt{N}} \right)
\right)
\\&\ge
\sum_{i = 1}^p \left(
\log\frac94 - \frac{N}{2} (\log 4) \frac{\chi_i^2}{N}
\right)
\\&=
(\log 2) p \left(
2(\log_2 3 - 1)
-
\frac{1}{p} \sum_{i = 1}^p \chi_i^2
\right)
.
\end{align*}

Lemma \ref{lemm:chi_subexp} shows that for some constants $v, a > 0$, $\chi_i^2$ is sub-exponential with parameters $(v, a)$ for all $N \ge 1$ and $i \in [p]$.
Applying Bernstein-type concentration inequality (e.g. Proposition 2.9 in \citet{wainwright2019high}), we obtain
$$
\PP_{\mix}^*\left(
\frac{1}{p} \sum_{i = 1}^p \chi_i^2
>
1.1
\right)
\le
\exp\left(
- \min\left\{
\frac{1.21}{2v^2}
,~
\frac{1.1}{2 a}
\right\} p
\right)
.
$$
Then we have
\begin{align*}
&\qquad
\PP_{\mix}^*\left(
\sum_{n = 1}^N \log\frac{
\PP_{\mix}(\yb^{(1:N)} ~|~ z^{(1:N)} = \{\{n\}, [N] \cap \{n\}^c\})
}{
\PP_{\mix}(\yb^{(1:N)} ~|~ z^{(1:N)} = \{[N]\})
}
<
(\log 2) (2 \log_2 3 - 3.1) p
\right)
\\&\le
\PP_{\mix}^*\left(
\frac{1}{p} \sum_{i = 1}^p \chi_i^2
>
1.1
\right)
\le
\kappa^{-p}
\end{align*}
for $\kappa := \exp\big(\min\big\{ \frac{1.21}{2v^2},~ \frac{1.1}{2 a} \big\}\big) > 1$.
Since $\sum_{p = 1}^\infty \kappa^{-p} = \frac{1}{\kappa - 1} < \infty$, by Borel-Cantelli lemma we obtain
$$
\liminf_{p \to \infty}
\sum_{n = 1}^N \log\frac{
\PP_{\mix}(\yb^{(1:N)} ~|~ z^{(1:N)} = \{\{n\}, [N] \cap \{n\}^c\})
}{
\PP_{\mix}(\yb^{(1:N)} ~|~ z^{(1:N)} = \{[N]\})
}
\ge
\liminf_{p \to \infty} (\log 2) (2 \log_2 3 - 3.1) p
=
\infty
,
$$
$\PP_{\mix}^*$-almost surely.

By concavity of function $\log x$, applying Jensen's inequality we further have
\begin{align*}
&\qquad
\liminf_{p \to \infty} \frac{
\sum_{n = 1}^N \PP_{\mix}(\yb^{(1:N)} ~|~ z^{(1:N)} = \{\{n\}, [N] \cap \{n\}^c\})
}{
\PP_{\mix}(\yb^{(1:N)} ~|~ z^{(1:N)} = \{[N]\})
}
\\&=
N \exp\left(
\liminf_{p \to \infty} \log\left(
\frac{1}{N} \sum_{n = 1}^N \frac{
\PP_{\mix}(\yb^{(1:N)} ~|~ z^{(1:N)} = \{\{n\}, [N] \cap \{n\}^c\})
}{
\PP_{\mix}(\yb^{(1:N)} ~|~ z^{(1:N)} = \{[N]\})
}
\right)
\right)
\\&\ge
N \exp\left(
\liminf_{p \to \infty} \frac{1}{N} \sum_{n = 1}^N \log\frac{
\PP_{\mix}(\yb^{(1:N)} ~|~ z^{(1:N)} = \{\{n\}, [N] \cap \{n\}^c\})
}{
\PP_{\mix}(\yb^{(1:N)} ~|~ z^{(1:N)} = \{[N]\})
}
\right)
\\&=
\infty
,
\end{align*}
$\PP_{\mix}^*$-almost surely.
\end{proof}

\section{Bayes Oracle Clustering Property}\label{supp_sec:oracle}

In this section, we focus on the Bayes oracle clustering property of our model.
As promised in Section \ref{ssec:escape}, we present
simpler forms of the sufficient conditions of Assumption \ref{assu:oracle_linear} in Section \ref{supp_ssec:oracle_assu}.
We prove Theorem \ref{theo:oracle} in Section \ref{supp_ssec:oracle_proof}.

\subsection{Sufficient Conditions for Assumption \ref{assu:oracle_linear}}\label{supp_ssec:oracle_assu}

Although Assumption \ref{assu:oracle_linear} has a complex form, simpler sufficient conditions can be obtained. For example, Assumption \ref{assu:oracle_linear} holds if, asymptotically, the average of $(\beta_{i, 0}^*, \gb_i^* \circ \bbeta_i^*)$ remains within a bounded distance of $(\upsilon_0^*, \bupsilon^*)$, and the entries of $\Bb$ are uniformly bounded.
Here, $(\beta_{i, 0}^*, \gb_i^* \circ \bbeta_i^*)$ and $(\upsilon_0^*, \bupsilon^*)$ represent $(q + 1)$-dimensional vectors, obtained by left-appending $\beta_{i, 0}^*$ and $\upsilon_0^*$ to $\gb_i^* \circ \bbeta_i^*$ and $\bupsilon^*$, respectively.
This is formally stated in the following lemma.

\begin{lemma}\label{lemm:oracle_linear}
Suppose
$$
\limsup_{p \to \infty} \left\|
\frac{1}{p} \sum_{i = 1}^p (\beta_{i, 0}^*, \gb_i^* \circ \bbeta_i^*) - (\upsilon_0^*, \bupsilon^*)
\right\|
<
\delta'
$$
and
$$
|\beta_{i, j}^*|
\le
B
\quad
\forall i \in \NN, j \in [0, q]
,
$$
then Assumption \ref{assu:oracle_linear} holds for some $\delta > 0$.
\end{lemma}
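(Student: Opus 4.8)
The plan is to verify that the very same limiting values $(\upsilon_0^*,\bupsilon^*)$ supplied by the hypothesis serve as the constants in Assumption \ref{assu:oracle_linear}, and to exhibit a finite $\delta>0$ for them. The key structural idea is to insert an intermediate ``averaged logit'' term and to use uniform boundedness of the per-coordinate logits to confine every probability appearing in the statement to a compact subinterval of $(0,1)$ on which $\logit$ is Lipschitz. Concretely, I would first write $\eta_i^{(n)} := \beta_{i,0}^* + (\gb_i^* \circ \bbeta_i^*)^\top \wb_*^{(n)}$, so that the $i$th summand in Assumption \ref{assu:oracle_linear} is exactly $\logistic(\eta_i^{(n)})$. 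The boundedness hypothesis $|\beta_{i,j}^*| \le B$ together with $\wb_*^{(n)} \in \{0,1\}^q$ gives the uniform bound $|\eta_i^{(n)}| \le (q+1)B =: M$; hence each $\logistic(\eta_i^{(n)})$, and therefore the average $\frac{1}{p}\sum_{i=1}^p \logistic(\eta_i^{(n)})$, lies in $[\logistic(-M),\logistic(M)]\subset(0,1)$, bounded away from both endpoints. (The same entrywise bound also shows $(\upsilon_0^*,\bupsilon^*)$ is finite.)

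Setting $\bar\eta^{(n)} := \frac{1}{p}\sum_{i=1}^p \eta_i^{(n)}$ and using the identity $\logit(\logistic(x)) = x$, I would split the target quantity by the triangle inequality:
$$
\left| \logit\left(\frac{1}{p}\sum_{i=1}^p\logistic(\eta_i^{(n)})\right) - \upsilon_0^* - (\bupsilon^*)^\top \wb_*^{(n)} \right|
\le
\underbrace{\left| \logit\left(\frac{1}{p}\sum_{i=1}^p\logistic(\eta_i^{(n)})\right) - \bar\eta^{(n)} \right|}_{(\mathrm{I})}
+
\underbrace{\left| \bar\eta^{(n)} - \upsilon_0^* - (\bupsilon^*)^\top \wb_*^{(n)} \right|}_{(\mathrm{II})}.
$$
The term $(\mathrm{II})$ is the easy, linear part. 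Since $\bar\eta^{(n)}$ equals the inner product of $\frac{1}{p}\sum_{i=1}^p (\beta_{i,0}^*, \gb_i^* \circ \bbeta_i^*) - (\upsilon_0^*,\bupsilon^*)$ with $(1,\wb_*^{(n)})$, and $\|(1,\wb_*^{(n)})\| \le \sqrt{1+q}$ because $\wb_*^{(n)} \in \{0,1\}^q$, the Cauchy--Schwarz inequality combined with the first hypothesis yields $\limsup_{p\to\infty}(\mathrm{II}) \le \sqrt{1+q}\,\delta'$.

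The term $(\mathrm{I})$ is where the real work lies, and I expect it to be the main obstacle: it is a Jensen-type discrepancy between the logit of the averaged probability and the average of the logit-scale values, and it does \emph{not} vanish as $p\to\infty$. The resolution is to show it merely stays bounded. Both $\frac{1}{p}\sum_{i=1}^p\logistic(\eta_i^{(n)})$ and $\logistic(\bar\eta^{(n)})$ lie in the compact interval $[\logistic(-M),\logistic(M)]$, on which $\logit$ is Lipschitz with constant $L := (1+e^M)^2/e^M$; thus $(\mathrm{I}) \le L\,\bigl|\frac{1}{p}\sum_{i=1}^p\logistic(\eta_i^{(n)}) - \logistic(\bar\eta^{(n)})\bigr|$, and since $\logistic$ is $\tfrac14$-Lipschitz and $|\eta_i^{(n)} - \bar\eta^{(n)}|\le 2M$, the remaining difference is at most the finite constant $\logistic(M)-\logistic(-M)$.

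Putting the two bounds together, both $(\mathrm{I})$ and $(\mathrm{II})$ have finite $\limsup$, so any
$$
\delta > L\bigl(\logistic(M)-\logistic(-M)\bigr) + \sqrt{1+q}\,\delta'
$$
satisfies Assumption \ref{assu:oracle_linear} uniformly over the finitely many $n \in [N]$, which completes the argument. I would close by remarking that the non-vanishing Jensen gap in $(\mathrm{I})$ is precisely the reason $\delta$ cannot be driven down to $\delta'$: the hypotheses control only the intercept and linear coefficients through $(\mathrm{II})$, whereas it is the uniform bound $B$ alone that keeps the nonlinear gap finite.
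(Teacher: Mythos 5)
Your proof is correct, but it takes a genuinely different route from the paper's. You center the triangle inequality at the averaged logit $\bar\eta^{(n)}$, use the exact identity $\logit(\logistic(\bar\eta^{(n)})) = \bar\eta^{(n)}$, bound the linear discrepancy (II) by Cauchy--Schwarz, and tame the Jensen gap (I) with the Lipschitz constant $L = (1+e^M)^2/e^M$ of $\logit$ on $[\logistic(-M), \logistic(M)]$, $M = (q+1)B$. The paper instead builds an explicit secant-line linearization $\bigl|\logistic(x) - \tfrac12 - \tfrac{x}{2(q+1)B}\bigr| \le c$ on $[-(q+1)B, (q+1)B]$, transfers it to the inverse as $\bigl|\logit(y) - 2(q+1)By + (q+1)B\bigr| \le 2(q+1)Bc$, and chains the two approximations to bound the target directly by $\sqrt{q+1}\,\delta' + 8(q+1)Bc$. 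Your decomposition is cleaner and more modular: it makes transparent that the hypotheses control only the linear part, while boundedness alone keeps the nonlinear gap finite. What the paper's heavier linearization buys is a far better constant: since $c < \tfrac12$, its additive term $8(q+1)Bc$ grows only linearly in $(q+1)B$, whereas your term $L\bigl(\logistic(M) - \logistic(-M)\bigr)$ is of order $e^{(q+1)B}$, i.e., exponentially large. This matters downstream, because Assumption \ref{assu:oracle_inequality} requires $\tau^* > 8\delta$, so an exponentially inflated $\delta$ makes that companion assumption correspondingly harder to satisfy --- presumably the reason the paper opts for the sharper secant approximation. One cosmetic slip in your write-up: the bound $\bigl|\tfrac1p\sum_{i=1}^p \logistic(\eta_i^{(n)}) - \logistic(\bar\eta^{(n)})\bigr| \le \logistic(M) - \logistic(-M)$ follows from both quantities lying in that interval, not from the $\tfrac14$-Lipschitz property of $\logistic$ (which would give $M/2$ instead); either constant is finite, so nothing breaks.
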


The proof of Lemma \ref{lemm:oracle_linear} is as follows.

\begin{proof}[Proof of Lemma \ref{lemm:oracle_linear}]
For the function $\logistic(x) = \frac{e^x}{1 + e^x}$ with domain $[-(q + 1)B, (q + 1)B]$, we have linearization
$$
\left|
\logistic(x) - \frac12 - \frac{1}{2(q + 1)B} x
\right|
\le
c
$$
for some $c > 0$.
The function $f(x) := \logistic(x) - \frac12 - \frac{1}{2(q + 1)B}x$ with $x \ge 0$ has derivative $f'(x) = \frac{e^x}{(1 + e^x)^2} - \frac{1}{2(q + 1)B}$ and achieves its maximum at $x^* := \log\big( (q + 1)B - 1 + \sqrt{(q + 1)^2 B^2 - 2(q + 1)B} \big)$.
By symmetry of $f(x)$ with respect to the origin, we can take $c := \max\{f(x^*), -f((q + 1)B)\}$.

As the inverse function of $\logistic(x)$, we also obtain the linearization of $\logit(x)$ as
$$
\left|
\logit(x) - 2(q + 1)B x + (q + 1)B
\right|
\le
2(q + 1)Bc
.
$$

Then for each $n \in [N]$, we have
\begin{align*}
&\qquad
\logit\left(
\frac{1}{p} \sum_{i = 1}^p \logistic\big(
\beta_{i, 0}^* + (\gb_i^* \circ \bbeta_i^*)^\top \wb_*^{(n)}
\big)
\right)
-
\left(
\upsilon_0^*
+
(\bupsilon^*)^\top \wb_*^{(n)}
\right)
\\&\le
2(q + 1)B \left(
\frac{1}{p} \sum_{i = 1}^p \logistic\big(
\beta_{i, 0}^* + (\gb_i^* \circ \bbeta_i^*)^\top \wb_*^{(n)}
\big)
-
\logistic\big(
\upsilon_0^* + (\bupsilon^*)^\top \wb_*^{(n)}
\big)
\right)
+
4(q + 1)Bc
\\&=
\frac{2(q + 1)B}{p} \sum_{i = 1}^p  \left(
\logistic\big(
\beta_{i, 0}^* + (\gb_i^* \circ \bbeta_i^*)^\top \wb_*^{(n)}
\big)
-
\logistic\big(
\upsilon_0^* + (\bupsilon^*)^\top \wb_*^{(n)}
\big)
\right)
+
4(q + 1)Bc
\\&\le
\frac{2(q + 1)B}{p} \sum_{i = 1}^p  \left(
\frac{1}{2(q + 1)B} \left(
(\beta_{i, 0}^* - \upsilon_0^*)
+
(\gb_i^* \circ \bbeta_i^* - \bupsilon^*)^\top \wb_*^{(n)}
\right)
+
2c
\right)
+
4(q + 1)Bc
\\&=
\left(
\frac{1}{p} \sum_{i = 1}^p (\beta_{i, 0}^*, \gb_i^* \circ \bbeta_i^*)
-
(\upsilon_0^*, \bupsilon^*)
\right)^\top (1, \wb_*^{(n)})
+
8(q + 1)Bc
.
\end{align*}
The other side of this inequality holds similarly, i.e.
\begin{align*}
&\qquad
\logit\left(
\frac{1}{p} \sum_{i = 1}^p \logistic\big(
\beta_{i, 0}^* + (\gb_i^* \circ \bbeta_i^*)^\top \wb_*^{(n)}
\big)
\right)
-
\left(
\upsilon_0^*
+
(\bupsilon^*)^\top \wb_*^{(n)}
\right)
\\&\ge
\left(
\frac{1}{p} \sum_{i = 1}^p (\beta_{i, 0}^*, \gb_i^* \circ \bbeta_i^*)
-
(\upsilon_0^*, \bupsilon^*)
\right)^\top (1, \wb_*^{(n)})
-
8(q + 1)Bc
.
\end{align*}

Combining both sides of the above inequality and noticing $\|(1, \wb_*^{(n)})\| \le \sqrt{q + 1}$, we obtain for all $n \in [N]$
\begin{align*}
&\qquad
\limsup_{p \to \infty} \left| \logit\left(
\frac{1}{p} \sum_{i = 1}^p \frac{\exp\big( \beta_{i, 0}^* + (\gb_i^* \circ \bbeta_i^*)^\top \wb_*^{(n)} \big)}{1 + \exp\big( \beta_{i, 0}^* + (\gb_i^* \circ \bbeta_i^*)^\top \wb_*^{(n)} \big)}
\right)
-
\upsilon_0^*
-
(\bupsilon^*)^\top \wb_*^{(n)}
\right|
\\&\le
\limsup_{p \to \infty} \left|
\left(
\frac{1}{p} \sum_{i = 1}^p (\beta_{i, 0}^*, \gb_i^* \circ \bbeta_i^*)^\top - (\upsilon_0^*, \bupsilon^*)
\right)^\top (1, \wb_*^{(n)})
\right|
+
8(q + 1)Bc
\\&\le
\sqrt{q + 1} \delta' + 8(q + 1)Bc
.
\end{align*}
This suggests that Assumption \ref{assu:oracle_linear} holds with any $\delta > \sqrt{q + 1} \delta' + 8(q + 1)Bc$.
\end{proof}

\subsection{Proof of Theorem \ref{theo:oracle}}\label{supp_ssec:oracle_proof}

For true attribute vectors $\wb_*^{(1:N)}$ satisfying Assumption \ref{assu:oracle_w}, we define the collection of all $\wb^{(1:N)}$ obtained by permuting the $q$ attributes and interchanging the zero/one of each attribute as
\begin{equation}\label{eq:cW}
\cW(\wb_*^{(1:N)})
:=
\left\{
\wb^{(1:N)} \in \{0, 1\}^{N \times q}:~
\exists \sP, \sW_1, \ldots, \sW_q
\text{ s.t. }
w_j^{(n)} = \sW_j(w_{*, \sP(j)}^{(n)})
,~
\forall n \in [N], j \in [q]
\right\}
,
\end{equation}
where $\sP$ is an arbitrary permutation map on $[q]$, and each $\sW_j$ $(j \in [q])$ is an arbitrary permutation map on $\{0, 1\}$.
For instance, for $\wb_*^{(1:4)} = \{(0, 0), (0, 1), (1, 0), (1, 1)\}$ with $q = 2$, we have
\begin{align*}
&\qquad
\cW(\wb_*^{(1:4)})
\\&=
\Bigg\{
\left\{
\wb^{(1)} = (a, b)
,~
\wb^{(2)} = (a, 1 - b)
,~
\wb^{(3)} = (1 - a, b)
,~
\wb^{(4)} = (1 - a, 1 - b)
\right\}
:~
a, b \in \{0, 1\}
\Bigg\}
\\&\quad\bigcup
\Bigg\{
\left\{
\wb^{(1)} = (b, a)
,~
\wb^{(2)} = (1 - b, a)
,~
\wb^{(3)} = (b, 1 - a)
,~
\wb^{(4)} = (1 - b, 1 - a)
\right\}
:~
a, b \in \{0, 1\}
\Bigg\}
,
\end{align*}
which includes a total of 8 possible combinations of $\wb^{(1:4)}$.

Importantly, for a prior that is i.i.d. and symmetric on $\Ab$, attribute vectors $\wb^{(1:N)} \in \cW(\wb_*^{(1:N)})$ generate the same conditional probabilities of $z^{(1:N)} ~|~ \xb^{(1:N)}, \wb^{(1:N)}$ as the oracle probabilities defined in Definition \ref{defi:oracle_prob}.
This is stated formally in the following lemma, whose proof is deferred to Section \ref{supp_ssec:oracle_lemmas}.

\begin{lemma}\label{lemm:cW_same_prob}
Let the true attribute vectors $\wb_*^{(1:N)}$ satisfy Assumption \ref{assu:oracle_w}.
If the prior distribution is i.i.d. and symmetric on $\Ab$, then for any attribute vectors $\wb^{(1:N)} \in \cW(\wb_*^{(1:N)})$, we have
$$
\PP(z^{(1:N)} ~|~ \xb^{(1:N)}, \wb^{(1:N)})
=
\PP(z^{(1:N)} ~|~ \xb^{(1:N)}, \wb_*^{(1:N)})
,\quad
\forall z^{(1:N)} \in [d]^N
.
$$
\end{lemma}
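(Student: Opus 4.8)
The plan is to reduce the claim to an invariance property of the $\Ab$-marginalized likelihood
$\PP(\wb^{(1:N)} \mid z^{(1:N)}) = \int \PP(\Ab) \prod_{n=1}^N \PP(\wb^{(n)} \mid z^{(n)}, \Ab)\, \ud \Ab$. In the oracle formula of Definition \ref{defi:oracle_prob}, the attribute vectors enter only through this quantity, whereas $\PP(z^{(1:N)} \mid \xb^{(1:N)})$ is free of $\wb$. Hence, writing the left-hand side of the lemma via the same formula with $\wb^{(1:N)}$ in place of $\wb_*^{(1:N)}$, it suffices to prove that $\PP(\wb^{(1:N)} \mid z^{(1:N)}) = \PP(\wb_*^{(1:N)} \mid z^{(1:N)})$ for every fixed $z^{(1:N)} \in [d]^N$ and every $\wb^{(1:N)} \in \cW(\wb_*^{(1:N)})$; substituting the common value into both numerator and denominator then gives the stated equality of conditional probabilities.

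Each element of $\cW(\wb_*^{(1:N)})$ arises from a coordinate permutation $\sP$ on $[q]$ followed by coordinatewise flips $\sW_1, \ldots, \sW_q$ on $\{0,1\}$, via $w_j^{(n)} = \sW_j(w_{*,\sP(j)}^{(n)})$. Since a composition of prior-preserving reparametrizations of $\Ab$ is again prior-preserving, I would verify the two generating transformations separately. For a pure permutation, I reindex the product $\prod_j \alpha_{j,z}^{w_j^{(n)}}(1-\alpha_{j,z})^{1-w_j^{(n)}}$ and introduce $\Ab^\sharp$ obtained by permuting the rows of $\Ab$ by $\sP^{-1}$, which yields the coordinatewise identity $\PP(\wb^{(n)} \mid z^{(n)}, \Ab) = \PP(\wb_*^{(n)} \mid z^{(n)}, \Ab^\sharp)$ for every $n$. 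For a pure flip of a coordinate $j$, I use the elementary relation $a^{1-w}(1-a)^{w} = (1-a)^{w}\,a^{1-w}$ read with $a^\sharp := 1-a$, which again gives $\PP(\wb^{(n)} \mid z^{(n)}, \Ab) = \PP(\wb_*^{(n)} \mid z^{(n)}, \Ab^\sharp)$, now with $\Ab^\sharp$ obtained by complementing the entries of the relevant row.

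I would then substitute these identities into the integral and change variables $\Ab \mapsto \Ab^\sharp$. The decisive point is that the i.i.d.\ and symmetric prior is invariant under $\Ab \mapsto \Ab^\sharp$: row permutation leaves $\prod_{j,h}\PP(\alpha_{j,h})$ unchanged because the factors are identical across $j$, while the flip $\alpha_{j,h}\mapsto 1-\alpha_{j,h}$ leaves each factor unchanged because $\PP(\alpha_{j,h}) = \PP(1-\alpha_{j,h})$; both maps are measure-preserving bijections of $[0,1]^{q\times d}$ (a coordinate permutation, respectively a unit-Jacobian reflection). Consequently $\PP(\Ab)\,\ud\Ab = \PP(\Ab^\sharp)\,\ud\Ab^\sharp$, and the integral equals $\int \PP(\Ab^\sharp)\prod_n \PP(\wb_*^{(n)} \mid z^{(n)}, \Ab^\sharp)\,\ud\Ab^\sharp = \PP(\wb_*^{(1:N)} \mid z^{(1:N)})$.

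I expect no serious obstacle here; the only delicate part is the bookkeeping in composing $\sP$ with the flips $\sW_j$ and checking that the combined reparametrization of $\Ab$ preserves the prior measure, which I resolve by reducing to the two generators and invoking closure of prior-preserving maps under composition. I note that Assumption \ref{assu:oracle_w} is not actually needed for this invariance; it enters only in the broader argument of Theorem \ref{theo:oracle}.
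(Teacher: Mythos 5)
Your proposal is correct and takes essentially the same route as the paper's own proof: both reduce the claim to showing $\PP(\wb^{(1:N)} \mid z^{(1:N)}) = \PP(\wb_*^{(1:N)} \mid z^{(1:N)})$ for every fixed $z^{(1:N)}$, and both establish this by a prior-preserving reparametrization of $\Ab$ (row permutation justified by the i.i.d.\ property, entrywise reflection $\alpha_{j,h} \mapsto 1 - \alpha_{j,h}$ justified by symmetry), before substituting the common value into numerator and denominator of the Bayes formula. The only difference is bookkeeping—the paper factorizes the marginal likelihood into one-dimensional integrals over the entries $\alpha_{j,h}$ and changes variables inside each, handling permutation and flip simultaneously, whereas you perform a single global measure-preserving change of variables on $[0,1]^{q \times d}$—and your observation that Assumption \ref{assu:oracle_w} is not actually used in this invariance argument is likewise consistent with the paper's proof.
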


For any data dimensionality $p \ge 1$ and each observation $n \in [N]$, we define the event
\begin{equation}
\cH_{p, n}
:=
\left\{
\left|
\logit\left(
\frac{1}{p} \sum_{i = 1}^p y_i^{(n)}
\right)
-
\upsilon_0^*
-
(\bupsilon^*)^\top \wb_*^{(n)}
\right|
<
\delta
\right\}
.
\end{equation}
We find that, on the intersection of events $\bigcap_{n = 1}^N \cH_{p, n}$, the posterior distribution of $\wb^{(1:N)} ~|~ \xb^{(1:N)}, \yb^{(1:N)}$ concentrates on $\cW(\wb_*^{(1:N)})$, as presented in the next lemma.

\begin{lemma}\label{lemm:w_in_cW}
Let Assumptions \ref{assu:oracle_w}, \ref{assu:oracle_linear}, \ref{assu:oracle_inequality} hold.
On the event $\bigcap_{n = 1}^N \cH_{p, n}$, we have
$$
\PP(\wb^{(1:N)} \in \cW(\wb_*^{(1:N)}) ~|~ \xb^{(1:N)}, \yb^{(1:N)})
=
1
.
$$
\end{lemma}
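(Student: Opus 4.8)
The plan is to show that, conditional on $\bigcap_{n=1}^N \cH_{p,n}$, the marginal posterior of $\wb^{(1:N)}$ places no mass outside the orbit $\cW(\wb_*^{(1:N)})$, by reading the true attribute structure off the empirical frequencies $\bar y^{(n)} := \frac1p \sum_{i=1}^p y_i^{(n)}$. The starting point is the estimator $\widehat v^{(n)} := \logit(\bar y^{(n)})$: by definition of $\cH_{p,n}$ we have $|\widehat v^{(n)} - \upsilon_0^* - (\bupsilon^*)^\top \wb_*^{(n)}| < \delta$ for every $n$, so $\widehat v^{(n)}$ recovers the linear functional $(\bupsilon^*)^\top \wb_*^{(n)}$, up to the common shift $\upsilon_0^*$, within tolerance $\delta$. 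First I would record that the posterior mass of a candidate $\wb^{(1:N)}$ is, after marginalizing $\Gb,\Bb,\Ab,\bGamma$ against the prior, governed by the data likelihood $\prod_n \PP(\yb^{(n)} \mid \wb^{(n)}, \Gb, \Bb)$; since the prior weight on $(\Ab,\bGamma,z)$ contributes an order-one factor not scaling with $p$, the asymptotics in $p$ are controlled by whether some admissible $(\Gb,\Bb)$ can reproduce the observed per-site frequencies under the candidate $\wb^{(1:N)}$.

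The heart of the argument is the separation Assumption \ref{assu:oracle_inequality}. Taking $\cbb = \wb - \wb'$ for $\wb \neq \wb' \in \{0,1\}^q$ gives $\cbb \in \{-1,0,1\}^q \setminus\{\zero\}$ and hence $|(\bupsilon^*)^\top(\wb - \wb')| \geq \tau^* > 8\delta$, so the map $\wb \mapsto (\bupsilon^*)^\top \wb$ is injective on $\{0,1\}^q$ with all values separated by more than $8\delta$. Consequently the $\delta$-accurate estimates $\widehat v^{(n)}$ determine, on $\bigcap_n \cH_{p,n}$, the full partition of $[N]$ into blocks of observations sharing a common $\wb_*^{(n)}$: two estimates lie within $2\delta$ iff $\wb_*^{(n_1)} = \wb_*^{(n_2)}$, and otherwise differ by more than $6\delta$. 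To compare this structure with a candidate I would pass to differences and double-differences, where the common shift cancels: for indices $n_1,n_2,n_3,n_4$ the quantity $(\bupsilon^*)^\top(\wb_*^{(n_1)}-\wb_*^{(n_2)}-\wb_*^{(n_3)}+\wb_*^{(n_4)})$ involves $\cbb \in \{-2,-1,0,1,2\}^q$, and Assumption \ref{assu:oracle_inequality} forces it to be either $0$ or larger than $\tau^* > 8\delta$; the factor $8\delta = 4 \cdot 2\delta$ absorbs the two-sided estimation error of all four observations. This buffer is exactly what rules out a candidate whose induced linear structure disagrees with the truth's.

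Next I would use the covering Assumption \ref{assu:oracle_w}, $\{\wb_*^{(n)}\} \supseteq \{0,1\}^q$, to upgrade agreement of the recovered scalar structure to agreement of the attribute \emph{matrix} up to symmetry. Because every pattern in $\{0,1\}^q$ is witnessed among the observations, a candidate $\wb^{(1:N)}$ compatible with the recovered block partition and the double-difference constraints must realize, column by column, either an attribute $w_{*,j}$ or its complement, with the columns permuted; this is precisely the description of $\cW(\wb_*^{(1:N)})$ in \eqref{eq:cW}. I would formalize this by fixing representatives for $\zero$ and each $\eb_j$ among the observations, reading off which candidate attribute is switched on along each such representative, and checking that the resulting assignment is a column permutation composed with complementations. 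Conversely, every element of $\cW(\wb_*^{(1:N)})$ reproduces the observed frequencies for suitably permuted and flipped $(\Gb,\Bb)$, so it retains full posterior support, matching the invariance exploited in Lemma \ref{lemm:cW_same_prob}.

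The main obstacle is the marginalization over $(\Gb,\Bb)$: I must show that \emph{no} admissible parameter value can rescue a candidate outside $\cW(\wb_*^{(1:N)})$, even though the logistic likelihood assigns strictly positive probability to every $\yb^{(n)}$. The difficulty is that the clean link between $\bar y^{(n)}$ and the linear functional $\upsilon_0 + \bupsilon^\top \wb^{(n)}$ rests on the Assumption \ref{assu:oracle_linear}-type well-conditioning, which holds for the truth but must be shown to fail—quantitatively, beyond the $8\delta$ margin—for any reparametrization realizing an off-orbit candidate. I expect to handle this by combining the per-site Lipschitz bound $|\logistic'| \le 1/4$ from \eqref{eq:logistic_derivative}, which limits how far $(\Gb,\Bb)$ can move the averaged logit, with the double-difference separation to show that the simultaneous logit shifts required across the $2^q$ witnessed patterns are mutually inconsistent. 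This quantitative incompatibility drives the off-orbit likelihood ratio to zero and yields the asserted posterior mass $1$ on $\cW(\wb_*^{(1:N)})$.
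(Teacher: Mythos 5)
Your identification machinery is essentially the paper's own: the estimator $\logit\big(\tfrac1p\sum_{i=1}^p y_i^{(n)}\big)$, the separation of $\wb \mapsto (\bupsilon^*)^\top \wb$ over $\{0,1\}^q$ extracted from Assumption \ref{assu:oracle_inequality}, the recovery of the partition of $[N]$ into blocks sharing a common $\wb_*^{(n)}$ (the paper implements this by single-linkage clustering of the $L_p^{(n)}$ values at threshold $\tau^*/4$), the double-difference test with a $4\delta$ error budget against the $\tau^*/2$ threshold, and the use of Assumption \ref{assu:oracle_w} to conclude that the only relabelings consistent with that structure are ``permute the $q$ attributes, then flip some of them,'' i.e.\ membership in $\cW(\wb_*^{(1:N)})$ as defined in \eqref{eq:cW}. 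Up to packaging (the paper characterizes admissible half/half partitions $\cV = \cV_1 \cup \cV_2$ of the $2^q$ cluster centers together with bijections $\sV$, rather than fixing representatives of $\zero$ and the $\eb_j$'s), this is the same argument.

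The genuine gap is in your final paragraph, and it is not cosmetic. The lemma asserts an \emph{exact} identity at every fixed $p$ on the event $\bigcap_{n=1}^N \cH_{p,n}$: the posterior mass of the orbit equals $1$. That exactness is used downstream — in the proof of Theorem \ref{theo:oracle}, the sum over all $\wb^{(1:N)} \in \{0,1\}^{N\times q}$ is replaced by the sum over $\cW(\wb_*^{(1:N)})$ with no remainder term before Lemma \ref{lemm:cW_same_prob} collapses it. Your proposed handling of the marginalization over $(\Gb,\Bb)$ — Lipschitz control of the averaged logit plus a ``quantitative incompatibility'' that ``drives the off-orbit likelihood ratio to zero'' — is, first, only a plan (the incompatibility is never established), and second, aimed at the wrong target: a likelihood-ratio-decay argument can at best yield $\PP(\wb^{(1:N)} \in \cW(\wb_*^{(1:N)}) \mid \xb^{(1:N)}, \yb^{(1:N)}) \to 1$ as $p \to \infty$, never the exact value $1$ at fixed $p$; as you yourself observe, the logistic likelihood is strictly positive at every configuration, so no domination argument can make off-orbit mass exactly vanish. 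The paper closes this step by a different move: because $\tau^*$ is a \emph{known} constant, the clustering and double-difference constraints can be checked against any candidate, and the paper argues that any $\wb^{(1:N)}$ carrying positive posterior probability must satisfy the same data-derived constraints, hence must take the form $w_j^{(n)} = \sW_j(w_{*,\sP(j)}^{(n)})$ for some permutation $\sP$ of $[q]$ and flips $\sW_j$ — which is exactly membership in $\cW(\wb_*^{(1:N)})$. However tersely that step is stated, it is what produces ``$=1$''; your proposal substitutes for it an argument that structurally cannot produce it, so as written the proof does not establish the lemma.
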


Moreover, under the data generating distribution $\PP(\xb^{(1:N)}, \yb^{(1:N)} ~|~ \Ab^*, \Bb^*, \bGamma^*, \Gb^*)$, the complement of each event $\cH_{p, n}$ has probability decaying exponentially fast in $p$ as $p \to \infty$, as stated in the following lemma.

\begin{lemma}\label{lemm:event_cH_prob}
Let Assumptions \ref{assu:oracle_w}, \ref{assu:oracle_linear}, \ref{assu:oracle_inequality} hold.
Then there exists constants $p_0 \in \NN$ and $c > 0$, such that for all $p \ge p_0$ and $n \in [N]$, we have
$$
\PP(\cH_{p, n}^c ~|~ \wb_*^{(n)}, \Gb^*, \Bb^*)
\le
2e^{- c p}
.
$$
\end{lemma}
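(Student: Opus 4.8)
The plan is to condition on $\wb_*^{(n)}, \Gb^*, \Bb^*$ and exploit that, under the data generating process \eqref{eq:model}, the entries $y_1^{(n)}, \ldots, y_p^{(n)}$ are then independent Bernoulli variables with success probabilities $\pi_i := \logistic(\beta_{i,0}^* + (\gb_i^* \circ \bbeta_i^*)^\top \wb_*^{(n)})$. Write $\bar y_p := \frac{1}{p}\sum_{i=1}^p y_i^{(n)}$ for the empirical mean, $\bar\pi_p := \frac{1}{p}\sum_{i=1}^p \pi_i$ for its conditional expectation, and $m^* := \upsilon_0^* + (\bupsilon^*)^\top \wb_*^{(n)}$ for the target. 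The event $\cH_{p,n}$ asks that $\logit(\bar y_p)$ lie within $\delta$ of $m^*$, so the whole argument reduces to showing that $\logit(\bar y_p)$ concentrates around $m^*$.

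First I would invoke Assumption \ref{assu:oracle_linear}: since its $\limsup$ is \emph{strictly} below $\delta$ and there are only finitely many $n \in [N]$, there exist $\eta > 0$ and $p_0 \in \NN$ such that $|\logit(\bar\pi_p) - m^*| \le \delta - 2\eta$ for all $p \ge p_0$ and all $n$. The crucial consequence is that $\logit(\bar\pi_p)$ stays in the bounded window $(m^* - \delta, m^* + \delta)$, which pins $\bar\pi_p$ inside a fixed compact subinterval $[c_1, c_2] \subset (0,1)$ bounded away from the endpoints. On any such subinterval the logit map is Lipschitz with a finite constant; this is exactly what lets me transfer concentration of $\bar y_p$ into concentration of $\logit(\bar y_p)$.

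Next I would reduce $\cH_{p,n}$ to a deviation event for $\bar y_p$. Fix a compact subinterval $[c_1', c_2'] \subset (0,1)$ containing $[c_1, c_2]$ in its interior, let $L$ be the finite Lipschitz constant of $\logit$ on $[c_1', c_2']$, and choose $\epsilon_0 > 0$ small enough that $\bar\pi_p \pm \epsilon_0 \in [c_1', c_2']$ and $L\epsilon_0 \le \eta$. On the event $\{|\bar y_p - \bar\pi_p| < \epsilon_0\}$ the average $\bar y_p$ lies in $[c_1', c_2']$, so $|\logit(\bar y_p) - \logit(\bar\pi_p)| \le L\epsilon_0 \le \eta$; combined with $|\logit(\bar\pi_p) - m^*| \le \delta - 2\eta$ this gives $|\logit(\bar y_p) - m^*| < \delta$, i.e. $\cH_{p,n}$ holds. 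Hence $\cH_{p,n}^c \subseteq \{|\bar y_p - \bar\pi_p| \ge \epsilon_0\}$. Since the $y_i^{(n)}$ are independent and bounded in $[0,1]$ given $\wb_*^{(n)}, \Gb^*, \Bb^*$, Hoeffding's inequality yields
\[
\PP(\cH_{p,n}^c ~|~ \wb_*^{(n)}, \Gb^*, \Bb^*) \le \PP\big(|\bar y_p - \bar\pi_p| \ge \epsilon_0 ~\big|~ \wb_*^{(n)}, \Gb^*, \Bb^*\big) \le 2\exp(-2 p \epsilon_0^2),
\]
so the claim holds with $c := 2\epsilon_0^2$. Taking the largest required $p_0$ and smallest admissible $c$ over the finitely many $n \in [N]$ makes the constants uniform.

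The main obstacle is the unboundedness of $\logit$ near $0$ and $1$: no global Lipschitz bound is available, and if $\bar y_p$ drifted toward an endpoint then $\logit(\bar y_p)$ could escape the $\delta$-window even for a small deviation $|\bar y_p - \bar\pi_p|$. The resolution is the two-step confinement above, where Assumption \ref{assu:oracle_linear} first traps the population mean $\bar\pi_p$ in a compact subinterval of $(0,1)$ and the deviation event then keeps $\bar y_p$ inside a slightly larger such subinterval, on which the Lipschitz estimate is valid. Assumptions \ref{assu:oracle_w} and \ref{assu:oracle_inequality} are not actually needed for this lemma; they enter only the companion Lemmas \ref{lemm:cW_same_prob} and \ref{lemm:w_in_cW}.
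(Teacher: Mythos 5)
Your proposal is correct and follows essentially the same route as the paper's proof: use Assumption \ref{assu:oracle_linear} to place $\logit(\bar\pi_p)$ strictly inside the $\delta$-window for all large $p$ (creating a positive slack), exploit the regularity of $\logit$ on a compact region bounded away from $\{0,1\}$ to convert this into a deviation event for $\bar y_p$ itself, and finish with a Hoeffding-type bound for the conditionally independent Bernoulli average. The only cosmetic differences are that the paper invokes uniform continuity on a compact set and Azuma--Hoeffding (with $c = \eth^2/2$) where you use a Lipschitz bound on a compact subinterval and standard Hoeffding (with $c = 2\epsilon_0^2$); your side remark that Assumptions \ref{assu:oracle_w} and \ref{assu:oracle_inequality} are not actually used in this lemma also matches the paper's proof.
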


The proofs of Lemmas \ref{lemm:w_in_cW} and \ref{lemm:event_cH_prob} are also deferred to Section \ref{supp_ssec:oracle_lemmas}.

With the introduction of Lemmas \ref{lemm:cW_same_prob}, \ref{lemm:w_in_cW}, \ref{lemm:event_cH_prob}, we are now ready to prove Theorem \ref{theo:oracle}.

\begin{proof}[Proof of Theorem \ref{theo:oracle}]
For notational simplicity, for each $p \ge 1$ we define the event $\cH_p := \bigcap_{n = 1}^N \cH_{p, n}$.
By Lemma \ref{lemm:event_cH_prob}, taking the union bound we have
$$
\PP(\cH_p^c ~|~ \wb_*^{(1:N)}, \Gb^*, \Bb^*)
\le
\sum_{n = 1}^N \PP(\cH_{p, n}^c ~|~ \wb_*^{(n)}, \Gb^*, \Bb^*)
\le
2N e^{- cp}
,
$$
which gives $\sum_{p = 1}^\infty \PP(\cH_p^c ~|~ \wb_*^{(1:N)}, \Gb^*, \Bb^*) = \frac{2N}{e^c - 1} < \infty$.
By Borel-Cantelli lemma, we obtain that
$$
\PP\left(
\limsup_{p \to \infty} \cH_p^c
~\Big|~
\wb_*^{(1:N)}, \Gb^*, \Bb^*
\right)
=
0
$$
i.e. the event $\liminf_{p \to \infty} \cH_p$ holds, $\PP(\yb^{(1:N)} ~|~ \wb_*^{(1:N)}, \Gb^*, \Bb^*)$-almost surely.

Applying Lemmas \ref{lemm:cW_same_prob} and \ref{lemm:w_in_cW}, on the event $\cH_p$ we have
\begin{align*}
\PP(z^{(1:N)} ~|~ \xb^{(1:N)}, \yb^{(1:N)})
&=
\sum_{\wb^{(1:N)} \in \{0, 1\}^{N \times q}} \PP(z^{(1:N)} ~|~ \xb^{(1:N)}, \wb^{(1:N)}) \PP(\wb^{(1:N)} ~|~ \xb^{(1:N)}, \yb^{(1:N)})
\\&=
\sum_{\wb^{(1:N)} \in \cW(\wb_*^{(1:N)})} \PP(z^{(1:N)} ~|~ \xb^{(1:N)}, \wb^{(1:N)}) \PP(\wb^{(1:N)} ~|~ \xb^{(1:N)}, \yb^{(1:N)})
\\&=
\PP(z^{(1:N)} ~|~ \xb^{(1:N)}, \wb_*^{(1:N)}) \sum_{\wb^{(1:N)} \in \cW(\wb_*^{(1:N)})} \PP(\wb^{(1:N)} ~|~ \xb^{(1:N)}, \yb^{(1:N)})
\\&=
\PP(z^{(1:N)} ~|~ \xb^{(1:N)}, \wb_*^{(1:N)})
.
\end{align*}

By combining all above results above, we obtain
$$
\lim_{p \to \infty} \PP(z^{(1:N)} ~|~ \xb^{(1:N)}, \yb^{(1:N)})
=
\PP(z^{(1:N)} ~|~ \xb^{(1:N)}, \wb_*^{(1:N)})
,
$$
$\PP(\yb^{(1:N)} ~|~ \wb_*^{(1:N)}, \Gb^*, \Bb^*)$-almost surely.
\end{proof}

\subsection{Proofs of Auxiliary Lemmas for Theorem \ref{theo:oracle}}\label{supp_ssec:oracle_lemmas}

In the following, we present the proof of Lemma \ref{lemm:cW_same_prob}.

\begin{proof}[Proof of Lemma \ref{lemm:cW_same_prob}]
Let the deep latent classes $z^{(1:N)} \in [d]^N$ be arbitrary.
For any $\wb^{(1:N)} \in \cW(\wb_*^{(1:N)})$, by its definition in \eqref{eq:cW}, there exist a permutation map $\sP$ on $[q]$ and permutation maps $\sW_1, \ldots, \sW_q$ on $\{0, 1\}$ such that
$$
w_j^{(n)}
=
\sW_j(w_{*, \sP(j)}^{(n)})
,\quad
\forall n \in [N], j \in [q]
.
$$
Since the prior distribution is independent across entries of $\Ab$, we have
$$
\PP(\wb^{(1:N)} ~|~ z^{(1:N)})
=
\prod_{j = 1}^q \prod_{h = 1}^d \int_0^1 \PP(\alpha_{j, h}) \prod_{n \in [N]:~ z^{(n)} = h} \PP(w_j^{(n)} ~|~ z^{(n)}, \alpha_{j, h}) ~\ud\alpha_{j, h}
.
$$

For each $j \in [q]$ and $h \in [d]$, since the prior is symmetric on $\Ab$ and identical for $\alpha_{j, h}$ and $\alpha_{\sP(j), h}$, we have
{\allowdisplaybreaks\begin{align*}
&\qquad
\int_0^1 \PP(\alpha_{j, h}) \prod_{n \in [N]:~ z^{(n)} = h} \PP(w_j^{(n)} ~|~ z^{(n)}, \alpha_{j, h}) ~\ud\alpha_{j, h}
\\&=
\int_0^1 \PP(\alpha_{j, h}) \prod_{n \in [N]:~ z^{(n)} = h} \PP(\sW_j(w_{*, \sP(j)}^{(n)}) ~|~ z^{(n)}, \alpha_{j, h}) ~\ud\alpha_{j, h}
\\&=
\int_0^1 \PP(\alpha_{\sP(j), h}) \prod_{n \in [N]:~ z^{(n)} = h} \PP(\sW_j(w_{*, \sP(j)}^{(n)}) ~|~ z^{(n)}, \alpha_{\sP(j), h}) ~\ud\alpha_{\sP(j), h}
\\&=
1_{\sW_j(0) = 0} \int_0^1 \PP(\alpha_{\sP(j), h})~ \alpha_{\sP(j), h}^{\sum_{n = 1}^N w_{*, \sP(j)}^{(n)} 1_{z^{(n)} = h}} (1 - \alpha_{\sP(j), h})^{\sum_{n = 1}^N (1 - w_{*, \sP(j)}^{(n)}) 1_{z^{(n)} = h}} ~\ud\alpha_{\sP(j), h}
\\&\qquad+
1_{\sW_j(0) = 1} \int_0^1 \PP(\alpha_{\sP(j), h})~ \alpha_{\sP(j), h}^{\sum_{n = 1}^N (1 - w_{*, \sP(j)}^{(n)}) 1_{z^{(n)} = h}} (1 - \alpha_{\sP(j), h})^{\sum_{n = 1}^N w_{*, \sP(j)}^{(n)} 1_{z^{(n)} = h}} ~\ud\alpha_{j, h}
\\&=
1_{\sW_j(0) = 0} \int_0^1 \PP(\alpha_{\sP(j), h})~ \alpha_{\sP(j), h}^{\sum_{n = 1}^N w_{*, \sP(j)}^{(n)} 1_{z^{(n)} = h}} (1 - \alpha_{\sP(j), h})^{\sum_{n = 1}^N (1 - w_{*, \sP(j)}^{(n)}) 1_{z^{(n)} = h}} ~\ud\alpha_{\sP(j), h}
\\&\qquad+
1_{\sW_j(0) = 1} \int_0^1 \PP(1 - \alpha_{\sP(j), h})~ (1 - \alpha_{\sP(j), h})^{\sum_{n = 1}^N (1 - w_{*, \sP(j)}^{(n)}) 1_{z^{(n)} = h}} \alpha_{\sP(j), h}^{\sum_{n = 1}^N w_{*, \sP(j)}^{(n)} 1_{z^{(n)} = h}} ~\ud\alpha_{\sP(j), h}
\\&=
\int_0^1 \PP(\alpha_{\sP(j), h})~ \alpha_{\sP(j), h}^{\sum_{n = 1}^N w_{*, \sP(j)}^{(n)} 1_{z^{(n)} = h}} (1 - \alpha_{\sP(j), h})^{\sum_{n = 1}^N (1 - w_{*, \sP(j)}^{(n)}) 1_{z^{(n)} = h}} ~\ud\alpha_{\sP(j), h}
\\&=
\int_0^1 \PP(\alpha_{\sP(j), h}) \prod_{n \in [N]:~ z^{(n)} = h} \PP(w_{*, \sP(j)}^{(n)} ~|~ z^{(n)}, \alpha_{\sP(j), h}) ~\ud\alpha_{\sP(j), h}
.
\end{align*}
Since $\sP$ permutes the index set $[q]$, this implies
\begin{align*}
&
\prod_{j = 1}^q \int_0^1 \PP(\alpha_{j, h}) \prod_{n \in [N]:~ z^{(n)} = h} \PP(w_j^{(n)} ~|~ z^{(n)}, \alpha_{j, h}) ~\ud\alpha_{j, h}
\\=&
\prod_{j = 1}^q \int_0^1 \PP(\alpha_{j, h}) \prod_{n \in [N]:~ z^{(n)} = h} \PP(w_{*, j}^{(n)} ~|~ z^{(n)}, \alpha_{j, h}) ~\ud\alpha_{j, h}
.
\end{align*}}

Putting everything above together, we obtain for any $\wb^{(1:N)} \in \cW(\wb_*^{(1:N)})$ and any $z^{(1:N)} \in [d]^N$ that
\begin{align*}
\PP(\wb^{(1:N)} ~|~ z^{(1:N)})
&=
\prod_{j = 1}^q \prod_{h = 1}^d \int_0^1 \PP(\alpha_{j, h}) \prod_{n \in [N]:~ z^{(n)} = h} \PP(w_j^{(n)} ~|~ z^{(n)}, \alpha_{j, h}) ~\ud\alpha_{j, h}
\\&=
\prod_{j = 1}^q \prod_{h = 1}^d \int_0^1 \PP(\alpha_{j, h}) \prod_{n \in [N]:~ z^{(n)} = h} \PP(w_{*, j}^{(n)} ~|~ z^{(n)}, \alpha_{j, h}) ~\ud\alpha_{j, h}
\\&=
\PP(\wb_*^{(1:N)} ~|~ z^{(1:N)})
.
\end{align*}
This further gives
\begin{align*}
\PP(z^{(1:N)} ~|~ \xb^{(1:N)}, \wb^{(1:N)})
&=
\frac{\PP(\wb^{(1:N)} ~|~ z^{(1:N)} \PP(z^{(1:N)} ~|~ \xb^{(1:N)}))}{\sum_{\tz^{(1:N)} \in [d]^N} \PP(\wb^{(1:N)} ~|~ \tz^{(1:N)} \PP(\tz^{(1:N)} ~|~ \xb^{(1:N)}))}
\\&=
\frac{\PP(\wb_*^{(1:N)} ~|~ z^{(1:N)} \PP(z^{(1:N)} ~|~ \xb^{(1:N)}))}{\sum_{\tz^{(1:N)} \in [d]^N} \PP(\wb_*^{(1:N)} ~|~ \tz^{(1:N)} \PP(\tz^{(1:N)} ~|~ \xb^{(1:N)}))}
\\&=
\PP(z^{(1:N)} ~|~ \xb^{(1:N)}, \wb_*^{(1:N)})
,
\end{align*}
for any $z^{(1:N)} \in [d]^N$.
\end{proof}

We next give the proof of Lemma \ref{lemm:w_in_cW}.

\begin{proof}[Proof of Lemma \ref{lemm:w_in_cW}]
For simplicity, we denote $L_p^{(n)} := \logit\left( \frac{1}{p} \sum_{i = 1}^p y_i^{(n)} \right)$ for each $n \in [N]$, and define the event $\cH_p := \bigcap_{n = 1}^N \cH_{p, n}$.
On the event $\cH_p$, for any $n, m \in [N]$ we have
\begin{equation}\label{eq:L_m_n}
\left|
(L_p^{(n)} - L_p^{(m)})
-
(\bupsilon^*)^\top (\wb_*^{(n)} - \wb_*^{(m)})
\right|
<
2\delta
.
\end{equation}

For the set of values $\{L^{(n)}:~ n \in [N]\}$, we start by assigning each $L^{(n)}$ to a separate cluster and then merge the clusters through the single linkage agglomeration algorithm, so that any two clusters with minimum distance less than $\frac{\tau^*}{4}$ are merged. By Assumptions \ref{assu:oracle_w} and \ref{assu:oracle_inequality}, this procedure gives a total of $2^q$ clusters, each cluster consisting of $L^{(n)}$'s having equivalent $\wb_*^{(n)}$'s.
We denote the center, i.e., the mean of the cluster corresponding to $\wb_*^{(n)}$ by $\sL(\wb_*^{(n)})$.
While the map $\sL:~ \{0, 1\}^q \to \RR$ is so far unknown, the set of $2^q$ values $\cV := \{\sL(\wb_*^{(n)}):~ n \in [N]\}$ is deterministic given the set of values $\{L^{(n)}:~ n \in [N]\}$.
We make inference on the map $\sL$ in the following.

On the event $\cH_p$, the cluster centers also satisfy
$$
\left|
\sL(\wb_*^{(n)}) - (\bupsilon^*)^\top \wb_*^{(n)}
\right|
<
\delta
,\quad
\forall n \in [N]
.
$$
For any $n_1, n_2, n_3, n_4 \in [N]$, by \eqref{eq:L_m_n} we have
\begin{align*}
&
(\bupsilon^*)^\top (\wb_*^{(n_1)} - \wb_*^{(n_2)} - \wb_*^{(n_3)} + \wb^{(n_4)})
-
4\delta
\\\le&
(\sL(\wb_*^{(n_1)}) - \sL(\wb_*^{(n_2)}))
-
(\sL(\wb_*^{(n_3)}) - \sL(\wb_*^{(n_4)}))
\\\le&
(\bupsilon^*)^\top (\wb_*^{(n_1)} - \wb_*^{(n_2)} - \wb_*^{(n_3)} + \wb_*^{(n_4)})
+
4\delta
.
\end{align*}
By Assumption \ref{assu:oracle_inequality}, we have
\begin{equation}\label{eq:w1234}\begin{aligned}
&\qquad
\left|
(\sL(\wb_*^{(n_1)}) - \sL(\wb_*^{(n_2)}))
-
(\sL(\wb_*^{(n_3)}) - \sL(\wb_*^{(n_4)}))
\right|
<
\frac{\tau^*}{2}
\\&
\text{if and only if}
\quad
\wb_*^{(n_1)} - \wb_*^{(n_2)} - \wb_*^{(n_3)} + \wb_*^{(n_4)}
=
0
.
\end{aligned}\end{equation}

Let $\cV = \cV_1 \cup \cV_2$ be any partition of the set $\cV$ into two subsets $\cV_1, \cV_2$, both of size $2^{q - 1}$, and let $\sV$ be any bijective map from $\cV_1$ to $\cV_2$.
From \eqref{eq:w1234}, we find that $\sV$ satisfies
\begin{equation}\label{eq:ww'}
\left|
(\sL(\wb) - \sL(\wb'))
-
(\sL(\sV(\wb)) - \sL(\sV(\wb')))
\right|
<
\frac{\tau^*}{2}
\quad
\forall \wb, \wb' \in \cV_1
,
\end{equation}
if and only if $\exists \iota \in [q]$ such that all the $2^{q - 1}$ $\wb$'s in $\sL^{-1}(\cV_1)$ have the same $\iota$th entry $w_\iota$, all the $2^{q - 1}$ $\wb$'s in $\sL^{-2}(\cV_2)$ also have the same $\iota$th entry $w_\iota$, and the map $\sV:~ \wb \to \sV(\wb)$ only changes its $\iota$th entry $w_\iota$.
That is, \eqref{eq:ww'} holds for some partition $\cV = \cV_1 \cup \cV_2$ and bijective map $\sV:~ \cV_1 \to \cV_2$ if and only if there exists $\iota \in [q]$ and $k \in \{0, 1\}$ such that
\begin{equation}\label{eq:V12}
\cV_1
=
\left\{
\sL(\wb_*^{(n)}):~ n \in [N],~ w_{*, \iota}^{(n)} = k
\right\}
,\quad
\cV_2
=
\left\{
\sL(\wb_*^{(n)}):~ n \in [N],~ w_{*, \iota}^{(n)} = 1 - k
\right\}
,
\end{equation}
and
\begin{equation}\label{eq:sV}
\sV(\wb)
=
(w_1, \ldots, w_{\iota - 1}, 1 - w_\iota, w_{\iota + 1}, \ldots, w_q)
.
\end{equation}
To further understand the conditions \eqref{eq:V12} and \eqref{eq:sV}, we note that for any partition $\cV = \cV_1 \cup \cV_2$ not satisfying \eqref{eq:V12} with some $\iota \in [q]$ and $k \in \{0, 1\}$, for any bijective map $\sV:~ \cV_1 \to \cV_2$, we can always find $\wb, \wb' \in \sL^{-1}(\cV_1), \wb \ne \wb'$ such that
$$
(\wb - \wb') - (\sV(\wb) - \sV(\wb'))
\ne
\zero
,
$$
which by \eqref{eq:w1234} suggests that \eqref{eq:ww'} can not hold.

Importantly, since the constant $\tau^*$ in Assumption \ref{assu:oracle_inequality} is known, the above derivation extends to any viable choice of $\wb^{(1:N)}$, that is, any $\wb^{(1:N)}$ with posterior probability $\PP(\wb^{(1:N)} ~|~ \xb^{(1:N)}, \yb^{(1:N)}) > 0$.
Let $\wb^{(1:N)}$ be such an arbitrary choice of attribute vectors with positive posterior probability.
Then for any partition $\cV = \cV_1 \cup \cV_2$ and bijective map $\sV:~ \cV_1 \to \cV_2$ satisfying \eqref{eq:ww'}, similar to \eqref{eq:V12} and \eqref{eq:sV}, there must also exist $\iota' \in [q]$ and $k' \in \{0, 1\}$ such that
$$
\cV_1
=
\left\{
\sL(\wb_*^{(n)}):~ n \in [N],~ w_{*, \iota'}^{(n)} = k'
\right\}
,\quad
\cV_2
=
\left\{
\sL(\wb_*^{(n)}):~ n \in [N],~ w_{*, \iota'}^{(n)} = 1 - k'
\right\}
,
$$
and
$$
\sV(\wb)
=
(w_1, \ldots, w_{\iota' - 1}, 1 - w_{\iota'}, w_{\iota' + 1}, \ldots, w_q)
$$
Let $\sP$ denote the permutation maps on $[q]$ that always maps $\iota$ to its corresponding $\iota'$, and let $\sW_{\iota'} (\iota' \in [q])$ be the permutation maps on $\{0, 1\}$ such that $\sW_{\iota'}$ maps $k$ to its corresponding $k'$, then the above implies
$$
w_j^{(n)}
=
\sW_j(w_{*, \sP(j)}^{(n)})
,\quad
\forall j \in [q], n \in [N]
.
$$
Therefore, we conclude that any $\wb^{(1:N)}$ with positive posterior probability must belong to the collection of choices $\cW(\wb_*^{(1:N)})$ defined in \eqref{eq:cW}, i.e.
$$
\PP(\wb^{(1:N)} \in \cW(\wb_*^{(1:N)}) ~|~ \xb^{(1:N)}, \yb^{(1:N)})
=
1
.
$$
\end{proof}

We prove Lemma \ref{lemm:event_cH_prob} as follows.

\begin{proof}[Proof of Lemma \ref{lemm:event_cH_prob}]
By Assumption \ref{assu:oracle_linear}, there exists $p_0 \in \NN$ and $\epsilon \in (0, \delta)$ such that for all $p \in p_0$, we have
$$
\left| \logit\left(
\frac{1}{p} \sum_{i = 1}^p \frac{\exp(\beta_{i, 0}^* + (\gb_i^* \circ \bbeta_i^*)^\top \wb_*^{(n)})}{1 + \exp(\beta_{i, 0}^* + (\gb_i^* \circ \bbeta_i^*)^\top \wb_*^{(n)})}
\right)
-
\upsilon_0^*
-
(\bupsilon^*)^\top \wb_*^{(n)}
\right|
<
\delta - \epsilon
,\quad
\forall n \in [N]
.
$$
We define the set $\cX := \bigcup_{n \in [N]} \left\{ x:~ \left| \logit(x) - \upsilon_0^* - (\bupsilon^*)^\top \wb_*^{(n)} \right| \le \delta - \epsilon \right\}$.
Since the function $\logit(x)$ is continuous and $\cX$ is compact, there exists $\eth > 0$ such that for any $x \in \cX$, $|x' - x| < \eth$ implies $|\logit(x') - \logit(x)| < \epsilon$.
As $\cX$ is invariant to $p$, so is $\eth$.

Under the data generating distribution $\PP(\yb^{(1:N)} ~|~ \wb_*^{(1:N)}, \Gb^*, \Bb^*)$, each data entry $y_i^{(n)}$ is conditionally independent with
$$
\EE\left[
y_i^{(n)}
~\big|~
\wb_*^{(n)}, \Gb^*, \Bb^*
\right]
=
\frac{\exp\left(
\beta_{i, 0}^* + (\gb_i^* \circ \bbeta_i^*)^\top \wb_*^{(n)}
\right)}{1 + \exp\left(
\beta_{i, 0}^* + (\gb_i^* \circ \bbeta_i^*)^\top \wb_*^{(n)}
\right)}
,
$$
and
$$
\left|
y_i^{(n)}
-
\frac{\exp\left(
\beta_{i, 0}^* + (\gb_i^* \circ \bbeta_i^*)^\top \wb_*^{(n)}
\right)}{1 + \exp\left(
\beta_{i, 0}^* + (\gb_i^* \circ \bbeta_i^*)^\top \wb_*^{(n)}
\right)}
\right|
\le
1
.
$$
By applying Azuma-Hoeffding inequality \citep{wainwright2019high}, we obtain for each $n \in [N]$
$$
\PP\left(
\left|
\frac{1}{p} \sum_{i = 1}^p y_i^{(n)}
-
\frac{1}{p} \sum_{i = 1}^p \frac{\exp(\beta_{i, 0}^* + (\gb_i^* \circ \bbeta_i^*)^\top \wb_*^{(n)})}{1 + \exp(\beta_{i, 0}^* + (\gb_i^* \circ \bbeta_i^*)^\top \wb_*^{(n)})}
\right|
\ge
\eth
~\Bigg|~
\wb_*^{(n)}, \Gb^*, \Bb^*
\right)
\le
2\exp\left(
- \frac{\eth^2}{2} p
\right)
.
$$

Let the constant $c := \frac{\eth^2}{2}$.
Combining all above results, we obtain for each $n \in [N]$
\begin{align*}
&\qquad
\PP(\cH_{p, n}^c ~|~ \wb_*^{(n)}, \Gb^*, \Bb^*)
\\&\le
\PP\left(
\left|
\logit\left(
\frac{1}{p} \sum_{i = 1}^p y_i^{(n)}
\right)
-
\logit\left(
\frac{1}{p} \sum_{i = 1}^p \frac{\exp(\beta_{i, 0}^* + (\gb_i^* \circ \bbeta_i^*)^\top \wb_*^{(n)})}{1 + \exp(\beta_{i, 0}^* + (\gb_i^* \circ \bbeta_i^*)^\top \wb_*^{(n)})}
\right)
\right|
>
\epsilon
~\Bigg|~
\wb_*^{(n)}, \Gb^*, \Bb^*
\right)
\\&\le
\PP\left(
\left|
\frac{1}{p} \sum_{i = 1}^p y_i^{(n)}
-
\frac{1}{p} \sum_{i = 1}^p \frac{\exp(\beta_{i, 0}^* + (\gb_i^* \circ \bbeta_i^*)^\top \wb_*^{(n)})}{1 + \exp(\beta_{i, 0}^* + (\gb_i^* \circ \bbeta_i^*)^\top \wb_*^{(n)})}
\right|
>
\eth
~\Bigg|~
\wb_*^{(n)}, \Gb^*, \Bb^*
\right)
\\&\le
2e^{- c p}
.
\end{align*}
\end{proof}

\section{More on Posterior Computation}\label{supp_sec:post}

In this section, we present details on posterior computation that are deferred from Section \ref{sec:post}.
Section \ref{supp_ssec:prior} explores advanced prior formulations inspired by phylogenetic trees in joint species distribution modeling.
In Section \ref{supp_ssec:gibbs}, we derive all the full conditional distributions necessary for our data augmented Gibbs sampler.
Finally, Section \ref{supp_ssec:sub_core} introduces subsampling and coreset methods for approximate posterior computation in ultra-high-dimensional settings.

\subsection{Advanced Prior Ideas}\label{supp_ssec:prior}

For the continuous parameters $\Ab, \Bb, \bGamma$, for simplicity we can impose independent priors
\begin{equation}\label{eq:prior_cont}
\balpha_{j, h} \stackrel{iid}{\sim} \mathrm{Beta}(b, b)
,\quad
(\beta_{i, 0}, \bbeta_i) \stackrel{iid}{\sim} N(\bbm_\beta, \bV_\beta)
,\quad
(\gamma_{h, 0}, \bgamma_h) \stackrel{iid}{\sim} N(\bbm_\gamma, \bV_\gamma)
,
\end{equation}
where $(\beta_{i, 0}, \bbeta_i)$ and $(\gamma_{h, 0}, \bgamma_h)$ are $(q + 1)$- and $(p_x + 1)$-dimensional vectors, respectively, obtained by left-appending $\beta_{i, 0}$ to $\bbeta_i$ and $\gamma_{h, 0}$ to $\bgamma_h$.
This specification ensures that the prior on $\Ab$ is i.i.d. and symmetric, as required by Theorem \ref{theo:oracle} in Section \ref{sec:high_dim}.
By default, we set the hyperparameters to $b = 1, \bbm_\beta = \zero, \bV_\beta = \Ib, \bbm_\gamma = \zero, \bV_\gamma = \Ib$, resulting in a weakly informative prior \citep{gelman2006prior}.
In certain applications, Kronecker product structures may be incorporated into the prior for $\Bb$, such as using a phylogenetic tree to inform its covariance matrix in joint species distribution modeling.
We elaborate on these advanced prior ideas in the following.

As briefly discussed in Section \ref{ssec:prior}, meta covariates can sometimes take forms other than vectors, depending on the application.
For instance, in joint species distribution modeling, in addition to the biological traits of each species $\Tb$, we often also have access to a phylogenetic tree \citep{norberg2019comprehensive}.
In such a tree, each leaf node represents a species, and its path to the root of the tree encodes its evolutionary history.
Species that share a longer path to the root have a more recent common ancestor and tend to exhibit greater biological similarity.
The phylogenetic tree can be equivalently represented by a $p \times p$ affinity matrix $\Fb$, where the $(i, i')$th entry is defined as $F_{i, i'} := \frac{L_{i, i'}}{H}$, with $H$ denoting the tree's height and $L_{i, i'}$ denoting the length of the shared path to the root for species $i$ and $i'$.
In the following, we explore advanced prior ideas that incorporate meta covariates in the form of affinity matrices.

When an affinity matrix $\Fb$ is available for the $p$ dimensions of the observed data, we can incorporate it into the prior for parameters $\Bb$ through the specification
$$
\Vec(\Bb)
\sim
N(\one \otimes \bbm_\beta, \Fb \otimes \Vb_\beta)
,
$$
where $\Vec(\Bb) := (\beta_{1, 0}, \beta_{1, 1}, \ldots, \beta_{1, q}, \beta_{2, 0}, \beta_{2, 1}, \ldots, \ldots, \beta_{p, q})$ vectorizes the $p \times (q + 1)$ matrix $\Bb$ into a $p(q + 1)$-dimensional vector.
The hyperparameter $\Vb_\beta$ and affinity matrix $\Fb$ jointly structure the prior covariance between the parameters $\beta_{i, j}$, such that marginally we have
$$
(\beta_{1, j}, \ldots, \beta_{p, j})
\sim
N(m_{\beta, j} \one, V_{\beta, j, j} \Fb)
,\quad
\forall j \in [0, q]
$$
and
$$
(\beta_{i, 0}, \ldots, \beta_{i, q})
\sim
N(\bbm_\beta, F_{j, j} \Vb_\beta)
,\quad
\forall i \in [p]
.
$$

The meta covariates in vector form $\Tb$ and affinity matrix forms $\Fb$ can be transformed into each other.
Given the meta covariates in vectors $\tb_1, \ldots, \tb_p$, we can construct an affinity matrix $\Fb$ where the $(i, i')$th entry $F_{i, i'}$ captures the pairwise similarity between $\tb_i$ and $\tb_{i'}$:
$$
F_{i, i'}
:=
\exp\left(
- \frac{\kappa}{2} \|\tb_i - \tb_{i'}\|^2
\right)
,
$$
where $\kappa$ is a bandwidth hyperparameter controlling the sensitivity of the Gaussian kernel.
Conversely, given an affinity matrix $\Fb$, we can construct meta covariates in the form of $p_t$-dimensional vectors by extracting the top $p_t$ principal components of $\Fb$ or its entrywise logarithm $\log \Fb$.

\subsection{Data Augmented Gibbs Sampler}\label{supp_ssec:gibbs}

We design a Gibbs sampler using Polya-Gamma data augmentation \citep{polson2013bayesian} to perform posterior inference for our model.
Each data entry $y_i^{(n)}$ is associated with an augmented variable $\omega_i^{(n)}$, each deep latent class $z^{(n)}$ with augmented variables $\omega_h^{(n)}$ for $h \in [d]$, and each $g_{i, j}$ with an augmented variable $\omega_{i, j}$.
This augmentation ensures that the full conditional distributions of all parameters and latent variables become tractable through semi-conjugacy.

Specifically, conditioned on all other parameters and variables, the deep latent class $z$ follows a categorical distribution on $[d]$, each latent attribute $w_j$ follows a Bernoulli distribution, and each $(\gamma_{h, 0}, \bgamma_h)$ and $(\beta_{i, 0}, \bbeta_i)$ follow normal distributions.
Additionally, each $\alpha_{j, h}$ follows a Beta distribution, and each $\gb_i$ follows a categorical distribution on $\{0, 1\}^q$.
We derive the details of this Gibbs sampler in the following.

From Theorem 1 in \citet{polson2013bayesian}, for any $a \in \RR, b, c > 0$, we have the identity
$$
\frac{(e^\psi)^a}{(c + e^\psi)^b}
=
c^{a - b} 2^{-b} e^{\kappa (\psi - \log c)} \int_0^\infty e^{- \omega (\psi - \log c)^2 / 2} p(\omega) \ud \omega
,
$$
where $\kappa := a - b / 2$ and $p(\omega)$ denotes the density of the Polya-Gamma random variable $\PG(b, 0)$.
Moreover, the conditional distribution
$$
\PP(\omega ~|~ \psi)
\propto
e^{- \omega (\psi - \log c)^2 / 2} p(\omega)
$$
arising from the integrand is tractable, suggesting
$$
\omega | \psi
\sim
\PG(b, \psi - \log c)
.
$$

We associate each observed data entry $y_i^{(n)}$ with an augmented variable $\omega_i^{(n)}$, each deep latent class $z^{(n)}$ with augmented variables $\omega_h^{(n)}$ for $h \in [d]$, and each $g_{i, j}$ with an augmented variable $\omega_{i, j}$, with full conditional distributions
$$
\omega_i^{(n)} ~|~ \cdot
\sim
\PG\left(
1
,~
\beta_{i, 0} + (\gb_i \circ \bbeta_i)^\top \wb^{(n)}
\right)
,
$$
$$
\omega_h^{(n)} ~|~ \cdot
\sim
\PG\left(
1
,~
\left(
\gamma_{h, 0} + \bgamma_h^\top \xb^{(n)}
\right)
-
\log\left(
\sum_{h' = 1, h' \ne h}^d \exp\left(
\gamma_{h', 0} + \bgamma_{h'}^\top \xb^{(n)}
\right)
\right)
\right)
,
$$
and
$$
\omega_{i, j} ~|~ \cdot
\sim
\PG\left(
1
,~
\theta_{j, 0} + \tb_i^\top \btheta_j
\right)
.
$$

With the introduction of the augmented variables, the full conditional distributions of all parameters and latent variables are all tractable through semi-conjugacy.
Specifically, the full conditional distribution of the deep latent class $z^{(n)}$ is
\begin{align*}
\PP(z^{(n)} ~|~ \cdot)
&\propto
\PP(z^{(n)} ~|~ \xb^{(n)}, \bGamma) \PP(\wb^{(n)} ~|~ z^{(n)}, \Ab)
\\&\propto
\exp\left(
\gamma_{z^{(n)}, 0} + \bgamma_{z^{(n)}}^\top \xb^{(n)}
\right)
\prod_{j = 1}^q \left(
(\alpha_{j, z^{(n)}})^{w_j^{(n)}} (1 - \alpha_{j, z^{(n)}})^{1 - w_j^{(n)}}
\right)
,
\end{align*}
which is a categorical distribution over $[d]$.

When the number of latent attributes $q$ is small, we can sample $\wb$ as a block through the full conditional distribution
\begin{align*}
\PP(\wb^{(n)} ~|~ \cdot)
&\propto
\PP(\wb^{(n)} ~|~ z^{(n)}, \Ab) \PP(\yb^{(n)} ~|~ \wb^{(n)}, \Gb, \Bb)
\\&\propto
\prod_{j = 1}^q \left(
\frac{\alpha_{j, z^{(n)}}}{1 - \alpha_{j, z^{(n)}}}
\right)^{w_j^{(n)}}
\prod_{i = 1}^p \frac{\exp(y_i^{(n)}(\beta_{i, 0} + (\gb_i \circ \bbeta_i)^\top \wb^{(n)}))}{1 + \exp(\beta_{i, 0} + (\gb_i \circ \bbeta_i)^\top \wb^{(n)})}
,
\end{align*}
which is a categorical distribution over $\{0, 1\}^q$.
When $q$ is large, we can sample $\wb$ entrywise through the full conditional distribution
\begin{align*}
\PP(w_j^{(n)} ~|~ \cdot)
&\propto
\PP(w_j^{(n)} ~|~ z^{(n)}, \Ab) \PP(\yb^{(n)} ~|~ \wb^{(n)}, \Gb, \Bb)
\\&\propto
\left(
\frac{\alpha_{j, z^{(n)}}}{1 - \alpha_{j, z^{(n)}}}
\right)^{w_j^{(n)}}
\prod_{i = 1}^p \frac{\exp(y_i^{(n)}(\beta_{i, 0} + (\gb_i \circ \bbeta_i)^\top \wb^{(n)}))}{1 + \exp(\beta_{i, 0} + (\gb_i \circ \bbeta_i)^\top \wb^{(n)}))}
,
\end{align*}
which is a Bernoulli distribution.

Let the $q \times (p_t + 1)$ matrix $\bTheta$ denote the collection of hyperparameters $\theta_{j, 0}, \btheta_j$ with its $j$th row being $(\theta_{j, 0}, \btheta_j)$.
When $q$ is small, we can sample $\Gb$ in blocks of $\gb_i$ through the full conditional distribution
\begin{align*}
\PP(\gb_i ~|~ \cdot)
&\propto
\PP(\gb_i ~|~ \tb_i, \bTheta)
\prod_{n = 1}^N \PP(\yb^{(n)} ~|~ \wb^{(n)}, \Gb, \Bb)
\\&\propto
\exp\left(
\gb_i^\top \bTheta (1, \tb_i)
\right)
\prod_{n = 1}^N \frac{\exp(y_i^{(n)} (\beta_{i, 0} + (\gb_i \circ \bbeta_i)^\top \wb^{(n)}))}{1 + \exp(\beta_{i, 0} + (\gb_i \circ \bbeta_i)^\top \wb^{(n)})}
,
\end{align*}
which is a categorical distribution over $\{0, 1\}^q$.
When $q$ is large, we can sample $\Gb$ entrywise through the full conditional distribution
\begin{align*}
\PP(g_{i, j} ~|~ \cdot)
&\propto
\PP(g_{i, j} ~|~ \tb_i, \theta_{j, 0}, \btheta_j)
\prod_{n = 1}^N \PP(\yb^{(n)} ~|~ \wb^{(n)}, \Gb, \bbeta)
\\&\propto
\exp\left(
g_{i, j} (\theta_{j, 0} + \tb_i^\top \btheta_j)
\right)
\prod_{n = 1}^N \frac{\exp(y_i^{(n)} (\beta_{i, 0} + (\gb_i \circ \bbeta_i)^\top \wb^{(n)}))}{1 + \exp(\beta_{i, 0} + (\gb_i \circ \bbeta_i)^\top \wb^{(n)})}
,
\end{align*}
which is a Bernoulli distribution.

The parameter $\Ab$ is sampled entrywise from its full conditional distribution
$$
\alpha_{j, h} ~|~ \cdot
\sim
\Beta\left(
b + \sum_{n = 1}^N w_j^{(n)} 1_{z^{(n)} = h}
,~
b + \sum_{n = 1}^N (1 - w_j^{(n)}) 1_{z^{(n)} = h}
\right)
.
$$

The parameter $\Bb$ is sampled in blocks of its rows $(\beta_{i, 0}, \bbeta_i)$ from the full conditional distribution
\begin{align*}
\PP(\beta_{i, 0}, \bbeta_i ~|~ \cdot)
&\propto
\PP(\beta_{i, 0}, \bbeta_i) \prod_{n = 1}^N \PP(y_i^{(n)} ~|~ \wb^{(n)}, \gb_i, \beta_{i, 0}, \bbeta_i)
\\&\propto
\exp\Bigg(
- \frac12 ((\beta_{i, 0}, \bbeta_i) - \bbm_\beta)^\top \bV_\beta^{-1} ((\beta_{i, 0}, \bbeta_i) - \bbm_\beta)
\\&\qquad\qquad-
\sum_{n = 1}^N \frac{\omega_i^{(n)}}{2} (\beta_{i, 0}, \bbeta_i)^\top (1, \gb_i \circ \wb^{(n)}) (1, \gb_i \circ \wb^{(n)})^\top (\beta_{i, 0}, \bbeta_i)
\\&\qquad\qquad+
\sum_{n = 1}^N \big( y_i^{(n)} - \frac12 \big) (1, \gb_i \circ \wb^{(n)})^\top (\beta_{i, 0}, \bbeta_i)
\Bigg)
,
\end{align*}
which corresponds to the multivariate normal distribution $N(\bbm_{\beta, i}, \bV_{\beta, i})$ with parameters
$$
\bV_{\beta, i}
=
\left(
\bV_\beta^{-1}
+
\sum_{n = 1}^N \omega_i^{(n)} (1, \gb_i \circ \wb^{(n)}) (1, \gb_i \circ \wb^{(n)})^\top
\right)^{-1}
$$
and
$$
\bbm_{\beta, i}
=
\bV_{\beta, i} \left(
\bV_\beta^{-1} \bbm_\beta
+
\sum_{n = 1}^N \big( y_i^{(n)} - \frac12 \big) (1, \gb_i \circ \wb^{(n)})
\right)
.
$$

The hyperparameter $\bTheta$ is sampled in blocks of its rows $(\theta_{j, 0}, \btheta_j)$ from the full conditional distribution
\begin{align*}
\PP(\theta_{j, 0}, \btheta_j ~|~ \cdot)
&\propto
\PP(\theta_{j, 0}, \btheta_j) \prod_{i = 1}^p \PP(g_{i, j} ~|~ \tb_i, \theta_{j, 0}, \btheta_j)
\\&\propto
\exp\Bigg(
- \frac12 \|(\theta_{j, 0}, \btheta_j)\|^2)
-
\sum_{i = 1}^p \frac{\omega_{i, j}}{2} (\theta_{j, 0}, \btheta_j)^\top (1, \tb_i) (1, \tb_i)^\top (\theta_{j, 0}, \btheta_j)
\\&\qquad\qquad+
\sum_{i = 1}^p (g_{i, j} - \frac12) (1, \tb_i)^\top (\theta_{j, 0}, \btheta_j)
\Bigg)
,
\end{align*}
which corresponds to the multivariate normal distribution $N(\bbm_{\theta, j}, \bV_{\theta, j})$ with
$$
\Vb_{\theta, j}
=
\left(
\Ib
+
\sum_{i = 1}^p \omega_{i, j} (1, \tb_i) (1, \tb_i)^\top
\right)^{-1}
$$
and
$$
\bbm_{\theta, j}
=
\Vb_{\theta, j} \sum_{i = 1}^p \big( g_{i, j} - \frac12 \big) (1, \tb_i)
.
$$

The parameter $\bGamma$ is sampled in blocks of its rows $(\gamma_{h, 0}, \bgamma_h)$ from the full conditional distribution
\begin{align*}
&\qquad
\PP(\gamma_{h, 0}, \bgamma_h ~|~ \cdot)
\\&\propto
\PP(\gamma_{h, 0}, \bgamma_h) \prod_{n = 1}^N \PP(z^{(n)} ~|~ \xb^{(n)}, \bGamma)
\\&\propto
\exp\Bigg(
- \frac12 ((\gamma_{h, 0}, \bgamma_h) - \bbm_\gamma)^\top \bV_\gamma^{-1} ((\gamma_{h, 0}, \bgamma_h) - \bbm_\gamma)
\\&\qquad\qquad-
\sum_{n = 1}^N \frac{\omega_h^{(n)}}{2} (\gamma_{h, 0}, \bgamma_h)^\top (1, \xb^{(n)}) (1, \xb^{(n)})^\top (\gamma_{h, 0}, \bgamma_h)
\\&\qquad\qquad+
\sum_{n = 1}^N \left(
\log\left(
\sum_{h' = 1, h' \ne h}^d \exp\left(
\gamma_{h', 0} + \bgamma_{h'}^\top \xb^{(n)}
\right) \right) \omega_h^{(n)}
+
1_{z^{(n)} = h} - \frac12
\right) (1, \xb^{(n)})^\top (\gamma_{h, 0}, \bgamma_h)
\Bigg)
,
\end{align*}
which corresponds to the multivariate normal distribution $N(\bbm_{\gamma, h}, \bV_{\gamma, h})$ with
$$
\bV_{\gamma, h}
=
\left(
\bV_\gamma^{-1}
+
\sum_{n = 1}^N \omega_h^{(n)} (1, \xb^{(n)}) (1, \xb^{(n)})^\top
\right)^{-1}
$$
and
$$
\bbm_{\gamma, h}
=
\bV_{\gamma, h} \left(
\bV_\gamma^{-1} \bbm_\gamma
+
\sum_{n = 1}^N \left( \log\left(
\sum_{h' = 1, h' \ne h}^d \exp\left(
\gamma_{h', 0} + \bgamma_{h'}^\top \xb^{(n)}
\right) \right) \omega_h^{(n)} + 1_{z^{(n)} = h} - \frac12
\right) (1, \xb^{(n)})
\right)
.
$$

As a side note, incorporating the intercepts $\beta_{i, 0}, \gamma_{h, 0}, \theta_{j, 0}$ into the Gibbs sampler can be simplified in implementations by implicitly including them.
This can be achieved by augmenting each of the covariates $\xb$, meta covariates $\tb$, latent attributes $\wb$, and binary vectors $\gb_i$ with an additional dimension fixed at one.

\subsection{Subsampling and Coreset for Ultra-high-dimensional Settings}\label{supp_ssec:sub_core}

The data augmented Gibbs sampler designed in Section \ref{supp_ssec:gibbs} has a computational complexity that scales linearly with the data dimensionality $p$.
To enable more efficient posterior computation in ultra-high-dimensional settings, we introduce an approximate variant that leverages ideas of subsampling and coreset techniques.

Subsampling MCMC and coreset methods are widely used approximate sampling methods in large-scale Bayesian inference, when evaluating the full dataset likelihood becomes computationally prohibitive.
Subsampling MCMC \citep{johndrow2015optimal, johndrow2017error, quiroz2018speeding} tackles this challenge by randomly sampling a subset of data at each MCMC iteration and using its likelihood to approximate the full likelihood.
Bayesian Coreset methods \citep{huggins2016coresets, winter2024emerging} offer an alternative approach by constructing a small, weighted subset of data, called coreset, that serves as a compressed representation of the full dataset.
While these methods are typically applied in scenarios where the number of observations $N$ is large, their principles can be effectively extended to our settings where the data dimensionality $p$ is high.

In applications involving ultra-high-dimensional observations, there often exists a subset of data dimensions that exhibit significantly greater variation across observations than the rest.
For example, in joint species distribution modeling, only a small number of species are commonly observed, while the majority are rare \citep{ovaskainen2024common}.
To leverage this structure, we can preselect a coreset $\cS \subset [p]$ of informative data dimensions, e.g. the commonly observed species, with size $\card(\cS) \ll p$.

At the beginning of each Gibbs sampler iteration, we further subsample the remaining data dimensions $\cB \subset [p] \cap \cS^c$ and approximate the log-likelihood function $\PP(\yb^{(1:N)} ~|~ \wb^{(1:N)}, \Gb, \Bb)$ using the surrogate function
$$
\sum_{i \in \cS} \log\PP(y_i^{(1:N)} ~|~ \wb^{(1:N)}, \Gb, \Bb)
+
\frac{p - \card(\cS)}{\card(\cB)} \sum_{i \in \cB} \log\PP(y_i^{(1:N)} ~|~ \wb^{(1:N)}, \Gb, \Bb)
.
$$
The use of this surrogate function introduces approximation error in the stationary distribution of the MCMC sampler.
However, in practice, this trade-off is often acceptable given the significant gains in computational efficiency.
Theoretical bounds on the approximation error can be derived following the general framework of \citet{johndrow2015optimal, johndrow2017error}, though we leave this analysis for future work.

\section{Simulation Studies}\label{supp_sec:sim}

To evaluate the performance of our proposed posterior computation methods and provide numerical validation for the theoretical properties of our model, we conduct a series of simulation studies comprising three key components.
In Section \ref{supp_ssec:sim1}, we examine models with moderate data dimensionality and apply the Watanabe–Akaike information criterion \citep[\texttt{WAIC},][]{watanabe2010asymptotic} for model selection.
The results illustrate that our model selection procedure successfully identifies the true underlying model structure.
In Section \ref{supp_ssec:sim2}, we perform posterior inference using our data-augmented Gibbs sampler from Section \ref{supp_ssec:gibbs} under the selected model structure.
In alignment with the posterior consistency theory in Section \ref{sec:theo}, we show that our Gibbs sampler yields accurate estimates of the true model parameters and provides reliable inference for the latent variables.
In Section \ref{supp_ssec:sim3}, we analyze the effect of increasing the dimensionality of the data within the same model.
We show that as the dimension increases, the posterior probability of the deep latent cluster approaches its oracle probability, providing numerical evidence for the Bayes oracle clustering property established in Section \ref{ssec:escape}.

\subsection{Model Selection}\label{supp_ssec:sim1}

We begin by evaluating our model selection procedures through a simulated example.
Consider $N = 1000$ observations of binary vectors $\yb^{(1:N)}$ with dimensionality $p = 20$, where the data generating distribution follows a well-specified true model with $q = 2$ latent attributes and $d = 2$ deep latent classes.
The observation-specific covariates $\xb^{(1:N)}$ and the meta covariates $\tb^{(1:N)}$ have dimensionalities $p_x = p_t = 4$, with each $\xb^{(n)}$ and $\tb^{(n)}$ drawn independently from a $N(\zero, \Ib)$ distribution.
The true model parameters $(\Ab^*, \Bb^*, \bGamma^*, \Gb^*)$ are randomly chosen within the parameter space $\sS_1$ defined in Theorem \ref{theo:strict}, ensuring strict identifiability.
Specifically, we sample $(\Ab^*, \Bb^*, \bGamma^*, \Gb^*)$ from their prior distribution while enforcing $\Gb^*$ to contain three distinct identity blocks.
Additionally, we strengthen the effects of $\Bb^*$ by imposing a lower bound on each regression coefficient $|\beta_{i, j}^*|$.

\begin{figure}
\centering
\includegraphics[width = \textwidth]{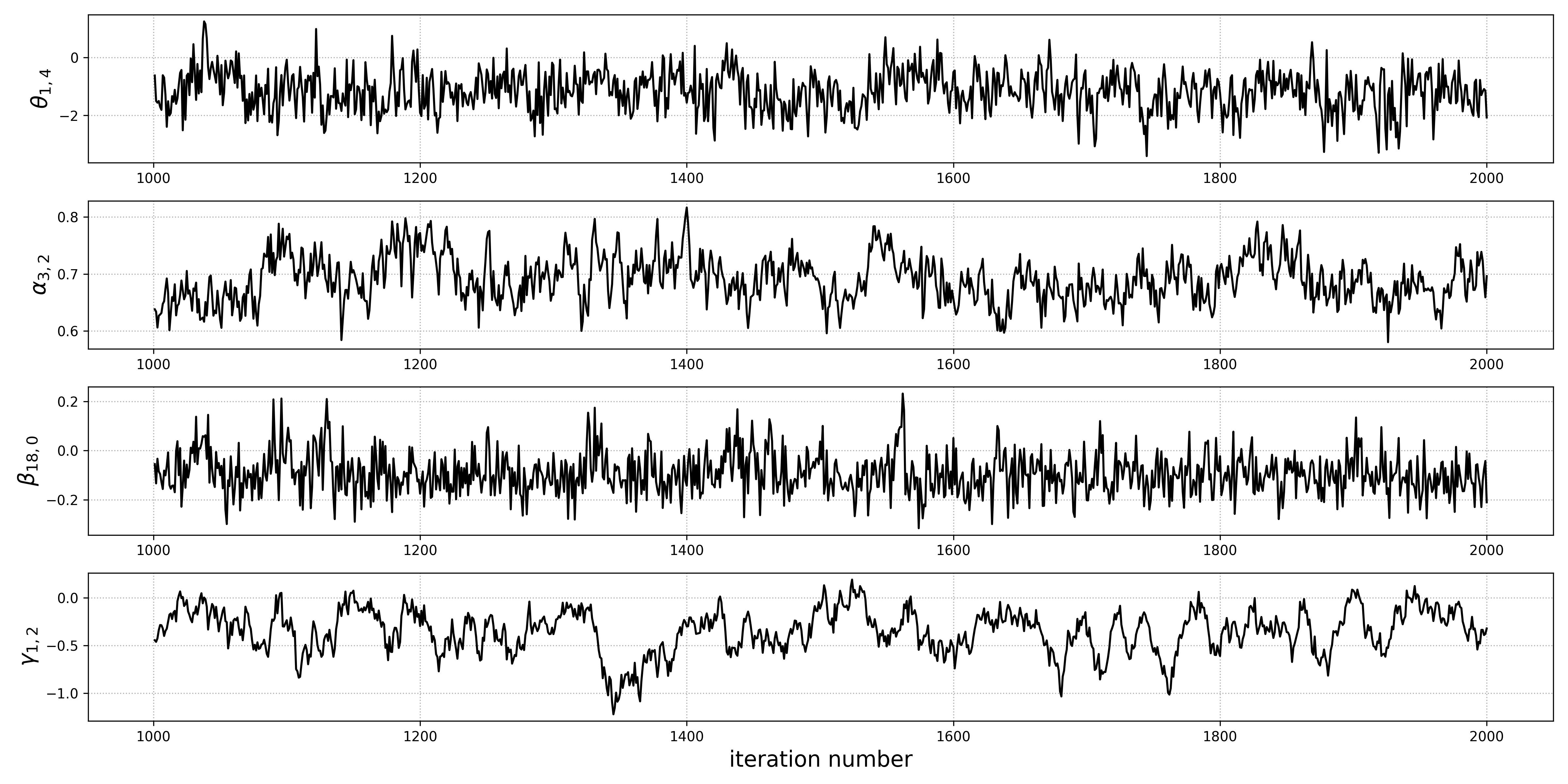}
\caption{
Trace plots for parameters $\theta_{1, 4}, \alpha_{3, 2}, \beta_{18, 0}, \gamma_{1, 2}$ sampled after the burn-in period.
}
\label{fig:sim_trace}
\end{figure}

We run 2000 iterations of our data augmented Gibbs sampler from Section \ref{supp_ssec:gibbs}, initializing the parameters randomly from their prior distribution.
Empirically, we observe that the Gibbs samples rapidly transition from random initialization to stationary distribution within 200 iterations.
Following standard practices \citep{gelman1992inference}, we discard the first half of samples as burn-in and use the remaining half to make posterior inference.
The Gibbs sampler exhibits adequate mixing, as evidenced by the trace plots of several representative parameter dimensions in Figure \ref{fig:sim_trace}.

\begin{figure}
\centering
\includegraphics[width = 0.6\textwidth]{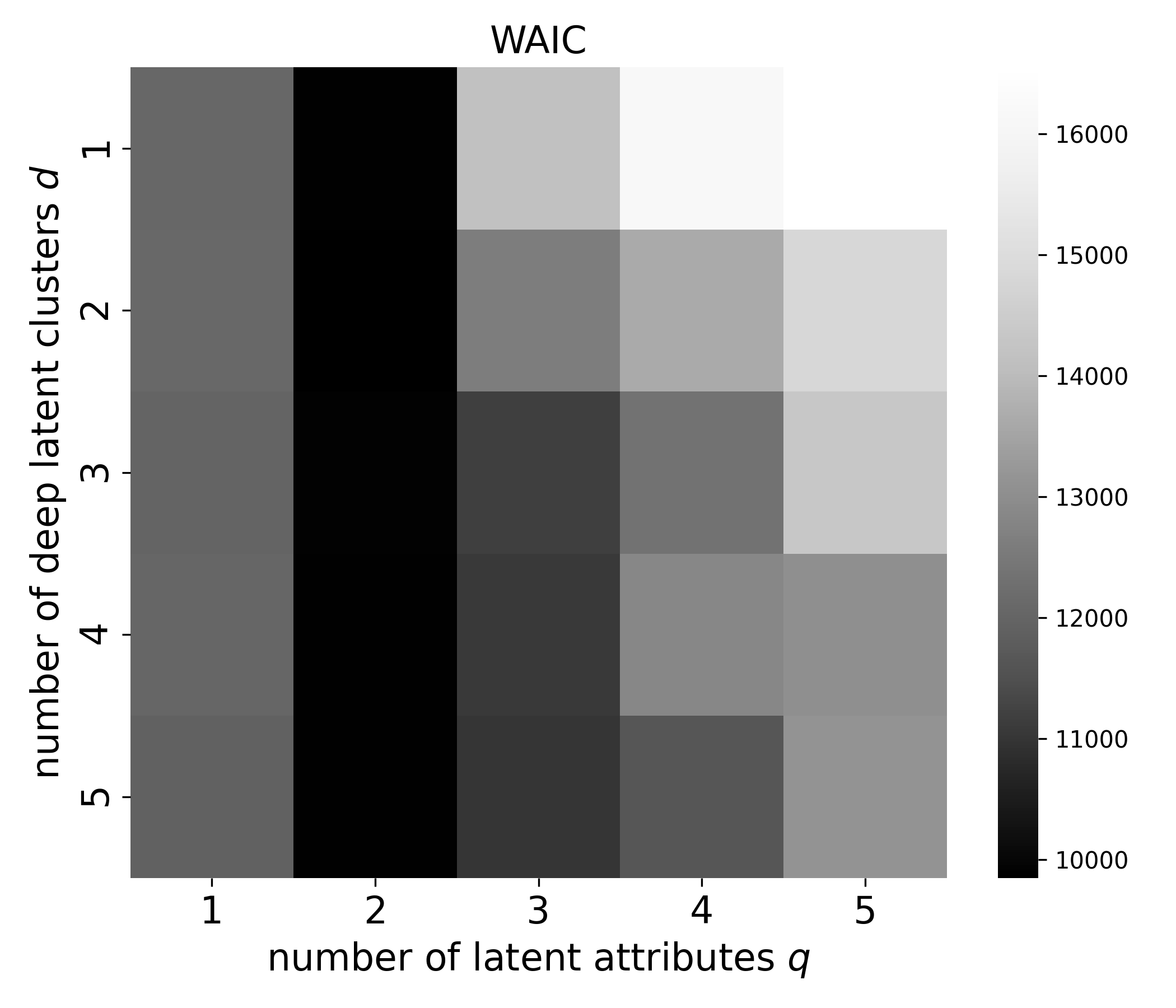}
\caption{
\texttt{WAIC} for each choice of model structure with $q \in \{1, 2, 3, 4, 5\}, d \in \{1, 2, 3, 4, 5\}$.
Darker color represents smaller \texttt{WAIC} and hence a more favorable model structure.
From top to bottom in the column of $q = 2$, the \texttt{WAIC}s are 9881.61, 9850.07, 9909.69, 9882.88, 9894.82.
}
\label{fig:sim_waic}
\end{figure}

We treat the number of latent attributes $q$ and deep latent classes $d$ as unknown and perform model selection using the Watanabe–Akaike information criterion
\citep[\texttt{WAIC},][]{watanabe2010asymptotic}.
Let $\wb^{(1:N; s)}$, $\Gb^{(s)}$, and $\Bb^{(s)}$ denote the posterior samples of the latent attributes $\wb^{(1:N)}$ and the model parameters $\Gb, \Bb$ obtained at the $s$th iteration of the Gibbs sampler.
The log likelihood of observation $\yb^{(n)}$ given the sampled latent attributes and parameters is defined as
\begin{align*}
\ell(s, n)
&:=
\log \PP(\yb^{(n)} ~|~ \wb^{(n; s)}, \Gb^{(s)}, \Bb^{(s)})
\\&=
\sum_{i = 1}^p \left(
y_i^{(n)} \left(
\beta_{i, 0}^{(s)} + (\gb_i^{(s)} \circ \bbeta_i^{(s)})^\top \wb^{(n; s)}
\right)
-
\log\left( 1 + \exp\left(
\beta_{i, 0}^{(s)} + (\gb_i^{(s)} \circ \bbeta_i^{(s)})^\top \wb^{(n; s)}
\right) \right)
\right)
.
\end{align*}
To compute \texttt{WAIC} for latent variable models, we follow \citet{gelman2014understanding, merkle2019bayesian} and use the formula
\begin{equation}\label{eq:waic}
\WAIC
:=
-2 (\lppd - p_{\WAIC})
=
-2 \sum_{n = 1}^N \log\left(
\frac{1}{S} \sum_{s = 1}^S e^{\ell(s, n)}
\right)
+
2 \sum_{n = 1}^N \var_{s = 1}^S(\ell(s, n))
,
\end{equation}
where $s = 1, \ldots, S$ indexes the Gibbs iterations after the burn-in period, $\lppd$ represents the log pointwise predictive density, $p_{\WAIC}$ approximates the effective number of parameters, and $\var_{s = 1}^S(\ell(s, n))$ denotes the sample variance of the log likelihood values $\{\ell(s, n)\}_{s = 1}^S$.
We favor model structures that yield smaller \texttt{WAIC} values.

We vary the number of latent attributes $q$ within $\{1, 2, 3, 4, 5\}$ and the number of deep latent classes $d$ within $\{1, 2, 3, 4, 5\}$.
Each combination of $q$ and $d$ defines a distinct model structure, for which we run the Gibbs sampler and compute WAIC based on the posterior samples obtained.
The \texttt{WAIC} values of all $(q, d)$ combinations are visualized as a heatmap in Figure \ref{fig:sim_waic}.
Our results indicate that \texttt{WAIC} achieves its minimum at $q = 2, d = 2$, with a smooth variation across the $q-d$ plane.
This corresponds to the true model structure.
The \texttt{WAIC} is more definitive about the choice of $q = 2$ than the choice of $d = 2$.

\subsection{Parameter Inference}\label{supp_ssec:sim2}

For the selected model structure with $q = 2$ and $d = 2$, we now evaluate the consistency of posterior inference using our Gibbs sampler.
Due to potential permutations of the $q$ latent attributes and $d$ deep latent classes, the posterior distribution of $\Ab, \Bb, \bGamma, \Gb$ possesses multiple equivalent posterior modes.
Empirically, we observe that after the burn-in period, the Gibbs sampler primarily remains near a single posterior mode and rarely transitions between modes.
Following \citet{stephens2000dealing}, we introduce a post-processing step after sampling to address label switching.
Specifically, we compute the moving average of the Gibbs samples of $\Ab$ and $(\one ~ \Gb) \circ \Bb$, then sort their columns lexicographically to align the latent attributes and the deep latent classes consistently.
For samples where $g_{i, j} \beta_{i, j}$ is close to zero, we note that while $g_{i, j} = 1$ and $g_{i, j} = 0$ have similar conditional probabilities, they imply significantly different interpretations regarding the dependency between data entry $y_i$ and attribute $w_j$.
To ensure clear model interpretation, we refine the posterior samples of $\Gb$ by setting $g_{i, j} = 0$ whenever $|\beta_{i, j}|$ is close to zero.

\begin{figure}
\centering
\includegraphics[width = \textwidth]{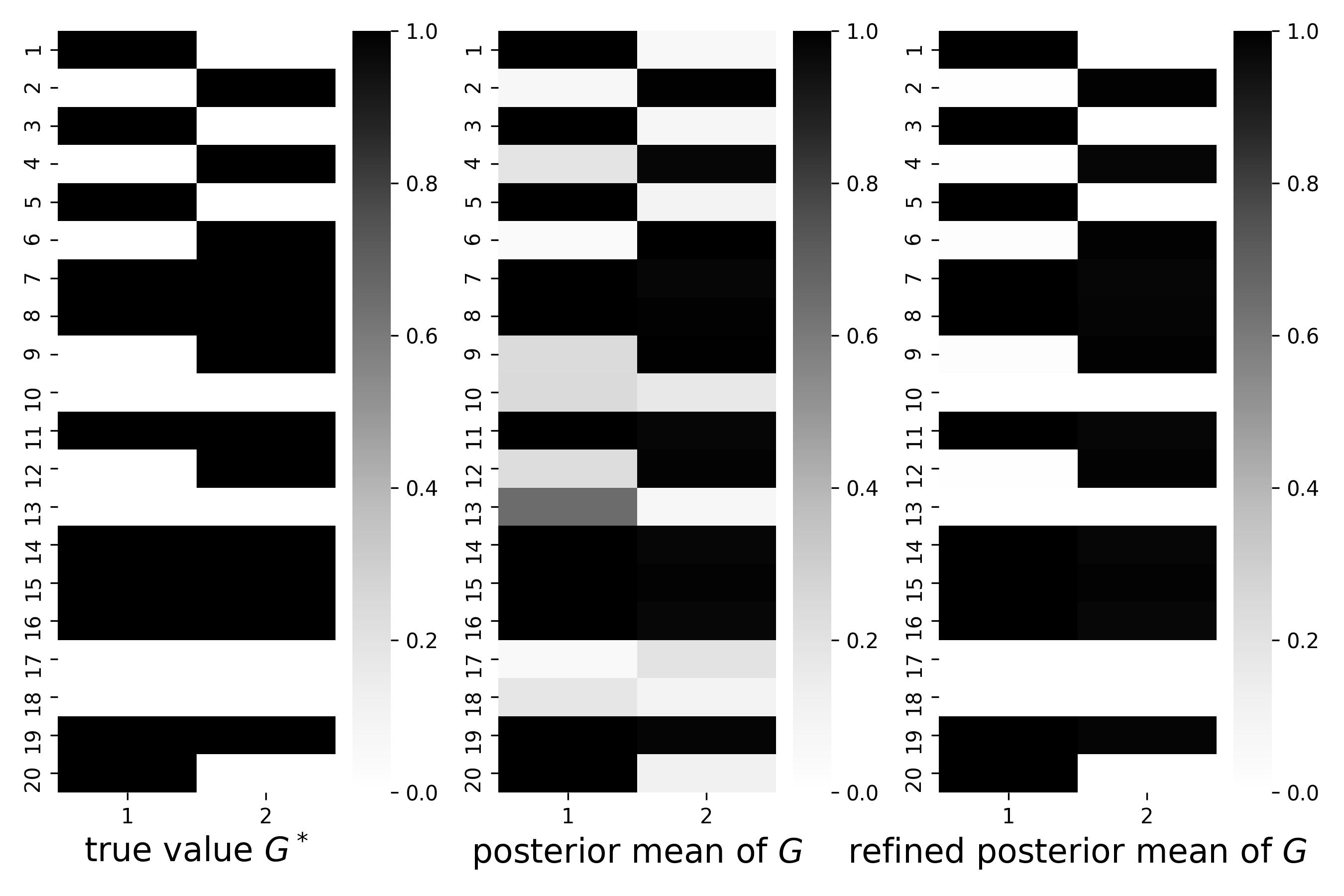}
\caption{
True parameter value $\Gb^*$, empirical posterior mean of $\Gb$, and empirical posterior mean of $\Gb$ after refinement.
}
\label{fig:sim_G}
\end{figure}

\begin{figure}
\centering
\includegraphics[width = 0.6\textwidth]{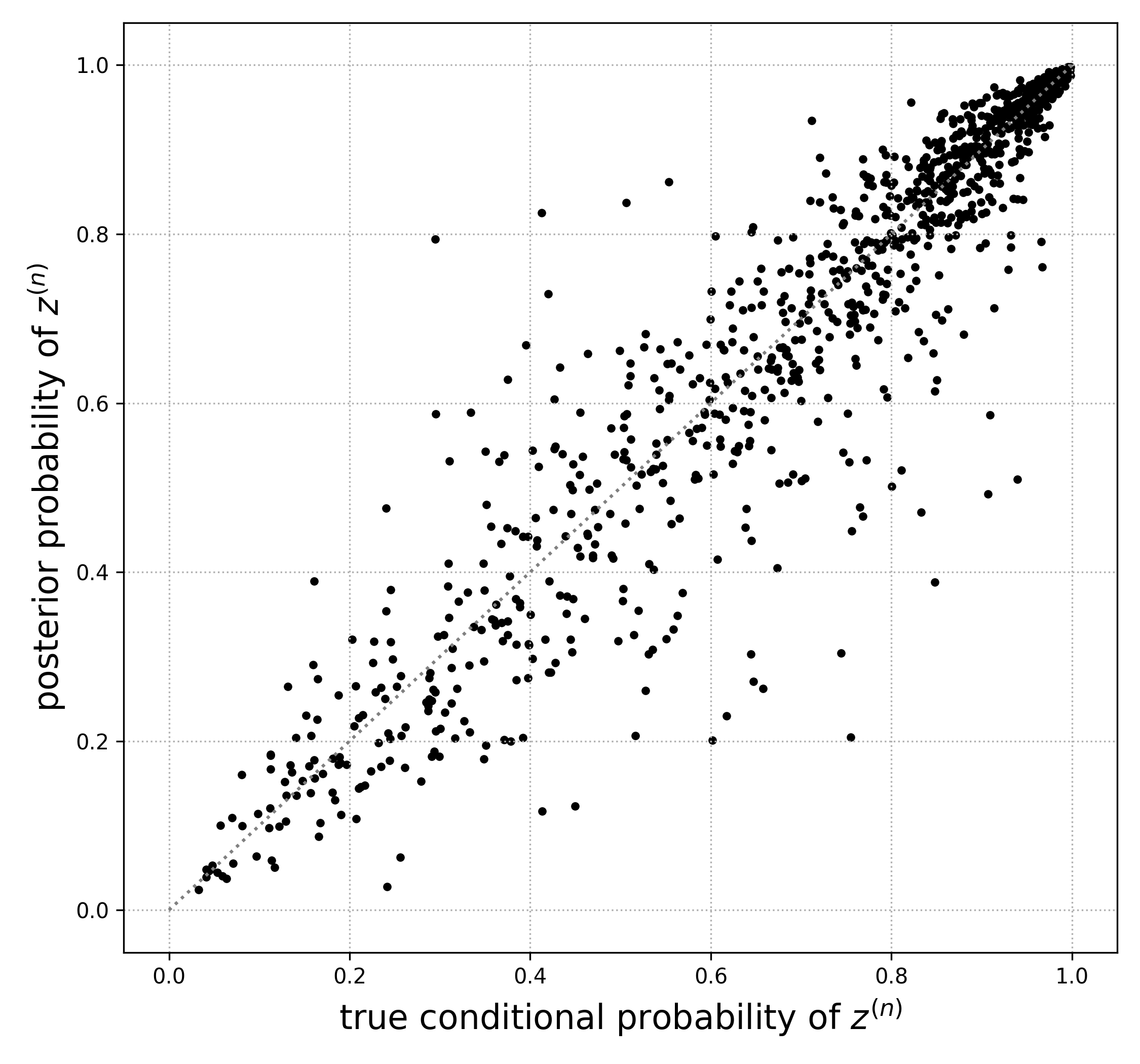}
\caption{
True conditional probability $\PP(z^{(n)} = 2 ~|~ \xb^{(n)}, \wb_*^{(n)}, \bGamma^*, \Ab^*)$ vs. empirical posterior probability $\PP(z^{(n)} = 2 ~|~ \xb^{(1:N)}, \yb^{(1:N)})$.
Each dot represents a single observation $n$.
}
\label{fig:sim_Z}
\end{figure}

We present the true parameter $\Gb^*$ and the empirical posterior mean estimate in Figure \ref{fig:sim_G}.
As observed, the posterior mean of $\Gb$ serves as a consistent estimator of $\Gb^*$, with the refinement step of setting $g_{i, j} = 0$ for small $|\beta_{i, j}|$ further improving the estimation accuracy.
This demonstrates the efficiency and reliability of our Gibbs sampler for posterior inference, while also providing numerical validation of the posterior consistency theory developed in Section \ref{sec:theo}.

We further examine the posterior samples of the latent variables in our model.
As an illustrative example, we compare the empirically estimated posterior probabilities of each deep latent class $z^{(n)}$ to its conditional probability under the true model, i.e.
$$
\PP(z^{(n)} ~|~ \xb^{(n)}, \wb_*^{(n)}, \bGamma^*, \Ab^*)
\propto
\frac{\exp(\gamma_{z^{(n)}, 0} + (\bgamma_{z^{(n)}}^*)^\top \xb^{(n)})}{\sum_{h = 1}^d \exp(\gamma_{h, 0} + (\bgamma_h^*)^\top \xb^{(n)})}
\prod_{j = 1}^q (\alpha_{j, z^{(n)}}^*)^{w_{*, j}^{(n)}} (1 - \alpha_{j, z^{(n)}}^*)^{1 - w_{*, j}^{(n)}}
,
$$
where $\wb_*^{(n)}$ represents the true attribute vector of $\yb^{(n)}$ from its data generating process, with $w_{*, j}^{(n)}$ denoting its $j$th entry.
Since our simulated example has $d = 2$, it suffices to compare the probabilities of $z^{(n)} = 2$, as shown in Figure \ref{fig:sim_Z}.
We observe that the empirical posterior probability $\PP(z^{(n)} = 2 ~|~ \xb^{(1:N)}, \yb^{(1:N)})$ closely approximates the true conditional probability $\PP(z^{(n)} = 2 ~|~ \xb^{(n)}, \wb_*^{(n)}, \bGamma^*, \Ab^*)$.
This result suggests that our posterior inference is not only accurate for model parameters but also reliable for latent variables.

\subsection{Bayes Oracle Clustering}\label{supp_ssec:sim3}

\begin{figure}
\centering
\includegraphics[width = 0.6\textwidth]{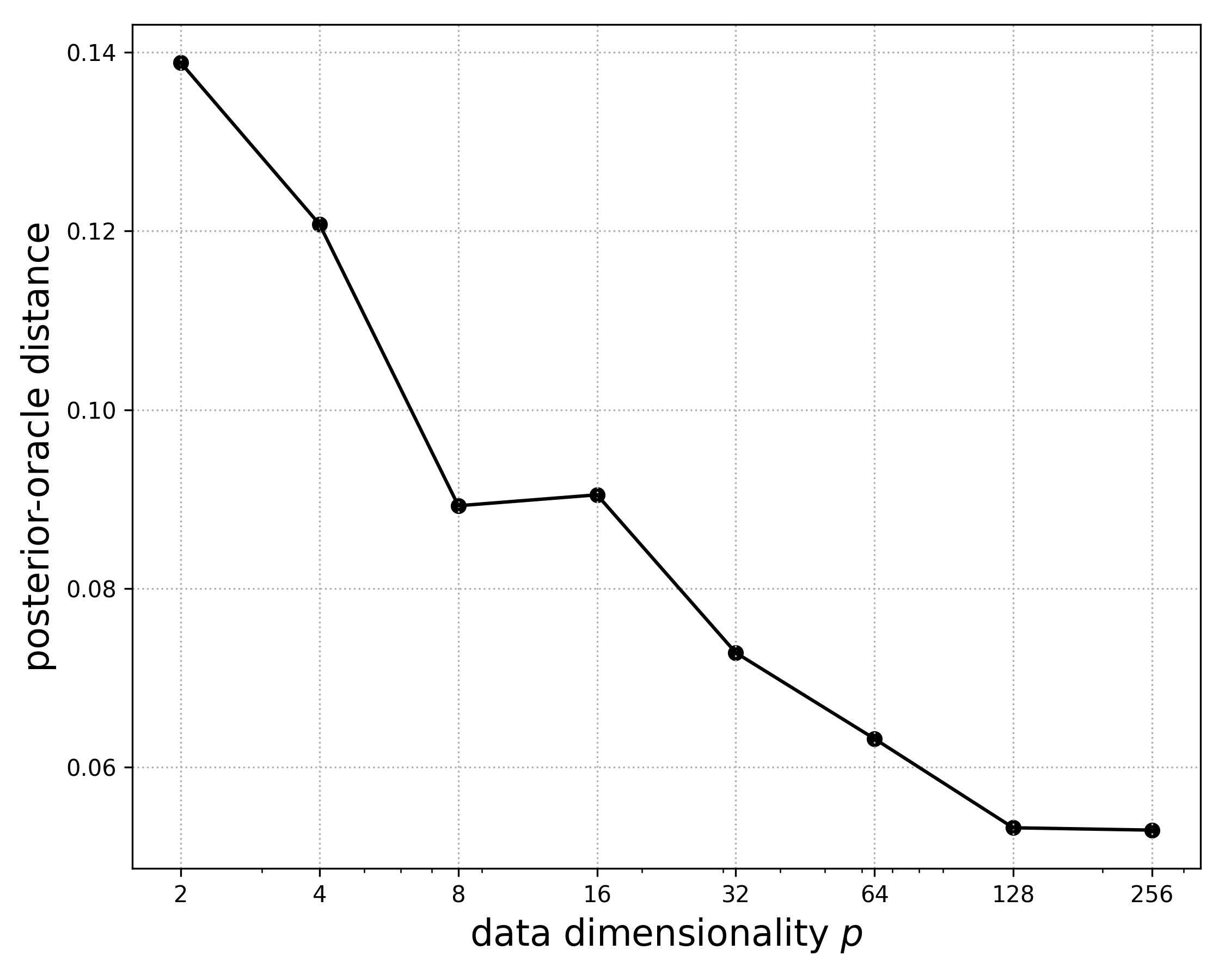}
\caption{
Data dimensionality $p$ vs. distances between empirical posterior and oracle probabilities of deep latent classes.
}
\label{fig:sim_p}
\end{figure}

Having established the consistency of our posterior inference, we now investigate the impact of data dimensionality $p$ on the inferred clustering of observations.
We vary $p$ from $2^1$ to $2^8$ and run our Gibbs sampler for each choice of $p$, keeping all other settings unchanged.
For each dimensionality $p$ and observation $n$, the posterior probability of the deep latent cluster $z^{(n)}$ is empirically estimated by its frequency among the Gibbs samples, denoted as $\bar\PP_p(z^{(n)} ~|~ \xb^{(1:N)}, \yb^{(1:N)})$.

Recalling the definition of the oracle probability $\PP(z^{(n)} ~|~ \xb^{(1:N)}, \wb_*^{(1:N)})$ from Definition \ref{defi:oracle_prob}, we further run the Gibbs sampler with the latent attributes $\wb^{(1:N)}$ fixed at their true values $\wb_*^{(1:N)}$.
This ensures that the samples of $z^{(n)}$ follow a stationary distribution given by its oracle probability.
The oracle probability is then empirically estimated by the frequency of $z^{(n)}$ in these Gibbs samples, denoted as $\bar\PP(z^{(n)} ~|~ \xb^{(1:N)}, \wb_*^{(1:N)})$.

For each $p$, we compute the mean distance between the posterior probability and the oracle probability of the deep latent cluster across all observations $n = 1, \ldots, N$:
$$
\frac{1}{Nd} \sum_{n = 1}^N \sum_{h = 1}^d \left|
\bar\PP_p(z^{(n)} = h ~|~ \xb^{(1:N)}, \yb^{(1:N)})
-
\bar\PP(z^{(n)} = h ~|~ \xb^{(1:N)}, \wb_*^{(1:N)})
\right|
.
$$
We visualize these distances for different $p$ in Figure \ref{fig:sim_p}, which reveals a clear trend: as $p$ increases, the posterior probability of the deep latent cluster approaches its oracle probability.
Note that the distances will not fully vanish in Figure \ref{fig:sim_p}, as empirical estimation using Gibbs samples introduces inherent fixed errors.
This result provides numerical validation of the Bayes oracle clustering property established in Section \ref{sec:high_dim}.

\section{More Data Analyses}\label{supp_sec:app}

In this section, we present the deferred figures and tables from Section \ref{sec:appl} and provide additional discussions.

\subsection{Model Selection}\label{supp_ssec:model_sel}

We begin by selecting the optimal model structure using the \texttt{WAIC} criterion \eqref{eq:waic}, following the approach outlined for the simulation studies in Section \ref{supp_ssec:sim1}.
We explore different model structures by varying the number of latent attributes $q$ in $\{3, 4, 5, 6, 7\}$ and the number of deep latent classes $d$ in $\{2, 3, 4, 5, 6\}$. For each combination of $q$ and $d$, we run our data augmented Gibbs sampler for 2000 iterations, discarding the first half as the burn-in period and computing the WAIC value using the remaining samples.
The \texttt{WAIC} values across model structures are visualized as a heatmap in Figure \ref{fig:app_waic}, revealing a smooth variation over the $q-d$ plane and reaching a minimum at $q = 5, d = 4$.
Based on this result, we analyze the Finnish bird dataset using a model with five latent attributes and four deep latent classes.

\begin{figure}
\centering
\includegraphics[width = 0.6\textwidth]{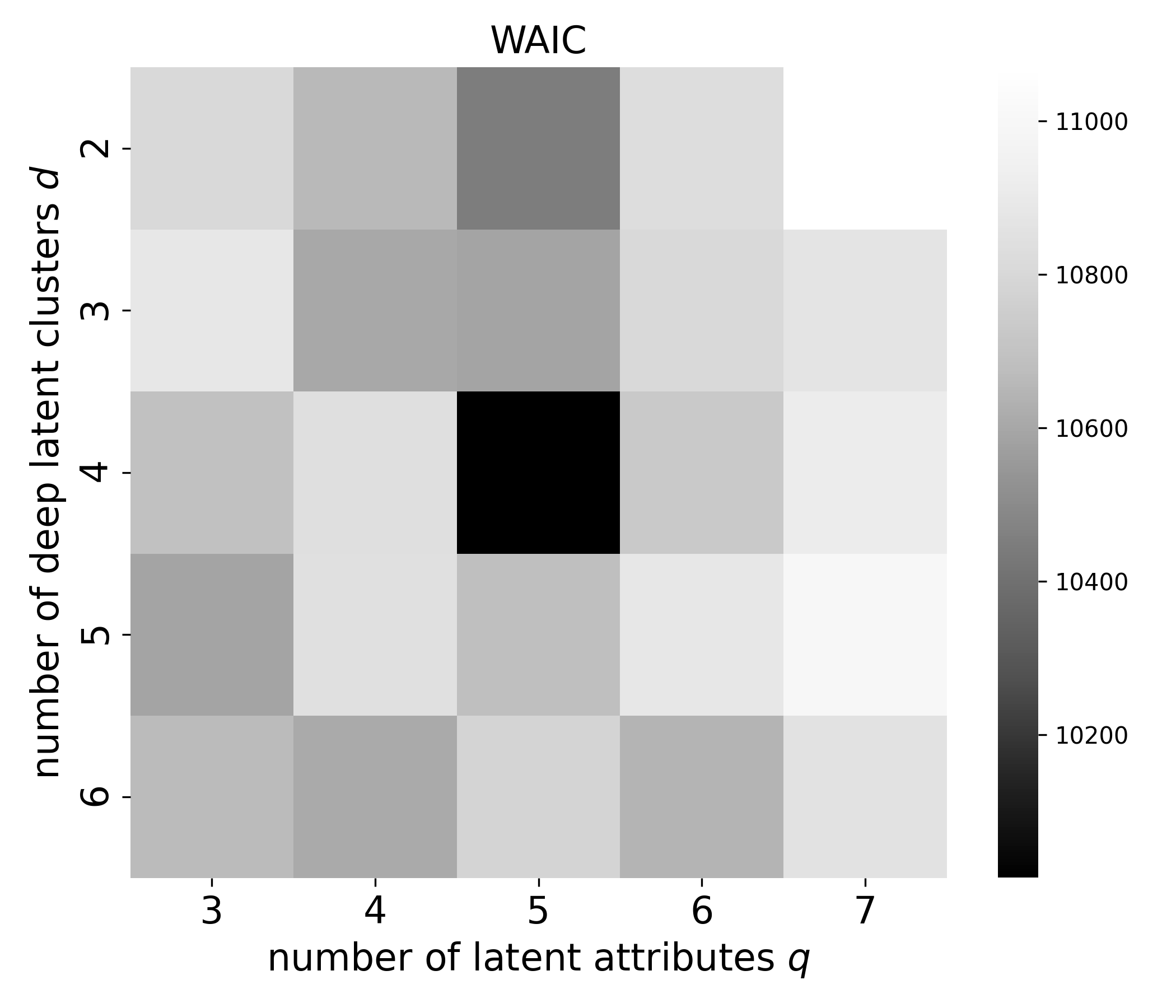}
\caption{
\texttt{WAIC} for each choice of model structure with $q \in \{3, 4, 5, 6, 7\}$ and $d \in \{2, 3, 4, 5, 6\}$.
Darker color represents smaller \texttt{WAIC} and hence a more favorable model structure.
}
\label{fig:app_waic}
\end{figure}

\subsection{Interpretations of Deep Latent Classes}\label{supp_ssec:deep_class}

We estimate the posterior probabilities of belonging to each deep latent class for each sample from the 1000 Gibbs samples after the burn-in period. Figure \ref{fig:app_z_map} provides a spatial map over Finland of the estimates obtained by
setting $z^{(n)}$ to the class with the highest posterior probability, demonstrating clear geographical patterns as discussed in Section \ref{sec:appl}.

To further interpret the inferred deep latent classes, we examine their associations with the environmental covariates. Table \ref{tab:app_z_hab} summarizes the distribution of the posterior modes of $z^{(1)}, \ldots, z^{(N)}$ across the five types of habitats.
We observe that all broadleaved forests belong to classes $z \in \{2, 3\}$, all coniferous forests fall into classes $z \in \{2, 3, 4\}$, most open habitats are assigned to $z = 1$, all urban habitats belong to class $z = 3$, and nearly all wetlands are assigned to class $z = 2$.
Conversely, examining the composition of each deep latent class, we find that class $z = 1$ primarily consists of open habitats, class $z = 4$ is dominated by coniferous forests, while classes $z \in \{2, 3\}$ exhibit more diverse habitat distributions.
These findings provide a clear ecological interpretation of the inferred deep latent classes $z$ in terms of habitat type.

\begin{table}
\centering
\begin{tabular}{c||cccc}
\hline\hline
distribution (count) & $z = 1$ & $z = 2$ & $z = 3$ & $z = 4$ \\
\hline\hline
broadleaved forests & 0 & 15 & 53 & 0 \\
coniferous forests & 0 & 36 & 108 & 31 \\
open habitats & 23 & 6 & 16 & 1 \\
urban habitats & 0 & 0 & 58 & 0 \\
wetlands & 1 & 14 & 0 & 1 \\
\hline\hline
\end{tabular}
\caption{
Distributions (counts) of observations in each deep latent class $z$ across the five types of habitats.
}
\label{tab:app_z_hab}
\end{table}

Furthermore, we analyze the temperature distribution within each deep latent class and present the result as a density plot in Figure \ref{fig:app_z_temp}.
In April and May, sampling locations in class $z = 1$ exhibit the lowest average temperature of $1.44^\circ\text{C}$, while those in class $z = 3$ have the highest average temperature of $7.01^\circ\text{C}$.
The locations assigned to the classes $z = 2$ and $z = 4$ have moderate average temperatures of $3.63^\circ\text{C}$ and $4.49^\circ\text{C}$, respectively. In ecology, ecologically similar locations are referred to as having {\em regions of common profile} \citep{foster2013modelling, foster2017ecological, scherting2024inferring}. Obtaining clear relationships with spatial location, habitat, and temperature provides evidence that our methodology produces meaningful regions of common profile.

\begin{figure}
\centering
\includegraphics[width = 0.8\textwidth]{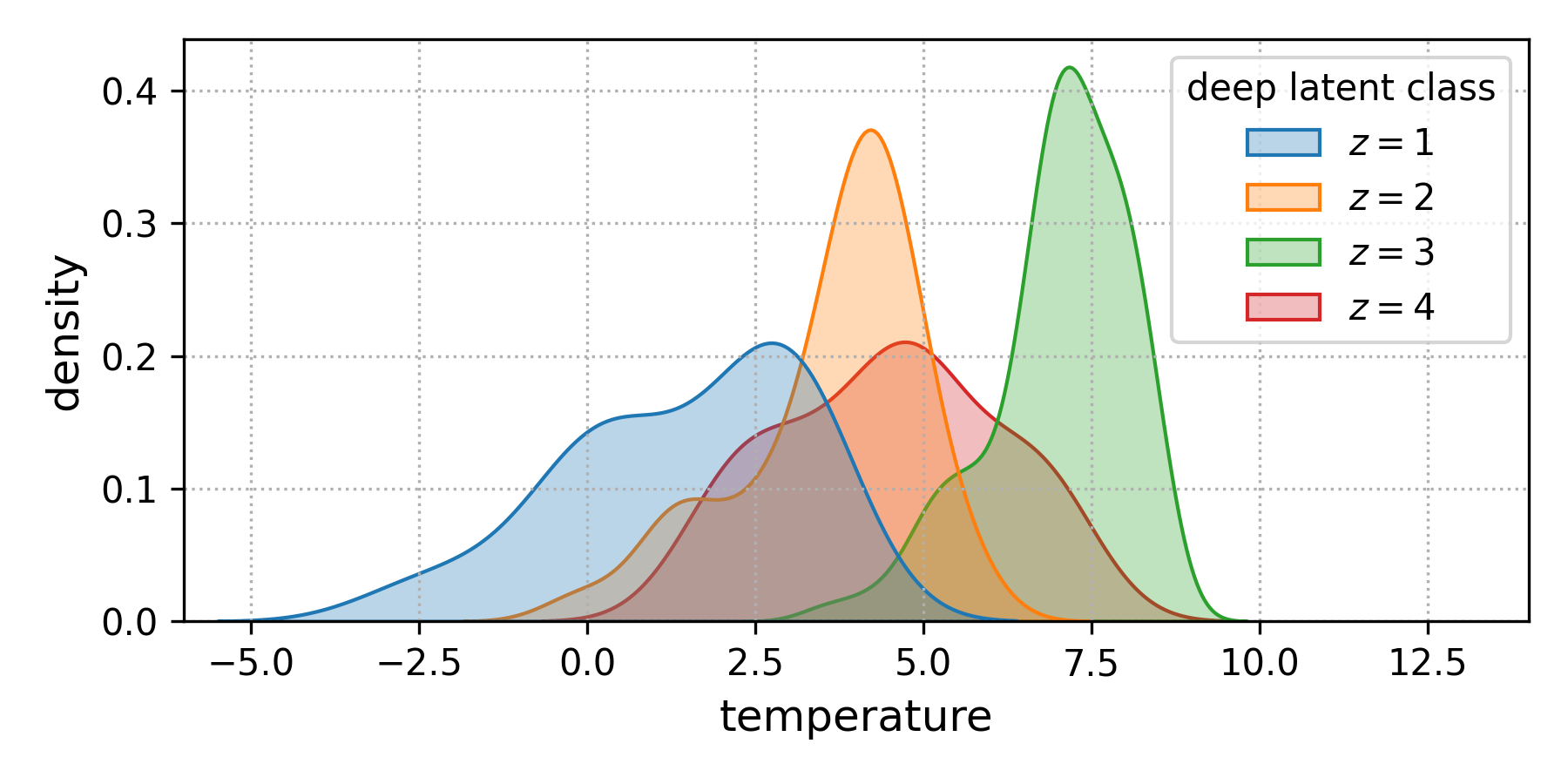}
\caption{
Distributions of temperature covariate in each deep latent class $z$.
}
\label{fig:app_z_temp}
\end{figure}

Posterior inference of the model parameter $\Ab$ further reveals distinct relationships between deep latent classes and latent attributes.
Table \ref{tab:app_A}, presents the posterior mean of $\Ab$.
Specifically, deep latent class $z = 1$ tends to enforce zero entries in the corresponding latent attributes $\wb$, whereas class $z = 3$ predominantly leads to ones in $\wb$.
In contrast, class $z = 2$ is more likely to induce $\wb = (1, 0, 0, 1, 0)$, while the most probable latent attribute configurations in class $z = 4$ are $\wb = (1, 0, 0, 0, 1)$ or $(1, 0, 1, 0, 1)$.
These patterns suggest meaningful structural differences among the deep latent classes in their influence on the latent attributes.

\begin{table}
\centering
\begin{tabular}{c||cccc}
\hline\hline
$\alpha_{j, z}$ & $z = 1$ & $z = 2$ & $z = 3$ & $z = 4$ \\
\hline\hline
$j = 1$ & 0.0511 & 0.6728 & 0.7121 & 0.9069 \\
$j = 2$ & 0.0380 & 0.0156 & 0.9923 & 0.0402 \\
$j = 3$ & 0.0422 & 0.0607 & 0.6262 & 0.4020 \\
$j = 4$ & 0.0586 & 0.9484 & 0.6661 & 0.0498 \\
$j = 5$ & 0.0616 & 0.0549 & 0.8606 & 0.9148 \\
\hline\hline
\end{tabular}
\caption{
Posterior mean of $\Ab = (\alpha_{j, z})$, where $j \in [5]$ indexes the latent attributes and $z \in [4]$ represents the deep latent class.
}
\label{tab:app_A}
\end{table}

\subsection{Interpretations of Latent Attributes}\label{supp_ssec:attribute}

For the latent attributes $w_1^{(n)}, \ldots, w_5^{(n)}$, we similarly determine their posterior modes and visualize them throughout Finland in Figure \ref{fig:app_w_map_full}, which provides a more comprehensive version of Figure \ref{fig:app_w_map}. As discussed in Section \ref{sec:appl}, the latent attributes characterize more nuanced variation than clustering (regions of common profile), leading to additional insight into variation in ecological conditions across sampling locations. 

\begin{figure}
\centering
\includegraphics[width = 0.9\textwidth]{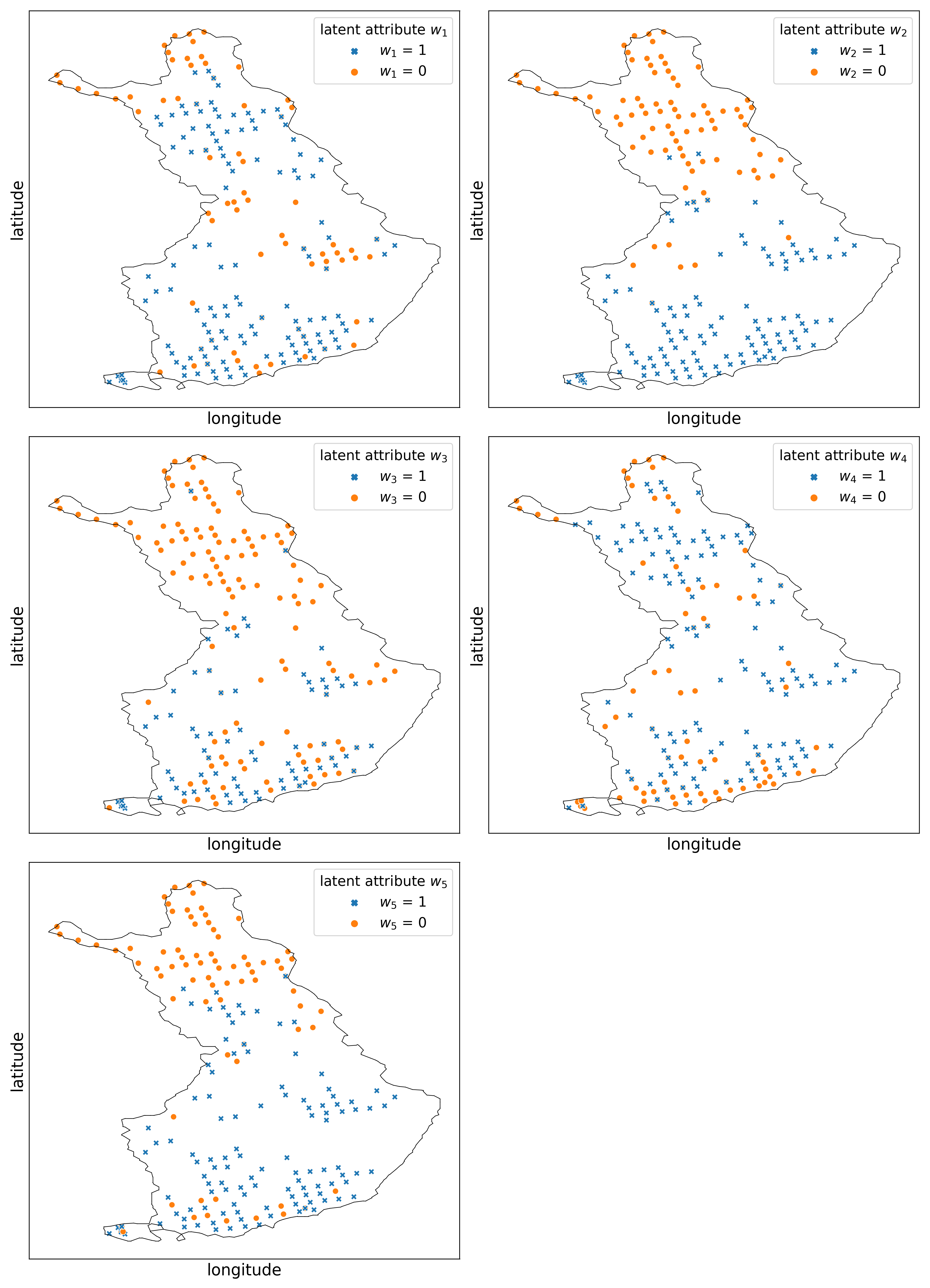}
\caption{
Spatial distribution of bird species sampling locations on the map of Finland, color-coded by their latent attributes $w_1, w_2, w_3, w_4, w_5$.
}
\label{fig:app_w_map_full}
\end{figure}

The binary matrix $\Gb$, which governs the dependencies between species presence/absence and latent attributes, provides additional insights into species associations.
We apply the same post-processing and refining approach as in the simulation studies in Section \ref{supp_ssec:sim2}.
Specifically, after addressing the potential label switching issue, we refine the posterior samples of $\Gb$ by setting $g_{i, j} = 0$ for all $|\beta_{i, j}| < 2$.
The posterior mean and posterior mode of the refined $\Gb$ samples are presented in Figure \ref{fig:app_G}.
Importantly, the inferred posterior mode of $\Gb$ satisfies the condition in Theorem \ref{theo:generic}, ensuring the estimability of model parameters through generic identifiability of the model.

\begin{figure}[ht!]
\centering
\includegraphics[width = 0.85\textwidth]{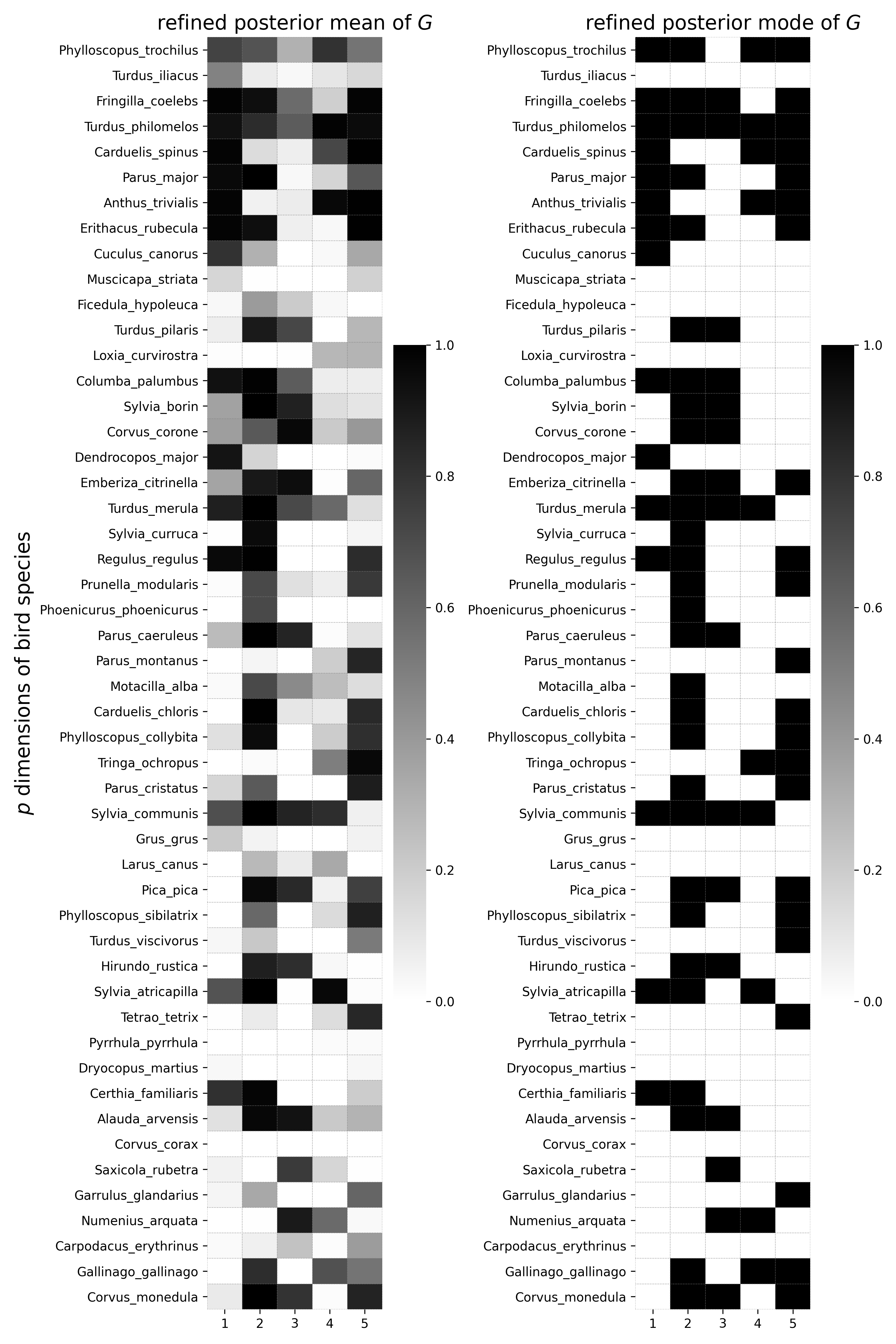}
\caption{
Posterior mean and posterior mode of $\Gb$ after refinement.
}
\label{fig:app_G}
\end{figure}

Posterior inference of the latent attributes $\wb$ and the binary matrix $\Gb$ can be further interpreted in the context of our biological knowledge of the bird species represented in the phylogenetic tree.
For each latent attribute $w_j$, we visualize the bird species $i$ that depends on $w_j$, i.e. those with a posterior mode $g_{i, j} = 1$, on the phylogenetic tree, as shown in Figures \ref{fig:phylo} and \ref{fig:phylo_full}.
Although our model does not directly incorporate information of the phylogenetic tree, we observe that species within the same terminal branches tend to share dependencies on the same latent attributes, also reflected in the structural similarity of their corresponding rows in $\Gb$.

\begin{figure}[ht!]
\centering
\begin{subfigure}[b]{0.49\textwidth}
\centering
\includegraphics[width = \textwidth, trim = 0 15 0 0, clip]{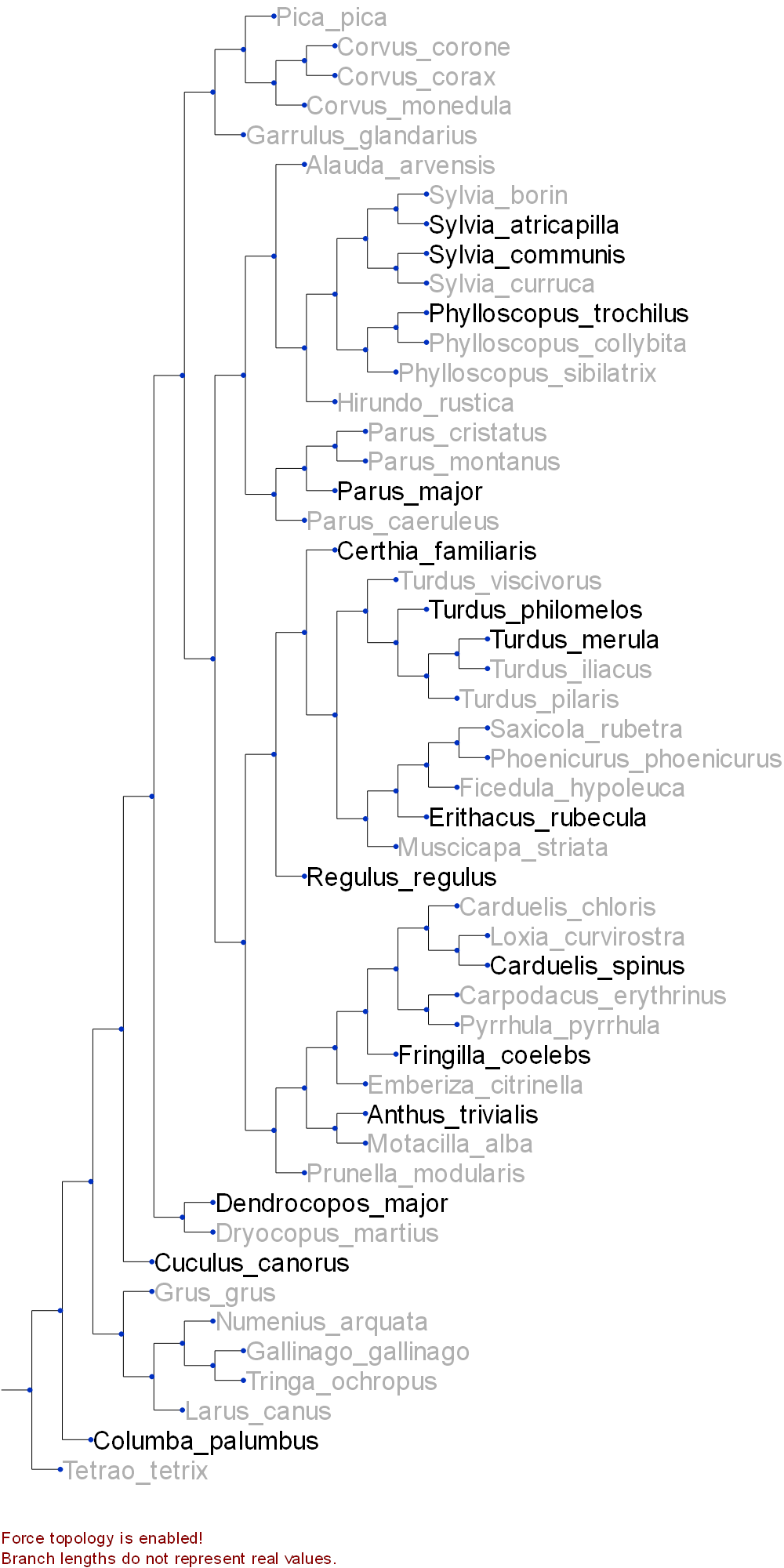}
\caption{Species dependent on $w_1$.}
\label{sub_fig:phylo1}
\end{subfigure}
\begin{subfigure}[b]{0.49\textwidth}
\centering
\includegraphics[width = \textwidth, trim = 0 15 0 0, clip]{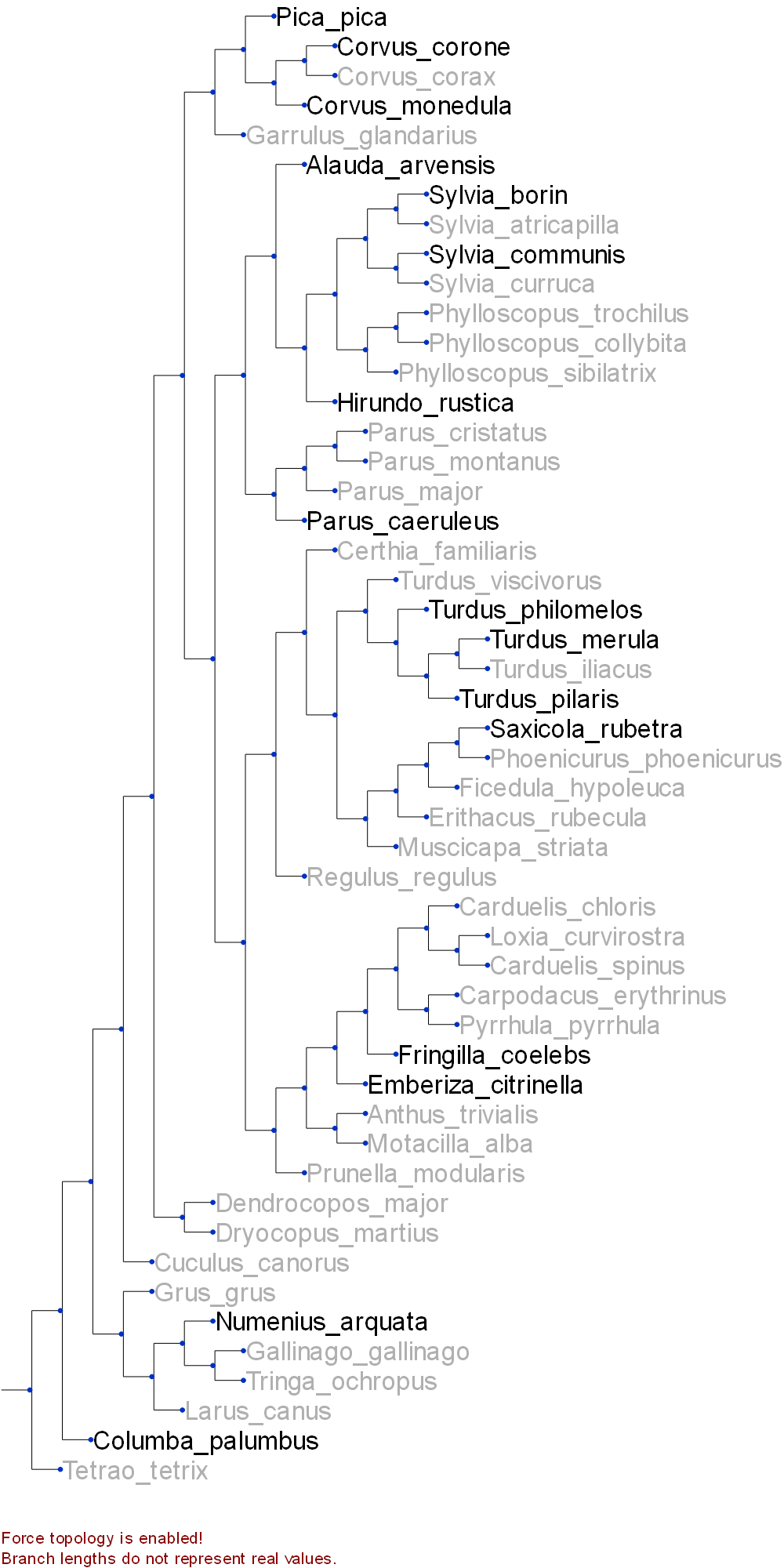}
\caption{Species dependent on $w_3$.}
\label{sub_fig:phylo3}
\end{subfigure}
\caption{
Phylogenetic trees of bird species, where species names in black indicate dependence on latent attribute $w_j$ (i.e. species $i$ with $g_{i, j} = 1$), while names in gray indicate no dependence, for $j = 1, 3$.
For latent attributes $w_2, w_4, w_5$, refer to Figure \ref{fig:phylo}.
}
\label{fig:phylo_full}
\end{figure}

Furthermore, each inferred latent attribute $w_j$ can be interpreted by examining patterns in the biological traits of bird species that depend on $w_j$, i.e. species $i$ with $g_{i, j} = 1$.
The biological traits considered include the migration type and log-transformed body mass, with migration type categorized and represented by three indicators: long distance migrant, short distance migrant, and resident species.
Using the binary matrix $\Gb$, we summarize the average biological traits separately for species dependent on $w_j$ and those not dependent on $w_j$, as presented in Table \ref{tab:app_WGT}.
Notably, all resident species are not dependent on attribute $w_4$.
Among migrant species, those associated with $w_1$, $w_2$, or $w_5$ are more likely to be short distance migrants, whereas those not dependent on these attributes are more often long distance migrants.
Additionally, we observe that species dependent on $w_3$ and those not dependent on $w_2$ tend to have higher body weights.

\begin{table}[ht!]
\centering
\begin{tabular}{c||ccc||c}
\hline\hline
& long distance migrant & short distance migrant & resident species & log mass \\
\hline\hline
$g_{i, 1} = 1$ & 0.3333 & 0.4000 & 0.2667 & 3.3788 \\
$g_{i, 1} = 0$ & 0.3143 & 0.2857 & 0.4000 & 4.0903 \\
\hline
$g_{i, 2} = 1$ & 0.3103 & 0.3448 & 0.3448 & 3.4123 \\
$g_{i, 2} = 0$ & 0.3333 & 0.2857 & 0.3810 & 4.5184 \\
\hline
$g_{i, 3} = 1$ & 0.2500 & 0.4375 & 0.3125 & 4.2148 \\
$g_{i, 3} = 0$ & 0.3529 & 0.2647 & 0.3824 & 3.7178 \\
\hline
$g_{i, 4} = 1$ & 0.5000 & 0.5000 & 0.0000 & 3.8092 \\
$g_{i, 4} = 0$ & 0.2750 & 0.2750 & 0.4500 & 3.8938 \\
\hline
$g_{i, 5} = 1$ & 0.2273 & 0.3182 & 0.4545 & 3.5660 \\
$g_{i, 5} = 0$ & 0.3929 & 0.3214 & 0.2857 & 4.1211 \\
\hline\hline
\end{tabular}
\caption{
Average of meta covariates over bird species dependent or not dependent on each latent attribute $w_j$, i.e. over $\{i \in [p]:~ g_{i, j} = 1\}$ and $\{i \in [p]:~ g_{i, j} = 0\}$.
}
\label{tab:app_WGT}
\end{table}

\subsection{Prediction Task}\label{supp_ssec:pred_task}

In Section \ref{ssec:pred}, Table \ref{tab:pred_metrics} presents predictive metrics, including RMSE, AUC, and co-occurrence score, for \texttt{HMSC}, our model \texttt{BLIP}, and \texttt{LCR}, evaluated on both in-sample and out-of-sample data.
To enable evaluation at a finer scale, we further visualize RMSE and AUC for each species in Figure \ref{fig:app_species} and for each sampling location in Figure \ref{fig:app_location}.
In these figures, the $x$-axes represent the occurrence frequency $\frac{1}{N} \sum_{n = 1}^N y_i^{(n)}$ of each species $i$ and the species richness $\frac{1}{p} \sum_{i = 1}^p y_i^{(n)}$ of each sampling location $n$, respectively.
Our results indicate that our model performs consistently across species and sampling locations, achieving predictive accuracy comparable to the state-of-the-art baseline \texttt{HMSC}.

\begin{figure}[ht!]
\centering
\begin{subfigure}[b]{\textwidth}
\centering
\includegraphics[width = \textwidth]{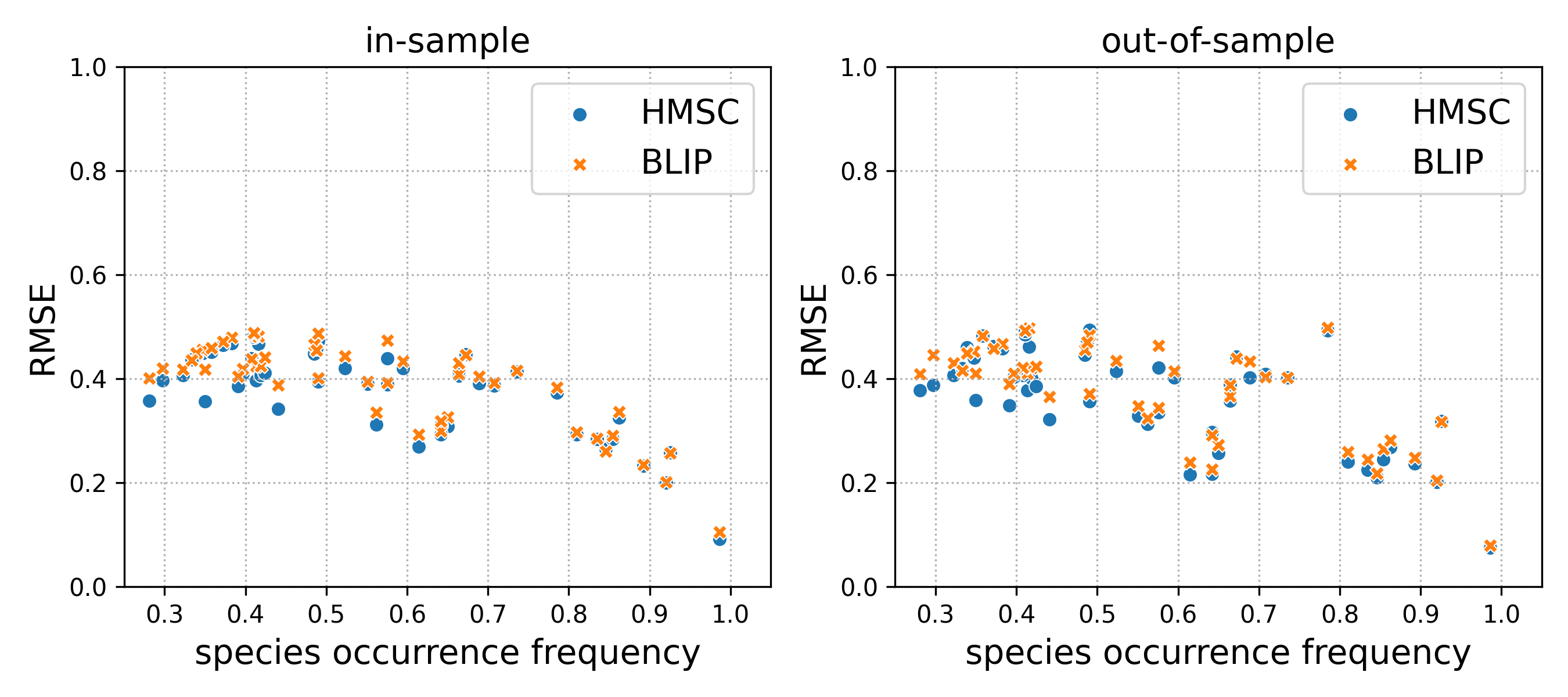}
\caption{RMSE}
\label{sub_fig:app_rmse_species}
\end{subfigure}
\begin{subfigure}[b]{\textwidth}
\centering
\includegraphics[width = \textwidth]{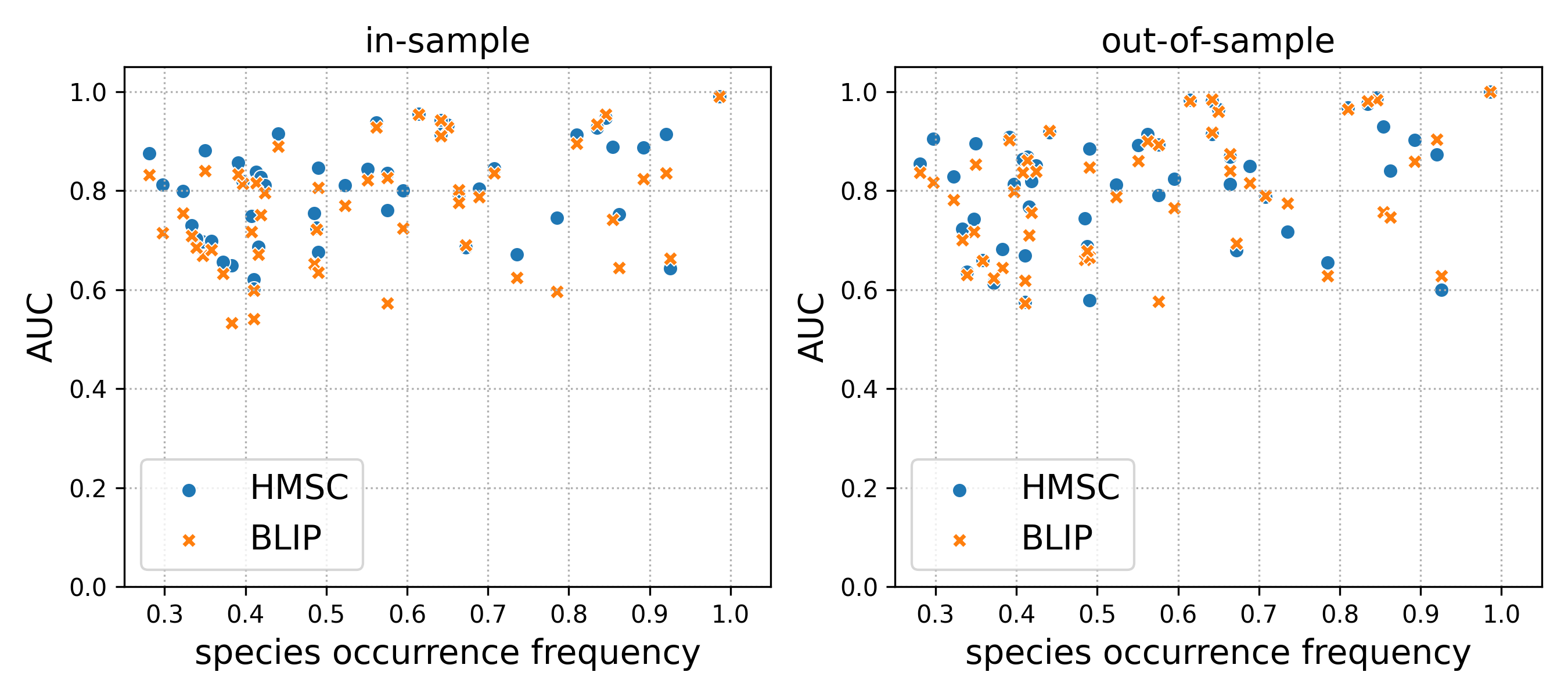}
\caption{AUC}
\label{sub_fig:app_auc_species}
\end{subfigure}
\caption{
Prediction metrics for \texttt{HMSC} and our \texttt{BLIP}, computed for each species over both in-sample and out-of-sample observations.
Each dot represents a species $i \in [p]$, with the $x$-axis indicating the species occurrence frequency.
}
\label{fig:app_species}
\end{figure}

\begin{figure}[ht!]
\centering
\begin{subfigure}[b]{\textwidth}
\centering
\includegraphics[width = \textwidth]{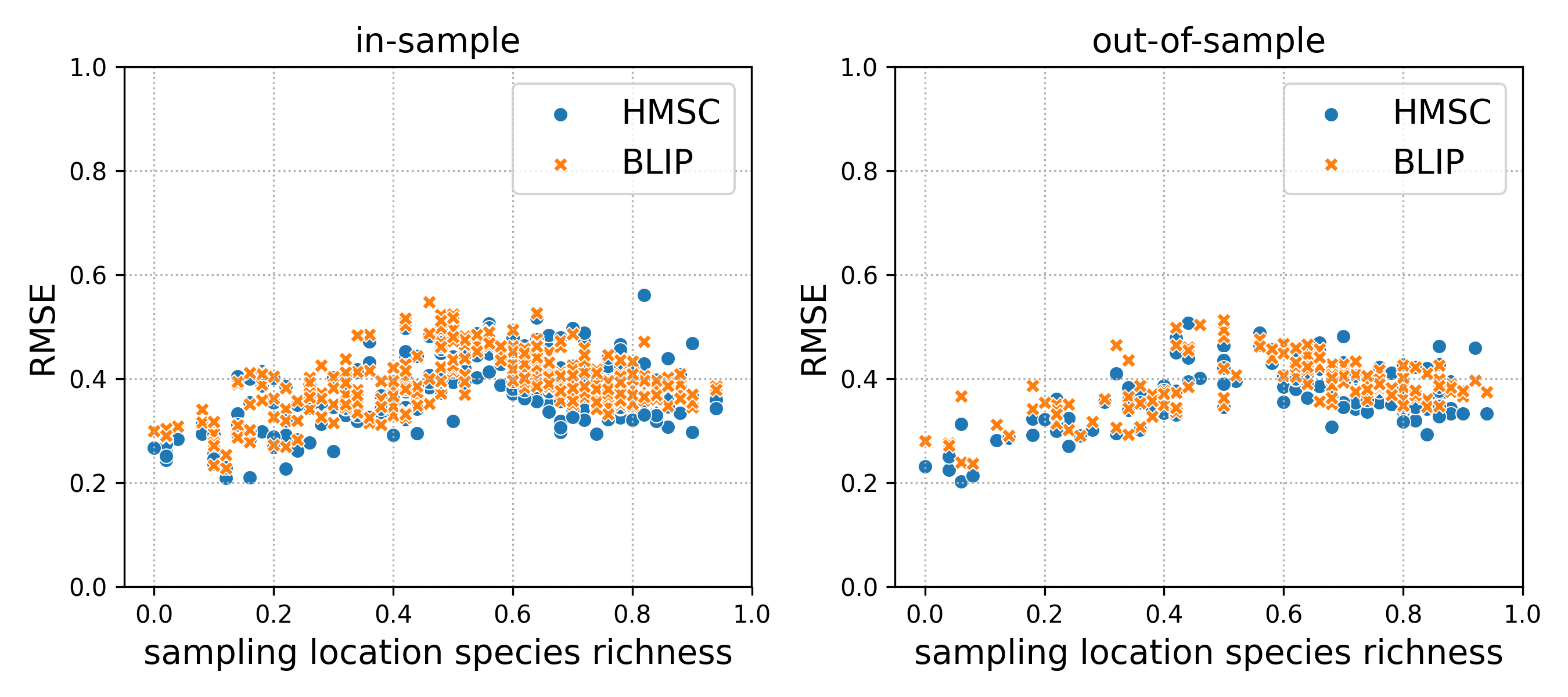}
\caption{RMSE}
\label{sub_fig:app_rmse_location}
\end{subfigure}
\begin{subfigure}[b]{\textwidth}
\centering
\includegraphics[width = \textwidth]{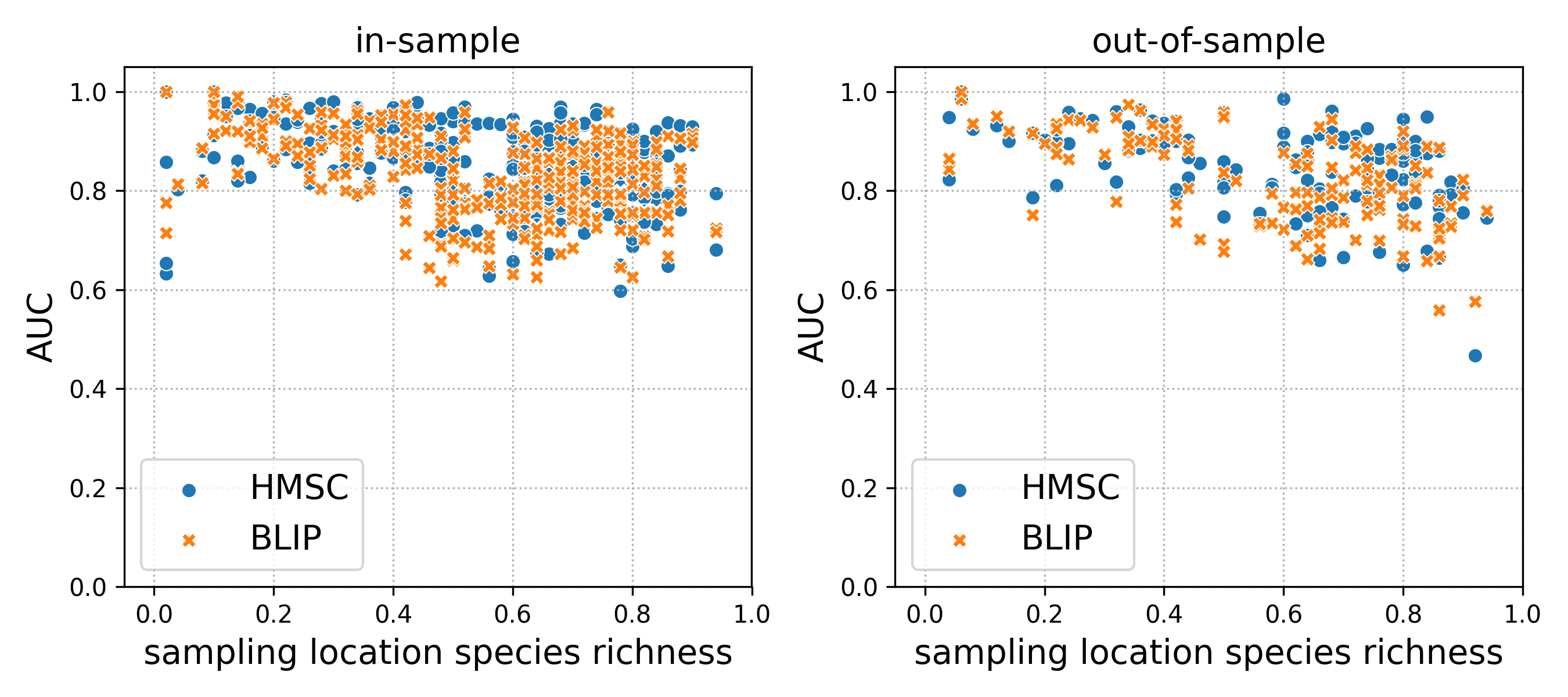}
\caption{AUC}
\label{sub_fig:app_auc_location}
\end{subfigure}
\caption{
Prediction metrics for \texttt{HMSC} and our \texttt{BLIP}, computed for each sampling location over both in-sample and out-of-sample observations.
Each dot represents a sampling location $n \in [N]$, with the $x$-axis indicating the sampling location species richness.
}
\label{fig:app_location}
\end{figure}

We provide additional discussions on the \texttt{LCR} model.
As noted in Section \ref{sec:appl}, we use the \texttt{LCR} model with $d = 6$ latent classes as a baseline, selected according to the \texttt{AIC} (the approach used in the standard package we use for implementation), as shown in Figure \ref{fig:lcr_aic}.
The sampling locations within each inferred \texttt{LCR} class are visualized on the Finland map in Figure \ref{fig:app_lcr_z_map}.
Compared to the inferred class of our \texttt{BLIP} in Figure \ref{fig:app_z_map}, the inferred class of \texttt{LCR} exhibits less distinct and interpretable geographical patterns. 
Furthermore, our model divides the observations into smaller clusters with $d = 4$. 
Interestingly, we find that the \texttt{LCR} model with $d = 4$ outperforms $d = 6$ in prediction tasks, despite having a lower model likelihood and being less favored by \texttt{AIC}.
This aligns with our theoretical analysis in Section \ref{sec:high_dim}, which suggests that \texttt{LCR} is already prone to over clustering even at a data dimensionality of $p = 50$.
Regardless of whether $d = 4$ or $d = 6$, our model consistently outperforms the \texttt{LCR} model in prediction tasks.

\begin{figure}[ht!]
\centering
\includegraphics[width = 0.7\textwidth]{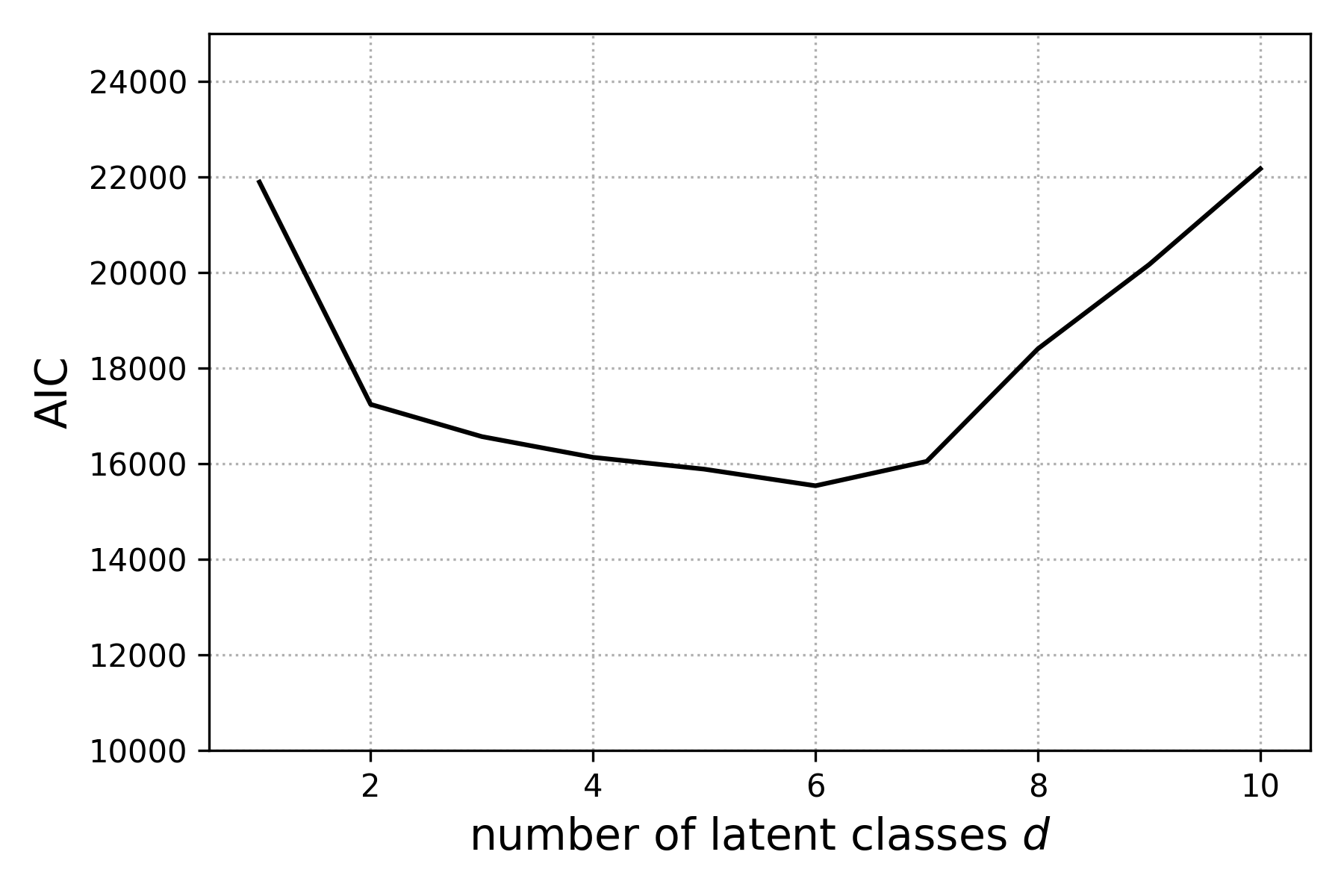}
\caption{
\texttt{AIC} for each \texttt{LCR} model with number of latent classes $d \in \{1, 2, 3, 4, 5, 6, 7, 8, 9, 10\}$.
}
\label{fig:lcr_aic}
\end{figure}

\begin{figure}[ht!]
\centering
\includegraphics[width = 0.5\textwidth]{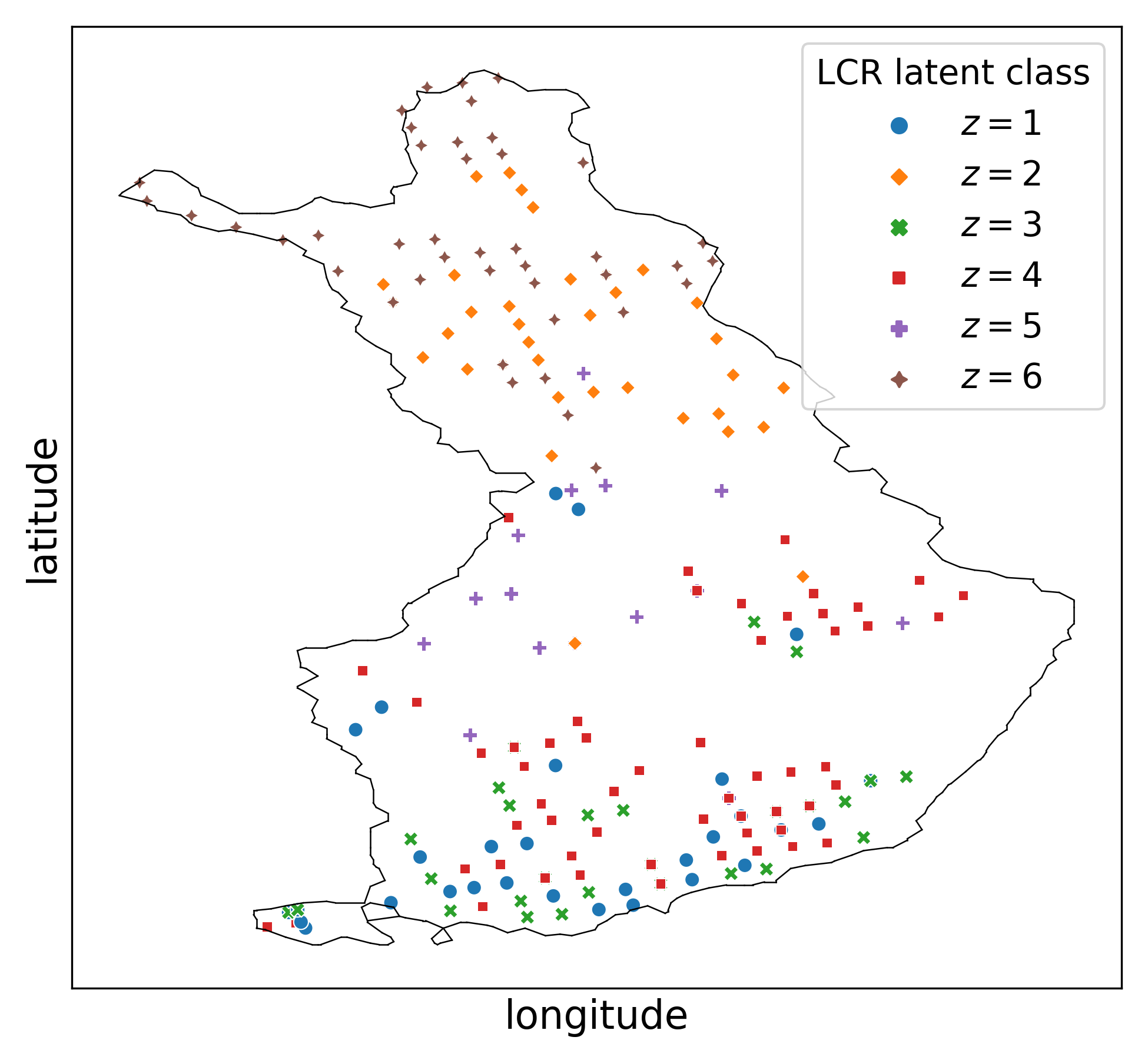}
\caption{
Spatial distribution of bird species sampling locations on the map of Finland, color-coded by their inferred \texttt{LCR} latent class.
}
\label{fig:app_lcr_z_map}
\end{figure}

\section{Model Generalizations}\label{supp_sec:gen}

Multivariate categorical data can appear in various complex forms, including vectors of mixed discrete outcomes \citep{dunson2005bayesian} and adjacency matrices \citep{guha2024covariate, zhou2024bayesian}.
For clarity and simplicity, we have so far introduced our model and presented the theoretical results for observed data in the form of high-dimensional binary vectors.
In this section, we extend our model to accommodate general types of high-dimensional categorical or count data by incorporating generalized linear models \citep{mccullagh2019generalized} into the bottom two layers of the Bayesian network depicted in Figure \ref{fig:dag}.
Theorems on model identifiability and posterior consistency, along with the MCMC sampling algorithms, are also generalized to the extended model.

In Section \ref{supp_ssec:gen_model}, we generalize the formulation of \eqref{eq:model}, incorporating multinomial logistic regression \eqref{eq:yi_cat} for categorical entries $y_i$ and Poisson regression \eqref{eq:yi_count} for count-valued entries.
The theoretical properties of the generalized model are established in Section \ref{supp_ssec:gen_theo}, with the corresponding proofs detailed in Section \ref{supp_ssec:gen_proof}.
Furthermore, in Section \ref{supp_ssec:gen_post_comp}, we extend the hierarchical prior distribution introduced in Section \ref{ssec:prior} and design a Metropolis-within-Gibbs sampler for posterior computation.

\subsection{Formulation for Mixed Categorical and Count Data}\label{supp_ssec:gen_model}

For a categorical data entry $y_i$ taking values in $[D_i]$, we specify its conditional distribution $y_i ~|~ \wb$ using multinomial logistic regression:
\begin{equation}\label{eq:yi_cat}
y_i^{(n)}
\sim
\Categorical\left(
\frac{
\exp(\beta_{i, 1, 0} + (\gb_i \circ \bbeta_{i, 1})^\top \wb^{(n)})
}{
\sum_{\ell = 1}^{D_i} \exp(\beta_{i, \ell, 0} + (\gb_i \circ \bbeta_{i, \ell})^\top \wb^{(n)})
}
,\ldots,
\frac{
\exp(\beta_{i, D_i, 0} + (\gb_i \circ \bbeta_{i, D_i})^\top \wb^{(n)})
}{
\sum_{\ell = 1}^{D_i} \exp(\beta_{i, \ell, 0} + (\gb_i \circ \bbeta_{i, \ell})^\top \wb^{(n)})
}
\right)
,
\end{equation}
where, as before, $\gb_i$ is a $q$-dimensional binary vector characterizing the dependency of $y_i$ on the attributes in $\wb$.
Each category $\ell \in [D_i]$ of $y_i$ is associated with the parameters $\beta_{i, \ell, 0} \in \RR$ and $\bbeta_{i, \ell} \in \RR^q$, representing its regression intercept and coefficients, respectively.
To ensure identifiability, we set category $D_i$ as the baseline with fixed parameters $\beta_{i, D_i, 0} = 0$ and $\bbeta_{i, D_i} = \zero$.

For a count-valued data entry $y_i \in \NN$, we specify its conditional distribution $y_i ~|~ \wb$ using Poisson regression:
\begin{equation}\label{eq:yi_count}
y_i^{(n)}
\sim
\Poisson\left( \exp\left(
\beta_{i, \infty, 0} + (\gb_i \circ \bbeta_{i, \infty})^\top \wb^{(n)}
\right) \right)
,
\end{equation}
where $\gb_i$ again indicates the dependency structure, and $\beta_{i, \infty, 0} \in \RR, \bbeta_{i, \infty} \in \RR^q$ denote the regression intercept and coefficients.

The generalized model retains the same theoretical properties as the multivariate binary model \eqref{eq:model}, including strict identifiability, generic identifiability, and posterior consistency.
Likewise, posterior computation methods naturally extend to the generalized framework.
The Polya-Gamma augmentation technique \citep{polson2013bayesian} applies to multinomial logistic regression, enabling similar sampling steps for categorical entries $y_i$ governed by \eqref{eq:yi_cat}.
For entries with count values $y_i$ following \eqref{eq:yi_count}, the associated parameters $\beta_{i, \infty, 0}, \bbeta_{i, \infty}$ can be efficiently sampled via the Metropolis-Hastings algorithm, leading to an overall Metropolis-within-Gibbs sampler.
We elaborate on these theories and methods in the following subsections.

\subsection{Theoretical Properties}\label{supp_ssec:gen_theo}

For entries in mixed categorical and count observation $\yb$, we define $D_i \in \NN$ as the number classes for $y_i$ when $y_i$ is categorical and set $D_i = \infty$ when $y_i$ is count valued.
Recall that for finite $D_i$, the set $[D_i]$ is defined as $\{1, 2, \ldots, D_i\}$.
For simplicity, we introduce the notation $[\infty] := \{\infty\}$, allowing a unified representation under the model formulation of $\yb ~|~ \wb$ given in \eqref{eq:yi_cat} and \eqref{eq:yi_count}.
The continuous parameters governing the conditional distribution of $y_i ~|~ \wb$ can then be expressed as $\{\beta_{i, \ell, 0}, \bbeta_{i, \ell}:~ \ell \in [D_i]\}$.

We use $\Bb$ to denote the collection of all continuous parameters associated with the conditional distribution of $\yb ~|~ \wb$, i.e.
\begin{equation}\label{eq:gen_Bb}
\Bb
:=
\big\{
\beta_{i, \ell, 0}, \bbeta_{i, \ell}:~
i \in [p], \ell \in [D_i]
\big\}
.
\end{equation}
Furthermore, we extend the entrywise product $(\one ~ \Gb) \circ \Bb$ to denote an operation in which each $\beta_{i, \ell, 0}$ remains unchanged, while each vector $\bbeta_{i, \ell}$ is multiplied entrywise by its corresponding binary vector $\gb_i$, i.e.
\begin{equation}\label{eq:gen_entrywise_prod}
(\one ~ \Gb) \circ \Bb
:=
\big\{
\beta_{i, \ell, 0}, \gb_i \circ \bbeta_{i, \ell}:~
i \in [p], \ell \in [D_i]
\big\}
.
\end{equation}

For the general model of mixed discrete observations, the definitions of $\perm_{\sS}(\Ab, \Bb, \bGamma, \Gb)$ and $\marg_{\sS}(\Ab, \Bb, \bGamma, \Gb)$ in \eqref{eq:perm} and \eqref{eq:marg} remain valid.
Consequently, the notions of strict identifiability, generic identifiability, and posterior consistency for the general model follow the same definition as in Definitions \ref{defi:strict}, \ref{defi:generic}, and \ref{defi:post}.

In the following, we present the theorems establishing strict identifiability, generic identifiability, and posterior consistency for the general model.
Proofs of the theorems are provided in Section \ref{supp_ssec:gen_proof}.

\begin{theorem}[Strict Identifiability]\label{theo:gen_strict}
If the distribution $\cP_x$ has full rank, then our model with the parameter space $\sS_1$ is strictly identifiable, for $\sS_1$ consisting of parameters satisfying the following conditions:
\begin{enumerate}[(i)]
\item
There exists a partition of $[q]$ as $\cQ_1 \cup \cQ_2 \cup \cQ_3$ with $\card(\cQ_1), \card(\cQ_2) \ge \log_2 d$, such that the two sets of vectors $\bigcirc_{j \in \cQ_1} \{\one, \balpha_j\}$, $\bigcirc_{j \in \cQ_2} \{\one, \balpha_j\}$ have full ranks and the submatrix $\Ab_{\cQ_3}$ has distinct columns;
\item
The matrix $\Gb$ contains three distinct identity blocks $\Ib_q$, i.e.
$$
\Pb \Gb
=
\left(\begin{matrix}
\Ib_q \\ \Ib_q \\ \Ib_q \\ \Gb_{\cI}
\end{matrix}\right)
$$
for some $p \times p$ permutation matrix $\Pb$ and subset $\cI \subset [p]$ with $\card(\cI) = p - 3q$;
\item
$\forall i \in [p]$, $\exists \ell \in [D_i]$ such that $\forall j \in [q]$, if $g_{i, j} = 1$ then $\beta_{i, \ell, j} \ne 0$.
\end{enumerate}
\end{theorem}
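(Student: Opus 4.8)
The plan is to mirror the three-stage argument used for the binary model in Theorem~\ref{theo:strict}, replacing only the bottom layer $\law(\yb \mid \wb)$ to accommodate the multinomial logistic link \eqref{eq:yi_cat} and the Poisson link \eqref{eq:yi_count}. As in the binary case, I would first represent $\law(\yb \mid \xb)$ through the factorization $\bLambda\bnu$ together with $\bnu = \bSigma\bmu$, where $\bLambda$ now collects the conditional laws $\law(\yb \mid \wb)$, its rows indexed by the (possibly infinite) support of $\yb$ and its $2^q$ columns indexed by $\wb \in \{0,1\}^q$. The middle and top layers $\law(\wb \mid z)$ and $\law(z \mid \xb)$ are untouched by the generalization, so Lemma~\ref{lemm:w_to_z} (recovering $\bSigma,\bmu$ from $\bSigma\bmu$) and parts~(ii)--(iii) of Lemma~\ref{lemm:law_to_params} (recovering $\Ab$ and $\bGamma$) carry over verbatim. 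Thus the work concentrates on re-establishing the analog of Lemma~\ref{lemm:y_to_w} and of part~(i) of Lemma~\ref{lemm:law_to_params}, followed by the same permutation-bridging step.

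The technical heart is showing that the three block matrices $\bLambda_{[1]},\bLambda_{[2]},\bLambda_{[3]}$ arising from the three identity blocks of $\Gb$ each have full Kruskal rank $2^q$, so that with the remainder block absorbed into the third CP factor and $\bnu$ strictly positive, Kruskal's Theorem~\ref{theo:kruskal} applies via $3\cdot 2^q \ge 2\cdot 2^q + 2$. The infinitely many rows contributed by count entries pose no difficulty, since $\bLambda$ has only $2^q$ columns and the Kruskal rank is a property of columns. For each $i$ in an identity block we have $\gb_i = \eb_i$, so $\law(y_i \mid \wb)$ depends only on $w_i$; invoking condition~(iii), I would fix a category $\ell_i$ with $\beta_{i,\ell_i,i}\ne 0$ (for a count entry the single index $\ell_i=\infty$) and form the scalar functional $w_i \mapsto \PP(y_i = \ell_i \mid w_i)$ for categorical entries, or $w_i \mapsto \PP(y_i = 0 \mid w_i)$ for count entries. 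Because a nonzero coefficient shifts the log-odds of category $\ell_i$ against the baseline, respectively shifts the Poisson rate, this functional takes different values at $w_i=0$ and $w_i=1$. Replacing each factor $\bLambda_{(i)}$ by the $2\times 2^q$ matrix built from this distinguishing row and an all-ones row, the Khatri--Rao product reduces to exactly the Kronecker and upper-triangular form analyzed in the proof of Lemma~\ref{lemm:y_to_w}, which is invertible; since this reduced matrix is a row-summation of $\bLambda_{[k]}$, its full column rank forces $\rank_K(\bLambda_{[k]})=2^q$.

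For the parameter-recovery step, equality of $\law(\yb\mid\wb,\Gb,\Bb)$ and $\law(\yb\mid\wb,\tilde\Gb,\tilde\Bb)$ yields, by injectivity of the multinomial softmax up to the fixed baseline and of the Poisson link, the affine identities $\beta_{i,\ell,0}+(\gb_i\circ\bbeta_{i,\ell})^\top\wb = \tilde\beta_{i,\ell,0}+(\tilde\gb_i\circ\tilde\bbeta_{i,\ell})^\top\wb$ for every $\ell$ and every $\wb\in\{0,1\}^q$; evaluating at $\wb=\zero$ and $\wb=\eb_j$ recovers $(\one~\Gb)\circ\Bb$ in the generalized sense of \eqref{eq:gen_entrywise_prod}, and reading off the support of $\gb_i\circ\bbeta_{i,\ell_i}$ for the good category $\ell_i$ recovers $\gb_i$, hence $\Gb$. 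Finally, to bridge from permutations of $\wb$ over $\{0,1\}^q$ to permutations of the $q$ coordinates, I would repeat the counting argument of the binary proof with the set $\cC_i$ of \emph{distinct conditional laws} $\{\law(y_i\mid\wb):\wb\in\{0,1\}^q\}$ in place of the scalar set: condition~(iii) again forces $\card(\cC_i)$ to equal $1$, $2$, or to be at least $3$ according to whether $\one^\top\gb_i$ equals $0$, $1$, or is at least $2$, so the identity-block structure recovers the coordinate permutation $\Pb_w$ and the relation \eqref{eq:Q_to_P}. I expect the main obstacle to be the Kruskal-rank step: making the reduction to distinguishing functionals fully rigorous while treating categorical and count entries uniformly, and verifying that the single ``good'' category promised by condition~(iii), rather than a uniformly nonzero coefficient vector, suffices at every place where the binary proof exploited $\beta_{i,j}\ne 0$.
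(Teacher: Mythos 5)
Your route is the same as the paper's: the paper also handles the general model by reducing the bottom layer to the binary case, defining $y_i' := 1_{y_i = \ell}$ for the good category $\ell$ guaranteed by condition (iii), asserting that Lemmas \ref{lemm:y_to_w}, \ref{lemm:w_to_z} and parts (ii)--(iii) of Lemma \ref{lemm:law_to_params} then apply to $\yb'$, and proving part (i) of Lemma \ref{lemm:law_to_params} directly for the general model via the ratio to the baseline category and the Poisson link --- which is exactly your reduction-plus-parameter-recovery plan, down to the permutation-bridging argument.

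However, the obstacle you flag in your final sentence is real, and the argument you give to dismiss it is wrong. For a categorical entry with $D_i \ge 3$, a nonzero coefficient $\beta_{i, \ell_i, i} \ne 0$ shifts the log-odds of category $\ell_i$ \emph{against the baseline}, but it does not force the marginal probability $\PP(y_i = \ell_i \mid w_i)$ to change, because the baseline probability itself moves with $w_i$ through the other categories' coefficients. Concretely, take $D_i = 3$ with baseline category $3$, an identity-block row $\gb_i = \eb_i$, intercepts $\beta_{i, 1, 0} = \beta_{i, 2, 0} = 0$, and coefficients $\beta_{i, 1, i} = 1$, $\beta_{i, 2, i} = \log(2e - 1) \ne 0$; condition (iii) holds with $\ell_i = 1$, yet
$$
\PP(y_i = 1 \mid w_i = 0)
=
\frac{1}{1 + 1 + 1}
=
\frac{e}{e + (2e - 1) + 1}
=
\PP(y_i = 1 \mid w_i = 1)
.
$$
With this perfectly legitimate choice of $\ell_i$, your distinguishing row is constant across columns, so columns of the Khatri--Rao product over the identity block that differ only in coordinate $i$ coincide; the block factor then has rank at most $2^{q-1}$, and the appeal to Theorem \ref{theo:kruskal} collapses. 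The repair is one line: $\beta_{i, \ell_i, i} \ne 0$ does imply $\law(y_i \mid w_i = 0) \ne \law(y_i \mid w_i = 1)$ (their ratios to the baseline differ), hence \emph{some} outcome value --- in the example, category $2$ or $3$, and $\{y_i = 0\}$ for Poisson entries --- has differing probabilities at $w_i = 0$ and $w_i = 1$, and that value, rather than $\ell_i$ itself, should define the reduction. Your bridging step is immune, since you already phrased it in terms of distinct conditional laws rather than the reduced scalars. You should know that the paper's own proof makes the identical unexamined leap (it applies Lemma \ref{lemm:y_to_w} to $\yb'$, whose conditional law is not of logistic-affine form and whose reduced probabilities may fail to separate $w_i = 0$ from $w_i = 1$), so your proposal is faithful to the paper --- but both need this patch to be airtight.
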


\begin{theorem}[Generic Identifiability]\label{theo:gen_generic}
If the distribution $\cP_x$ has full rank, then our model with the parameter space $\sS_2$ is generically identifiable, for $\sS_2$ consisting of parameters satisfying the following condition:
\begin{itemize}
\item[($*$)]
$\Gb$ has distinct columns and takes the blockwise form
$$
\Pb \Gb
=
\left(\begin{matrix}
\Gb_{\cI_1} \\ \Gb_{\cI_2} \\ \Gb_{\cI_3}
\end{matrix}\right)
$$
for some $p \times p$ permutation matrix $\Pb$, with two $q \times q$ submatrices $\Gb_{\cI_1}, \Gb_{\cI_2}$ having all-ones diagonals and the remaining $(p - 2q) \times q$ submatrix $\Gb_{\cI_3}$ having no all-zeros column, i.e.
$$
\forall j \in [q]
,\quad
(\Gb_{\cI_1})_{j, j} = (\Gb_{\cI_2})_{j, j} = 1
,\quad
\sum_{i \in \cI_3} g_{i, j} > 0
.
$$
\end{itemize}
\end{theorem}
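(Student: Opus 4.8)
The plan is to adapt the proof architecture of Theorem~\ref{theo:generic} to the generalized likelihood, isolating exactly the components that change. Since the conditional distributions $\law(\wb \mid z, \Ab)$ and $\law(z \mid \xb, \bGamma)$ are identical to those in the binary model, the probability matrices $\bSigma$ and $\bmu$ and their product $\bSigma\bmu = \bnu$ are unchanged, so Lemma~\ref{lemm:w_to_z_meas0} applies verbatim: outside a permutation-invariant measure-zero set $\sS_{2,2}$, one recovers $(\bSigma,\bmu)$ from $\bSigma\bmu$ up to a $d\times d$ column permutation. Thus all of the new work is confined to the bottom layer $\yb \mid \wb$, i.e. to generalizing Lemmas~\ref{lemm:y_to_w_meas0} and \ref{lemm:law_to_params_meas0} together with the bridging argument that promotes a permutation of $\{0,1\}^q$ to a permutation of the $q$ coordinates of $\wb$.

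First I would set up the generalized probability matrices. For each $i\in[p]$, let $\bLambda_{(i)}$ have columns indexed by $\wb\in\{0,1\}^q$ (via $\sW$) and rows indexed by the support of $y_i$ — namely $[D_i]$ for a categorical entry and $\NN$ for a count entry — with entries $\PP(y_i \mid \wb, \gb_i, \{\beta_{i,\ell,0},\bbeta_{i,\ell}\})$ under \eqref{eq:yi_cat} or \eqref{eq:yi_count}. Conditional independence of the $y_i$ given $\wb$ again yields $\bLambda = \odot_{i=1}^p \bLambda_{(i)}$ up to row permutation, and, partitioning the rows of $\Gb$ per condition $(*)$ into $\cI_1\cup\cI_2\cup\cI_3$, we form the blocks $\bLambda_{[1]},\bLambda_{[2]},\bLambda_{[3]}$ exactly as in \eqref{eq:Lambda[123]}. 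The goal is to show that, outside a measure-zero set $\sS_{2,1}$, we have $\rank_K(\bLambda_{[1]})=\rank_K(\bLambda_{[2]})=2^q$ and $\bLambda_{[3]}$ has distinct columns, so that Kruskal's Theorem~\ref{theo:kruskal} recovers $(\bLambda,\bnu)$ from $\bLambda\bnu$ up to a $2^q\times 2^q$ column permutation. As in the binary proof, full Kruskal rank of an identity-diagonal block is equivalent to non-vanishing of a suitable $2^q\times 2^q$ minor of a Khatri--Rao product; since the softmax probabilities in \eqref{eq:yi_cat} and the Poisson probabilities in \eqref{eq:yi_count} are compositions of exponentials, polynomials, and rationals of the continuous parameters, these minors are holomorphic in $(\Ab,\Bb,\bGamma)$, and it suffices to exhibit one parameter value where each is nonzero and invoke Theorem~\ref{theo:holo_zero}. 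For count entries $\bLambda_{(i)}$ has infinitely many rows, so I would first select a fixed $2^q$-element subset of the support and argue non-degeneracy of the resulting finite minor.

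Next I would generalize Lemma~\ref{lemm:law_to_params_meas0}. Injectivity of the link maps — softmax up to its fixed baseline category $D_i$, and the logarithm of the Poisson rate — shows that $\law(\yb\mid\wb,\Gb,\Bb)=\law(\yb\mid\wb,\tilde\Gb,\tilde\Bb)$ for all $\wb$ forces, entrywise, $\beta_{i,\ell,0}+(\gb_i\circ\bbeta_{i,\ell})^\top\wb = \tilde\beta_{i,\ell,0}+(\tilde\gb_i\circ\tilde\bbeta_{i,\ell})^\top\wb$; excluding the measure-zero set $\sS_{2,3}$ where some active coefficient vanishes then yields $\Gb=\tilde\Gb$ and $(\one~\Gb)\circ\Bb=(\one~\tilde\Gb)\circ\tilde\Bb$ in the sense of \eqref{eq:gen_entrywise_prod}. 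The bridging step is the most delicate: in the binary case the permutation $\sQ$ on $\{0,1\}^q$ was pinned down to a coordinate permutation by using that $\xi_{\wb,i}-\xi_{\zero,i}$ decomposes uniquely into integer sums of $\{\beta_{i,j}\}_j$ once this set is integer-linearly independent (the exclusion $\sS_{2,4}$). I would redefine $\sS_{2,4}$ to exclude parameters for which $\cbb^\top\bbeta_{i,\ell}=0$ for some $i\in[p]$, some $\ell\in[D_i]$, and some nonzero integer $\cbb$ — a countable union of measure-zero hyperplanes — and then reuse the cardinality-based identification of the columns corresponding to $\wb=\zero$ and $\one^\top\wb=1$. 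Finally, with identifiability established on $\sS_3 = \sS_2\cap(\sS_{2,1}\cup\sS_{2,2}\cup\sS_{2,3}\cup\sS_{2,4})^c$, I would extend to all of $\sS_2$ exactly as in Theorem~\ref{theo:generic}: for each fixed $\Gb$ the marginal map $\law_\Gb$ remains analytic, its $\perm$-fibers are uniformly finite of size at most $(q!)(d!)$, and Theorem~\ref{theo:geom} shows the excluded measure-zero set cannot acquire positive measure under $\law_\Gb^{-1}(\law_\Gb(\cdot))$.

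The main obstacle I anticipate is the generalized Kruskal-rank argument for count-valued entries, where $\bLambda_{(i)}$ has unbounded support: one must select a finite sub-block whose minor is a non-constant holomorphic function of the Poisson parameters and verify non-vanishing at an explicit parameter choice, which is more intricate than the clean upper-triangularization available in the logistic case. A secondary subtlety is ensuring that the integer-independence exclusion $\sS_{2,4}$, now indexed over all categories $\ell\in[D_i]$, still carves out only a measure-zero set while remaining strong enough to force the coordinate-permutation structure in the bridging step; I expect this to follow from the same countable union bound, but it requires checking that the multi-category logits retain the unique integer-decomposition property.
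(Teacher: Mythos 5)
Your proposal is correct in outline, but it takes a genuinely different---and substantially heavier---route than the paper. The paper's entire proof is a reduction to the binary case: reusing the device from the proof of Theorem~\ref{theo:gen_strict}, each mixed entry $y_i$ is dichotomized into a binary surrogate $y_i' := 1_{y_i = \ell}$ (for a count entry one dichotomizes at the support point $y_i = 0$, whose probability $\exp(-\exp(\cdot))$ is a strictly monotone, full-support function of the linear predictor). Since $\law(\yb' \mid \wb)$ is again a product of strictly monotone binary links, Lemmas~\ref{lemm:y_to_w_meas0} and~\ref{lemm:w_to_z_meas0} apply to $\yb'$ verbatim; the only new work is to note that the set of parameters for which some active pair $(i,j)$ has $\beta_{i,\ell,j} = 0$ for \emph{every} $\ell \in [D_i]$ is a finite union of linear subspaces, hence $(\uplambda\times\upmu)$-null, after which the argument is word-for-word the proof of Theorem~\ref{theo:generic}. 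You instead rebuild the machinery directly on generalized matrices $\bLambda_{(i)}$ with rows indexed by the full support of $y_i$. That route can be completed, but it creates exactly the obstacles you flag, plus one you do not: (a) Kruskal's Theorem~\ref{theo:kruskal} is stated for finite matrices, so for Poisson entries you must truncate the support and then argue separately that the truncated columns still determine the parameters (they do, since the rate is recovered from $\PP(y_i = 0 \mid \wb)$ alone, but this needs to be said); (b) your final extension step applies Theorem~\ref{theo:geom} to the marginal map $\law_\Gb$, which must land in a finite-dimensional space---false for count data unless you again replace the full law by a finite-dimensional projection such as the dichotomized law.

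Both of your anticipated difficulties dissolve once you notice the row-aggregation fact underlying the paper's reduction: each row of the dichotomized block $\bLambda_{[k]}'$ is a sum of rows of your generalized block $\bLambda_{[k]}$, so $\bLambda_{[k]}' = \Ab\,\bLambda_{[k]}$ for a $0$--$1$ aggregation matrix $\Ab$, and $\bLambda_{[k]} \xb = \zero$ implies $\bLambda_{[k]}' \xb = \zero$. Hence full column rank of the binary block---already established generically by Lemma~\ref{lemm:y_to_w_meas0}---immediately forces Kruskal rank $2^q$ of the generalized block, with no new holomorphic witnesses, no Vandermonde construction for Poisson rows, and no per-category integer-independence exclusion. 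So the verdict is: your plan is sound and would constitute a valid alternative proof, but as written it has two genuine holes (the infinite-support Kruskal step and the finite-dimensionality required by Theorem~\ref{theo:geom}), both of which the paper's dichotomization closes in one stroke.
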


\begin{theorem}[Posterior Consistency]\label{theo:gen_post}
Let the prior $\uppi$ be absolutely continuous w.r.t. $\uplambda \times \upmu$.
Suppose $\cP_x$ has full rank and a finite first moment, then our model with parameter space $\sS$ is posterior consistent at
\begin{enumerate}[(i)]
\item
any identifiable parameter in $\supp(\uppi)$ if $\sS$ is compact;
\item
any parameter in $\supp(\uppi)$ if $\sS \subset \sS_1$;
\item
$\uppi$-almost every parameter in $\sS$ if $\sS \subset \sS_2$.
\end{enumerate}
\end{theorem}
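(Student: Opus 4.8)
The plan is to mirror the three-scenario proof of Theorem~\ref{theo:post} for the binary model, establishing the generalized analogues of its three supporting lemmas: the Kullback--Leibler support lemma (Lemma~\ref{lemm:prior_supp}), the active-compactness lemma (Lemma~\ref{lemm:S1_almost_compact}), and the quotient-metric lemma (Lemma~\ref{lemm:quotient_metric}). As before, parts (i) and (ii) follow from Schwartz' theory and part (iii) from Doob's theory, with the identifiability inputs now supplied by Theorems~\ref{theo:gen_strict} and~\ref{theo:gen_generic}. The one structural change is that the observation space is now $\cY := \prod_{i = 1}^p \cY_i$ with $\cY_i = [D_i]$ when $y_i$ is categorical and $\cY_i = \NN$ when $y_i$ is count valued, so the marginal law $\law(\yb \mid \Ab, \Bb, \bGamma, \Gb)$ no longer lives in the finite simplex $\cS^{2^p - 1}$ but in the space of probability measures on the countable set $\cY$; I would equip this space with the total-variation ($L_1$) metric and the induced weak topology, and let $\cL$ denote the continuous map from parameters to this marginal law.

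The first and most delicate step is the generalized KL support lemma: for every $(\Ab, \Bb, \bGamma, \Gb) \in \supp(\uppi)$, the law $\cL(\Ab, \Bb, \bGamma, \Gb)$ should lie in the KL support of $\cL_\# \uppi$. When all entries are categorical the state space $\cY$ is finite and the proof of Lemma~\ref{lemm:prior_supp} transfers verbatim, replacing $2^p$ by $\prod_i D_i$ and the logistic-derivative bound \eqref{eq:logistic_derivative} by the analogous bound for the multinomial-logistic and softmax maps. The genuinely new work concerns the Poisson factors. First I would re-establish that $\cL$ is locally Lipschitz in $L_1$ by the same factor-by-factor telescoping, now using that on any compact parameter set the Poisson rates $\exp(\beta_{i, \infty, 0} + (\gb_i \circ \bbeta_{i, \infty})^\top \wb)$ are bounded, so each Poisson pmf and its parameter derivatives are controlled and the total variation between two Poisson laws is bounded by the gap in their rates. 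Because the infinite state space precludes the reverse-Pinsker step used in the binary proof, I would instead bound $D_{KL}(P, P')$ directly on a neighborhood of $P = \cL(\Ab, \Bb, \bGamma, \Gb)$: bounded rates give the count coordinates uniformly exponential tails, and together with the finite first moment of $\cP_x$ this makes the log-density ratios uniformly integrable, so that $D_{KL}(P, P') \to 0$ as $P' \to P$ at a rate controlled by the parameter distance. Combining this local KL bound with the Lipschitz continuity of $\cL$ and $(\Ab, \Bb, \bGamma, \Gb) \in \supp(\uppi)$ yields the KL support statement.

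With KL support established, Schwartz' theorem gives strong $L_1$ consistency of the posterior on the space of marginal laws, and the pull-back to parameters proceeds exactly as in the proofs of Theorem~\ref{theo:post}(i) and (ii). For (i), compactness of $\sS$ and continuity of $\cL$ furnish the separation between $\cL(\cO_\epsilon^c)$ and the true law. For (ii) I would generalize Lemma~\ref{lemm:S1_almost_compact}: its case analysis $C_1, \dots, C_9$, which rules out divergence of the active parameters, must be enlarged to cover the new categorical parameters $\beta_{i, \ell, 0}, \bbeta_{i, \ell}$ for $\ell \in [D_i]$ and the Poisson parameters $\beta_{i, \infty, 0}, \bbeta_{i, \infty}$. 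In each divergent case I would again argue that some entry of the (now extended) probability arrays $\bLambda, \bSigma, \bnu, \bmu$ degenerates to zero --- a categorical class probability vanishing, or a Poisson rate tending to $0$ or $\infty$ forcing the pmf at a fixed value to $0$ --- contradicting convergence to the strictly positive true arrays; this establishes active compactness of $\sS_1$ and hence consistency on any $\sS \subset \sS_1$ without requiring compactness.

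For part (iii) I would invoke Doob's theorem on the quotient parameter space $\perm(\sS)$. Lemma~\ref{lemm:quotient_metric} generalizes directly, since its proof depends only on the permutation action on the parameter tuple and on the active-entry object $(\one ~ \Gb) \circ \Bb$, whose generalized meaning is fixed by \eqref{eq:gen_entrywise_prod}; thus $\perm(\sS)$ remains a Polish space. Theorem~\ref{theo:gen_generic} makes every equivalence class in the restricted, generically identifiable subspace identifiable, the map from equivalence classes to marginal laws is Borel measurable, and Doob's theorem delivers $\uppi$-almost-everywhere consistency on the quotient, which transfers back to the $\tilde\cO_\epsilon$-neighborhoods of Definition~\ref{defi:post} exactly as in the proof of Theorem~\ref{theo:post}(iii). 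I expect the main obstacle to be the direct local KL control for the Poisson factors in the support lemma, since the infinite observation space removes the compact-simplex and reverse-Pinsker machinery of the binary case and forces a careful tail-and-moment analysis of the count mixture; the enumeration of the additional divergence cases in the active-compactness lemma is a secondary, more mechanical difficulty.
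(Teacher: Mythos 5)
Your proposal follows essentially the same route as the paper's proof: generalize the three supporting lemmas (KL support, active compactness of $\sS_1$, and the quotient metric) and rerun the Schwartz/Doob arguments of Theorem~\ref{theo:post}, with your enlarged divergence case analysis for the categorical and Poisson parameters matching the paper's redefinition of the actively compact sets $\sB_m$. You are in fact more careful than the paper on one point: the paper simply asserts that Lemma~\ref{lemm:prior_supp} ``remains valid'' for the general model, whereas you correctly observe that the countably infinite observation space introduced by count-valued entries breaks the compact-simplex/reverse-Pinsker argument used in the binary case, and you supply the needed substitute --- a direct local KL bound exploiting bounded Poisson rates and tail control --- which is a genuine (if unacknowledged) gap-filling step rather than a departure from the paper's strategy.
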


Compared to Theorems \ref{theo:strict}, \ref{theo:generic}, and \ref{theo:post} for the model of multivariate binary observations, the only modification in the general model pertains to the condition on $\Bb$ for strict identifiability.
In Theorem \ref{theo:strict}, we require each parameter $\beta_{i, j}$ associated with $g_{i, j} = 1$ to be non-zero.
In contrast, for the general model, it suffices for $\beta_{i, \ell, j}$ to be non-zero for at least one $\ell \in [D_i]$.
Intuitively, identifiability becomes easier to establish when discrete observations take more than two possible values, as more information is contained in their marginal distributions.

\subsection{Proof of Theoretical Properties}\label{supp_ssec:gen_proof}

We now present the proofs of Theorems \ref{theo:gen_strict}, \ref{theo:gen_generic}, and \ref{theo:gen_post}.

\begin{proof}[Proof of Theorem \ref{theo:gen_strict}]
The proof largely follows the structure of the proof of Theorem \ref{theo:strict}.
To establish a connection between them, we construct a multivariate binary observation $\yb'$ from the mixed discrete observation $\yb$.
As assumed in Theorem \ref{theo:gen_strict}, for each entry $y_i$ in $\yb$, there exists an index $\ell \in [D_i]$ such that for every $g_{i, j} = 1$, the corresponding parameter $\beta_{i, \ell, j} \ne 0$.
Accordingly, we define $y_i' := 1_{y_i = \ell}$.

For the multivariate binary vector $\yb'$, Lemmas \ref{lemm:y_to_w} and \ref{lemm:w_to_z} remain valid, along with parts (ii) and (iii) of Lemma \ref{lemm:law_to_params}.
Consequently, to prove Theorem \ref{theo:gen_strict}, it suffices to establish part (i) of Lemma \ref{lemm:law_to_params} for the general model, as the rest of the proof follows the same reasoning as in Theorem \ref{theo:strict}.
We now prove part (i) of Lemma \ref{lemm:law_to_params}, which states that if there exist parameters $\tilde\Bb$ and $\tilde\Gb$ satisfying the assumptions of Theorem \ref{theo:gen_strict}, and
$$
\law(\yb ~|~ \wb, \Gb, \Bb)
=
\law(\yb ~|~ \wb, \tilde\Gb, \tilde\Bb)
,\quad
\forall \wb \in \{0, 1\}^q
,
$$
then it must hold that $\Gb = \tilde\Gb$ and $(\one ~ \Gb) \circ \Bb = (\one ~ \tilde\Gb) \circ \tilde\Bb$.
The entrywise product $(\one ~ \Gb) \circ \Bb$ was previously defined in \eqref{eq:gen_Bb} and \eqref{eq:gen_entrywise_prod}.

For a categorical $y_i$, recall that its baseline class $D_i$ has parameters fixed as $\beta_{i, D_i, 0} = 0$ and $\bbeta_{i, D_i} = \zero$ to ensure the identifiability of the model. By the conditional independence of the entries $y_i$ given $\wb$, for each $\ell \in [D_i - 1]$ and $\wb \in \{0, 1\}^q$, we have
\begin{align*}
\exp(\beta_{i, \ell, 0} + (\gb_i \circ \bbeta_{i, \ell})^\top \wb)
&=
\frac{\PP(y_i = \ell ~|~ \wb, \Gb, \Bb)}{\PP(y_i = D_i ~|~ \wb, \Gb, \Bb)}
\\&=
\frac{\PP(y_i = \ell ~|~ \wb, \tilde\Gb, \tilde\Bb)}{\PP(y_i = D_i ~|~ \wb, \tilde\Gb, \tilde\Bb)}
\\&=
\exp(\tilde\beta_{i, \ell, 0} + (\tilde\gb_i \circ \tilde\bbeta_{i, \ell})^\top \wb)
.
\end{align*}
For a counted-valued $y_i$, also by the conditional independence of the entries $y_i$ given $\wb$, for each $\wb \in \{0, 1\}^w$, we have
\begin{align*}
\exp(\beta_{i, \infty, 0} + (\gb_i \circ \bbeta_{i, \infty})^\top \wb)
&=
- \log\PP(y_i = 0 ~|~ \wb, \Gb, \Bb)
\\&=
- \log\PP(y_i = 0 ~|~ \wb, \tilde\Gb, \tilde\Bb)
\\&=
\exp(\tilde\beta_{i, \infty, 0} + (\tilde\gb_i \circ \tilde\bbeta_{i, \infty})^\top \wb)
.
\end{align*}
Since the exponential function is strictly increasing, it follows that
$$
\beta_{i, \ell, 0} + (\gb_i \circ \bbeta_{i, \ell})^\top \wb
=
\tilde\beta_{i, \ell, 0} + (\tilde\gb_i \circ \tilde\bbeta_{i, \ell})^\top \wb
,\quad
\forall i \in [p]
,~
\forall \ell \in [D_i]
,\quad
\forall \wb \in \{0, 1\}^q
.
$$
From this, we conclude that
$$
\beta_{i, \ell, 0}
=
\tilde\beta_{i, \ell, 0}
,\quad
\gb_i \circ \bbeta_{i, \ell}
=
\tilde\gb_i \circ \tilde\bbeta_{i, \ell}
,\quad
\forall i \in [p]
,~
\forall \ell \in [D_i]
.
$$
This completes the proof.
\end{proof}

\begin{proof}[Proof of Theorem \ref{theo:gen_generic}]
Using the multivariate binary vector $\yb'$ defined in the proof of Theorem \ref{theo:gen_strict}, both Lemmas \ref{lemm:y_to_w_meas0} and \ref{lemm:w_to_z_meas0} remain valid.
Recall that Lemma \ref{lemm:law_to_params_meas0} is derived from Lemma \ref{lemm:law_to_params} by showing that the set of parameters that violate the conditions in Lemma \ref{lemm:law_to_params} has $(\uplambda \times \upmu)$ measure zero.

For each binary matrix $\Gb$, we observe that the set
$$
\left\{
\Bb:~
\exists i \in [p], j \in [q]~
\text{s.t. }
g_{i, j} = 1~
\text{and }
\forall \ell \in [D_i],~
\beta_{i, \ell, j} = 0
\right\}
$$
is a finite union of linear subspaces, each having Lebesgue measure zero.
Since there are only finitely many possible values of $\Gb$, a union bound argument shows that the set of parameters not satisfying condition (iii) in Theorem \ref{theo:gen_strict} has $(\uplambda \times \upmu)$ measure zero.

As a result, Lemma \ref{lemm:law_to_params_meas0} holds for the general model, allowing Theorem \ref{theo:gen_generic} to be established using the same proof as Theorem \ref{theo:generic}.
\end{proof}

\begin{proof}[Proof of Theorem \ref{theo:post}]
The Lemmas \ref{lemm:prior_supp} and \ref{lemm:quotient_metric} remain valid for the general model. We can adapt the proof of Lemma \ref{lemm:S1_almost_compact} to the general model by redefining the sequence of actively compact sets $\{\sB_m\}_{m = 1}^\infty$ in \eqref{eq:sBm_seq} as
\begin{align*}
\sB_m
:=
\Big\{
(\Ab, \Bb, \bGamma, \Gb) \in \sS_1:~
&
\forall i \in [p],~
\forall j \in [q],~
\forall h \in [d],~
\forall k \in [p_x],~
\forall \ell \in [D_i],
\\&
|\gamma_{j, 0}| \le k,~
|\gamma_{h, k}| \le k,~
\frac{1}{k} \le \alpha_{j, h} \le 1 - \frac{1}{k},~
|\beta_{i, \ell, 0}| \le k,~
|g_{i, j} \beta_{i, \ell, j}| \le k
\Big\}
.
\end{align*}
Therefore, Lemma \ref{lemm:S1_almost_compact} also holds for the general model, allowing Theorem \ref{theo:gen_post} to be established using the same proof as Theorem \ref{theo:post}.
\end{proof}

\subsection{Posterior Computation}\label{supp_ssec:gen_post_comp}

The hierarchical prior distribution specified in Section \ref{ssec:prior} can be naturally extended to the general model by imposing
$$
(\beta_{i, \ell, 0}, \bbeta_{i, \ell})
\stackrel{iid}{\sim}
N(\bbm_\beta, \bV_\beta)
,\quad
\forall i \in [p]
,~
\ell \in [D_i]
$$
while keeping all other components the same as in \eqref{eq:prior_G} and \eqref{eq:prior_cont}.

To perform posterior computation for the general model, we design a Metropolis-with-Gibbs sampler as an extension of the data augmented Gibbs sampler described in Section \ref{supp_ssec:gibbs}.
Specifically, all full conditional distributions remain unchanged, except for those of the latent attributes $\wb^{(1:N)}$ and the parameters $\Gb$ and $\Bb$, which are updated as discussed below.

When the number of latent attributes $q$ is small, we can sample $\wb$ as a block through the full conditional distribution
\begin{align*}
\PP(\wb^{(n)} ~|~ \cdot)
&\propto
\PP(\wb^{(n)} ~|~ z^{(n)}, \Ab) \PP(\yb^{(n)} ~|~ \wb^{(n)}, \Gb, \Bb)
\\&\propto
\prod_{j = 1}^q \left(
\frac{\alpha_{j, z^{(n)}}}{1 - \alpha_{j, z^{(n)}}}
\right)^{w_j^{(n)}}
\prod_{i \in [p]:~ D_i \in \NN} \frac{\exp(\beta_{i, y_i^{(n)}, 0} + (\gb_i \circ \bbeta_{i, y_i^{(n)}})^\top \wb^{(n)})}{\sum_{\ell = 1}^{D_i} \exp(\beta_{i, \ell, 0} + (\gb_i \circ \bbeta_{i, \ell})^\top \wb^{(n)})}
\\&\qquad\cdot
\prod_{i \in [p]:~ D_i = \infty} \frac{\exp(y_i^{(n)} (\bbeta_{i, \infty, 0} + (\gb_i \circ \bbeta_{i, \infty})^\top \wb^{(n)}))}{(y_i^{(n)})! ~ \exp(\exp(\bbeta_{i, \infty, 0} + (\gb_i \circ \bbeta_{i, \infty})^\top \wb^{(n)}))}
,
\end{align*}
which is a categorical distribution over $\{0, 1\}^q$.
When $q$ is large, we can sample $\wb$ entrywise through the full conditional distribution
\begin{align*}
\PP(w_j^{(n)} ~|~ \cdot)
&\propto
\PP(w_j^{(n)} ~|~ z^{(n)}, \Ab) \PP(\yb^{(n)} ~|~ \wb^{(n)}, \Gb, \Bb)
\\&\propto
\left(
\frac{\alpha_{j, z^{(n)}}}{1 - \alpha_{j, z^{(n)}}}
\right)^{w_j^{(n)}}
\prod_{i \in [p]:~ D_i \in \NN} \frac{\exp(\beta_{i, y_i^{(n)}, 0} + (\gb_i \circ \bbeta_{i, y_i^{(n)}})^\top \wb^{(n)})}{\sum_{\ell = 1}^{D_i} \exp(\beta_{i, \ell, 0} + (\gb_i \circ \bbeta_{i, \ell})^\top \wb^{(n)})}
\\&\qquad\cdot
\prod_{i \in [p]:~ D_i = \infty} \frac{\exp(y_i^{(n)} (\bbeta_{i, \infty, 0} + (\gb_i \circ \bbeta_{i, \infty})^\top \wb^{(n)}))}{(y_i^{(n)})! ~ \exp(\exp(\bbeta_{i, \infty, 0} + (\gb_i \circ \bbeta_{i, \infty})^\top \wb^{(n)}))}
,
\end{align*}
which is a Bernoulli distribution.

Recall that the $q \times (p_t + 1)$ matrix $\bTheta$ represents the collection of hyperparameters $\theta_{j, 0}, \btheta_j$ with its $j$th row being the $(p_t + 1)$-dimensional vector $(\theta_{j, 0}, \btheta_j)$.
When $q$ is small, we can sample $\Gb$ in blocks of $\gb_i$, whereas when $q$ is large, we can sample $\Gb$ entrywise.
For a categorical $y_i \in [D_i]$, the full conditional distributions are given by
\begin{align*}
\PP(\gb_i ~|~ \cdot)
&\propto
\PP(\gb_i ~|~ \tb_i, \bTheta)
\prod_{n = 1}^N \PP(\yb^{(n)} ~|~ \wb^{(n)}, \Gb, \Bb)
\\&\propto
\exp\left(
\gb_i^\top \bTheta (1, \tb_i)
\right)
\prod_{n = 1}^N \frac{\exp(\beta_{i, y_i^{(n)}, 0} + (\gb_i \circ \bbeta_{i, y_i^{(n)}})^\top \wb^{(n)})}{\sum_{\ell = 1}^{D_i} \exp(\beta_{i, \ell, 0} + (\gb_i \circ \bbeta_{i, \ell})^\top \wb^{(n)})}
,
\end{align*}
and
\begin{align*}
\PP(g_{i, j} ~|~ \cdot)
&\propto
\PP(g_{i, j} ~|~ \tb_i, \theta_{j, 0}, \btheta_j)
\prod_{n = 1}^N \PP(\yb^{(n)} ~|~ \wb^{(n)}, \Gb, \bbeta)
\\&\propto
\exp\left(
g_{i, j} (\theta_{j, 0} + \tb_i^\top \btheta_j)
\right)
\prod_{n = 1}^N \frac{\exp(\beta_{i, y_i^{(n)}, 0} + (\gb_i \circ \bbeta_{i, y_i^{(n)}})^\top \wb^{(n)})}{\sum_{\ell = 1}^{D_i} \exp(\beta_{i, \ell, 0} + (\gb_i \circ \bbeta_{i, \ell})^\top \wb^{(n)})}
,
\end{align*}
which correspond to a categorical distribution on $\{0, 1\}^q$ and a Bernoulli distribution, respectively.
For a count-valued $y_i$, the full conditional distributions are given by
\begin{align*}
\PP(\gb_i ~|~ \cdot)
&\propto
\PP(\gb_i ~|~ \tb_i, \bTheta)
\prod_{n = 1}^N \PP(\yb^{(n)} ~|~ \wb^{(n)}, \Gb, \Bb)
\\&\propto
\exp\left(
\gb_i^\top \bTheta (1, \tb_i)
\right)
\prod_{n = 1}^N \frac{\exp(y_i^{(n)} (\bbeta_{i, \infty, 0} + (\gb_i \circ \bbeta_{i, \infty})^\top \wb^{(n)}))}{(y_i^{(n)})! ~ \exp(\exp(\bbeta_{i, \infty, 0} + (\gb_i \circ \bbeta_{i, \infty})^\top \wb^{(n)}))}
,
\end{align*}
and
\begin{align*}
\PP(g_{i, j} ~|~ \cdot)
&\propto
\PP(g_{i, j} ~|~ \tb_i, \theta_{j, 0}, \btheta_j)
\prod_{n = 1}^N \PP(\yb^{(n)} ~|~ \wb^{(n)}, \Gb, \bbeta)
\\&\propto
\exp\left(
g_{i, j} (\theta_{j, 0} + \tb_i^\top \btheta_j)
\right)
\prod_{n = 1}^N \frac{\exp(y_i^{(n)} (\bbeta_{i, \infty, 0} + (\gb_i \circ \bbeta_{i, \infty})^\top \wb^{(n)}))}{(y_i^{(n)})! ~ \exp(\exp(\bbeta_{i, \infty, 0} + (\gb_i \circ \bbeta_{i, \infty})^\top \wb^{(n)}))}
,
\end{align*}
which, one again, correspond to a categorical distribution on $\{0, 1\}^q$ and a Bernoulli distribution, respectively.

The parameter $\Bb$ is sampled in blocks of $(\beta_{i, \ell, 0}, \bbeta_{i, \ell})$.
For a categorical $y_i \in [D_i]$, we associate each $y_i^{(n)}$ with augmented variables $\omega_{i, \ell}^{(n)}$ for $\ell \in [D_i]$, sampled from full conditional distributions
$$
\omega_{i, \ell}^{(n)}~|~ \cdot
\sim
\PG\left(
1
,~
\left(
\beta_{i, \ell, 0} + (\gb_i \circ \bbeta_{i, \ell})^\top \wb^{(n)}
\right)
-
\log\left( \sum_{\ell' = 1, \ell \ne \ell}^{D_i} \exp\left(
\beta_{i, \ell', 0} + (\gb_i \circ \bbeta_{i, \ell'})^\top \wb^{(n)}
\right) \right)
\right)
.
$$
Then the full conditional distribution of $(\beta_{i, \ell, 0}, \bbeta_{i, \ell})$ follows
\begin{align*}
&\qquad
\PP(\beta_{i, \ell, 0}, \bbeta_{i, \ell} ~|~ \cdot)
\\&\propto
\PP(\beta_{i, \ell, 0}, \bbeta_{i, \ell}) \prod_{n = 1}^N \PP(y_i^{(n)} ~|~ \wb^{(n)}, \Gb, \Bb)
\\&\propto
\exp\Bigg(
- \frac12 ((\beta_{i, \ell, 0}, \bbeta_{i, \ell}) - \bbm_\beta)^\top \bV_\beta^{-1} ((\beta_{i, \ell, 0}, \bbeta_{i, \ell}) - \bbm_\beta)
\\&\qquad\qquad-
\sum_{n = 1}^N \frac{\omega_{i, \ell}^{(n)}}{2} (\beta_{i, \ell, 0}, \bbeta_{i, \ell})^\top (1, \gb_i \circ \wb^{(n)}) (1, \gb_i \circ \wb^{(n)})^\top (\beta_{i, \ell, 0}, \bbeta_{i, \ell})
\\&\qquad\qquad+
\sum_{n = 1}^N \left(
\log\left(
\sum_{\ell' = 1, \ell' \ne \ell}^{D_i} \exp\left(
\beta_{i, \ell', 0} + \bbeta_{i, \ell'}^\top (\gb_i \circ \wb^{(n)})
\right) \right) \omega_{i, \ell}^{(n)}
+
1_{y_i^{(n)} = \ell} - \frac12
\right)
\\&\hspace{3cm}\cdot
(1, \gb_i \circ \wb^{(n)})^\top (\beta_{i, \ell, 0}, \bbeta_{i, \ell})
\Bigg)
,
\end{align*}
which corresponds to the multivariate normal distribution $N(\bbm_{\beta, i, \ell}, \bV_{\beta, i, \ell})$ with parameters
$$
\bV_{\beta, i, \ell}
=
\left(
\bV_\beta^{-1}
+
\sum_{n = 1}^N \omega_{i, \ell}^{(n)} (1, \gb_i \circ \wb^{(n)}) (1, \gb_i \circ \wb^{(n)})^\top
\right)^{-1}
$$
and
\begin{align*}
\bbm_{\beta, i, \ell}
=
\bV_{\beta, i, \ell} \bV_\beta^{-1} \bbm_\beta
+
\bV_{\beta, i, \ell} \sum_{n = 1}^N \Bigg(
&
\log\left(
\sum_{\ell' = 1, \ell' \ne \ell}^{D_i} \exp\left(
\beta_{i, \ell', 0} + \bbeta_{i, \ell'}^\top (\gb_i \circ \wb^{(n)})
\right) \right) \omega_{i, \ell}^{(n)}
\\&+
1_{y_i^{(n)} = \ell} - \frac12
\Bigg)
(1, \gb_i \circ \wb^{(n)})
.
\end{align*}
For a count-valued $y_i$, a tractable full conditional distribution for $(\beta_{i, \ell, 0}, \bbeta_{i, \ell})$ is no longer available through data augmentation.
However, we can sample it using a
Metropolis-Hastings algorithm.

\section{Related Literature}\label{supp_sec:lit}

In this section, we provide complementary discussions of related literature that was deferred from the main paper for conciseness.

\subsection{Identifiability Theory}\label{supp_ssec:lit_iden}

The notion of generic identifiability was introduced in \citet{allman2009identifiability} and has been established for models having discrete and continuous latent variables.
In addition to strict and generic identifiability, the notions of local identifiability and partial identifiability are often of interest in complicated models.
We refer readers to \citet{allman2009identifiability, drton2011global, anandkumar2013learning, gu2019learning, gu2020partial, chen2020structured, gu2021sufficient, gu2023generic, gu2023dimension, gu2023bayesian, zhou2024bayesian, lee2024new}.

\subsection{Model-based Clustering}\label{supp_ssec:model_clus}

Numerous model-based clustering approaches have been developed in the statistical literature.
Among the most fundamental and widely used is the finite mixture model \citep{mclachlan2000finite, rousseau2011asymptotic, miller2018mixture}, also known as the latent class model \citep{magidson2004latent}.
When allowing the number of latent classes to grow with the number of observations, various Bayesian nonparametric methods have been designed, including Dirichlet process mixture models \citep{ferguson1973bayesian, antoniak1974mixtures, lo1984class, escobar1995bayesian, blei2006variational} and its extensions based on the Pitman-Yor process \citep{pitman1997two, canale2017pitman, lijoi2020pitman, ramirez2024heavy} or other Bayesian nonparametric processes \citep{teh2004sharing, dunson2008kernel, rodriguez2008nested, ren2011logistic, rodriguez2011nonparametric}.
To incorporate the dependency of latent classes on covariate data, a variety of parametric models have been introduced \citep{guo2006latent, chung2006latent, vermunt2010latent}, along with Bayesian nonparametric models that utilize covariate-dependent stick breaking processes \citep{chung2009nonparametric, horiguchi2024tree}.

\subsection{High-dimensional Mixture Models}\label{supp_ssec:lit_mix}

\citet{chandra2023escaping} established that Bayesian inference for the Gaussian mixture model suffers from the curse of dimensionality, leading to unreliable inference of latent clusters as the dimensionality of the data increases. Similarly, in Section \ref{ssec:curse}, we demonstrated that Bayesian inference for Bernoulli mixture model is also affected by the curse of dimensionality when clustering high-dimensional categorical observations.
Both settings assume an arbitrarily fixed prior distribution over all partitions of the $N$ observations $[N]$, which applies to both finite mixture models with a flexible number of clusters and Bayesian nonparametric mixture models.

In contrast, when the true number of clusters is known, the consistency of frequentist clustering approaches in finite mixture models, including the Gaussian mixture model and some extensions to mixtures of multivariate discrete distributions, have been established under appropriate assumptions \citep{loffler2021optimality, ndaoud2022sharp, lyu2025degree, huang2025minimax}.
The difference in these results stems partly from knowing the true number of clusters and partly from fundamental differences between frequentist and Bayesian inferences.

The discrepancy between frequentist and Bayesian inference in high-dimensional settings has been widely recognized \citet{good1992bayes, ritov2014bayesian, lamont2016lindley, ghosal2017fundamentals, hoff2023bayes}.
While our Bayesian hierarchical model provides one approach to improving Bayesian inference in mixture models for high-dimensional data, an alternative promising direction is the development of dimension-dependent Bayesian nonparametric priors, as opposed to conventional dimensional-invariant priors.

\end{document}